\documentclass[a4paper,UKenglish,cleveref, autoref, thm-restate]{lipics-v2021}

\hideLIPIcs  

\usepackage[all,defaultlines=3]{nowidow}
\title{Model Checking for Low Monodimensionality Fragments of \texorpdfstring{\CMSO}{CMSO} on Topological-Minor-Free Graph Classes} 

\titlerunning{Low Monodimensionality Fragments of \texorpdfstring{\CMSO}{CMSO} } 

\author{Ignasi Sau}{LIRMM, Univ Montpellier, CNRS, Montpellier, France.}{ignasi.sau@lirmm.fr}{}{National Research Agency (ANR) project ELIT (ANR-20-CE48-0008-01).}

\author{Nicole Schirrmacher}{University of Bremen, Bremen, Germany.}{schirrmacher@uni-bremen.de}{0000-0002-1740-7478}{French-German Collaboration ANR/DFG Project UTMA (ANR-20-CE92-0027, DFG 446200270).}

\author{Sebastian Siebertz}{University of Bremen, Bremen, Germany.}{siebertz@uni-bremen.de}{0000-0002-6347-1198}{}

\author{Giannos Stamoulis}{Université Paris Cité, CNRS, IRIF, F-75013, Paris, France.}{stamoulis@irif.fr}{}{}

\author{Dimitrios M. Thilikos}{LIRMM, Univ Montpellier, CNRS, Montpellier, France.}{sedthilk@thilikos.info}{0000-0003-0470-1800}{French-German Collaboration ANR/DFG Project UTMA (ANR-20-CE92-0027, DFG 446200270), Franco-Norwegian project PHC AURORA 2024-2025 (Projet n°\! 51260WL), and French National Research Agency (ANR) under project GODASse ANR-24-CE48-4377 and under the France 2030 grant reference number
ANR-24-RRII-0002 operated by the Inria Quadrant Program.}

\author{Alexandre Vigny}{LIMOS, Univ Clermont Auvergne, Clermont Auvergne INP.}{alexandre.vigny@uca.fr}{0000-0002-4298-8876}{French National Research Agency under France 2030 project ANR-23-IACL-0006.}

\authorrunning{
I. Sau, N. Schirrmacher, S. Siebertz, G. Stamoulis, D.M. Thilikos, A. Vigny} 

\Copyright{Jane Open Access and Joan R. Public} 

\ccsdesc[500]{Theory of computation~Finite Model Theory}
\ccsdesc[500]{Theory of computation~Graph algorithms analysis}
\ccsdesc[500]{Theory of computation~Parameterized complexity and exact algorithms}
\ccsdesc[500]{Mathematics of computing~Graph theory}

\keywords{Monadic second-order logic, Dynamic programming, Graph minors, Monodimentionality, Annotated parameters.} 

\category{} 

\relatedversion{} 

\nolinenumbers 

\EventEditors{Claudia Faggian and Joost-Pieter Katoen}
\EventNoEds{2}
\EventLongTitle{Forty-First Annual Symposium on Logic in Computer Science}
\EventShortTitle{LICS 2016}
\EventAcronym{LICS}
\EventYear{2016}
\EventDate{July 20--23, 2016}
\EventLocation{Lisbon, Portugal}
\EventLogo{}
\SeriesVolume{380}
\ArticleNo{48}

\usepackage{latexsym,
amsthm,
color,
amssymb,
url,
bm,
cite,
amssymb,
microtype,
amsmath,
amsfonts,
mathtools}
\usepackage{verbatim,listings}
\usepackage{todonotes}
\usepackage{float}

\usepackage{caption}
\usepackage{setspace}
\usepackage{subcaption}
\usepackage{setspace}

\usepackage[inline]{enumitem} 
\usepackage{amsthm}
\usepackage{nicefrac}
\usepackage{dsfont}
\usepackage{tikz}

\RequirePackage[T1]{fontenc}
\RequirePackage[utf8]{inputenc}

\usetikzlibrary{math}

\makeatletter
\input{colordvi}

\usepackage{aliascnt}

\hypersetup{
    colorlinks,
    linkcolor={blue!75!black},
    citecolor={green!75!black},
    urlcolor={blue!75!black},
}

\definecolor{MidnightBlack}{rgb}{0.1,0.1,.34}
\definecolor{MidnightBlue}{rgb}{0.1,0.1,0.43}
\definecolor{Black}{rgb}{0,0, 0}
\definecolor{Blue}{rgb}{0, 0 ,1}
\definecolor{Red}{rgb}{1, 0 ,0}
\definecolor{White}{rgb}{1, 1, 1}
\definecolor{grey}{rgb}{.6, .6, .6}
\definecolor{Mygreen}{rgb}{.0, .7, .0}
\definecolor{Yellow}{rgb}{.55,.55,0}
\definecolor{Mustard}{rgb}{1.0, 0.86, 0.35}
\definecolor{applegreen}{rgb}{0.55, 0.71, 0.0}
\definecolor{darkturquoise}{rgb}{0.0, 0.81, 0.82}
\definecolor{celestialblue}{rgb}{0.29, 0.59, 0.82}
\definecolor{green_yellow}{rgb}{0.68, 1.0, 0.18}
\definecolor{crimsonglory}{rgb}{0.75, 0.0, 0.2}
\definecolor{darkmagenta}{rgb}{0.30, 0.0, 0.30}
\definecolor{magenta}{rgb}{0.50, 0.0, 0.50}
\definecolor{internationalorange}{rgb}{1.0, 0.31, 0.0}
\definecolor{darkorange}{rgb}{1.0, 0.55, 0.0}
\definecolor{ao}{rgb}{0.0, 0.5, 0.0}
\definecolor{awesome}{rgb}{1.0, 0.13, 0.32}
\definecolor{darkcyan}{rgb}{0.0, 0.50, 0.50}
\definecolor{violet}{rgb}{0.93, 0.51, 0.93}
\definecolor{brown}{rgb}{0.65, 0.16, 0.16}
\definecolor{orange}{rgb}{1.0, 0.65, 0.0}
\definecolor{cornflowerblue}{rgb}{0.39, 0.58, 0.93}

\newcommand{\remove}[1]{}

\newcounter{func}
\newcommand{\funref}[1]{\hyperref[#1]{f_{\ref*{#1}}}} 

\tikzset{black node/.style={draw, circle, fill = black, minimum size = 5pt, inner sep = 0pt}}
\tikzset{white node/.style={draw, circlternary_treese, fill = white, minimum size = 5pt, inner sep = 0pt}}
\tikzset{normal/.style = {draw=none, fill = none}}
\tikzset{lean/.style = {draw=none, rectangle, fill = none, minimum size = 0pt, inner sep = 0pt}}
\usetikzlibrary{decorations.pathreplacing}
\usetikzlibrary{arrows.meta}
\usetikzlibrary{calc}

\usetikzlibrary{shapes}
\tikzset{diam/.style={draw, diamond, fill = black, minimum size = 7pt, inner sep = 0pt}}

\tikzset{
	position/.style args={#1:#2 from #3}{
		at=($(#3)+(#1:#2)$)
	}
}

\tikzset{
  v:main/.style = {draw, circle, scale=0.8, thick,fill=black,inner sep=0.7mm},
  v:ghost/.style = {inner sep=0pt,scale=1},
  v:marked/.style = {circle, scale=1.3, fill=DarkGoldenrod,opacity=0.4},
  >={latex},
  e:main/.style = {line width=1pt}
}

\newcommand{\Acal}{\mathcal{A}}
\newcommand{\Bcal}{\mathcal{B}}

\newcommand{\Gcal}{\mathcal{G}}
\newcommand{\Hcal}{\mathcal{H}}

\newcommand{\Ocal}{\mathcal{O}}

\newcommand{\Zcal}{\mathcal{Z}}

\newcommand{\Nbbb}{\mathbb{N}}

\newcommand{\Oh}{\mathcal{O}}

\newcommand{\Tt}{\mathcal{T}}

\RequirePackage{stmaryrd}
\usepackage{textcomp}
\DeclareUnicodeCharacter{2286}{\subseteq}
\DeclareUnicodeCharacter{2192}{\ifmmode\to\else\textrightarrow\fi}
\DeclareUnicodeCharacter{2203}{\ensuremath\exists}
\DeclareUnicodeCharacter{183}{\cdot}
\DeclareUnicodeCharacter{2200}{\forall}
\DeclareUnicodeCharacter{2264}{\leq}
\DeclareUnicodeCharacter{2265}{\geq}
\DeclareUnicodeCharacter{8614}{\mathbin{\mapsto}}
\DeclareUnicodeCharacter{8656}{\Leftarrow}
\DeclareUnicodeCharacter{8657}{\Uparrow}
\DeclareUnicodeCharacter{8658}{\Rightarrow}
\DeclareUnicodeCharacter{8659}{\Downarrow}
\DeclareUnicodeCharacter{8669}{\rightsquigarrow}
\newcommand{\eqdef}{\stackrel{{\scriptsize\rm def}}{=}}
\DeclareUnicodeCharacter{8797}{\eqdef}
\DeclareUnicodeCharacter{8870}{\vdash}
\DeclareUnicodeCharacter{8873}{\Vdash}
\DeclareUnicodeCharacter{22A7}{\models}
\DeclareUnicodeCharacter{9121}{\lceil}
\DeclareUnicodeCharacter{9123}{\lfloor}
\DeclareUnicodeCharacter{9124}{\rceil}
\DeclareUnicodeCharacter{2208}{\in}
\DeclareUnicodeCharacter{9126}{\rfloor}
\DeclareUnicodeCharacter{9655}{\triangleright}
\DeclareUnicodeCharacter{9665}{\triangleleft}
\DeclareUnicodeCharacter{9671}{\diamond}
\DeclareUnicodeCharacter{9675}{\circ}
\DeclareUnicodeCharacter{10178}{\bot}
\DeclareUnicodeCharacter{10214}{} 
\DeclareUnicodeCharacter{10215}{} 
\DeclareUnicodeCharacter{10229}{\longleftarrow}
\DeclareUnicodeCharacter{10230}{\longrightarrow}
\DeclareUnicodeCharacter{10231}{\longleftrightarrow}
\DeclareUnicodeCharacter{10232}{\Longleftarrow}
\DeclareUnicodeCharacter{10233}{\Longrightarrow}
\DeclareUnicodeCharacter{10234}{\Longleftrightarrow}
\DeclareUnicodeCharacter{10236}{\longmapsto}
\DeclareUnicodeCharacter{10238}{\Longmapsto} 
\DeclareUnicodeCharacter{10503}{\Mapsto}    
\DeclareUnicodeCharacter{10971}{\mathrel{\not\hspace{-0.2em}\cap}}
\DeclareUnicodeCharacter{65294}{\ldotp}
\DeclareUnicodeCharacter{65372}{\mid}

\definecolor{Red}{rgb}{1, 0 ,0}
\definecolor{Blue}{rgb}{0, 0 ,1}

\newcommand{\hh}{\end{document}}

\newcommand{\p}{{\sf p}}

\newcommand{\ext}{{\sf ext}\xspace}
\newcommand{\rtw}{\mathsf{ad\textsf{-}brg}\xspace}%
\newcommand{\bog}{{\sf bog}\xspace}%
\newcommand{\adeg}{{\sf adeg}\xspace}

\renewcommand{\phi}{\varphi}

\newcommand{\atw}{{\sf atw}\xspace}

\newcommand{\aall}{\mathcal{A}_{{\text{\rm  \textsf{all}}}}}

\newcommand{\brg}{{\sf brg}\xspace}
\newcommand{\tw}{{\sf tw}\xspace}
\newcommand{\size}{{\sf size}\xspace}
\newcommand{\td}{{\sf td}\xspace}

\newcommand{\diss}{\mathsf{diss}}
\newcommand{\ap}{{\sf ap}\xspace}
\newcommand{\tp}{{\sf tp}\xspace}

\newcommand{\adh}{\ensuremath{\mathsf{adh}}}

\newcommand{\torso}{\ensuremath{\mathsf{torso}}}

\newcommand{\detail}{\text{$\mathsf{detail}$}\xspace}

\usepackage{mathrsfs}
\newcommand{\Cc}{\mathscr{C}}

\newcommand{\cc}{{\sf cc}\xspace}%
\newcommand{\ttw}{{\sf ttw}\xspace}%
\newcommand{\numen}[1]{\ifthenelse{\not\equal{#1}{1}}{#1}{}}

\usepackage{color}

\definecolor{vagelisColour}{RGB}{0, 65, 130}

\newcommand{\extfolio}{\textnormal{\textsf{-ext-folio}}\xspace}%

\newcommand{\abtmpair}{\textnormal{\textsf{abtm-pair}}\xspace}%

\newcommand{\abtm}{\textnormal{\textsf{abtm}}\xspace}%

\newcommand{\folio}{\textnormal{\textsf{-folio}}\xspace}%

\usepackage[utf8]{inputenc}

\usepackage{amsthm}
\usepackage{amssymb}
\usepackage{amsmath}
\usepackage{mathtools}

\usepackage{tikz}
\usepackage{graphicx}
\usepackage{caption}
\usepackage{subcaption}
\usepackage{xspace}

\newcommand{\DP}{\mathsf{dp}}

\newcommand{\FO}{\ensuremath{\mathsf{FO}}\xspace}
\newcommand{\FOconn}{\ensuremath{\mathsf{FO}\hspace{0.8pt}\raisebox{-0.3pt}{\texttt{+}}\hspace{0.8pt}\mathsf{conn}}\xspace}
\newcommand{\MSO}{\ensuremath{\mathsf{MSO}}\xspace}
\newcommand{\CMSO}{\ensuremath{\mathsf{CMSO}}\xspace}

\newcommand{\CMSOtw}{\ensuremath{\CMSO\hspace{-0.3pt}\texttt{/}\hspace{0.3pt}\ttw}\xspace}

\newcommand{\FODP}{\ensuremath{\FO\hspace{0.8pt}\raisebox{-0.3pt}{\texttt{+}}\hspace{0.8pt}\DP}\xspace}

\newcommand{\CMSOtwDP}{\ensuremath{\CMSO\hspace{-0.3pt}\texttt{/}\hspace{0.3pt}\ttw\hspace{0.8pt}\raisebox{-0.3pt}{\texttt{+}}\hspace{0.8pt}\DP}\xspace}

\newcommand{\CMSObrg}{\ensuremath{\CMSO\hspace{-0.3pt}\texttt{/}\hspace{1pt}\brg}\xspace}

\newcommand{\CMSObrgDP}{\ensuremath{\CMSObrg\hspace{0.8pt}\raisebox{-0.3pt}{\texttt{+}}\hspace{0.8pt}\DP}\xspace}

\newcommand{\CMSObog}{\ensuremath{\CMSO\hspace{-0.3pt}\texttt{/}\hspace{0.3pt}\bog}\xspace}

\newcommand{\CMSOsize}{\ensuremath{\CMSO\hspace{-0.3pt}\texttt{/}\hspace{0.3pt}\size}\xspace}
\newcommand{\CMSOsizeDP}{\ensuremath{\CMSOsize\hspace{0.8pt}\raisebox{-0.3pt}{\texttt{+}}\hspace{0.8pt}\DP}\xspace}

\newcommand{\CMSOatw}{\ensuremath{\CMSO\hspace{-0.3pt}\texttt{/}\hspace{0.3pt}\atw}\xspace}
\newcommand{\CMSOatwDP}{\ensuremath{\CMSOatw\hspace{0.8pt}\raisebox{-0.3pt}{\texttt{+}}\hspace{0.8pt}\DP}\xspace}

\newcommand{\CMSOttw}{\ensuremath{\CMSO\hspace{-0.3pt}\texttt{/}\hspace{0.3pt}\ttw}\xspace}

\newcommand{\CMSOttwDP}{\ensuremath{\CMSOttw\hspace{0.8pt}\raisebox{-0.3pt}{\texttt{+}}\hspace{0.8pt}\DP}\xspace}

\newcommand{\CMSObogDP}{\ensuremath{\CMSObog\hspace{0.8pt}\raisebox{-0.3pt}{\texttt{+}}\hspace{0.8pt}\DP}\xspace}

\newcommand{\CMSOp}{\ensuremath{\CMSO\hspace{-0.3pt}\texttt{/}\hspace{0.3pt}\p}\xspace}
\newcommand{\CMSOpDP}{\ensuremath{\CMSOp\hspace{0.8pt}\raisebox{-0.3pt}{\texttt{+}}\hspace{0.8pt}\DP}\xspace}
\newcommand{\CMSOppDP}{\ensuremath{\CMSOp'\hspace{0.8pt}\raisebox{-0.3pt}{\texttt{+}}\hspace{0.8pt}\DP}\xspace}

\newcommand{\N}{\mathbb{N}}

\newcommand{\bag}{\mathsf{bag}}
\newcommand{\cone}{\mathsf{cone}}
\newcommand{\cmp}{\mathsf{comp}}
\newcommand{\mrg}{\mathsf{mrg}}
\newcommand{\parent}{\mathsf{parent}}
\newcommand{\children}{\mathsf{children}}

\newcommand{\bound}[1]{\mathbf{#1}}
\newcommand{\pretp}{\preceq}
 
\newcommand{\type}{\mathsf{type}}
\newcommand{\exttype}{\mathsf{ext}\textsf{-}\mathsf{type}}

\begin{document}

\maketitle

\begin{abstract}
\noindent Algorithmic meta-theorems explain the tractability of large classes of computational problems by linking logical expressibility with structural graph properties. While extensions of first-order logic such as \textsf{FO+dp} admit efficient model checking on graph classes excluding a fixed topological minor, comparable results for richer fragments of \textsf{CMSO} were previously unknown.
We further develop the framework of Sau, Stamoulis, and Thilikos [SODA 2025] for fragmenting \textsf{CMSO} via \emph{annotated graph parameters}, which restrict set quantification to vertex sets satisfying bounded structural conditions.
Following this approach, we identify a fragment of \textsf{CMSO}, namely the one defined by allowing quantification only over sets having what we call \emph{low monodimensionality}, that generalizes several previously-known logics and we show that model checking for this fragment, enhanced with the disjoint-paths predicate, is fixed-parameter tractable on topological-minor-free graph classes. Such classes essentially delimit the tractability for this logic on subgraph-closed classes.
As a consequence, our results lift several known algorithmic meta-theorems beyond first-order logic to the topological-minor-free setting.\end{abstract}

\section{Introduction}
 

\subparagraph{Algorithmic meta-theorems.}
Algorithmic meta-theorems (AMTs) aim to explain, in a uniform and conceptually clean way, 
\emph{why} large families of computational problems admit efficient algorithms on large classes of inputs.
Rather than designing a new algorithm for each individual problem and each individual graph class,
AMTs identify general principles showing that \emph{all} problems expressible in a certain logic
can be solved efficiently on \emph{all} graphs from a certain structural class.
In this sense, meta-theorems do not merely provide algorithms; they unify, systematize, and
conceptually explain tractability.

Formally, an AMT typically asserts that for a logic $\mathcal{L}$ and a graph class $\Cc$,
the model checking problem, that is, deciding whether a given graph $G\in \Cc$ satisfies a given formula $\varphi\in \mathcal{L}$, in symbols $G \models \varphi$, can be solved in time $$
f(|\varphi|,\Cc) \cdot |G|^{c}, $$
for some function $f$, 
where the exponent $c$ is independent of both the formula $\varphi$ and the class~$\Cc$.
Such results show that once the structural complexity of the input and the expressive power of the specification language are bounded,
the combinatorial explosion inherent in many problems can be controlled.
Over the past decades, AMTs have become a central paradigm in algorithmic graph theory and finite model theory,
unifying a wide range of algorithmic techniques across many problems and graph classes
\cite{GroheK09,Kreutzer11,Grohe08,SiebertzV24}.

\subparagraph{Monadic second-order logic with modulo counting.}
The classical and most influential example of an algorithmic meta-theorem is Courcelle’s theorem
\cite{Courcelle1990,Courcelle97,Courcelle92}.
It states that every graph property definable in monadic second-order logic with modulo counting (\CMSO)
can be decided in linear time on graphs of bounded treewidth.
This theorem abstracts dynamic programming over tree decompositions and provides a single,
uniform explanation for the tractability of a vast number of problems on tree-like graph classes.
It has been extended to related width parameters, most notably cliquewidth \cite{courcelle2000linear}.
At the same time, matching lower bounds show that these results are essentially tight.
Indeed, extending efficient \CMSO model checking to substantially more general graph classes is not to be expected
\cite{ganian2014lower,kreutzer2010lower}.

\subparagraph{First-order logic.}
A different and highly successful line of research focuses on first-order logic~(\FO),
which is significantly less expressive than \CMSO, but enjoys much broader algorithmic tractability.
Model checking for \FO is tractable on a wide range of sparse graph classes,
including bounded degree graphs \cite{Seese96line},
bounded expansion and nowhere dense classes \cite{GroheKS17dec,DvorakKT13firs}. 
These algorithms rely on locality principles and structural decompositions
\cite{DawarGK07loca,FrickG01decid}, and are optimal for subgraph-closed classes, that is, \FO model checking on a subgraph-closed class is fixed-parameter tractable if and only if the class is nowhere dense~\cite{dvovrak2010deciding}.
Recently, the tractability results have also been extended to dense graph classes, e.g.~to classes of bounded twin-width \cite{BonnetKTW22twinI}.
A main conjecture from the area states that \FO model checking is fixed-parameter tractable on a class of graphs closed under induced subgraphs if and only if the class is monadically dependent, see~\cite{bonnet2022twin}, and partial results as well as a beautiful structure theory for monadically dependent classes is being developed, see e.g.~\cite{braunfeld2025decomposition,dreier2024first,DreierMS23,dreier2024flip,GajarskyMMOPPSS23,dreier_et_al23}. 

From a high-level perspective, these results reveal a clear tradeoff:
\FO is tractable on very broad graph classes, but its expressive power is limited.
In particular, \FO cannot express basic counting properties, reachability, or global connectivity.
Thus, many fundamental algorithmic problems lie beyond its expressive scope.

\subparagraph{The gap between \FO and \CMSO.}
For a long time, there was a striking gap between the two extremes:
on the one hand, \CMSO, which is very expressive but only tractable on rather restricted graph classes,
and on the other hand, \FO, which is widely tractable but too weak to express many important properties.
It was natural to ask whether there exist logics in between these two that combine
\emph{meaningful expressive power} with \emph{broad algorithmic tractability}.

Classical candidates for closing this gap include (monadic) transitive closure logic, (monadic) least fixed-point logic,
and related extensions of first-order logic, see~\cite{Libkin04,ebbinghaus1999finite}.
However, these logics turn out to be intractable already on very simple graph classes,
such as planar graphs of maximum degree three.
As a consequence, their study did not lead to robust algorithmic meta-theorems,
and they failed to provide a satisfying explanation of tractability beyond very restricted settings~\cite{Grohe08}.
Other extensions are those with counting or other numerical predicates, see e.g.~\cite{dreier2021approximate,grohe2018first,kuske2017first,kuske2018gaifman}.

\subparagraph{New algorithmically meaningful extensions of \FO.}
Recently, a new and promising direction emerged.
Instead of adding general recursion mechanisms, recent work has begun to directly enrich first-order logic
with carefully chosen operators that are strong enough to express important algorithmic properties,
yet sufficiently tame to preserve tractability on broad graph classes.
The guiding principle is to extend \FO in a way that aligns with known algorithmic techniques
and structural decompositions, thereby enabling genuine algorithmic meta-theorems
that go beyond pure first-order expressiveness.

One approach enhances \FO with restricted connectivity predicates, leading to logics such as \FOconn, also called \textsl{separator logic}~\cite{Bojanczyk21separ,SchirrmacherSV23}.
Model checking for \FOconn-expressible formulas can be performed
in time $f(|\varphi|,|H|) \cdot |G|^{3}$ on $H$-topological-minor-free graphs~\cite{PilipczukSSTV22}.
A stronger extension, \FODP~\cite{SchirrmacherSV23}, allows predicates expressing the existence of vertex-disjoint paths.
More precisely, \FO is enhanced with the $2k$-ary \emph{disjoint-paths predicate}
$\DP_k(x_1, y_1, \ldots, x_k, y_k)$,
where $x_1, y_1, \ldots, x_k, y_k$ are first-order terms,
which expresses the existence of $k$ pairwise vertex-disjoint paths between the valuations
of $x_i$ and $y_i$, for $i \in \{1,\ldots,k\}$.
With the help of the disjoint-paths predicate, \FODP can define global properties, including connectivity and (topological) minor exclusion.
As proved in~\cite{GolovachST22model},
model checking for \FODP-expressible formulas can be performed
in time $f(|\varphi|,|H|) \cdot |G|^{2}$ on $H$-minor-free graphs, and even more generally, as shown in~\cite{SchirrmacherSSTV24mode}, in time $f(|\varphi|,|H|) \cdot |G|^{3}$ on $H$-topological-minor-free graphs.
Just as nowhere density is the limit of tractability for~\FO on subgraph-closed classes, it turns out that for \FODP on subgraph-closed classes, the limit of tractability are classes with excluded topological minors, as it is already the case for \FOconn~\cite[Theorem 1.3]{PilipczukSSTV22}.

In an attempt to make the logics even more expressive and useful for algorithmic purposes, another enhancement of \FO was introduced by Fomin, Golovach, Sau, Stamoulis, and Thilikos in~\cite{fominGSST25compound}, the so-called \emph{compound logic} $\tilde{\Theta}^{\DP}$, which combines syntactic fragments of \FODP and \CMSO  
and is essentially a language tailored to express general families of graph modification problems via a modification/target machinery.
As proved in~\cite{fominGSST25compound}, model checking for $\tilde{\Theta}^{\DP}$-expressible formulas can be done in time $f(|\varphi|,|H|) \cdot |G|^{2}$ on $H$-minor-free graphs.

\subparagraph{Fragmenting \CMSO by annotated graph parameters.}
The aforementioned approaches are based on extensions of $\FO$.
In~\cite{SauST25Parameterizing}, a different perspective is proposed, which allows
these and other enhanced logics to be viewed as fragments of \CMSO by restricting the use of set quantification.
In particular, \cite{SauST25Parameterizing}
introduces \emph{annotated treewidth logic}, denoted $\CMSOatw$, which is a fragment of
\CMSO in which second-order quantification is constrained to sets of restricted treewidth {(formally, we do not measure the treewidth of the graph induced by the quantified set $X$, but the treewidth of $X$-minors to obtain a robust theory, the formal definition is given shortly)}.
This logic is tractable on $H$-minor-free graph classes.

The idea of structurally restricting set quantification also appears in recent work on \emph{low-rank $\MSO$},  which is defined as the fragment of $\MSO$ where set quantification is limited to vertex sets of bounded cut-rank~\cite{bojanczyk2025lowrankmso}.
Somewhat surprisingly, over sparse graph classes, this fragment has the same expressive power as separator logic.
Moreover, this logic is also applicable to dense graph classes, where it corresponds to related formalisms such as \emph{flip-connectivity logic} and \emph{flip-reachability logic}, which adapt the separator paradigm to settings in which vertex separators are no longer useful.
Crucially, interpreting these formalisms as low-rank fragments of $\MSO$ yields a unifying semantic framework:
rather than relying on ad-hoc extensions of $\FO$, one obtains a hierarchy of \emph{\MSO fragments} whose structural restrictions simultaneously govern expressiveness and algorithmic tractability.

\subparagraph{Finding elegant logics.}
From our perspective, both first-order logic and monadic second-order logic are extremely natural formalisms for reasoning about graphs and combinatorial structures. 
Over the years, they have proven to be robust, expressive, and conceptually clean, and a large body of algorithmic meta-theorems is built on them. 
Consequently, we also perceive extensions of \FO, such as enrichments by connectivity or the disjoint-paths predicate, as natural logical operations. 
Similarly, we view restrictions of \CMSO, in particular, restrictions on the sets over which one is allowed to quantify, as conceptually clean and easy to understand. 
In contrast, highly engineered formalisms such as compound logic are extremely powerful and very successful from an algorithmic point of view, but from a logical perspective, they appear more as programming languages than as elegant logics.

The aim of our present work is to study \emph{elegant} restrictions of \CMSO quantification that preserve both conceptual clarity and algorithmic power. 
We propose to organize such restrictions via \emph{annotated graph parameters} and, in particular, via minor-monotone parameters, which, as we will demonstrate, enjoy strong compositionality properties and are well-aligned with dynamic programming techniques. 
We believe that the search for suitable parameters is, to a large extent, a search for elegance: different applications may suggest different parameters, and our framework is intended to support this flexibility. 
In this paper, we focus on two concrete parameters and show that they are functionally
equivalent; one of them is particularly convenient to use, while the other is better suited for tractability arguments. 
More generally, our goal is to provide a uniform framework in which such parameters can be studied and compared.

%

\subparagraph{Getting more technical: parametric fragments of $\CMSO$.}
To present our approach and our results in full generality, let us introduce some basic notions related to annotated graphs.

An \emph{annotated graph} is a pair $(G,X)$, where $G$ is a graph and $X$ is a subset of its vertex set.
An \emph{annotated graph parameter} is a function $\p$ that maps annotated graphs to non-negative integers. 

Given an annotated graph parameter $\p$, the logic
$\CMSOp$ is defined by restricting existential quantification $\exists X\, \phi$
to $\exists_{\p\le k} X\, \phi$, and universal quantification $\forall X\phi$ to $\forall_{\p\le k} X \varphi$, for any positive integer~$k$. 
In such formulas, the set variable $X$ is only allowed to be interpreted with vertex sets $S$ such that $\p(G,S)\le k$. 
The enhancement of $\CMSOp$
by the disjoint-paths predicate is denoted by \mbox{$\CMSOpDP$}.
Note that this formalism provides an extremely easy to use fragment of $\CMSO$ (\CMSOpDP). 
We emphasize that the bounds appearing in parameter-bounded quantifiers are part of
the formula. In particular, the size~$|\phi|$ of a formula~$\phi$ depends on the
largest value of $k$ used in its quantified set variables.

A very simple example of an annotated graph parameter is the size of $X$, that is, \mbox{$\size(G,X)=|X|$}.
It is easy to see that $\CMSOsizeDP$ has the same expressive power as $\FODP$, as size-restricted set quantification can be expressed by explicit vertex quantification. 
The size of the resulting formula depends on the integers $k$ that appear in the $\CMSOsizeDP$ formulas. 
This equivalence allows us to state the main result of~\cite{SchirrmacherSSTV24mode} as follows. 

\begin{proposition}\label{prop_mntghrro}
There exist a computable function $f\colon  \mathbb{N}^2\to\mathbb{N}$ and an algorithm that, given a formula $\varphi\in\textup{\CMSOsizeDP}$, and an $H$-topological-minor-free graph~$G$,
decides whether $G\models \varphi$ in time $f(|\varphi|,|H|)\cdot |G|^3$.
\end{proposition}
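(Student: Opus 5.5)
The proposition is a restatement of the main theorem of~\cite{SchirrmacherSSTV24mode}: model checking for \FODP-formulas on $H$-topological-minor-free graphs runs in time $f(|\varphi|,|H|)\cdot|G|^3$. The plan is to reduce $\CMSOsizeDP$ to \FODP by a computable syntactic translation and then invoke that theorem. Concretely, I would define a translation $\varphi\mapsto\widehat{\varphi}$ by structural induction on $\varphi$. It commutes with Boolean connectives, first-order quantifiers and the predicates $\DP_k$, and only size-bounded set quantifiers need work.

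The translation of a quantifier $\exists_{\size\le k}X\,\psi$ is
\[
\bigvee_{j=0}^{k}\exists x_1\cdots\exists x_j\Bigl(\bigwedge_{1\le a<b\le j}x_a\neq x_b\;\wedge\;\widehat{\psi}[X/\{x_1,\dots,x_j\}]\Bigr),
\]
with $\forall_{\size\le k}X\,\psi$ handled dually (or as $\neg\exists_{\size\le k}X\,\neg\psi$). In $\widehat{\psi}[X/\{x_1,\dots,x_j\}]$ every atom $y\in X$ is replaced by $\bigvee_a y=x_a$. Set atoms also arise with counting: $|X|\equiv r \pmod m$ is replaced by the disjunction of the truth value of $j\equiv r\pmod m$, which is fixed once $j$ is fixed, so it becomes $\top$ or $\bot$. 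If the vocabulary allows edge-set or other set atoms, they are handled analogously by finite case distinctions over the named elements. A routine induction on formulas, carried out under every assignment, then shows $G\models\varphi$ if and only if $G\models\widehat{\varphi}$. The key point is that a set $S$ with $|S|\le k$ is exactly the set of values of some tuple of at most $k$ distinct vertices.

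The remaining, mildly delicate point is bookkeeping: $|\widehat{\varphi}|$ must be bounded by a computable function of $|\varphi|$ alone. This holds because all bounds $k$ are part of $\varphi$, so each quantifier blows up by a factor of at most $g(k)\le g(|\varphi|)$. The translation is also computable in time depending only on $|\varphi|$. Composing with the \FODP algorithm of~\cite{SchirrmacherSSTV24mode} gives running time $f'(|\widehat{\varphi}|,|H|)\cdot|G|^3$, which is at most $f(|\varphi|,|H|)\cdot|G|^3$ for a suitable computable $f$. I expect no real obstacle beyond checking that counting atoms and nested set variables are correctly eliminated in the translation.
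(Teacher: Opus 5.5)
Your proposal is correct and follows the paper's own route. The paper justifies this proposition by noting that size-bounded set quantification can be simulated by quantifying a bounded tuple of vertices, so \CMSOsizeDP\ and \FODP\ have the same expressive power with formula size depending on the bounds $k$, and then invoking the \FODP\ model checking theorem of \cite{SchirrmacherSSTV24mode}; your explicit translation, including resolving counting atoms via the fixed number $j$ of distinct witnesses, spells out the details the paper leaves implicit.
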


An annotated graph $(H,Y)$
is a \emph{minor} of $(G,X)$, denoted $(H,Y)\leq (G,X)$,
if there exists a collection $\Bcal=\{B_{v}\mid v\in V(H)\}$
of pairwise disjoint connected subsets of $V(G)$ such that
(i) for every edge $\{x,y\}\in E(H)$, the set $B_{x}\cup B_{y}$ is connected in $G$, and
(ii) for every $y\in Y$, we have $B_{y}\cap X\neq \emptyset$. We refer to $\Bcal$ as a \emph{minor model} of $(H,Y)$ in $(G,X)$.
We refer to $B_{v}$ as the \emph{branch set} of $v$. 
We call $H$ an \emph{$X$-minor} of $(G,X)$ if $(H,V(H))\leq (G,X)$. 
An annotated graph parameter is \emph{minor-monotone} if
$(H,Y)\leq (G,X)$ implies $\p(H,Y)\leq \p(G,X)$.

A much more relevant structural measure for $X$ than its size is its treewidth, which in the context of annotated graphs leads to the
notion of \emph{annotated treewidth}: the maximum treewidth of any $X$-minor of $G$. 
Intuitively, annotated treewidth measures how much global
structural complexity of $G$ is witnessed by the set~$X$. 
We write $\atw$ for the parameter annotated treewidth. 

%
%

In graph minor theory, treewidth is characterized, up to polynomial equivalence, by the size of the
largest grid minor. The $(k\times k)$-grid is the canonical model of two-dimensionality and the
fundamental obstruction to bounded treewidth. Measuring the largest grid that can be obtained as a
minor with all its vertices witnessed by $X$ therefore captures, in a robust and invariant way, how
``two-dimensionally'' the set $X$ sits inside $G$. 
Formally, the bidimensionality of $X$ in $G$ is the maximum integer $k$ such that $(\Gamma_{k},V(\Gamma_{k}))≤(G,X)$, where~$\Gamma_{k}$ is the $(k\times k)$-grid 
(see \cite{ThilikosW25TheGraph}).
The \emph{biggest rainbow grid}  of~$(G,X)$, denoted by $\brg(G,X)$,
is also called the \emph{bidimensionality} of $X$ in $G$. 
Because of the grid exclusion theorem \cite{ChuzhoyT21towa,RobertsonS86GMV}, the parameters~$\atw$ and~$\brg$ turn out to be functionally equivalent, that is, they functionally bound each other.
It is easy to see that for every~$(G,X)$, $\atw(G,X)\le \size(G,X)$, and, in fact, it can be arbitrarily  smaller.  Therefore, 
$\CMSOatwDP$ is much more expressive than $\CMSOsizeDP$ (that has the same expressive power as  $\FODP$). 

Both $\atw$ and $\brg$ are minor-monotone annotated parameters. 
In \cref{lemm_anotopomin_formula}, 
we show that if~$\p$ is a minor-monotone annotated graph parameter, then for
every fixed $k$, the property $\p(G,X)\le k$ is definable in~$\FODP$ using an $\FODP$ formula that treats the set $X$ as a given (interpretation of) a unary relation symbol. 
As a consequence, the exact choice of the parameter $\p$ does not matter, as we can replace it by any functionally equivalent parameter  and restrict the quantified sets appropriately by $\FODP$ formulas. 
This, applied to~$\atw$ and~$\brg$,  implies that $\CMSOatwDP$ and $\CMSObrgDP$ have the same expressive power. 
The aforementioned translation relies on the well-quasi-ordering
theorem for annotated graph minors \cite{ProtopapasTW2025colorfulminors} and, in general,  yields a non-constructive translation. 

It is important to note that we do \emph{not} say here that $\CMSOpDP$ collapses to $\FODP$.
For general instantiations of $\p$, the logic $\CMSOpDP$ allows quantification over sets of \textsl{unbounded} 
size, subject only to a bound on the parameter $\p$, a feature that cannot be simulated in $\FODP$ alone.
In other words, while the property $\p(G,X)\le k$ for a given set~$X$ is $\FODP$-expressible, the set $X$ itself cannot be quantified using $\FODP$.
We view this restricted form of set quantification as a natural and very convenient restriction of $\CMSO\!+\!\mathsf{dp}$, which is very well suited for practical use in expressing parameter-bounded combinatorial structures. 

In this work, we will take the perspective of treewidth and in the following, restate results that were proved for $\atw$. 
In particular, we can state the result of Sau, Stamoulis, and Thilikos in~\cite{SauST25Parameterizing} for \CMSObrgDP as follows. 

\begin{proposition}\label{prop_mntsro}
There exist a computable function $f\colon \mathbb{N}^2\to\mathbb{N}$ and an algorithm that, given a formula $\varphi\in\textup{\CMSOatwDP}$ (or, equivalently, $\phi\in\textup{\CMSObrgDP}$), and an $H$-minor-free graph $G$,
decides whether $G\models \varphi$ in time $f(|\varphi|,|H|)\cdot |G|^2$.
\end{proposition}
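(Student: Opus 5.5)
This proposition restates the main theorem of Sau, Stamoulis, and Thilikos~\cite{SauST25Parameterizing}, so the plan has two parts. First, establish the equivalence between the two formulations. Second, outline the structure of their algorithm for the $\CMSOatwDP$ version.

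\emph{Equivalence of the two formulations.} By the grid exclusion theorem, $\atw$ and $\brg$ are functionally equivalent: there is a computable $g$ with $\brg(G,X)\le\atw(G,X)\le g(\brg(G,X))$. The left inequality holds because the $(k\times k)$-grid has treewidth $k$ and is an $X$-minor. The right inequality holds because an $X$-minor of large treewidth contains a large grid minor, and such a grid minor is itself an $X$-minor, since minors of $X$-minors are $X$-minors. Given a formula of one logic, I replace each quantifier $\exists_{\brg\le k}X\,\psi$ by $\exists_{\atw\le g(k)}X\,(\theta_k(X)\wedge\psi)$, where $\theta_k$ is the $\FODP$ formula from \cref{lemm_anotopomin_formula} expressing $\brg(G,X)\le k$; universal quantifiers are handled symmetrically. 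This transfers the running time bound, with the only cost being that $f$ may become non-constructive in $|\varphi|$. To keep $f$ computable, I would instead write $\theta_k$ directly as "no rainbow $(k+1)$-grid", which is $\FODP$-expressible with $\mathcal{O}(k^2)$ branch-set representatives and disjoint-paths atoms.

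\emph{Algorithm for $\CMSOatwDP$ on $H$-minor-free graphs.} I would follow the irrelevant-vertex / flat-wall paradigm, as in~\cite{GolovachST22model}.
\begin{enumerate}
\item If $\tw(G)$ is bounded by a function of $|\varphi|$ and $|H|$, apply Courcelle's theorem directly. This works because the restricted quantifiers and $\DP$ atoms are $\CMSO$-definable.
\item Otherwise, the Flat Wall theorem for $H$-minor-free graphs gives a large flat wall with few apex vertices. Guess the interaction of the apices with the quantified sets by enumerating "types" over the apex set; since there are boundedly many apices, this yields a boundedly many annotated instances.
\item In a sufficiently large homogeneous subwall, find a central vertex $v$ that is irrelevant, meaning $G\models\varphi\iff G-v\models\varphi$. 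Remove $v$ and repeat. With the standard bookkeeping this gives quadratic time.
\end{enumerate}

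\emph{Main obstacle.} The hardest step is the irrelevance argument for quantified sets $X$ with $\atw(G,X)\le k$. Such a set can have unbounded size, so it may pass through the central part of the wall. The key observation is that a set of bounded annotated treewidth cannot be spread two-dimensionally over a large part of the wall: otherwise it would yield a rainbow grid of size exceeding $k$. So $X$ meets the wall only "one-dimensionally", inside a bounded number of thin regions. Using homogeneity, I would then reroute $X$ and all $\DP$-witnessing paths away from $v$, preserving both the bound $\atw\le k$ and the $\CMSO$-type of $X$ as seen from the rest of the graph. I would formalize this via a representative/replacement lemma for annotated boundaried graphs of bounded type.
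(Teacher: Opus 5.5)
Your argument is essentially the paper's. The paper does not reprove this proposition: it quotes the $\CMSOatwDP$ theorem of Sau, Stamoulis, and Thilikos and transfers it to $\CMSObrgDP$ exactly as you do, via $\atw\sim\brg$ from the grid exclusion theorem and the quantifier replacement of \cref{lem_fun_min_same_logic}, and your choice to write $\theta_k$ directly as the exclusion of the rainbow $(k+1)$-grid (to keep $f$ computable) matches the paper's remark that $\brg$, being defined by excluding a single annotated grid, makes such translations constructive.

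Your irrelevant-vertex sketch is therefore not needed. If you meant it as a proof, note that its key observation is misstated. By \cref{prop_subjection}, a set of bounded $\brg$ has only boundedly many vertices far from the perimeter of a large wall; even a long path through the centre already has large $\brg$. It is this bounded-size localization, not mere one-dimensionality, that lets the central part of $X$ be treated like boundedly many first-order variables, which is all that homogeneity of the wall can control.
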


As observed in \cite{SauST25Parameterizing}, the formulas of the compound logic $\tilde{\Theta}^{\DP}$
defined in \cite{fominGSST25compound} can be expressed in $\CMSOatwDP$. 
Therefore, \cref{prop_mntsro} subsumes all previous results on model checking for $H$-minor-free graphs.
Moreover,  in~\cite{SauST25Parameterizing}, it was also argued that the choice of~$\brg$, that is, quantifying over sets
of bounded bidimensionality,
is optimal in \cref{prop_mntsro}, in the sense
that, for every choice of $\p$, that is not parametrically ``weaker than $\brg$'',
$\CMSOp$ is as expressive as $\CMSO$ on planar graphs. 
As \CMSO can define \textsf{NP}-hard problems (such as $3$-colorability) on planar graphs, we cannot expect tractability in this case. 
On the other hand, it remains an interesting question whether $H$-minor-free graphs constitute the combinatorial horizon of model checking for $\CMSOatwDP$.

Notice that \cref{prop_mntsro} and \cref{prop_mntghrro} are incomparable.
Indeed, \cref{prop_mntsro} applies to the more expressive logic $\CMSOatwDP$,
whereas \cref{prop_mntghrro} holds under the more general combinatorial assumption that a fixed graph is excluded as a topological minor.
To the best of our knowledge, prior to our main result (presented below), $\CMSOsizeDP$  was the most expressive logic currently known to admit tractable model checking on graph classes excluding a fixed topological minor.
Furthermore, it is known that this result cannot be extended to more general subgraph-closed graph classes~\cite{SchirrmacherSSTV23}.


\begin{figure}[ht]
	\begin{center}
\scalebox{0.64}{
\begin{tikzpicture}
	\filldraw[line width=2pt,draw=orange,fill=orange, fill opacity = .13]
    		(0, 0) rectangle (1.5, 8);
  	\draw[line width=1.5pt,-stealth]
    		(0, 0) -- (0,10);
  	\draw[line width=1.5pt,-stealth]
    		(0, 0) -- (16,0);
  	\filldraw[line width=2pt,draw=darkorange,fill=darkorange, fill opacity = .12]
    		(0, 0) rectangle (14, 1.5);	
	\filldraw[line width=2pt,draw=yellow!50!black,fill=orange, fill opacity = .1]
    		(0, 0) rectangle (3.5, 6);
	\filldraw[line width=2pt,draw = blue!30!green,fill=green, fill opacity = 0.1]
		(0, 0) rectangle (6, 6);
  	\filldraw[line width=4pt,draw=celestialblue,fill=celestialblue, fill opacity = .10]
    		(0,0) -- (0,4) -- (12,4) -- (12,1.5) -- (14,1.5) -- (14,0) -- (0,0);

    \filldraw[line width=6pt,draw=red,fill=red!10!white, fill opacity = .1]
    		(0, 0) -- (0,6) -- (10, 6) -- (10,1.5) -- (14,1.5) -- (14,0) -- (0,0);

	\node[label={left:\Large\textcolor{brown}{Structure}}]
   		at (0,9.75) {};
   	\node[label={left:\Large\textsf{nowh.\!\!\! d.}}]
   		at (-0.0,8) {};
   	\node[label={left:\Large\textsf{excl.\!\!\! top.\!\!\! min.}}]
   		at (-0.0,6) {};
   	\node[label={left:\Large\textsf{excl.\!\!\! min.}}]
   		at (-0.0,4) {};
   	\node[label={left:\Large\textsf{bd.\!\!\! tw}}]
   		at (-0.0,1.5) {};
	
   	\node[label={below:\Large\FO}]
   		at (1.5,-0.2) {};	
   	\node[label={below:\Large\FOconn}]
   		at (3.5,-0.2) {};
   	\node[label={below:\Large\FODP}]
   		at (6,-0.2) {};
   	\node[label={below:\Large{$\tilde{\Theta}^\DP$}}]
   		at (8,-0.15) {};
    \node[label={below:\Large{$\CMSOttwDP$}}]
   		at (10,-1.15) {};
         \draw[-stealth] (10,-1) -- (10,-0.15);
    \node[label={below:\Large{$\CMSOatwDP$}}]
   		at (12,-2.15) {};
        \draw[-stealth] (12,-2) -- (12,-0.15);
   	\node[label={below:\Large\CMSO}]
   		at (14,-0.2) {};

    \filldraw[line width=2pt,draw=blue!50!white,fill=blue!50!white, fill opacity = .1]
    		(0, 0) -- (0,4) -- (8, 4) -- (8,1.5) -- (14,1.5) -- (14,0) -- (0,0);

	\node[rectangle, minimum size=2pt,draw=black,line width=2pt, fill=orange,
		label={right:\Large Grohe, Kreutzer, \& Siebertz~\cite{GroheKS17dec}}]
			at (1.5,8) {};
	\node[rectangle, minimum size=2pt,draw=black,line width=2pt, fill=yellow]
			at (3.5,6) {};
		\node[label={right:\Large Pilipczuk, Schirrmacher, Siebertz, Toruńczyk, \& Vigny~\cite{PilipczukSSTV22}}]
			at (4.1,7.3) {};
		\draw[-stealth] (4.2,7.1) -- (3.8,6.3);
	\node[rectangle, minimum size=2pt,draw=black,line width=2pt, fill=green]
    at (6,6) {};
    \node[label={right:\Large {Schirrmacher, Siebertz, Stamoulis, Thilikos, \& Vigny~\cite{SchirrmacherSSTV24mode}}}]
			at (6.2,6.6) {};
    \draw[-stealth] (6.4,6.6) -- (6.2,6.2);
	\node[rectangle, minimum size=2pt,draw=black,line width=2pt, fill=white!50!blue] at (8,4) {};
    \node[label={right:\Large {Fomin, Golovach, Sau, Stamoulis, \& Thilikos \cite{fominGSST25compound}}}]
			at (10.3,4.7) {};
        \draw[-stealth] (10.3,4.7) -- (8.2,4.2);    
    \node[rectangle, minimum size=2pt,draw=black,line width=2pt, fill=red,
		label={right:\Large \underline{\bf Our result}}]
			at (10,6) {};
    \node[rectangle, minimum size=2pt,draw=black,line width=2pt, fill=celestialblue,
		label={right:\Large Sau, Stamoulis, \& Thilikos~\cite{SauST25Parameterizing}}]
			at (12,4) {};
	\node[rectangle, minimum size=2pt,draw=black,line width=2pt, fill=orange,
		label={right:\Large Courcelle \cite{Courcelle1990,Courcelle97,Courcelle92},}]
			at (14,1.5) {};
	\node[label={right:\Large Borie, Parker \& Tovey \cite{BoriePT92auto},}]
			at (14,1) {};
	\node[label={right:\Large and Arnborg, Lagergren \&  Seese \cite{ArnborgLS91easy}}]
			at (14,0.5) {};	
	\node[label={below:\Large\textcolor{brown}{Logic}}]
			at (16.2,0) {};	
			\end{tikzpicture}}
	\end{center}

	\caption{Our result compared to the algorithmic meta-theorems on monotone graph classes mentioned in the introduction. Throughout the paper, we focus on \emph{monotone} graph classes, that is, classes closed under taking subgraphs. 
    This is the natural regime for algorithmic meta-theorems based on (topological) minor exclusion. 
    For this reason, we do not include the new tractability results for hereditary classes, such as~\cite{braunfeld2025decomposition,dreier2024first,DreierMS23,dreier2024flip,GajarskyMMOPPSS23,dreier_et_al23}. The colorful surfaces in the figure depict the regions where the model checking for the corresponding logic is fixed-parameter tractable, and note that Courcelle's theorem~\cite{Courcelle1990,Courcelle97,Courcelle92} is included in a region whenever \CMSO collapses to the corresponding logic on graphs of bounded treewidth (for the logic introduced in this paper, cf. \cref{lem_MSOcollapse_bdtw}).}
	\label{fig_amts_full}
\end{figure}


\subparagraph{Torso treewidth: an annotated parameter between annotated treewidth and size.}
In our present work, we study the following parameter called \emph{torso treewidth}. 
Given an annotated graph $(G,X)$, the \emph{torso}
of~$X$ in $G$ is the graph obtained from $G[X]$ by adding, for every connected component $C$ of $G-X$, all edges that make the neighbors of~$V(C)$ in $X$ a clique.
The \emph{torso treewidth} of $(G,X)$, denoted by $\ttw(G,X)$, is defined as the minimum~$k$
such that $X$ is contained in a vertex set whose torso has treewidth at most~$k$.
Clearly, $\atw(G,X)\leq \ttw(G,X)\leq \size(G,X)$, hence $\CMSOttwDP$
is a fragment of $\CMSOatwDP$.

This parameter for annotated graphs has
already been proposed by Jansen and Swennenhuis in \cite{JansenS2024Steiner}
under the name \emph{$X$-free treewidth}
of $G$, and by Hodor, La, Micek, and Rambaud in
\cite{HodorLMR24quick}
under the notation $\tw(G,X)$. Moreover, a
``multi-annotated'' version of torso treewidth has been recently introduced by Protopapas,  Thilikos, and  Wiederrecht in \cite{ProtopapasTW2025colorfulminors} 
for colorful graphs (that is, graphs with multiple annotated sets)
under the name \emph{restricted treewidth}.

\subparagraph{Our results.}
We study the annotated parameter $\ttw$ and the resulting logic $\CMSOttwDP$. 
We first identify an annotated  parameter that relates to $\ttw$ as $\brg$ relates to $\atw$. 
%
Given a graph $G$ and a set $X\subseteq V(G)$, we define the \emph{monodimensionality}
of $X$ in~$G$ as the maximum~$k$ such that 
$(\Gamma_{k},P(\Gamma_{k}))≤(G,X)$, where $P(\Gamma_{k})$ is the perimeter of~$\Gamma_{k}$.
We define the \emph{biggest outer-annotated grid} of $(G,X)$, denoted by  $\bog(G,X)$, as the  monodimensionality of $X$ in~$G$. 
It was proved in \cite{HodorLMR24quick}, that
$\bog$ and $\ttw$ are functionally equivalent parameters.

%

We then show our main result, stating that we can lift the model checking result for $\CMSOsizeDP$ to $\CMSOttwDP$  on classes with excluded topological minors.

\begin{theorem}\label{th_our_g_res_tp_bog}
There exist a computable function $f\colon  \mathbb{N}^2\to\mathbb{N}$ and an algorithm that, given a formula $\varphi\in\textup{\CMSOttwDP}$ (or, equivalently, $\phi\in\textup{\CMSObogDP}$), and an $H$-topological-minor-free graph~$G$,
decides whether $G\models \varphi$ in time $f(|\varphi|,|H|)\cdot |G|^3$.
\end{theorem}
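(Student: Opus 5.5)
The plan is to reduce to \cref{prop_mntghrro} by following the scheme of \cite{SchirrmacherSSTV24mode} for \FODP on topological-minor-free graphs, and to add one new ingredient: a way to handle set variables ranging over sets of bounded torso treewidth. Since $\ttw$ and $\bog$ are functionally equivalent \cite{HodorLMR24quick}, and by \cref{lemm_anotopomin_formula} the condition $\p(G,X)\le k$ is $\FODP$-definable for minor-monotone $\p$, I may switch freely between the two parameters. I will use $\bog$ for combinatorial arguments and $\ttw$ for dynamic programming. The tractability for $\FODP$ on $H$-topological-minor-free graphs rests on a recursive decomposition: a tree decomposition with adhesion bounded by $f(|H|)$ whose torsos are either of bounded degree up to a bounded number of apices, or sufficiently unbreakable, as in the Grohe--Marx structure theorem. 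I plan to reuse this decomposition and show that $\CMSOttwDP$ types are compositional across bounded adhesions and computable bag by bag.

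First, I will define a notion of $q$-type for boundaried graphs $(G,B)$ with $|B|$ bounded, for $\CMSOttwDP$ formulas of quantifier rank $q$ and bound $k$. The key point is the behaviour of a set $X$ with $\ttw(G,X)\le k$ when $G$ is glued along a small separator $B$. The restriction of $X$ to each side has bounded torso treewidth on that side. The interaction between the sides is captured by a bounded amount of data: which vertices of $B$ are in $X$, the torso edges induced on $B$, and the $\DP$-type information across $B$. I will prove a Feferman--Vaught style composition lemma showing that the type of the glued graph is determined by the types of the parts, so the types can be computed bottom-up along the decomposition, as in Courcelle's theorem.

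Second, I will handle a single bag. In the bounded-degree-with-apices case, I will argue that a set $X$ of bounded monodimensionality is severely restricted. Excluding a large grid with annotated perimeter in a graph with no $H$ topological minor forces $X$, after removing a bounded set, to sit in a region of bounded treewidth relative to the rest. More precisely, the torso of a superset of $X$ has bounded treewidth. I will then guess a tree decomposition of width at most $k$ of that torso inside the bag and run Courcelle-style dynamic programming on it. Components of $G$ minus the superset are summarized by their boundaried $\FODP$-types, which are computable by \cref{prop_mntghrro}.

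In the unbreakable case, a set of bounded torso treewidth either has bounded size, in which case it is handled by \cref{prop_mntghrro}, or it is captured by a bounded-width separation structure, in which case I apply the same guessing. I expect this single-bag step to be the main obstacle. In the topological-minor-free setting, unlike the minor-free case of \cref{prop_mntsro}, there is no flat-wall/irrelevant-vertex argument available. The difficulty is therefore to show that set quantification over low-monodimensionality sets does not break the locality and unbreakability arguments that the $\FODP$ algorithm relies on. My first approach will be to prove that in an unbreakable high-degree bag, any set $X$ of bounded $\ttw$ can be replaced, without changing the $\CMSOttwDP$-type, by one meeting the bag in a bounded number of vertices. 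I will do this by exchanging equal-type pieces hanging off small separators, which bounds the number of relevant choices. The running time $f(|\varphi|,|H|)\cdot|G|^3$ then follows because computing the decomposition dominates.
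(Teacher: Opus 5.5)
Your plan has a genuine gap at the single-bag step, which you yourself flag as the main obstacle; as written, that step would not give an FPT algorithm.

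\textbf{Where the proposal fails.} In a bounded-degree-with-apices bag, the statement ``the torso of a superset of $X$ has bounded treewidth'' is just the definition of $\ttw$ (equivalently, bounded $\bog$). So bounded degree buys nothing: sets with $\ttw\le w$ can still be arbitrarily large inside the bag, e.g.\ long paths or large bounded-treewidth regions. ``Guessing'' a superset $X'$ together with a width-$w$ decomposition of its torso is not a bounded choice. Moreover, the quantifier, possibly alternating with others, ranges over \emph{all} sets of torso treewidth at most $w$. Hence no single torso decomposition can be fixed in advance to run Courcelle-style dynamic programming on.

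In the unbreakable case you leave open the alternative ``captured by a bounded-width separation structure'', and the piece-exchange argument is only announced. Also, Grohe--Marx torsos are nearly bounded-degree or clique-minor-free, not unbreakable.

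A secondary issue concerns your composition lemma. The interface data you list (which boundary vertices lie in $X$, torso edges on $B$, $\DP$-information) does not by itself determine whether $\ttw\le k$ holds after gluing, because $\ttw$ is a global minimum over supersets. The paper instead expresses $\ttw\le k$ by a finite set of annotated topological-minor obstructions. It then shows that equal extended folios of the annotated parts suffice (\cref{lem_repl_compos}), and builds this into extended types.

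\textbf{The missing ideas.} First, with the right parameters the hard alternative never occurs. Compute a strongly $(q,w+1)$-unbreakable decomposition, where $w$ is the $\ttw$-rank of $\varphi$. In a $(q,w+1)$-unbreakable graph with at least $3q$ vertices, every $X$ with $\ttw(G,X)\le w$, and indeed every optimal superset $X'$, has at most $q$ vertices (\cref{lem_small_unbreakable}). The argument runs as follows.
\begin{itemize}
\item Each component $C$ of $G-X'$ has at most $w+1$ attachments, since they form a clique in the torso.
\item So if $|X'|>q$, unbreakability forces $|C\cup N(C)|\le q$.
\item Hanging these pieces off a width-$w$ decomposition of the torso yields a balanced separation of order $w+1$, a contradiction.
\end{itemize}
Thus bounded-$\ttw$ set quantifiers become quantifiers over $q$-tuples, and $\ttw\le w$ of a tuple is $\FODP$-definable. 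Hence $\CMSOttwDP$ collapses to $\FODP$ on such graphs (\cref{lem_collapse_to_FODP}). The algorithm of \cite{SchirrmacherSSTV24mode} then serves as a black box, with no bounded-degree versus clique-minor case split.

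Second, strong unbreakability only makes a bag unbreakable inside its cone. Quantified sets spread into the subtrees, and a node may have unboundedly many children. The paper processes nodes bottom-up as follows.
\begin{itemize}
\item It first replaces the children's cones, grouped by maximal adhesion, by bounded-size representatives of the same extended type. These are computed with Courcelle's theorem, since those parts have bounded treewidth (\cref{lem_rep_tw}).
\item After this, the whole cone is $(q',w+1)$-unbreakable (\cref{lem_cone_unbreakability}) and still $H$-topological-minor-free (\cref{lem_excluded_top_minor}).
\item \cref{lem_representatives} then computes a bounded representative of the cone. It uses the $\FODP$ model checker with witness finding to shrink the candidate sets, followed by folio-based replacement.
\end{itemize}
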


We also show (cf. \cref{sec_hardness})
that \cref{th_our_g_res_tp_bog}, even for \CMSOttw alone, is optimal in the sense that it cannot be extended to any monotone graph class that contains all graphs as topological minors.
On the other hand, it is an open question whether 
$\ttw$ is the most ``general'' annotated graph parameter for which 
an AMT as in \cref{th_our_g_res_tp_bog} exists.

At this point, we emphasize that
torso treewidth is closer to the mechanism underlying the definition of the compound
logic $\tilde{\Theta}^{\DP}$. Indeed, a formula of
$\tilde{\Theta}^{\DP}$ roughly expresses the existence of
a modulator $X$ whose torso has bounded treewidth and satisfies some $\CMSO$ formula $\beta$, while $G-X$
satisfies some $\FODP$ formula~$\gamma$.
In this way, formulas in $\tilde{\Theta}^{\DP}$
are compound formulas in which $\beta$ specifies
a modification of the graph $G$, whereas $\gamma$ specifies
the target property after this modification.
In \cite{fominGSST25compound}, numerous algorithmic applications
are presented for graph modification
problems expressible within this modulator/target framework. 
It follows that every formula
in $\tilde{\Theta}^{\DP}$ can be expressed
using a bounded torso treewidth quantification
over $X$, which readily  implies that $\tilde{\Theta}^{\DP}$
can be viewed as a fragment of $\CMSOttwDP$.
Consequently, all results of \cite{fominGSST25compound},
together with their numerous applications to graph modification
problems, also hold in the more general setting of
$H$-topological-minor-free graph classes.

When comparing $\CMSOttwDP$ and $\FODP$, a typical example to consider is the parameter of $\Hcal$-treewidth
introduced in \cite{EibenGHK21}. For every minor-closed graph class $\Hcal$, checking whether $\Hcal$-treewidth$(G)≤k$ can be expressed in $\CMSOttwDP$ (even in $\tilde{\Theta}^{\DP}$) while  there is evidence that this is not possible in $\FODP$ (see also \cite{fominGSST25compound} for more examples).

On the other hand, $\CMSOttwDP$ can express
problems that fall outside the modulator/target scheme of~$\tilde{\Theta}^{\DP}$.
For instance, one may ask whether the modulator $X$ for which
$G-X$ satisfies the target property is \emph{unique},
as the logic also allows universal quantification over $X$.
Moreover, universal quantification
enables us to require the existence of multiple modulators~$X$
with prescribed diversity, for instance,
satisfying specific Hamming-distance constraints, as suggested by
the ``solution diversity’’ framework introduced by
Baste, Fellows, Jaffke, Masarík,
de Oliveira Oliveira, Philip, and Rosamond
in \cite{BasteFJMOPR22Diversity}.
This suggests that $\CMSOttwDP$
provides a conceptually simpler tool for expressing broad families
of algorithmic problems.
Compared to approaches that define expressibility languages
via problem-definition schemes, such as the modulator/target scheme
proposed in \cite{SauST25Parameterizing}, it offers a more direct
and uniform framework. 

\subparagraph{Our proof and the meta-theoretic value.}
The proof of \cref{th_our_g_res_tp_bog} follows the general strategy developed in the model checking meta-theorems for $\FODP$ on topological-minor-free graph classes, but requires substantial new ingredients to handle restricted \CMSO quantification.
We start by computing a strongly unbreakable decomposition of the input graph using the framework of Cygan, Lokshtanov,  Pilipczuk,  Pilipczuk,  
Saurabh~\cite{cygan2019minimum},
which reduces the problem to understanding the logic on highly connected parts.

We then show that on unbreakable graphs, formulas of $\CMSOttwDP$ collapse to $\CMSOsizeDP$ (which is equivalent to $\FODP$). 
More precisely, when the graph is highly connected, every set quantification over sets $X$ in $\CMSOttwDP$ is restricted to sets of bounded size, and the torso treewidth of $X$ can be expressed in $\FODP$. 

We then establish that $\CMSOttwDP$ satisfies the \textsl{compositionality property}, which is a key property also of \CMSO and \FO. 
Informally, compositionality means that when gluing two graphs along a small interface, the type of the obtained graph depends only on the types of the two graphs that are glued. 
In fact, this follows from a more general result that this property 
is satisfied by $\CMSOpDP$
for \textsl{every} minor-monotone $\p$.
This is proved in \cref{sec_compositionality} and is based on the fact 
that every minor-monotone 
annotated parameter is completely characterized by a \textsl{finite} set of obstructions (observed in  \cite{ProtopapasTW2025colorfulminors} and based on the results in \cite{RobertsonS10GraphminorsXXIII}).
While, in general, this compositionality  result is not constructive, the definitions 
of $\brg$ and $\bog$ (which are equivalent to $\atw$ and $\ttw$ respectively)
make it constructive as both 
parameters are defined in terms of the exclusion of 
a unique parametric annotated graph, that is $(\Gamma_{k},V(\Gamma_{k}))$
and $(\Gamma_{k},P(\Gamma_{k}))$ respectively.
The compositionality property suggests 
that $\CMSOpDP$ is a robust  
logical framework for expressing the algorithmic
potential of fragments of $\CMSO$
and investigating what is the  combinatorial horizon where 
they permit the proof of AMTs.

Finally, we combine these ingredients via dynamic programming over the unbreakable
decomposition, using boundaried graphs and folios, as inspired by the techniques of~\cite{SchirrmacherSSTV24mode},
to compute and store small 
type-preserving representative graphs.
In particular, in order to compute a small type-preserving representative, we rewrite the types in $\FODP$, since we know that our logic collapses to this one on unbreakable graphs, and we use the model checking algorithm of~\cite{SchirrmacherSSTV24mode} to compute this representative. The unbreakability property of our decomposition does not guarantee that the graph ``hanging'' from each node is unbreakable, but we can guarantee unbreakability (with worse bounds) assuming that we have already computed small-size representatives for the children nodes. Let us stress here that the collapse result together with the use of the unbreakable decomposition allows for a clean reduction to the main result of~\cite{SchirrmacherSSTV24mode}. Interestingly, our algorithm bypasses the explicit distinction into the ``clique-minor-free'' and the ``large-clique-minor'' cases of the algorithm of~\cite{SchirrmacherSSTV24mode} by doing so implicitly when using the model checking algorithm of~\cite{SchirrmacherSSTV24mode} as a blackbox. Consequently, it also avoids the direct use of the algorithm of~\cite{SauST25Parameterizing} for minor-free graphs.

Overall, this yields the claimed fully constructive fixed-parameter tractable model checking algorithm for
$\CMSOttwDP$ on $H$-topological-minor-free graph classes.

\medskip 
From a broader perspective, our result provides a new and conceptually clean algorithmic meta-theorem principle for fragments of \CMSO beyond the classical bounded-treewidth regime.
It shows that structural restrictions on set quantification, formulated via minor-monotone annotated graph parameters, lead to robust and tractable logics on large graph classes, in particular, on classes excluding a fixed topological minor.
This unifies and subsumes a number of previously disparate formalisms, including \FODP\ and compound logic, within a single, logically natural framework.
We view this as evidence that restricting \CMSO quantification by elegant structural parameters is a powerful and flexible methodology for designing expressive and yet tractable logics, and that the search for suitable parameters is a fruitful direction for future research on algorithmic meta-theorems.


\subparagraph{The use of the disjoint-paths predicate.}
As a final note, we would like to discuss the presence of the disjoint-paths predicate in our logic. This is motivated principally by two reasons. The first one is related to previous work: as we mentioned, part of our motivation is to generalize the meta-theorems of~\cite{SchirrmacherSSTV24mode,fominGSST25compound} and provide a unified way of expressing all problems captured by these meta-theorems. For this, and in order to compare our logic with \FODP, $\tilde{\Theta}^{\DP}$, (and even $\CMSOatwDP$), it seems natural to include the disjoint-paths predicate as a feature of our logical framework. From a more technical point of view, the disjoint-paths predicate seems to play an essential role in the proof of compositionality of our logic (\cref{lem_compositionality}) and, therefore, in our model checking algorithm. In fact, even for the restricted fragment $\CMSOttw$, our approach for model checking seems to need the stronger assumption of working with $\CMSOttwDP$ to go through. Namely, for $\CMSOttw$ it seems essential to compute type-preserving representatives for the richer fragment \CMSOttwDP, because only these encode the information needed to show correctness of our replacement step. These are a couple of reasons why the disjoint-paths predicate seems to be an integral part of our approach. However, it is quite natural to ask whether its use comes with the cost of imposing hardness of model checking on monotone classes that do not exclude a topological minor. We answer in~\cref{sec_hardness} that this is not the case: hardness also holds even in the absence of the disjoint-paths predicate.

\subparagraph{Organization of the paper.}
After the introduction of needed notation for the technical results in \cref{sec_prelims}, \cref{sec_param} and \ref{sec_collapse} are devoted to further discussions on the graph parameters and their collapse on unbreakable graphs.
Then in \cref{sec_prelims_folio}, \ref{sec_compositionality}, and \ref{sec_represent}, we are providing key ingredients for our model checking algorithm, presented in~\cref{sec_mc}.
Before concluding the paper, we finally discuss the expressive power of our logic in \cref{sec_power} and the hardness of the model checking problem on classes admitting efficient encoding of topological minors in \cref{sec_hardness}. We conclude the article in \cref{sec_conclusions} with some directions for further research.

\section{Preliminaries}
\label{sec_prelims}

\subparagraph{Sets and integers.} We denote by $\mathbb{N}$ the set of non-negative integers.
For an integer $p \geq 1,$ we set $[p] = \{1,\ldots,p\}$ and $\mathbb{N}_{\geq p} = \mathbb{N} \setminus [0, p - 1].$
Also, given a function $f\colon A\to B$, we always consider its extension $f\colon 2^A\to B$ such that for every $X\subseteq A,$ $f(X)=\{f(a)\mid a \in X\}.$
Given two functions $\chi,\psi\colon \mathbb{N}\rightarrow \mathbb{N},$ 
some $r\in\Nbbb_{≥1}$, and a sequence $x_{1},\ldots,x_{r}$ of variables, 
we write $\chi(n)=\mathcal{O}_{x_{1},\ldots,x_{r}}(\psi(n))$ in order to denote that there exists a  function $f\colon\mathbb{N}^{r} \rightarrow \mathbb{N}$
such that $\chi(n)=\mathcal{O}(f(x_{1},\ldots,x_{r})\cdot \psi(n)).$ 

\subsection{Graphs and graph parameters}
\label{subsec_prelim_graphs}

\subparagraph{Graphs and annotated graphs.}
All graphs in this paper are finite, undirected graphs without loops and without multi-edges.
We do not distinguish between isomorphic graphs and write $\Gcal_{\mathsf{all}}$ for the class of all graphs.  
We write $V(G)$ for the vertex set and $E(G)$ for the edge set of a graph $G$. We write $\|G\|$ for $|V(G)|+|E(G)|$. 
For a vertex subset $X\subseteq V(G)$, we write $G[X]$ for the subgraph of $G$ induced by $X$. 


An \emph{annotated graph} is a pair $(G,X),$ where $X\subseteq V(G)$. 
Two annotated graphs $(G,X)$ and $(G',X')$ are isomorphic 
if there is an isomorphism $\theta\colon  V(G)\to V(G')$ from $G$ to $G'$ such that 
$\theta(X)=X'$.
An annotated graph $(H,Y)$ is an \emph{annotated subgraph}
of an annotated graph~$(G,X)$ if $H$ is a subgraph of $G$
and  $Y\subseteq X$.
We  denote by~$\Acal_\mathsf{all}$ the class of all annotated graphs.

Throughout the paper, when we speak about annotated graphs and formulas (from any logic), we have a signature extended by a unary predicate encoding the annotated set. We will not make the signature explicit every time but assume it tacitly.

\subparagraph{Connectivity and disjoint paths.}
Let $G$ be a graph and $u,v\in V(G)$. 
A $u$-$v$-path~$P$ in $G$ is a sequence $v_1,\dots, v_\ell$ of pairwise different vertices such that $\{v_i,v_{i+1}\}\in E(G)$ for all $1\leq i<\ell$ and $v_1=u$ and $v_\ell=v$. 
The vertices $v_2,\ldots, v_{\ell-1}$ are the \emph{internal vertices} of $P$ and the vertices~$u$ and~$v$ are its \emph{endpoints}. 
Two vertices~$u,v$ are \emph{connected} if there exists a path with endpoints $u,v$. 
A graph is \emph{connected} if every two of its vertices are connected and a set $X\subseteq V(G)$ is \emph{connected} in $G$ if $G[X]$ is connected.
Two paths $P$, $Q$ are \emph{internally vertex-disjoint} if no vertex of one path appears as an internal vertex of the other path. 
They are \emph{disjoint} if their vertex sets are disjoint. 

Let $G$ be a graph, $r\in\mathbb{N}$, and $s_1, t_1,\ldots, s_r, t_r$ pairwise distinct vertices of $G$. 
Then $\DP_r(s_1, t_1,\ldots, s_r, t_r)$ is true in $G$ if and only if there are disjoint paths between~$s_i$ and $t_i$ for $1\leq i\leq r$.
We use $\DP$ instead of $\DP_r$ when $r$ is clear from the context.

\subparagraph{Minors and topological minors.}
Given a graph $G$ and some $S\subseteq V(G)$, we say that $S$ is a \emph{connected} set if $G[S]$ is a connected graph.
A graph $H$ is a \emph{minor} of a graph $G$ if $H$ can be obtained from a subgraph of~$G$ by contracting edges. 
A \emph{minor model} of $H$ in $G$ is collection $\Bcal=\{B_{v}\mid v\in V(H)\}$ of pairwise-disjoint 
connected subsets of  $V(G)$ where, 
  for every $\{u,v\}\in E(H)$, $B_{u}\cup B_{v}$ is connected. Each set $B_{v}$ is called the \emph{branch set} of $v$. 


Given two annotated  graphs $(H,Y)$ and $(G,X)$, we say that 
$(H,Y)$ is a \emph{minor} of~$(G,X)$, denoted 
by $(H,Y)≤(G,X)$,  
if there exists a minor model $\Bcal$ of $H$ in $G$ such that
for every $u \in Y$, we have $B_{u} \cap X \neq \emptyset$. 

The \emph{dissolution} of a vertex $v$ of degree two 
whose two neighbors are $x$ and $y$ 
is the operation of removing~$v$ and adding the edge $\{x,y\}$ (if it does not already exist).
A graph~$H$ is a \emph{topological minor} of a graph $G$ if~$H$ can be obtained from a subgraph of $G$ by dissolving edges. 

A {\em{topological minor model}} of a graph~$H$ in a graph $G$ is an 
injective mapping $\eta$ that maps vertices of $H$ to vertices of $G$ and edges of~$H$ to pairwise internally vertex-disjoint paths in~$G$ so that for every $\{u,v\}\in E(H)$ the path $\eta(\{u,v\})$ has the endpoints $\eta(u)$ and~$\eta(v)$.
The vertices in  $\{\eta(v)\mid v\in V(H)\}$ are called \emph{principal vertices} of the model in $G$.
A graph~$H$ is a \emph{topological minor} of $G$ if there is a topological minor model of $H$ in $G$.
We say that an annotated graph $(H,Y)$ is a \emph{topological minor} of 
an annotated graph $(G,X)$, denoted by $(H,Y)\pretp (G,X)$,
if there exists a topological minor model $\eta$ of $H$ in $G$ such that $\eta(Y)\subseteq X$. 

We call a graph $G$ {\em{$H$-minor-free}}, and {\em{$H$-topological-minor-free}}, respectively, if $H$ is not a minor, or topological minor of $G$. 
We call a class $\Cc$ of graphs {\em{(topological-)minor-free}} if there exists a graph~$H$ such that every member of $\Cc$ is $H$-(topological-)minor-free. 

Given an annotated graph  $(H,Y)$, we define $\ext(H,Y)$ as the 
set of all topological-minor minimal annotated graphs $(G,X)$ that contain $(H,Y)$ as an annotated minor.
We need the following easy observation.

\begin{observation}
\label{obs_topominormany}
Let $(H,Y)$ be an $n$-vertex annotated graph. Then, every graph in $\ext(H,Y)$ has at most~$\Ocal(n^2)$
vertices.

\end{observation}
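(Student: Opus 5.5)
Let $(G,X)\in\ext(H,Y)$ and fix a minor model $\Bcal=\{B_v\mid v\in V(H)\}$ of $(H,Y)$ in $(G,X)$. By topological-minor minimality, $(G,X)$ cannot have a proper annotated topological minor that still contains $(H,Y)$ as an annotated minor. The plan is to exhibit such a topological minor of size $\Ocal(n^2)$ inside $(G,X)$ and conclude that it equals $(G,X)$.

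\emph{Step 1: replace the model by a forest.} For every $v\in V(H)$ we choose the following data:
\begin{itemize}
\item for each edge $\{u,v\}\in E(H)$, an edge $e_{uv}$ of $G$ between $B_u$ and $B_v$;
\item for $v\in Y$, a vertex $x_v\in B_v\cap X$.
\end{itemize}
Let $T_v\subseteq G[B_v]$ be a minimal subtree containing all endpoints of the chosen edges $e_{uv}$ that lie in $B_v$, together with $x_v$ when $v\in Y$. Since $G[B_v]$ is connected, $T_v$ exists. Let $G'$ be the union of the trees $T_v$ and the edges $e_{uv}$, and let $X'$ be the set of chosen vertices $x_v$. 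Then $\{V(T_v)\}$ is still a model of $(H,Y)$ in $(G',X')$, and $(G',X')$ is an annotated subgraph of $(G,X)$.

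\emph{Step 2: count leaves and branch vertices.} By minimality, every leaf of $T_v$ is a terminal, and $T_v$ has at most $\deg_H(v)+1$ terminals. A tree with $t$ leaves has at most $t-2$ vertices of degree at least $3$. Hence each $T_v$ has $\Ocal(\deg_H(v)+1)$ vertices that are terminals or have degree $\neq 2$. Summing over $v$ gives $\Ocal(|E(H)|+n)=\Ocal(n^2)$ such vertices in $G'$.

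\emph{Step 3: dissolve.} Every remaining vertex of $G'$ has degree $2$ inside its tree, is not a terminal, and is not incident to any edge $e_{uv}$. We dissolve all of these. Each dissolution keeps the tree structure, so we again obtain a minor model of $(H,Y)$, and the annotated vertices $x_v$ are kept. The result $(G'',X'')$ is an annotated topological minor of $(G,X)$. Its model uses the principal vertices as the surviving vertices and the dissolved tree paths as subdivided edges; these paths are internally disjoint because they come from disjoint parts of trees. Also $(G'',X'')$ contains $(H,Y)$ as an annotated minor and has $\Ocal(n^2)$ vertices.

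\emph{Step 4: apply minimality.} Since $(G,X)$ is topological-minor minimal, $(G,X)$ is isomorphic to $(G'',X'')$, so $|V(G)|=\Ocal(n^2)$.

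The main point needing care is Step 3. Dissolving a vertex $w$ whose two neighbours are already adjacent does not create a new edge, but it is still a valid topological-minor operation, and the branch sets stay connected. We also need $\ext$ to be defined up to isomorphism, so that minimality forces equality rather than just containment.
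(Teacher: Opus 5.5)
Your proposal is correct and follows essentially the paper's argument. Each branch set is shrunk to a tree spanning its attachment vertices (one endpoint per incident $H$-edge, plus one annotated vertex if $v\in Y$), topological-minor minimality rules out non-attachment vertices of degree at most two, and the leaf/branching-vertex count bounds each tree linearly in $n$. The only differences are presentational: you build the small annotated topological minor explicitly and then use minimality to identify it with $(G,X)$, while the paper argues directly on a minimal $(G,X)$. As a minor point, when $v$ is isolated in $H$ and $v\notin Y$ the terminal set is empty, so you should take $T_v$ to be a single vertex.
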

\begin{proof}
Let $(H,Y)$ be an $n$-vertex annotated graph and let $(G,X)\in \ext(H,Y)$.
Thus, $(H,Y)$ is an annotated \emph{minor} of $(G,X)$ for some $X\subseteq V(G)$, and $(G,X)$ is
topological-minor minimal with this property. Fix a witnessing minor-model $\Bcal=\{B_{v}\mid v\in V(H)\}$.

We may assume that $(G,X)$ is annotated-subgraph minimal (that is $|V(G)|+|E(G)|+|X|$ is minimized) containing the model $\Bcal$.
Note that $V(G)$ is the union of all branch sets together with a set of edges between branch sets witnessing all edges of $H$. Also, again by minimality, for every $v\in Y$, exactly one vertex of~$B_v$ lies in $X$.
Furthermore, by topological-minor minimality, each branch set $B_v$ 
induces a tree~$T_v$ in~$G$.
An \emph{attachment} of $T_{v}$ is 
a vertex of $T_{v}$ that either has a (unique, according to the annotated-subgraph minimality) neighbor outside $B_v$ in $G$ or, in case $v\in Y$, 
is the  (unique, due to annotated-subgraph minimality) vertex of $B_v\cap X$ witnessing the annotation.
We use $A_{v}$ to denote the attachment vertices of~$v$
and observe that $|V(T_{v})|≤|V(H)|-1+1=|V(H)|$. 
Notice now that, by topological  minor minimality, 
every vertex of $T_{v}$ that is not an attachment 
should be an internal vertex of $T_{v}$ of degree at least three in~$T_{v}$. This in turn 
implies that $|V(T_{v})\setminus A_{v}|≤|A_{v}|$ which 
implies that $|V(T_{v})|≤2 \cdot |V(H)$.
We conclude that $|V(G)|≤|V(H)|\cdot \max\{|V(T_{v})|\mid v\in V(H)\}≤2\cdot |V(T_{v})|^2=2\cdot |V(H)|^2$, as required.
 \end{proof}

When any definition for annotated (topological) minors can be applied in the case of empty annotation, i.e., $X=\emptyset$,
then, for simplicity, we drop the term ``annotated''.






\subparagraph{Annotated graph parameters.}
An \emph{annotated graph parameter} is a function \mbox{$\p\colon \aall\rightarrow \N$},  mapping annotated graphs to non-negative integers.

\begin{observation}
\label{obs_mosubm}
Let   $\p\colon \aall\to\Nbbb$ be a minor-monotone annotated graph parameter
and let $(G,X)$ be an annotated graph. Then, for all $Y\subseteq X$, it 
holds that $\p(G,Y)≤\p(G,X)$.
\end{observation}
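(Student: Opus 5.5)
The plan is to show directly that $(G,Y)$ is an annotated minor of $(G,X)$ whenever $Y\subseteq X$. The statement then follows immediately from minor-monotonicity of $\p$.

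For the minor model we take the trivial one. For every $v\in V(G)$, set $B_v=\{v\}$. These branch sets are pairwise disjoint and each induces a connected (single-vertex) subgraph of $G$. For every edge $\{u,v\}\in E(G)$, the union $B_u\cup B_v=\{u,v\}$ is connected in $G$. Hence $\Bcal=\{B_v\mid v\in V(G)\}$ is a minor model of $G$ in $G$.

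It remains to check the annotation condition in the definition of $(G,Y)\leq (G,X)$. For every $y\in Y$ we need $B_y\cap X\neq\emptyset$. Since $Y\subseteq X$, we have $y\in X$, so $B_y\cap X=\{y\}\neq\emptyset$. Therefore $(G,Y)\leq(G,X)$.

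Minor-monotonicity of $\p$ now gives $\p(G,Y)\le\p(G,X)$, as claimed. There is no real obstacle here. The only point requiring care is that the annotated-minor relation constrains only the annotated vertices $Y$ of the smaller graph, so shrinking the annotation is always allowed.
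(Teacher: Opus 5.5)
Your proof is correct: the singleton branch sets $B_v=\{v\}$ witness $(G,Y)\leq(G,X)$, since condition (ii) only requires $B_y\cap X\neq\emptyset$ for $y\in Y\subseteq X$, and minor-monotonicity of $\p$ then gives $\p(G,Y)\le\p(G,X)$. The paper states this as an observation without writing out a proof, and your identity-model argument is exactly the immediate justification it leaves implicit.
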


\subparagraph{Obstructions.}
Given a minor-monotone annotated graph parameter  $\p\colon \aall\to\Nbbb$,
we define~$\Ocal_{k}^{\p}$ as the set of all $\leq$-minimal annotated graphs 
where the value of $\p$ exceeds $k$.

\begin{lemma}
    \label{prop_minor_obs}
For every minor-monotone annotated graph parameter $\p\colon \aall\to\Nbbb$, there is a function 
$f_{\p}\colon \Nbbb\to\Nbbb$ and a collection of finite sets $\{\Ocal_{k}^{\p}\mid k\in\Nbbb\}$
such that, for every $k\in\Nbbb$,   $|\Ocal_{k}^{\p}|\leq f_{\p}(k)$
and for every annotated graph $(G,X)$, $\p(G,X)\leq k$ if and only if $(G,X)$ excludes all graphs in $\Ocal_{k}^{\p}$ as topological minors. 
\end{lemma}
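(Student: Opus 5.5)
Two ingredients give this. The first is the well-quasi-ordering of annotated graphs under the annotated minor relation $\leq$, due to Robertson and Seymour (Graph Minors XXIII, for graphs with labelled vertices), as observed in \cite{ProtopapasTW2025colorfulminors}. The second is \cref{obs_topominormany}, which turns annotated-minor containment into finitely many annotated-topological-minor containments.

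\textbf{Finiteness of the minimal obstructions.} Fix $k$ and let $\mathcal{M}_k$ be the set of $\leq$-minimal annotated graphs $(G,X)$ with $\p(G,X)>k$. By minimality, $\mathcal{M}_k$ is an antichain in $(\aall,\leq)$. The wqo theorem then makes $\mathcal{M}_k$ finite. By minor-monotonicity of $\p$, the set $\{(G,X)\mid \p(G,X)\le k\}$ is closed under $\leq$. Hence $\p(G,X)\le k$ holds if and only if no member of $\mathcal{M}_k$ is an annotated minor of $(G,X)$. In one direction, monotonicity gives this directly. In the other, if $\p(G,X)>k$, then every $\leq$-minimal element below $(G,X)$ with value $>k$ lies in $\mathcal{M}_k$, and such an element exists because $\leq$ is well-founded on finite structures.

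\textbf{From minors to topological minors.} Set $\Ocal_k^{\p}=\bigcup_{(H,Y)\in\mathcal{M}_k}\ext(H,Y)$. By \cref{obs_topominormany}, every member of $\ext(H,Y)$ has $\Ocal(|V(H)|^2)$ vertices. So each $\ext(H,Y)$ is finite up to isomorphism, and hence so is $\Ocal_k^{\p}$. We define $f_\p(k)=|\Ocal_k^{\p}|$. This is well defined but not necessarily computable, which is acceptable because the statement only asks for existence.

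It remains to show that $(H,Y)\le(G,X)$ holds if and only if some $(G',X')\in\ext(H,Y)$ satisfies $(G',X')\pretp(G,X)$.
\begin{itemize}
\item[($\Leftarrow$)] An annotated topological minor is also an annotated minor: contract each subdivided path into one of its end branch sets, and annotation is preserved because $\eta(Y)\subseteq X$. Transitivity of $\leq$ then yields $(H,Y)\le(G,X)$.
\item[($\Rightarrow$)] The set of annotated graphs that contain $(H,Y)$ as an annotated minor and are topological minors of $(G,X)$ contains $(G,X)$ itself, so it is nonempty. Take an element that is $\pretp$-minimal in this set; minimal elements exist since the sizes are finite. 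This element belongs to $\ext(H,Y)$ by definition, and transitivity of $\pretp$ gives the claim.
\end{itemize}

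\textbf{Main obstacle.} The delicate point is the definition of $\ext(H,Y)$. It must be read as $\pretp$-minimal among the annotated graphs containing $(H,Y)$ as an annotated minor, with minimality also allowing the annotation to shrink. In addition, annotated topological minors must respect the annotation in the direction $\eta(Y)\subseteq X$, so that transitivity holds. Once these conventions are checked, the rest follows from the cited wqo theorem. Note also that the obstruction set used in the statement is this $\Ocal_k^{\p}$, obtained from the $\leq$-minimal obstructions via $\ext$.
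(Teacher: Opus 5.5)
Your proposal is correct and follows the paper's proof. You establish finiteness of the $\leq$-minimal obstruction set through the annotated well-quasi-ordering result (Robertson--Seymour, as observed in \cite{ProtopapasTW2025colorfulminors}), then take $\Ocal_k^{\p}=\bigcup\ext(H,Y)$ over these obstructions, with sizes bounded via \cref{obs_topominormany}. The paper justifies the equivalence between $(H,Y)\le(G,X)$ and topological containment of some member of $\ext(H,Y)$ only ``by the definition of $\ext$''; you spell it out, using a $\pretp$-minimal element among the topological minors of $(G,X)$ that contain $(H,Y)$.
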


\begin{proof}
Consider the class $\Gcal_{\p,k}=\{(G,X)\mid \p(G,X)\leq k\}$ and observe that, for every $k\in\Nbbb,$ 
$\Gcal_{\p,k}$ is 
closed under annotated minors.
We define $\Zcal_{k}^{\p}$ as the set  of the minor-minimal annotated graphs that do not belong to $\Gcal_{\p,k}$ and observe that, for every $k$,
$\p(G,X)\leq k$ if and only if $(G,X)$ minor-excludes all annotated graphs in $\Zcal_{k}^{\p}$.
As observed in~\cite[Theorem~1.1]{ProtopapasTW2025colorfulminors}, based on the results of Robertson and Seymour in \cite{RobertsonS10GraphminorsXXIII}, the set 
$\Zcal_{k}^{\p}$
is finite, and therefore, there is a function 
$g_{\p}\colon \Nbbb\to\Nbbb$ such that 
every annotated graph in~$\Zcal_{k}^{\p}$ has at most $g_{\p}(k)$ vertices.
We now set $\Ocal_{k}^{\p}=\bigcup_{(Z,Y)\in\Zcal_{k}^{\p}}\ext(Z,Y)$.
By the definition of $\ext$,
we have that an annotated graph minor-excludes all graphs in $\Zcal_{k}^{\p}$ if and only 
if $(G,X)$ topological-minor-excludes all graphs in $\Ocal_{k}^{\p}$.
From \cref{obs_topominormany}, each of its annotated graphs contains  
$\Ocal((g_{\p}(k))^2)$ vertices. Therefore, 
there is some function 
$f_{\p}\colon \Nbbb\to\Nbbb$
as the one required for  
the lemma to hold.
\end{proof}

\begin{lemma}
\label{lemm_anotopomin_formula}
For every minor-monotone annotated graph parameter  $\p\colon \aall\to\Nbbb$ and 
every $k\in\Nbbb$, there is a formula $\sigma_{k}^{\p}\in\textup{\FODP}$
such that $(G,X)\models\sigma_{k}^{\p}$ iff $\p(G,X)\leq k$.
\end{lemma}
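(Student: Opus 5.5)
By \cref{prop_minor_obs}, $\p(G,X)\le k$ holds if and only if $(G,X)$ contains no annotated graph of the finite set $\Ocal_{k}^{\p}$ as an annotated topological minor. We therefore set
\[
\sigma_{k}^{\p} \;=\; \bigwedge_{(Z,Y)\in\Ocal_{k}^{\p}} \neg\,\tau_{(Z,Y)},
\]
where $\tau_{(Z,Y)}\in\FODP$ expresses that $(Z,Y)\pretp(G,X)$. Since $\Ocal_{k}^{\p}$ is finite, this is a finite conjunction, so everything reduces to writing $\tau_{(Z,Y)}$ for one fixed annotated graph $(Z,Y)$.

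Let $V(Z)=\{z_1,\dots,z_n\}$ and $E(Z)=\{e_1,\dots,e_m\}$. The formula $\tau_{(Z,Y)}$ existentially quantifies pairwise distinct vertices $x_1,\dots,x_n$ as the principal vertices, and requires $X(x_i)$ for each $z_i\in Y$. It must then assert that for every edge $e_j=\{z_a,z_b\}$ there is an $x_a$-$x_b$ path, and that these $m$ paths are pairwise internally vertex-disjoint and avoid all other principal vertices.

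The main obstacle is that $\DP$ demands paths with pairwise distinct and disjoint endpoints, whereas here the principal vertices are shared endpoints of many paths. We resolve this by additionally quantifying, for each edge $e_j=\{z_a,z_b\}$ of $Z$, two ``port'' vertices $p_j^a$ and $p_j^b$ with the following constraints.
\begin{itemize}
\item All ports and principal vertices are pairwise distinct.
\item $p_j^a$ is adjacent to $x_a$ and $p_j^b$ is adjacent to $x_b$.
\item The paths themselves are given by the single predicate $\DP_m(p_1^{a_1},p_1^{b_1},\dots,p_m^{a_m},p_m^{b_m})$.
\end{itemize}
There is one complication: an edge realized by a path with no internal vertex. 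So we take a disjunction over all subsets $F\subseteq E(Z)$ of edges realized directly by edges of $G$, writing an $\FO$ adjacency atom $x_a\sim x_b$ for $e_j\in F$ and ports only for $e_j\notin F$. Paths of length exactly two, $x_a, w, x_b$, would need $p_j^a=p_j^b=w$, so we also allow, for each such $j$, a single vertex adjacent to both endpoints in place of the two ports. Together these cases cover every length. To keep the paths away from principal vertices, we apply $\DP$ in the graph with the principal vertices removed. This is expressible by relativization, but since $\DP$ is a built-in predicate rather than a defined one, we instead encode avoidance by adding each principal vertex $x_i$ as a trivial extra pair $\DP(\dots,x_i,x_i)$, or equivalently pair $x_i$ with itself via a neighbor. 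Requiring pairwise disjointness among all pairs then forces the connecting paths to avoid the $x_i$.

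Correctness in both directions follows directly. A topological minor model yields witnesses for the quantified variables by taking the second and second-to-last vertices of each path. Conversely, given satisfying witnesses, the disjoint paths extended by the edges $x_a p_j^a$ and $p_j^b x_b$ form internally disjoint paths, which give a model $\eta$ with $\eta(Y)\subseteq X$. We expect only the bookkeeping of the degenerate short-path cases to need care. As noted, the resulting $\sigma_{k}^{\p}$ exists but may not be computable, because the obstruction set comes from \cref{prop_minor_obs}.
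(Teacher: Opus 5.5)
Your proposal is correct and follows essentially the same route as the paper. Both use the finite obstruction set from \cref{prop_minor_obs}, then express each annotated topological minor by guessing the principal vertices, a subset of edges realized directly in $G$, and neighboring ``port'' vertices joined by a single $\DP$ predicate, with trivial pairs $(x_i,x_i)$ forcing the paths to avoid principal vertices. Your separate treatment of length-two paths is harmless extra care, since the paper's $\DP$ semantics already lets the two endpoints of a pair coincide.
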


\begin{proof}
Because of \cref{prop_minor_obs}, 
$\p(G,X)\leq k$ iff $\forall (H,Y)\in\Ocal_{k}^{\p}$, $(H,Y)\not\pretp (G,X)$.
Therefore, it remains to express $(H,Y)\pretp (G,X)$
using some $\FODP$ formula $\phi_{H,Y}$. 
The existence of a topological minor model $\eta\colon V(H)\to V(G)$ 
of $(H,Y)$ in $(G,X)$ can be expressed by quantifying existentially 
on the $|V(H)|$ distinct images of $\eta$  where $\forall x\in Y, \eta(x)\in X$
and then asking whether 
there is some edge set  $\{\{x_1,y_1\},\ldots,\{x_{r},y_r\}\}\subseteq E(H)$ such that 
\begin{itemize}
\item for every edge $\{x,y\}\in E(G)\setminus E$, it holds that $\{\eta(x),\eta(y)\}\in E(G)$,
and 
\item for every  $i\in [r]$,  there are vertices $s_i,t_i\in V(G)$ where 
$\{\eta(x_i),s_i\},\{\eta(y_i),t_i\}\in E(G)$ and, moreover,  there are vertex-disjoint paths 
between $s_i$ and $t_i$, for $i\in[r]$, avoiding the vertices in $\eta(V(H))$.
\end{itemize} 
The above can be expressed quantifying (existentially) on 
$\Ocal(|V(H)|)$ vertices of $G$ and using the  disjoint-paths predicate.
For this, in order to avoid the vertices in $\eta(V(H))$, 
we additionally ask for (trivial) disjoint paths from each $\eta(x), x\in V(H)$ to itself. 
Therefore, there is a $\phi_{H,Y}\in \FODP$ expressing
$(H,Y)\pretp (G,X)$ as required.
\end{proof}

%
%
%
%
%
%

\subparagraph{Equivalent (annotated) parameters.}
Let $\p$ and $\p'$ be two annotated graph parameters. We say that $\p\pretp \p'$ 
if there is a function $f \colon \mathbb{N} \to \mathbb{N}$ such that, for every annotated graph $(G,X)$, it holds that $\p(G,X) \leq  f(\p'(G,X))$.  We say that $\p$ and $\p'$ are \emph{functionally equivalent}, or simply \emph{equivalent}, which we denote by $\p \sim \p'$, if   $\p\pretp \p'$ and  $\p'\pretp \p$.
The parameter ordering relation $\pretp$
and the concept of annotated parameter equivalence 
extend to non-annotated parameters in the obvious way.

\subsection{Tree decompositions and unbreakability}

\subparagraph{Trees and tree orders.}
A \emph{tree} is an acyclic and connected graph $T$. 
By assigning a distinguished vertex $r$ as the root of a tree, we impose a tree order $\pretp_T$ on $V(T)$ by $x\pretp_T y$ if~$x$ lies on the unique path (possibly of length~$0$) from $y$ to $r$. If $x\pretp_T y$, we call $x$ an \emph{ancestor} of $y$ in $T$. Note that by this definition, every node is an ancestor of itself. We drop the subscript $T$ if it is clear from the context. 
We write $\parent(x)$ for the parent of a non-root node $x$ of $T$, and $\children(x)$ for the set of children of $x$ in~$T$. We define $\parent(r)=\bot$.

\subparagraph{Tree decompositions.}
A {\em{tree decomposition}} of a graph $G$ is a pair $\Tt=(T,\bag)$, where~$T$ is a rooted tree and $\bag\colon V(T)\to 2^{V(G)}$ is a mapping assigning to each node $x$ of $T$ its {\em{bag}} $\bag(x)$, which is a subset of vertices of $G$ such that the following conditions are satisfied:
\begin{enumerate}
  \item For every vertex $u\in V(G)$, the set of nodes $x\in V(T)$ with $u\in \bag(x)$ induces a nonempty and connected subtree of $T$.
  \item For every edge $\{u,v\}\in E(G)$, there exists a node $x\in V(T)$ with $\{u,v\}\subseteq \bag(x)$.
\end{enumerate}

The \emph{width} of a tree decomposition is the value $\max_{t\in V(T)}|\mathsf{bag}(t)|-1$ and the \emph{treewidth} of a graph $G$ is the minimum possible width of a tree decomposition of $G$. 

\medskip
Recall that if $r$ is the root of $T$, then we have $\parent(r)=\bot$. We define $\bag(\bot)=\emptyset$.
%
For a node $x\in V(T)$, we define
the {\em{adhesion}} of $x$ as
  $\adh(x)\coloneqq \bag(\parent(x))\cap \bag(x);$
  the \emph{margin} of $x$ as $\mrg(x)\coloneqq \bag(x)\setminus \adh(x)$;
  the {\em{cone at $x$}} as
  $\cone(x)\coloneqq \bigcup_{y\succeq_T x} \bag(y);$
  and the {\em{component at $x$}} as
  $\cmp(x)\coloneqq \cone(x)\setminus \adh(x).$ 
  We will see in a moment that we may assume that the component at $x$ is connected for every node $x$, which justifies the name component. 

\smallskip
The {\em{adhesion}} of a tree decomposition $\Tt=(T,\bag)$ is defined as the largest size of an adhesion, that is, $\max_{x\in V(T)}|\adh(x)|$.
Given a tree decomposition $\mathcal{T}=(T,\mathsf{bag})$, we refer to~$T$ as the \emph{underlying tree} of $\mathcal{T}$.

\smallskip
A tree decomposition $\Tt=(T,\bag)$ is {\em{regular}} if for every non-root node $x\in V(T)$ 
\begin{enumerate}
  \item\label{reg_mrg} the margin $\mrg(x)$ is nonempty;
  \item\label{reg_con} the graph $G[\cmp(x)]$ is connected; and
  \item\label{reg_nei} every vertex of $\adh(x)$ has a neighbor in $\cmp(x)$.
\end{enumerate}

\subparagraph{Unbreakability.}
A {\em{separation}} in a graph $G$ is a pair $(A,B)$ of subsets of vertices of $G$ such that $A\cup B=V(G)$ and there is no edge with one endpoint in \mbox{$A\setminus B$} and the other endpoint in $B\setminus A$.
The {\em{separator}} of $G$ associated with $(A,B)$ is the intersection \mbox{$A\cap B$} and the {\em{order}} of a separation is the size of its separator.

For $q,k\in \N$, a vertex subset $X$ in a graph $G$ is {\em{$(q,k)$-unbreakable}} if for every separation $(A,B)$ of~$G$ of order at most $k$, we have
\[|A\cap X|\leq q\qquad\textrm{or}\qquad |B\cap X|\leq q.\]
We say that a graph $G$ is \emph{$(q,k)$-unbreakable} if $V(G)$ is $(q,k)$-unbreakable in $G$.

The notion of unbreakability can be lifted to tree decompositions by requiring it from every individual bag. In~\cite{cygan2019minimum}, Cygan,  Lokshtanov,  Pilipczuk,  Pilipczuk, and 
Saurabh introduce two notions of unbreakable tree decompositions, which differ on whether we impose each bag to be unbreakable in its cone, or in the entire graph. These definitions then yield different bounds and computation time. We use the notion named {\em strongly unbreakable}.

For $q,k\in \N$, a tree decomposition $\Tt=(T,\bag)$ of a graph $G$ is {\em{strongly $(q,k)$-unbreakable}}
if for every $x\in V(T)$, $\bag(x)$ is $(q,k)$-unbreakable in $G[\cone(x)]$. 

\begin{proposition}[\!\!\cite{cygan2019minimum}]\label{th_strong_unbreakability}
  There is a function $q(k)\in 2^{\Oh(k)}$ such that for every graph $G$ and $k\in \N$, there exists a strongly $(q(k),k)$-unbreakable tree decomposition of $G$ of adhesion at most $q(k)$. Moreover, given~$G$ and~$k$, such a tree decomposition can be computed in time $2^{\Oh(k^2)}\cdot |G|^2\cdot \|G\|$. 
\end{proposition}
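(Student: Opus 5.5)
Since this result is imported from~\cite{cygan2019minimum}, I would reproduce its proof in outline rather than build a new one. The plan is a top-down recursive construction. The procedure $\mathsf{Decompose}(G',S)$ receives a connected subgraph $G'$ of $G$ (the future cone) and a set $S\subseteq V(G')$ of size at most $q(k)$ (the future adhesion). Every vertex of $S$ has a neighbour in $G'-S$. The procedure outputs a tree decomposition of $G'$ with the following properties:
\begin{enumerate}
\item the root bag contains $S$;
\item every bag is $(q(k),k)$-unbreakable in the corresponding cone;
\item every adhesion has size at most $q(k)$.
\end{enumerate}
The initial call is $\mathsf{Decompose}(G,\emptyset)$, applied to each connected component of $G$. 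Regularity of the output comes almost for free, since each recursive call is made on $G'[C\cup N(C)]$ for a connected component $C$ of $G'-B$, with new adhesion $N(C)\subseteq B$.

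The key step is to choose the root bag $B$ with $S\subseteq B\subseteq V(G')$ so that two conditions hold:
\begin{enumerate}
\item[(a)] $B$ is $(q,k)$-unbreakable in $G'$;
\item[(b)] every component $C$ of $G'-B$ satisfies $|N(C)|\le q$, and $C$ is strictly smaller than $G'-S$.
\end{enumerate}
Condition (b) guarantees termination and the adhesion bound.

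I would build $B$ by a closure operation. Start from $S$ together with one vertex outside $S$ whenever possible, so that the margin is nonempty. Then repeatedly handle any violation of (b). If some component $C$ has a large boundary, then $N(C)$ is a big set in $B$ that is separated from the rest only through $C$. Using a separation of order at most $k$ that minimises the side containing $C$, and choosing it as an important separator so that it is canonical, add the corresponding separator vertices to $B$. The invariant to maintain is the following: a vertex enters $B$ only if it lies on the ``big side'' of every separation of order at most $k$ that splits $S$ unevenly. I expect this invariant to yield (a) through an uncrossing argument. Two breakable halves of $B$, each with more than $q$ vertices, would contradict the way vertices were admitted. Here one uses submodularity of vertex cuts, together with the counting bound that at most $4^k$ important separators of size at most $k$ exist for a fixed source. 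This counting bound is what forces $q(k)\in 2^{\Oh(k)}$.

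The main obstacle is balancing (a) against (b). Enlarging $B$ to shrink component boundaries can destroy unbreakability, and keeping $B$ small can leave components with boundaries larger than $q$. My first attempt would be to set $q=c\cdot 4^k$ for a suitable constant $c$ and show that the closure stabilises after adding at most $\Oh(4^k)$ separators per boundary vertex. Every remaining component then has at most $k$ boundary vertices coming from each important separator, and charging these to boundedly many separators gives $|N(C)|\le q$. For the running time, each closure step enumerates important separators in time $2^{\Oh(k)}\cdot\|G\|$ per source, or uses flow computations in time $2^{\Oh(k^2)}\cdot\|G\|$ when all $k$-subsets of a boundary must be tested. There are at most $|G|$ steps per bag and at most $|G|$ bags, since every margin is nonempty. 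Together these give the claimed bound $2^{\Oh(k^2)}\cdot |G|^2\cdot\|G\|$.
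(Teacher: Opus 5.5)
\textbf{There is a genuine gap.} Your outline has the right overall shape: a top-down recursion on pairs $(G',S)$, with the $4^k$ bound on important separators responsible for $q(k)\in 2^{\Oh(k)}$. But it does not prove the one step that carries all the content of the proposition, namely that a root bag $B\supseteq S$ satisfying (a) and (b) exists and can be found. (The paper does not prove this either; it imports the result from \cite{cygan2019minimum}.)

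\emph{The closure step cannot work as described.} If $C$ lies strictly on one side of the chosen separation, the separator is disjoint from $C$. Adding vertices outside $C$ to $B$ then leaves $C$ a component of $G'-B$ with exactly the same boundary. So the added vertices must lie in $C$ and split $N(C)$. With at most $k$ vertices this requires a small separator splitting $N(C)$ inside $G'[C\cup N(C)]$, and your plan never addresses the case where no such separator exists, i.e., the breakable/unbreakable dichotomy.

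\emph{The invariant you intend to use for (a) also fails.} Since your recursion keeps $|S|\le q$, every such $S$ is trivially $(q,k)$-unbreakable, and no separation has more than $q$ vertices of $S$ on either side. So ``the big side'' is undefined and the invariant constrains nothing. If instead $S$ is allowed to be large and $(q,k)$-unbreakable, the invariant is too strong. Let $S$ induce a clique with $|S|>3q\ge 3k$. Let $G'-S$ be a path $u_1\cdots u_m$ with $m=|S|$, in which each $u_i$ has exactly one further neighbour $s_i\in S$, the $s_i$ being distinct. For $k\ge 3$, each $u_i$ lies strictly inside the small side of the separation $(\{u_{i-1},u_i,u_{i+1},s_i\},V(G')\setminus\{u_i\})$ (with $u_0,u_{m+1}$ omitted), which has order at most $3$. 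So your invariant forces $B=S$. The only component of $G'-B$ is then $G'-S$ itself, with boundary $S$, and (b) fails in both respects. Yet $V(G')$ is $(2k,k)$-unbreakable and is the right bag. Indeed, for any separation of order at most $k$, one strict side avoids $S$, and each of its path vertices $u_i$ forces a distinct $s_i$ into the separator.

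\emph{What a working construction needs.} A standard way to make such a recursion work uses three ingredients your plan lacks.
\begin{itemize}
\item[(i)] Two thresholds. Adhesions are allowed to grow, one vertex per step and with trivially unbreakable bags of size $\Oh(q_0)$, up to roughly $3q_0$ for a base parameter $q_0$. Then an adhesion of size more than $2q_0$ that is $(q_0,k)$-unbreakable genuinely orients every separation of order at most $k$.
\item[(ii)] A splitting case. If $S$ is $(q_0,k)$-breakable, witnessed by $(A',B')$, the root bag $S\cup(A'\cap B')$ gives child adhesions of size at most $|S|-q_0+2k$.
\item[(iii)] In the unbreakable case, a bag that must also contain vertices lying strictly inside small sides, as the example shows. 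For instance, it can absorb the separators of canonically chosen (important) separations cutting off small sides. You then need proofs that every remaining component has boundary below the adhesion threshold, and that the small side of any separation of order at most $k$ contains only $2^{\Oh(k)}$ bag vertices.
\end{itemize}
That last claim is exactly where the important-separator bound must be used, and it is what your proposal leaves as ``I expect''. Your running-time analysis likewise rests on a construction that is not yet specified.
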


Given any strongly $(q,k)$-unbreakable decomposition, we can refine it so that it becomes regular (and remains strongly unbreakable), see~\cite{bojanczyk2016definability}.  
Hence, we may assume that the tree decompositions constructed by the algorithm of~\cref{th_strong_unbreakability} are regular.

\subsection{Monadic second-order logic}
\label{subsec_msol_defs}

\subparagraph{Monadic second-order logic.}
We use \emph{counting monadic second-order logic} (\CMSO) with quantification over vertex sets, which extends monadic second-order logic (\MSO)
by modulo counting predicates on set cardinalities. 
For conceptual clarity, we work over a relational signature $\sigma$ consisting of a single binary relation symbol~$E$ (the edge relation) and a finite set of unary relation symbols, called \emph{colors}.
A \emph{colored graph} is a $\sigma$-structure such that~$E$ is interpreted as a symmetric and irreflexive relation on the universe, and each unary relation symbol is interpreted as a subset of elements of the universe.
Quantification over edge sets ($\CMSO_2$) or tuples of relations (as in guarded \CMSO) can be simulated for general relational signatures, by considering the incidence encoding. 
By considering the incidence encoding of structures, we can also apply the disjoint-paths predicate in the more general setting. 

In the context of graphs, \MSO and its counting extension \CMSO were systematically studied in the work of Courcelle and Engelfriet; a standard
reference is \cite{CourcelleEngelfriet2012}. 
The original application of \MSO to graphs appears in Courcelle’s classic paper \cite{Courcelle1990}. 
We assume basic knowledge of logic and refer the reader to the textbook~\cite{Libkin04} for more background.

\newcommand{\card}{\mathsf{card}}
$\CMSO$ is defined as follows. The variables of the logic are first-order variables $x,y,z,\ldots$ ranging over vertices, and set variables $X,Y,Z,\ldots$ ranging over subsets of vertices.
The formulas of $\CMSO$ are built from the atomic formulas $x = y$, $E(x,y)$, $C(x)$ for each unary relation symbol $C \in \sigma$, $x \in X$, $\card(X) \equiv i \pmod m$ for fixed integers $m \ge 1$ and $0 \le i < m$,
using the Boolean connectives $\neg, \wedge, \vee$ and the quantifiers
$\exists x, \forall x, \exists X, \forall X$.
We use $\forall X \,\varphi$ as an abbreviation for $\neg \exists X \,\neg \varphi$. 

The semantics are the standard Tarskian semantics for relational structures. In particular, 
set variables range over all subsets of the universe, and the counting atom
$\card(X) \equiv i \pmod m$ is true if and only if the cardinality of the interpretation of $X$
is congruent to $i$ modulo $m$.



\subparagraph{Syntax and semantics of $\CMSOp$.}
Let $\p\colon \aall\to\Nbbb$ be an annotated graph parameter.  
The terms, (atomic) formulas, and free variables of $\CMSOp$ are defined as the ones of $\CMSO$
with the only exception that instead of using the quantifier $\exists X$, where $X$ is a set variable,
we use the quantifier $\exists_{\p\leq k} X$ (semantically defined below) for some $k\in\mathbb{N}$.
Also, we use $\forall_{\p\leq k} X \varphi$ as an abbreviation of $\neg (\exists_{\p\leq k} X \neg \varphi)$.
The semantics of $\CMSOp$ are as the ones of $\CMSO$, where 
a colored graph $G$ satisfies
$\exists_{\p\leq k}X \varphi$, denoted by 
$G\models \exists_{\p\leq  k}X \varphi$, if and only if 
there is a vertex subset $S\subseteq V(G)$ such that
$\p(G,S)\leq k$ and
$G$ 
satisfies $\varphi$ when $X$ is interpreted as $S$. 
We define $|\phi|$ as the encoding length of $\phi$. In particular, this takes into account the values $k$ appearing in the quantifiers of $\phi$ as well as the values $i$ and $m$ appearing in atomic formulas of the form $\card(X) \equiv i \pmod m$ in $\phi$.

\subparagraph{The logic \texorpdfstring{$\CMSOpDP$}{CMSO/\p\!+\!dp}.}
For every $k\in \N$,
we define the $2k$-ary \emph{disjoint-paths predicate} $\DP(x_1, y_1, \ldots, x_k, y_k)$,
where $x_1, y_1, \ldots, x_k, y_k$ are first-order terms.
Although the following definition makes sense for any logic $\mathcal{L}$, we will only consider
it for extensions of $\CMSOp$, which is the focus of this paper.

A formula of $\CMSOpDP$
is obtained by enhancing the syntax of $\CMSOp$ by allowing atomic formulas of the form $\DP(x_1, y_1, \ldots, x_k, y_k)$ on first-order terms $x_1, y_1, \ldots, x_k, y_k$, for $k\geq 1$.
The satisfaction relation between a 
colored graph $G$ 
and formulas $\varphi$ of $\CMSOpDP$ is as for $\CMSOp$,
where $\DP(x_1, y_1, \ldots, x_k, y_k)$ evaluates true in 
$G$ if and only if
there are pairwise vertex disjoint paths
$P_1,\ldots, P_k$ in $G$ 
between (the interpretations of) $x_i$ and $y_i$ for all $i\in[k]$.

It is not difficult to see that $\DP(x_1,y_1,\ldots, x_k,y_k)$ is $\CMSO$-expressible for every fixed $k\in \N$. 
Suppose now that  $\p$ is minor-monotone.
Then because of \cref{lemm_anotopomin_formula}, the query $\p(G,X)\leq k$ can be expressed in $\textup{\FODP}$, 
and therefore also in  $\CMSO$. 
This implies that 
for every minor-monotone $\p$, the logic 
$\CMSOpDP$ is contained in $\CMSO$.

\begin{lemma}
\label{lem_fun_min_same_logic}
If $\p$ and $\p'$ are  equivalent annotated graph parameters that are both minor-monotone, then the logics \textup{\CMSOpDP} and \textup{\CMSOppDP} have the same expressive power.
\end{lemma}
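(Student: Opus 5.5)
By symmetry it suffices to show that every formula of $\CMSOpDP$ can be translated into an equivalent formula of $\CMSOppDP$. The translation proceeds by induction on the structure of the formula. Atomic formulas, including $\card(X)\equiv i \pmod m$ and the disjoint-paths atoms, are copied unchanged, and Boolean connectives and first-order quantifiers commute with the translation. The only nontrivial case is a bounded set quantifier $\exists_{\p\le k}X\,\psi$. Since $\forall_{\p\le k}$ is defined as $\neg\exists_{\p\le k}\neg$, the universal case follows from this one.

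Because $\p\sim\p'$, there is a function $f$ with $\p'(G,S)\le f(\p(G,S))$ for every annotated graph $(G,S)$. Hence every set $S$ with $\p(G,S)\le k$ also satisfies $\p'(G,S)\le f(k)$. The $\CMSOppDP$ quantifier $\exists_{\p'\le f(k)}X$ therefore ranges over a superset of the sets admitted by $\exists_{\p\le k}X$.

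To recover exactly the original range, I invoke \cref{lemm_anotopomin_formula} for the minor-monotone parameter $\p$ and the bound $k$. This gives an $\FODP$ sentence $\sigma_k^{\p}$ over the signature extended by a unary predicate for the annotation, with $(G,S)\models\sigma_k^{\p}$ iff $\p(G,S)\le k$. Let $\sigma_k^{\p}[X]$ be the formula obtained by replacing each occurrence of the annotation predicate by membership in the set variable $X$. This is legitimate because $\sigma_k^{\p}$ uses the annotation only through atoms $C(x)$, which become $x\in X$.

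With this in hand, I translate $\exists_{\p\le k}X\,\psi$ to $\exists_{\p'\le f(k)}X\,(\sigma_k^{\p}[X]\wedge\widehat\psi)$, where $\widehat\psi$ is the inductively obtained translation of $\psi$. The conjunct $\sigma_k^{\p}[X]$ uses only first-order quantifiers, equality, edges, $\in$, and $\DP$ atoms, so the result lies in $\CMSOppDP$. It is equivalent to the original formula under every assignment of the free variables: a witness $S$ for the original formula satisfies both the bound $\p'(G,S)\le f(k)$ and the conjunct, and conversely any witness for the translated formula satisfies $\p(G,S)\le k$ by the conjunct.

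One subtlety concerns colored graphs carrying additional unary predicates. Both $\p$ and the formula $\sigma_k^{\p}$ refer only to the underlying graph together with the annotated set, so the extra colors play no role and the argument is unaffected.

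The main point needing care is that $\sigma_k^{\p}$ and $f$ may be non-computable. This is harmless here, since the statement only claims equal expressive power and does not assert an effective translation. Apart from that, the argument is routine substitution.
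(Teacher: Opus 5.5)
Your proposal is correct and follows the paper's proof essentially verbatim: replace each $\exists_{\p\le k}X\,\psi$ by $\exists_{\p'\le f(k)}X\,(\sigma_k^{\p}\wedge\psi)$ using the \FODP formula from \cref{lemm_anotopomin_formula}, and handle the universal quantifier by negation. One small point that both you and the paper leave implicit is that $f$ should be taken non-decreasing (for instance, replaced by $k\mapsto\max_{j\le k}f(j)$), so that $\p(G,S)\le k$ really yields $\p'(G,S)\le f(k)$.
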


\begin{proof}
Given that $\p$ and $\p'$ are equivalent, there is 
a function $f\colon \Nbbb\to\Nbbb$ such that for every $(G,X)$,
$\p'(G,X)\leq f(\p(G,X))$.
Also, from \cref{lemm_anotopomin_formula}, there exists a  formula 
$\sigma_{k}^{\p}\in\FODP$ such that $(G,S)\models\sigma_{k}^{\p}$ iff $\p(G,S)\leq k$.
 
Let $\phi\in\CMSOpDP$.
We transform $\phi$ to a formula $\phi'\in\CMSOppDP$ by replacing every instantiation of 
$\exists_{\p\le k} X\, \psi$ by $\exists_{\p'\le k'} X\, \sigma_{k}^{\p}\wedge \psi$
where $k'=f(k)$.
Notice now that if there is some $S$ where $\p(G,S)\leq k$, then 
$\p'(G,S)\leq k'$, therefore, for the same $S$, $\p'(G,S)\leq k'$
and $(G,S)\models \sigma_{k}^{\p}$.
On the other hand, if $\p'(G,S)\leq k'$
and $(G,S)\models \sigma_{k}^{\p}$, then 
obviously $\p(G,S)\leq k$. 
This implies that the above replacement 
gives a formula $\phi'\in\CMSOppDP$ that is equivalent to $\phi$.
\end{proof}

%
%

\subparagraph{Prenex normal form.}
In the rest of this paper, we assume that for every considered formula~$\varphi$, the maximum $m$ such that a predicate of the form $\mathsf{card}(X) \equiv i \pmod m$ appears in $\varphi$ is bounded by a fixed constant.

We say that a sentence $\varphi$ of $\CMSOpDP$ is in \emph{prenex normal form} if it is of the form
\[Q_1 X_1\ \ldots Q_\chi X_\chi\ Q_{\chi+1} x_{\chi+1}\ \ldots Q_{\chi+\beta}x_{\chi+\beta}\ \psi(X_1,\ldots,X_\chi,x_{\chi+1},\ldots,x_{\chi+\beta}),\]
where $\chi,\beta\in\mathbb{N}$ and
\begin{itemize}
    \item $X_1,\ldots,X_\chi$ are vertex set variables and $x_{\chi+1},\ldots,x_{\chi+\beta}$ are vertex variables;
    \item for each $i\in[\chi]$, $Q_i\in\{\exists_{\p\le w},\forall_{\p\le w}\}$, for some $w\in\mathbb{N}$;
    \item for each $i\in[\chi+1,\chi+\beta]$, $Q_i\in\{\exists,\forall\}$; and
    \item  $\psi(X_1,\ldots,X_\chi,x_{\chi+1},\ldots,x_{\chi+\beta})$ is a quantifier-free formula of $\CMSOpDP$.
\end{itemize}

For a given $i\in[\chi+\beta]$, the \emph{$i$th variable of $\varphi$} is $X_i$ if $i\le \chi$ and $x_i$ if $i\ge \chi+1$.
We refer to $\chi+\beta$ as the \emph{quantifier rank} of $\varphi$.
We assume that every given sentence of $\CMSObogDP$ is in prenex normal form.
Let $\varphi$ be a formula of $\CMSOpDP$.
The \emph{$\DP$-rank} of $\varphi$ is the maximum integer $r$ such that the predicate $\DP_r$ appears in $\phi$, or $0$ if there is no $\DP$ predicate in $\phi$.
Also, the \emph{$\p$-rank} of $\varphi$ is the maximum integer $w$ such that $\exists_{\p\le w}$ or $\forall_{\p\le w}$ appears in $\varphi$, or $0$ if no such quantifier is used in $\phi$.

\section{Torso treewidth and equivalent parameters}
\label{sec_param}

In this section, we study the annotated parameter \emph{torso treewidth}. 
We prove that it is equivalent to two other parameters, a fact that will allow us to choose the most convenient parameter in our proofs. 

\subparagraph{Torso treewidth.}
Recall that, given an annotated graph $(G,X)$, the \emph{torso}
of $X$ in $G$ is the graph obtained from $G[X]$ by adding, for every connected component $C$ of $G-X$, all edges that make the neighbors of~$V(C)$ in $X$ a clique.
We define the \emph{torso treewidth} of an annotated graph $(G,X)$
 as   \[\ttw(G,X) = \min\{\tw(\torso(G,X'))\mid X\subseteq X'\}.\]

The minimization over all supersets $X'\supseteq X$ in the definition of $\ttw$ is crucial.
Intuitively, $X$ is intended to represent the ``interface'' to the remainder of the graph, and we are allowed to enlarge it by adding additional vertices so as to obtain a torso of small treewidth. 
Formally, this minimization is needed to ensure that $\ttw$ is robust and, in
particular, minor-monotone. 
As a simple example, let $G=K_{1,k}$ and let $X$ be the set of leaves. 
Then, $\torso(G,X)$ is a clique on $k$ vertices, hence $\tw(\torso(G,X))=k-1$. However, the intended ``complexity witnessed by~$X$'' should not depend on
whether we include the unique high-degree vertex $c$ in the interface.  Setting $X':=X\cup\{c\}$, we
obtain $\torso(G,X')=K_{1,k}$ and thus, $\tw(\torso(G,X'))=1$. 
Our definition captures this robustness by allowing us to add vertices to the interface when this simplifies the
structure of the torso, and it is this flexibility that makes $\ttw$ behave well under taking minors.

\begin{observation}\label{obs_monotone_ttw}
    $\ttw$ is a minor-monotone annotated graph parameter.
\end{observation}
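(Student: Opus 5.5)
To prove that $\ttw$ is minor-monotone, I will reduce to elementary minor operations. Suppose $(H,Y)\le(G,X)$ is witnessed by a minor model $\Bcal=\{B_v\mid v\in V(H)\}$. I will show $\ttw(H,Y)\le\ttw(G,X)$ directly from this model. Fix $X'\supseteq X$ with $\tw(\torso(G,X'))=\ttw(G,X)$, and fix an optimal tree decomposition $(T,\bag)$ of $\torso(G,X')$. The goal is to construct a set $Y'\supseteq Y$ in $H$ such that $\torso(H,Y')$ is a minor of $\torso(G,X')$. Since treewidth is minor-monotone, this gives the bound.

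The natural choice is $Y'=\{v\in V(H)\mid B_v\cap X'\neq\emptyset\}$. Since $B_y\cap X\neq\emptyset$ for every $y\in Y$ and $X\subseteq X'$, we have $Y\subseteq Y'$. For each $v\in Y'$ I will use the branch set $B'_v$ formed by the union of $B_v\cap X'$ with all components $C$ of $G-X'$ that are fully contained in $B_v$. Working inside $\torso(G,X')$, I take $B_v\cap X'$ as the branch set of $v$. The claim to verify is that $\{B_v\cap X'\mid v\in Y'\}$ is a minor model in $\torso(G,X')$ of a supergraph of $\torso(H,Y')$.

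There are three properties to check, with the following arguments.
\begin{enumerate}
\item \emph{Connectivity of $B_v\cap X'$ in the torso.} A path inside $G[B_v]$ between two vertices of $X'$ splits into segments through $X'$ and excursions into components of $G-X'$. Each excursion's entry and exit vertices are neighbors of a common component, so they are adjacent in the torso.
\item \emph{Edges of $H[Y']$.} For an edge $uv$ of $H$, an edge of $G$ between $B_u$ and $B_v$ yields, by the same shortcutting, a torso edge between $B_u\cap X'$ and $B_v\cap X'$. Walk from this $G$-edge inside $B_u$ and inside $B_v$ to the nearest $X'$-vertices; these exist because both sets meet $X'$.
\item \emph{Fill edges of $\torso(H,Y')$.} Let $D$ be a component of $H-Y'$. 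Every branch set $B_w$ with $w\in D$ lies entirely in $G-X'$. Hence $\bigcup_{w\in D}B_w$ is connected in $G-X'$ and is contained in a single component $C$ of $G-X'$. Any $u\in Y'$ adjacent to $D$ has $B_u$ adjacent to $C$, and walking inside $B_u$ reaches an $X'$-vertex either in $N(C)$ directly or via excursions. More carefully: take the edge from $B_u$ into $C$. Its endpoint in $B_u$ is either in $X'$ and hence in $N(C)$, or it lies in $C$ itself, in which case $B_u\cap C\neq\emptyset$. In the latter case, follow $B_u$ within $C$ until its first $X'$-vertex, which is in $N(C)$. So every such $u$ has a vertex of $B_u\cap X'$ in $N(C)$, which is a clique in the torso. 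The required fill edges between such $u,u'$ are therefore present.
\end{enumerate}

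Given these three properties, $\torso(H,Y')$ is a subgraph of a minor of $\torso(G,X')$. Hence $\ttw(H,Y)\le\tw(\torso(H,Y'))\le\tw(\torso(G,X'))=\ttw(G,X)$.

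The main obstacle is the fill-edge step: showing that components of $H-Y'$ map into single components of $G-X'$, and that their attachments are witnessed inside $N(C)$ by the corresponding branch sets.
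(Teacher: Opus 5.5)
Your proof is correct and takes the paper's approach: the same set $Y'=\{v : B_v\cap X'\neq\emptyset\}$, the same branch sets $B_v\cap X'$ inside $\torso(G,X')$, and the same shortcutting through components of $G-X'$. Your explicit checks that each $B_v\cap X'$ is connected in the torso, and that the fill edges coming from components of $H-Y'$ are realized, supply details the paper's proof leaves implicit (it only argues for edges of $H$); the opening remark about elementary minor operations, the sets $B'_v$, and the fixed tree decomposition are never used and can be dropped.
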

\begin{proof}
Let $(H,Y)\le (G,X)$, witnessed by a minor model $\eta$ of $H$ in $G$ such that
$\eta(y)\cap X\neq \emptyset$ for all $y\in Y$.
Choose a set $X^\star\supseteq X$ such that $\ttw(G,X) = \tw(\torso(G,X^\star))$.
We will construct a superset $Y^\star\supseteq Y$ such that
$\tw(\torso(H,Y^\star)) \le \tw(\torso(G,X^\star))$,
which then implies $\ttw(H,Y)\le \ttw(G,X)$ by the definition of $\ttw$ as a minimum over supersets.

\smallskip
For each $v\in V(H)$, if $\eta(v)\cap X^\star\neq\emptyset$ choose a vertex $x_v\in \eta(v)\cap X^\star$,
otherwise pick an arbitrary vertex $x_v\in \eta(v)$.
Now, let $Y^\star := \{v\in V(H)\mid x_v\in X^\star\}$. 
Then, $Y\subseteq Y^\star$ because for each $y\in Y$, we have $\eta(y)\cap X\neq\emptyset$ and
$X\subseteq X^\star$, hence $\eta(y)\cap X^\star\neq\emptyset$ and thus $y\in Y^\star$.

\smallskip
We now show that $\torso(H,Y^\star)$ is a minor of $\torso(G,X^\star)$.
We define a minor model~$\mu$ of $\torso(H,Y^\star)$ in $\torso(G,X^\star)$ as follows.
For each vertex $v\in V(H)$, set
$\mu(v) := \eta(v)\cap X^\star$.
By construction, if $v\in Y^\star$, then $\mu(v)\neq\emptyset$.
Let $\{u,v\}\in E(H)$.
Since $\eta$ is a minor model of~$H$ in $G$, there is an edge $\{a,b\}\in E(G)$ with
$a\in \eta(u)$ and $b\in \eta(v)$. Consider the vertices~$x_u$ and~$x_v$ chosen above.
If $x_u\in \eta(u)\cap X^\star$ and $x_v\in \eta(v)\cap X^\star$ then, within $G$, there are paths
inside the connected sets $\eta(u)$ and $\eta(v)$ linking $a$ to $x_u$ and $b$ to $x_v$.
If these paths use vertices outside $X^\star$, then they traverse some components of $G-X^\star$.
By the definition of the torso, for every component $C$ of $G-X^\star$ the set $N_G(C)\cap X^\star$
is made a clique in $\torso(G,X^\star)$.
Hence, we can shortcut each traversal through $C$ by an edge between the corresponding vertices of
$N_G(C)\cap X^\star$ in the torso. In this way, the adjacency between~$\eta(u)$ and $\eta(v)$ witnessed
by $\{a,b\}$ yields an edge in $\torso(G,X^\star)$ between $\eta(u)\cap X^\star$ and $\eta(v)\cap X^\star$,
and therefore between the branch sets $\mu(u)$ and $\mu(v)$.
Hence, $\mu$ is a minor model of $\torso(H,Y^\star)$ in $\torso(G,X^\star)$, and thus 
$\tw(\torso(H,Y^\star)) \le \tw(\torso(G,X^\star))$.

By definition of $\ttw$,
$\ttw(H,Y) \le \tw(\torso(H,Y^\star))
\le \tw(\torso(G,X^\star))
= \ttw(G,X)$,
as claimed.
\end{proof}

The definition of torso treewidth can alternatively
be derived by the  concept of $X$-free decompositions of~$G$ recently proposed by Jansen and Swennenhuis in \cite{JansenS2024Steiner}.
 Also, the same annotated graph parameter 
has been introduced and examined by Hodor, La, Micek, and Rambaud in~\cite{HodorLMR24quick}.

\subparagraph{Bidimensionality and attachment degree.} 

We denote by $\mathbf{\Gamma}^{\mathsf{\bullet}}_{k}=(\Gamma_{k},V_k)$ the \emph{rainbow $(k \times k)$-grid}, where~$\Gamma_{k}$ is the Cartesian product of $P_{k}$ with itself and 
$V_{k}$ is the vertex set of $\Gamma_{k}$. 
We define the \emph{bidimensionality} of a set $X$ in a graph $G$ 
as the maximum $k$ such that $\mathbf{\Gamma}^{\mathsf{\bullet}}_{k}\leq (G,X)$.
We write ${\brg}(G,X)$ for the bidimensionality of $X$ in $G$ ($\brg$ stands for biggest rainbow grid). 
The following observation is immediate by the definition. 

\begin{observation}\label{obs_brg_min_monotone}
    $\brg$ is a minor-monotone annotated graph parameter. 
\end{observation}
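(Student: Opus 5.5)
The plan is to show that the annotated-minor relation $\leq$ on annotated graphs is transitive. Then $\brg$ is monotone directly from its definition as a maximum over annotated-minor containment of rainbow grids.

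Let $(H,Y)\leq (G,X)$ and set $k=\brg(H,Y)$. By definition, $\mathbf{\Gamma}^{\mathsf{\bullet}}_{k}=(\Gamma_k,V_k)\leq (H,Y)$. If transitivity holds, this gives $\mathbf{\Gamma}^{\mathsf{\bullet}}_{k}\leq (G,X)$, so $\brg(G,X)\geq k=\brg(H,Y)$. That is exactly minor-monotonicity.

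For transitivity, suppose $(F,Z)\leq (H,Y)$ is witnessed by a model $\{A_w\mid w\in V(F)\}$, and $(H,Y)\leq (G,X)$ by a model $\{B_v\mid v\in V(H)\}$. For each $w\in V(F)$ define
\[
C_w:=\bigcup_{v\in A_w} B_v .
\]
We check the required properties of $\{C_w\mid w\in V(F)\}$ in turn.
\begin{itemize}
\item The sets $C_w$ are pairwise disjoint, because the sets $A_w$ are disjoint and the sets $B_v$ are disjoint.
\item Each $C_w$ is connected. Each $B_v$ is connected, and every edge $\{u,v\}$ of $H[A_w]$ makes $B_u\cup B_v$ connected. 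Since $H[A_w]$ is connected, the union is connected.
\item For each edge $\{w,w'\}\in E(F)$, there is an edge $\{u,v\}\in E(H)$ with $u\in A_w$ and $v\in A_{w'}$. This yields an edge of $G$ between $B_u\subseteq C_w$ and $B_v\subseteq C_{w'}$.
\item For each $z\in Z$, some $v\in A_z$ lies in $Y$. Hence $B_v\cap X\neq\emptyset$, and therefore $C_z\cap X\neq\emptyset$.
\end{itemize}
So $\{C_w\}$ witnesses $(F,Z)\leq(G,X)$.

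No step here is a real obstacle. The only point needing care is the connectivity of the composed branch sets, which is handled by the edges of $H[A_w]$ as above.
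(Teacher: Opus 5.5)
Your proof is correct and is the argument the paper has in mind when it calls this observation ``immediate by the definition.'' Monotonicity of $\brg$ reduces to transitivity of the annotated minor relation, which the paper leaves implicit and which you verify explicitly for the composed model $C_w=\bigcup_{v\in A_w}B_v$: disjointness, connectivity, adjacency, and the annotation condition.
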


By $\cc(H)$, we denote the connected components of a graph $H$. 
The \emph{attachment degree} of an annotated graph $(G,X)$  is defined as
\begin{eqnarray}
\adeg(G,X) & = & \max\{|N_{G}(V(C))| \mid C\in\cc(G-X)\}.\label{eq_mortrorso}
\end{eqnarray}

We define the annotated graph parameter $\rtw\colon\aall\to\Nbbb$ as 
\begin{eqnarray}
\rtw(G,X) &  =  & \min\{\max\{\brg(G,X'),\adeg(G,X')\}\mid X\subseteq X'\}.\label{eq_aossrtwso}
\end{eqnarray}

Again, we take the minimum over all supersets $X'\supseteq X$ in the above definition to obtain minor-monotonicity, which is easy to verify. 
\begin{observation}
\label{obs_closed_nm7u}
$\rtw$ is a minor-monotone annotated graph parameter.
\end{observation}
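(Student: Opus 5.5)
We follow the template of the proof of \cref{obs_monotone_ttw}. Let $(H,Y)\le (G,X)$, witnessed by a minor model $\eta=\{\eta(v)\mid v\in V(H)\}$ with $\eta(y)\cap X\neq\emptyset$ for all $y\in Y$. Choose $X^\star\supseteq X$ attaining the minimum in \eqref{eq_aossrtwso}, so that $\rtw(G,X)=\max\{\brg(G,X^\star),\adeg(G,X^\star)\}$. Define $Y^\star:=\{v\in V(H)\mid \eta(v)\cap X^\star\neq\emptyset\}$. Then $Y\subseteq Y^\star$, since $X\subseteq X^\star$. It therefore suffices to show
\[
\brg(H,Y^\star)\le \brg(G,X^\star)
\quad\text{and}\quad
\adeg(H,Y^\star)\le \adeg(G,X^\star).
\]
Taking the maximum of the two and then minimizing over supersets of $Y$ gives $\rtw(H,Y)\le\rtw(G,X)$.

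The first inequality follows from minor-monotonicity of $\brg$ (\cref{obs_brg_min_monotone}) once we check that $(H,Y^\star)\le (G,X^\star)$. This holds with the same model $\eta$, because every $v\in Y^\star$ satisfies $\eta(v)\cap X^\star\neq\emptyset$ by definition.

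The second inequality is the only real step. Let $C$ be a component of $H-Y^\star$. Every $v\in V(C)$ has $\eta(v)\cap X^\star=\emptyset$. Since $C$ is connected and adjacent branch sets are joined by edges, $\bigcup_{v\in V(C)}\eta(v)$ is a connected subset of $G-X^\star$. Hence it lies in a single component $D$ of $G-X^\star$.

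Now take any $u\in N_H(V(C))\subseteq Y^\star$. There is an edge of $G$ from $\eta(u)$ to $\eta(v)$ for some $v\in V(C)$. Since $\eta(u)$ is connected and meets $X^\star$, walk inside $\eta(u)$ from that edge's endpoint to the first vertex of $X^\star$. The walk before this vertex stays in $G-X^\star$, and it is attached to $D$ (through $\eta(v)$), so it lies in $D$. The first $X^\star$-vertex reached is therefore a vertex of $N_G(V(D))$ lying in $\eta(u)$.

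The branch sets $\eta(u)$ are pairwise disjoint, so distinct neighbours $u$ yield distinct vertices of $N_G(V(D))$. This gives $|N_H(V(C))|\le |N_G(V(D))|\le \adeg(G,X^\star)$. One corner case needs a check: the edge endpoint in $\eta(u)$ may itself lie in $X^\star$. Then it is directly adjacent to $\eta(v)\subseteq D$, so it is again in $N_G(V(D))$.

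The main obstacle is this injectivity/locality argument for $\adeg$. Everything else is formal.
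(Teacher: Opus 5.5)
Your proposal is correct and follows essentially the same route as the paper: the same choice $Y^\star=\{v\mid \eta(v)\cap X^\star\neq\emptyset\}$, minor-monotonicity of $\brg$ for the first inequality, and lifting a component of $H-Y^\star$ to a single component of $G-X^\star$ for the $\adeg$ inequality. Your $\adeg$ step is in fact more careful than the paper's, which claims that the endpoint $a\in\eta(u)$ of the connecting edge already lies in the neighbourhood of that component of $G-X'$ (this fails when $a\notin X'$, since then $a$ lies inside the component); your walk inside $\eta(u)$ to the first vertex of $X^\star$, together with the explicit injectivity argument from the disjointness of branch sets, closes this gap.
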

\begin{proof}
Let $(H,Y)\le (G,X)$, witnessed by a minor model $\eta$ of $H$ in $G$ such that
$\eta(y)\cap X\neq\emptyset$ for all $y\in Y$.
Fix any superset $X'\supseteq X$. Define a superset $Y'\supseteq Y$ by
\mbox{$Y' := \{ v\in V(H)\mid $} $\eta(v)\cap X'\neq\emptyset\}$.
Clearly $Y\subseteq Y'$. 
We show
$
\max\{\brg(H,Y'),\adeg(H,Y')\} \le $ $\max\{\brg(G,X'),$ $\adeg(G,X')\}.
$
This will imply the claim by taking minima over supersets:
\[
\rtw(H,Y) \le \max\{\brg(H,Y'),\adeg(H,Y')\}
 \le \max\{\brg(G,X'),\adeg(G,X')\},
\]
and then minimizing over all $X'\supseteq X$ yields $\rtw(H,Y)\le \rtw(G,X)$.

\smallskip
Since $\brg$ is minor-monotone by \cref{obs_brg_min_monotone}, and $(H,Y')\le (G,X')$ holds by construction of $Y'$
(the same model $\eta$ witnesses it),
we obtain $\brg(H,Y') \le \brg(G,X')$.

\smallskip
For $\adeg$, let $D$ be a connected component of $H-Y'$. 
Consider the vertex set 
$U := \bigcup_{v\in V(D)} \eta(v)\ \subseteq V(G)$.
Because $D$ is connected in $H$ and each branch set $\eta(v)$ is connected in $G$, the set $U$ is connected in $G$.
Moreover, $U$ is disjoint from $X'$, since $D\cap Y'=\emptyset$ implies $\eta(v)\cap X'=\emptyset$ for all $v\in V(D)$.
Hence, $U$ is contained in some connected component $C$ of $G-X'$.
We claim that
\begin{equation}\label{eq_adeg_bound}
N_H(V(D)) \subseteq \{\, u\in Y' \mid \eta(u)\cap N_G(V(C))\neq\emptyset\},
\end{equation}
and therefore $|N_H(V(D))|\le |N_G(V(C))| \le \adeg(G,X')$.

To prove \eqref{eq_adeg_bound}, take any vertex $u\in N_H(V(D))$.
Then, there exists an edge $\{u,v\}\in E(H)$ with $v\in V(D)$. 
Let $a\in \eta(u)$ and $b\in \eta(v)$ such that $\{a,b\}\in E(G)$.
Since $b\in \eta(v)\subseteq U\subseteq V(C)$, we have $a\in N_G(V(C))$.
Also, $u\notin V(D)$ implies $u\in Y'$ (because $V(H)\setminus Y'$ is exactly the vertex set of $H-Y'$),
so $u\in Y'$ and $\eta(u)$ meets $N_G(V(C))$, proving~\eqref{eq_adeg_bound}.

As $D$ was an arbitrary component of $H-Y'$, we conclude
\[
\adeg(H,Y')\ =\ \max\{|N_H(V(D))|\mid D\in \cc(H-Y')\}\ \le\ \adeg(G,X').
\]

\smallskip
Altogether, we get $\rtw(H,Y)\le \rtw(G,X)$, as desired.
\end{proof}


\subparagraph{Monodimensionality.}
We introduce a final annotated graph parameter, which we already mentioned in the introduction. 
The \emph{outer-annotated $(k \times k)$-grid} is $\mathbf{\Gamma}^{\circ}_{k}=(\Gamma_{k},P_k)$, where~$P_{k}$ are the vertices of the perimeter (that is, the outer boundary) of $\Gamma_{k}$.
We define the \emph{monodimensionality} of a set~$X$ in a graph $G$ 
as the maximum $k$ such that $\mathbf{\Gamma}^{\circ}_{k}\leq (G,X)$.
We write ${\bog}(G,X)$ for the monodimensionality of $X$ in $G$ ($\bog$ stands for biggest outer-annotated grid). 
Again, the following observation is immediate by definition. 

\begin{observation}
    $\bog$ is a minor-monotone annotated graph parameter.
\end{observation}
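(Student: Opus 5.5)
The claim to prove is that $\bog$ is minor-monotone. I plan to argue directly from the definition and use transitivity of the annotated minor relation $\leq$. Let $(H,Y)\leq(G,X)$, witnessed by a minor model $\Bcal=\{B_v\mid v\in V(H)\}$ with $B_y\cap X\neq\emptyset$ for every $y\in Y$. Put $k=\bog(H,Y)$. Then $\mathbf{\Gamma}^{\circ}_{k}\leq(H,Y)$, witnessed by a model $\Acal=\{A_u\mid u\in V(\Gamma_k)\}$ with $A_u\cap Y\neq\emptyset$ for every $u\in P_k$. It then suffices to show $\mathbf{\Gamma}^{\circ}_{k}\leq(G,X)$, since this gives $\bog(G,X)\geq k$.

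For this I compose the two models. For each $u\in V(\Gamma_k)$, define
\[
C_u:=\bigcup_{v\in A_u}B_v .
\]
I then check the required properties in order.

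\begin{enumerate}
\item \emph{Disjointness.} The sets $A_u$ are pairwise disjoint, and so are the sets $B_v$. Hence the sets $C_u$ are pairwise disjoint.
\item \emph{Connectivity.} $A_u$ is connected in $H$, each $B_v$ is connected in $G$, and for every edge $\{v,v'\}$ of $H[A_u]$ the set $B_v\cup B_{v'}$ is connected. So $C_u$ is connected in $G$. This follows by induction along a spanning tree of $H[A_u]$.
\item \emph{Edges.} Let $\{u,u'\}\in E(\Gamma_k)$. Then some edge $\{v,v'\}\in E(H)$ has $v\in A_u$ and $v'\in A_{u'}$. Consequently $B_v\cup B_{v'}$ is connected, and therefore $C_u\cup C_{u'}$ is connected.
\item \emph{Annotation.} Let $u\in P_k$ and pick $y\in A_u\cap Y$. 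Then $\emptyset\neq B_y\cap X\subseteq C_u\cap X$.
\end{enumerate}

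These four properties show that $\{C_u\}$ is a minor model of $\mathbf{\Gamma}^{\circ}_{k}$ in $(G,X)$, so $\bog(G,X)\geq k=\bog(H,Y)$.

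No step is a genuine obstacle here. The only point needing care is the connectivity check in step 2, which is routine.
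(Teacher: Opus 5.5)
Your proof is correct and follows essentially the paper's approach: the paper simply declares the observation immediate from the definition, which amounts to the transitivity of the annotated minor relation $\leq$. You make this explicit by composing the two minor models, and your four checks (disjointness, connectivity, edges, annotation) are exactly what that transitivity requires.
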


\subparagraph{Equivalence of the parameters.}
We now prove that the three defined parameters are equivalent. 
%
%
%
%
The definition of $\ttw$ directly implies the following.
\begin{observation}
 \label{obs_adeg_lop}
 For every annotated graph  $(G,X)$, we have $\adeg(G,X)\leq \ttw(G,X)+1$. 
\end{observation}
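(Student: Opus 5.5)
The plan is to read the inequality directly off the definition of the torso, using the fact that cliques force treewidth. Fix a connected component $C\in\cc(G-X)$ and let $s=|N_G(V(C))|$. By definition of $\torso(G,X)$, the set $N_G(V(C))\subseteq X$ is turned into a clique of size $s$ in $\torso(G,X)$. A clique on $s$ vertices must lie entirely inside a single bag of any tree decomposition, by the standard Helly property of subtrees. Hence $\tw(\torso(G,X))\ge s-1$. Taking the maximum over all $C\in\cc(G-X)$ gives
\[
\adeg(G,X)\le \tw(\torso(G,X))+1 .
\]

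The second step is to pass from $\tw(\torso(G,X))$ to $\ttw(G,X)$. Here I would use a set $X^\star\supseteq X$ realising the minimum in the definition of $\ttw$, so that $\ttw(G,X)=\tw(\torso(G,X^\star))$. Applying the clique argument above to $X^\star$ instead of $X$ gives $\adeg(G,X^\star)\le \ttw(G,X)+1$.

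The main obstacle is this replacement of $X$ by $X^\star$, because $\adeg$ is not monotone under enlarging the annotated set in the required direction. The star $K_{1,k}$ with $X$ the set of leaves, which is the example given right after the definition of $\ttw$, shows the gap concretely. There $\adeg(G,X)=k$, while $\ttw(G,X)=1$ is witnessed by $X^\star=X\cup\{c\}$, and $\adeg(G,X^\star)=0$.

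So the argument proves the displayed bound literally for every annotated graph in which $X$ itself attains the minimum in the definition of $\ttw$. In particular it holds for the optimal supersets $X^\star$, which is the form in which such a bound is used when comparing $\ttw$ with $\rtw$. For arbitrary $X$ the same argument only yields $\min_{X'\supseteq X}\adeg(G,X')\le\ttw(G,X)+1$. I would record the observation with $X$ understood as such an optimal set, since the clique argument cannot close the gap in general.
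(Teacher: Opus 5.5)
Your analysis is correct, and your clique argument is what the paper's one-line justification (``the definition of $\ttw$ directly implies the following'') amounts to. That argument gives $\adeg(G,X)\le \tw(\torso(G,X))+1$, and hence $\adeg(G,X^\star)\le \ttw(G,X)+1$ for an optimal superset $X^\star\supseteq X$. It does not give the inequality for arbitrary $X$.

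Your counterexample is valid. For $G=K_{1,k}$ with $X$ the set of leaves, the only component of $G-X$ is the center, so $\adeg(G,X)=k$, while $\ttw(G,X)=1$. The observation as literally stated therefore fails for every $k\ge 3$.

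Your repaired form suffices for the observation's only use, in the proof of \cref{lem_equivalence}. There the paper applies the bound to an arbitrary $X'$, which is not justified. The fix is to take $X'=X^\star$ in the definition of $\rtw$. Then $\ttw(G,X^\star)=\ttw(G,X)$: the inequality $\le$ holds because $X^\star$ is itself a candidate superset, and $\ge$ holds because every superset of $X^\star$ contains $X$. Consequently both $\brg(G,X^\star)\le f(\ttw(G,X))$ and $\adeg(G,X^\star)\le \ttw(G,X)+1$ hold, which still yields $\rtw\pretp\ttw$.
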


Using the main result of Marx, Seymour, and Wollan in~\cite{MarxSW2017Rooted}, Hodor, La, Micek, and Rambaud \cite{HodorLMR24quick}
proved the following (using the terminology introduced in \cref{sec_prelims}).

\begin{proposition}\label{prop_elm_equ_v}
$\ttw\sim \bog$.
\end{proposition}

 Let $z\in\{0,\ldots,\lceil k/2\rceil\}$.
 The \emph{$z$-trimmed subgrid} of $\Gamma_{k},$
 is the unique $(k-2z\times k-2z)$-subgrid of $\Gamma_{k}$ whose 
 vertices are all within distance at least $z$ from all the vertices of the perimeter of $\Gamma_{k}$. In our proofs, we  make use of the following result of Demaine, Fomin, Hajiaghayi, and Thilikos~\cite[Lemma~3.1]{DemaineFHT04bidim}.

\begin{proposition}[Demaine, Fomin, Hajiaghayi, and Thilikos~\cite{DemaineFHT04bidim}] \label{prop_subjection}
 If $S$ is a subset of the vertices in the $\lfloor \sqrt[4]{|S|} \rfloor$-trimmed subgrid of $\Gamma_{k},$ then $\brg(G,S)≥\lfloor \sqrt[4]{|S|} \rfloor$.
\end{proposition}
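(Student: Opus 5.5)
Throughout, $G=\Gamma_{k}$. Set $r=\lfloor\sqrt[4]{|S|}\rfloor$, so $|S|\ge r^4$. Every vertex of $S$ is at distance at least $r$ from the perimeter, so on each of the four sides there is a free frame of width $r$ containing no vertex of $S$. The goal is a minor model $\{B_{(i,j)}\}$ of $\Gamma_r$ in $\Gamma_k$ in which every branch set meets $S$. The plan is to first reduce to a one-dimensional configuration, and then build the grid inside the free frame using a gadget.

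\textbf{Step 1: reduction to a line.} I would pigeonhole on the rows of $\Gamma_k$ that contain vertices of $S$. Either some single row contains at least $r^2$ vertices of $S$, or at least $r^2$ distinct rows contain vertices of $S$.

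In the second case, pick one vertex $s_1,\ldots,s_{r^2}$ of $S$ in each of $r^2$ distinct rows. From each $s_\ell$, take the horizontal segment of its row running left to the innermost column of the left frame. These segments lie in distinct rows, so they are pairwise disjoint and connected. I will later absorb each segment into the branch set that contains $s_\ell$. After this, the situation is the same as in the first case, rotated by $90^\circ$: we have $r^2$ "annotated" vertices on one column, and an untouched strip of width $r$ (the left frame) beside that column.

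In the first case the picture is already of this form: $r^2$ vertices of $S$ lie on one row, and the upper frame gives a free strip of $r$ full rows above it. The rows strictly between that row and the frame are also available.

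\textbf{Step 2: the gadget.} It now suffices to prove the following. Take a path $L$ (a row or column segment) carrying $r^2$ marked vertices $a_1,\ldots,a_{r^2}$ in order, and suppose there is a disjoint adjacent strip consisting of $r$ parallel lines of the grid. Then the union of $L$ and the strip contains a model of $\Gamma_r$ in which each branch set contains exactly one $a_\ell$.

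My construction is as follows.
\begin{itemize}
\item Split $a_1,\ldots,a_{r^2}$ into $r$ consecutive blocks of $r$ vertices each. Block $i$ will become row $i$ of $\Gamma_r$.
\item The horizontal edges of row $i$ come from the subpath of $L$ between consecutive marked vertices inside block $i$. Cut $L$ between blocks so that distinct rows are disconnected along $L$.
\item From each $a_\ell$, run a perpendicular segment into the strip. Block $i$ uses strip line $i$ as its turning lane.
\item The vertical edges, between the $j$-th vertex of block $i$ and the $j$-th vertex of block $i+1$, are realized by routing within the strip in a staircase fashion. The $j$-th "column" of $\Gamma_r$ is carried by a monotone path that climbs one strip line per block. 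Planarity of the layout lets these $r$ column paths be chosen pairwise disjoint.
\end{itemize}
Contracting each marked vertex together with its perpendicular segment and its share of the routing paths gives the branch sets. In the reduced second case, each branch set also contains the horizontal segment from Step 1.

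\textbf{Main obstacle.} I expect the crux to be the disjoint routing in Step 2. The $r$ column paths must cross $r$ blocks using only $r$ lanes, and must not collide with the horizontal "row" pieces taken from $L$. If the direct staircase does not fit, my fallback is to use the fact that we have $r^4$ rather than $r^2$ vertices of $S$. The extra factor lets me thin out the marked vertices, keeping only every $r^2$-th one, so that between consecutive chosen marked vertices there are many columns of slack. Frame vertices on both sides of $L$ (upper and lower, or left and right) can then be used, so the routing becomes a routine planar embedding of $\Gamma_r$ into a wide enough grid strip.
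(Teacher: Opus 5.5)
Your Step~2 gadget cannot work for $r\ge 3$. The obstruction is topological, not a shortage of room. The region you route in, $L$ together with a strip on one side of it, is a grid in which $L$ is a boundary line. So adding a new vertex $z$ adjacent to every vertex of $L$ keeps it planar. A model of $\Gamma_r$ in this region in which every branch set meets $L$ would, together with $\{z\}$, give $\Gamma_r$ plus a universal vertex as a minor of a planar graph. But $\Gamma_3$ has a $K_4$ minor, so that graph has a $K_5$ minor.

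You can also see this directly. If all blocks are read in the same direction, the $r$ vertical edges between blocks $i$ and $i+1$ pairwise interleave along $L$. Reversing every other block makes them nested, but then the family between blocks $i,i+1$ interleaves with the family between blocks $i+1,i+2$. So the claim that ``planarity of the layout lets these $r$ column paths be chosen pairwise disjoint'' is false. The reduced Case~2 has the same defect: contracting each absorbed horizontal segment onto its endpoint on $L$ turns your model into a one-sided model in $L$ plus the left frame.

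The fallback does not repair this. After Step~1 you only have $r^2$ marked vertices (Case~1 gives $r^2$, not $r^4$, on a row), so there is no spare factor to thin out, and horizontal slack is not what is missing anyway. Using both sides of $L$ is the right instinct. However, in Case~2 the side of the column opposite the frame is occupied by the absorbed segments, which may lie in consecutive rows and form a solid staircase.

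The missing idea is a two-sided (2-page book) embedding of $\Gamma_r$:
\begin{itemize}
\item List its vertices in snake order, so that consecutive vertices are adjacent.
\item The vertical edges between rows $i$ and $i+1$ of $\Gamma_r$ then form a nested family of $r$ arcs.
\item Place these families alternately on the two sides. A nested family needs depth at most $r$, which is exactly the frame width.
\end{itemize}
In Case~2, for instance, let the $\ell$th branch set (in snake order) be the part of row $y_\ell$ between columns $r+1$ and $k-r$, together with the piece of column $r+1$ down to row $y_{\ell+1}-1$. Then route the odd-indexed families through the left frame and the even-indexed ones through the right frame. Case~1 is the transposed version of the same construction, since a row with $r^2$ vertices of $S$ meets $r^2$ distinct columns. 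In fact, $|S|$ is at most the product of the numbers of rows and of columns meeting $S$, so one of these numbers is at least $r^2$ and a single case suffices.
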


\begin{lemma}
\label{lem_low_bed_y}
$\rtw(\mathbf{\Gamma}^{\circ}_{k})=\Omega(k)$.
\end{lemma}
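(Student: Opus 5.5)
We must show: for every $X'\supseteq P_k$ in $\Gamma_k$, $\max\{\brg(\Gamma_k,X'),\adeg(\Gamma_k,X')\}\ge ck$ for some absolute constant $c>0$. Fix such an $X'$ and set $d:=\adeg(\Gamma_k,X')$. The plan is a dichotomy on how much of the interior of the grid $X'$ covers. Either $X'$ contains many vertices of a trimmed subgrid, and then \cref{prop_subjection} gives large $\brg$. Or $X'$ misses most of that subgrid, and then some component of $\Gamma_k-X'$ must have large boundary.

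\textbf{Step 1 (dense case).} Let $z=\lfloor k/4\rfloor$ and let $Q$ be the $z$-trimmed subgrid; it has side length about $k/2$ and hence about $k^2/4$ vertices. Suppose $|X'\cap V(Q)|\ge k^2/16$. Choose $S\subseteq X'\cap V(Q)$ with $|S|$ between roughly $k^2/16$ and $k^2/4$. Then $\lfloor\sqrt[4]{|S|}\rfloor\le \sqrt{k}/\sqrt2\le z$ for large $k$, so $S$ lies in the $\lfloor\sqrt[4]{|S|}\rfloor$-trimmed subgrid. \cref{prop_subjection} gives $\brg(\Gamma_k,S)\ge\lfloor\sqrt[4]{|S|}\rfloor$, and \cref{obs_mosubm} transfers this to $X'$. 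This only yields $\Omega(\sqrt k)$, which is too weak.

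So I would instead use the stronger grid fact underlying \cref{prop_subjection}: a set of $\Omega(k^2)$ vertices inside the central subgrid gives a rainbow grid of order $\Omega(k)$. Concretely, partition $Q$ into $\Theta(k^2)$ disjoint $t\times t$ blocks with $t$ a large constant. If a constant fraction of the vertices of $Q$ lie in $X'$, then a constant fraction of the blocks meet $X'$. If $X'$ instead contains only a tiny fraction, we are in the sparse case anyway. This is the step I expect to be the main obstacle: turning "many blocks meet $X'$" into an $\Omega(k)$ rainbow grid minor. A fraction of blocks in arbitrary positions does not immediately give a grid.

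To avoid this obstacle, I would cut the threshold differently. Call a block \emph{full} if it meets $X'$. Step 2 below only needs some $t\times t$ block to \emph{avoid} $X'$ entirely, or more generally some interior region of size $\Theta(k)$ to avoid $X'$. So the dichotomy becomes: either every $r\times r$ square in $Q$, with $r=\epsilon k$, meets $X'$, or some such square is disjoint from $X'$.

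In the first case, contract $Q$ into an $(1/\epsilon)$-scaled grid whose branch sets are the $r\times r$ squares. This gives only constant size, so the first case must be refined again: use $r$ a constant and grid branch sets. If every $r\times r$ square of $Q$ meets $X'$, then contracting disjoint $r\times r$ squares yields an $\lfloor k/(2r)\rfloor$-grid minor in which every branch set meets $X'$. That is a rainbow grid, so $\brg(\Gamma_k,X')=\Omega(k)$ for constant $r$.

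\textbf{Step 2 (sparse case).} Some $r\times r$ square $B\subseteq Q$ has constant $r$ and avoids $X'$. Let $C$ be the component of $\Gamma_k-X'$ containing $B$. Since $P_k\subseteq X'$, the component $C$ is confined to the interior. We need $|N(V(C))|=\Omega(k)$, but a constant-size $B$ may sit in a small component, so choosing a constant $r$ here is too weak.

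The resolution is to iterate scales. Define $r^\ast$ as the largest $r$ such that some $r\times r$ square inside $Q$ avoids $X'$.
\begin{itemize}
\item If $r^\ast<\epsilon k$, every $(r^\ast+1)$-square meets $X'$. The first case then gives a rainbow grid of order $\Omega(k/r^\ast)$, which is not enough on its own.
\item The component containing the free square has boundary at least about $r^\ast$, by the grid isoperimetric inequality: a connected set of $m$ vertices avoiding the perimeter has vertex boundary $\Omega(\sqrt m)$.
\end{itemize}
This gives $\max\ge\Omega(\max(k/r^\ast,r^\ast))=\Omega(\sqrt k)$, so still not linear.

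Hence the true argument must use more than isolated squares. I would aim to show the following key claim via a Menger argument on the annulus between $P_k$ and $Q$: either $\Omega(k)$ disjoint "rows" and "columns" each meeting $X'$ at their crossings give a rainbow grid, or some component of $\Gamma_k-X'$ touches $\Omega(k)$ separator vertices. Proving this key claim is the expected hard core of the argument.
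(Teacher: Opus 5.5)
Your proposal does not prove the statement. Steps~1 and~2 only give $\max\{\brg(\Gamma_k,X'),\adeg(\Gamma_k,X')\}=\Omega(\sqrt{k})$. The upgrade to $\Omega(k)$ rests entirely on the final ``key claim'', which you leave unproved. That claim cannot be proved, because it is false, and so is the linear bound of the lemma itself.

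Here is a counterexample. Index rows and columns of $\Gamma_k$ from $0$ to $k-1$, let $s:=\lceil k^{2/3}\rceil$, and let $X'\supseteq P_k$ be the set of vertices lying in a row or column whose index is a multiple of $s$ or equals $k-1$.
\begin{itemize}
\item Every component of $\Gamma_k-X'$ lies inside an $(s-1)\times(s-1)$ box, so $\adeg(\Gamma_k,X')\le 4(s-1)$.
\item A rainbow $m$-grid minor needs $m^2$ pairwise disjoint branch sets, each meeting $X'$. Hence $\brg(\Gamma_k,X')\le\sqrt{|X'|}\le\sqrt{2k^2/s+4k}=O(k^{2/3})$.
\end{itemize}
Therefore $\rtw(\mathbf{\Gamma}^{\circ}_{k})=O(k^{2/3})=o(k)$. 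For this $X'$ there is neither a rainbow grid of order $\Omega(k)$ nor a component with $\Omega(k)$ attachment vertices, which refutes your key claim.

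The paper's proof fails at the corresponding step. In its dense case it takes $|S|=\lfloor\sqrt[4]{r}\rfloor$ and concludes $\brg(\Gamma_k,S)\ge\lfloor\sqrt[4]{r}\rfloor$. But \cref{prop_subjection} only gives $\lfloor\sqrt[4]{|S|}\rfloor$, and in any case $\brg(\Gamma_k,S)\le\sqrt{|S|}$.

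What is true, and what your scale argument essentially already shows, is $\rtw(\mathbf{\Gamma}^{\circ}_{k})=\Omega(\sqrt{k})$. This suffices for the only use of the lemma: in \cref{lem_equivalence}, $\bog\preceq\rtw$ only needs $\rtw(\mathbf{\Gamma}^{\circ}_{k})\to\infty$, with a quadratic rather than linear bounding function. A clean version of the argument:
\begin{itemize}
\item Fix $X'\supseteq P_k$ and put $a:=\adeg(\Gamma_k,X')$.
\item A component $C$ of $\Gamma_k-X'$ avoids $P_k$. So for each row met by $C$, the left neighbor of the leftmost vertex of $C$ in that row exists and lies in $N_{\Gamma_k}(V(C))$.
\item These neighbors are distinct, so $C$ meets at most $a$ rows and, symmetrically, at most $a$ columns. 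Hence $|V(C)|\le a^2$.
\item Consequently every $(a+1)\times(a+1)$ block of $\Gamma_k$ meets $X'$, since a block avoiding $X'$ is connected and would lie inside a single component.
\item Using $\lfloor k/(a+1)\rfloor^2$ consecutive such blocks as branch sets gives $\brg(\Gamma_k,X')\ge\lfloor k/(a+1)\rfloor$.
\end{itemize}
Thus $\max\{\brg(\Gamma_k,X'),\adeg(\Gamma_k,X')\}\ge\lfloor\sqrt{k}\rfloor-1$ for every $X'\supseteq P_k$. The statement should be weakened to this; the true order lies between $\sqrt{k}$ and $k^{2/3}$.
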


\begin{proof}
Recall that we use the notation $\mathbf{\Gamma}^{\circ}_{k}=(\Gamma_{k},P_{k})$ where $P_{k}$ is the perimeter of $\Gamma_{k}$. 
Let~$X'$ be a vertex set of $\Gamma_{k}$ where $P_{k}\subseteq X'$.  
Given some $r\in\{0,\ldots,\lceil k/2\rceil\}$, we denote by~$\Gamma_{k}^{\circ}$
the $\lfloor\sqrt[4]{r}\rfloor$-trimmed subgrid of $\Gamma_{k}$ and we set $X'_{r}=X'\cap V(\Gamma_{k}^{\circ})$.  If $|X_{r}'|≥\lfloor\sqrt[4]{r}\rfloor$, then consider a subset $S$ of   $X_{r}'$ of exactly $\lfloor\sqrt[4]{r}\rfloor$ vertices.
As  $S\subseteq X_{r}'\subseteq V(\Gamma_{k}^{\circ})$, \cref{obs_mosubm} and \cref{prop_subjection} imply that $\brg(\Gamma_{k},X')≥\brg(\Gamma_{k},X'_{r})≥\brg(\Gamma_{k},S)≥\lfloor\sqrt[4]{r}\rfloor$.
Otherwise, if $|X'_{r}|<\lfloor\sqrt[4]{r}\rfloor$, then 
at least $k-2\lfloor\sqrt[4]{r}\rfloor-\lfloor\sqrt[4]{r}\rfloor$ of the vertical 
paths of $\Gamma_{k}^{r}$ and at least $k-2\lfloor\sqrt[4]{r}\rfloor-\lfloor\sqrt[4]{r}\rfloor$ of the horizontal  
paths of $\Gamma_{k}^{r}$ avoid the vertices of $X'$, therefore 
their union, say $U$,  forms a connected subgraph of some  connected component, say $C$, of $\Gamma_{k}-X'$ where $N_{\Gamma_{k}}(V(C))≥4\cdot (k-3\lfloor\sqrt[4]{r}\rfloor)$.
Indeed, this is  because $V(U)\cap X'=\emptyset$ and every vertical/horizontal path 
in $\Gamma_{k}^{r}$ that belongs to $U$ is a subpath of a vertical/horizontal path 
in $\Gamma_{k}$ whose two endpoints are vertices of $P$, thus also vertices of $X'$. 
This implies that $\adeg(\Gamma_{k},X')≥4\cdot (k-3\lfloor\sqrt[4]{r}\rfloor)$. 
Therefore, for every $X', P\subseteq X'$ it holds that $\rtw(\Gamma_{k},X')\geq\min\{\max\{\lfloor\sqrt[4]{r}\rfloor,4\cdot (k-3\lfloor\sqrt[4]{r}\rfloor)\}\mid r\leq k\}=\Omega(k)$ and the lemma follows.
\end{proof}

We are now in position to also prove the following equivalence.
\begin{lemma}\label{lem_equivalence}
$\bog\sim\rtw$.
\end{lemma}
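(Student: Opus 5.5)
We prove the two inequalities $\rtw\pretp\bog$ and $\bog\pretp\rtw$ separately, using \cref{prop_elm_equ_v} to replace $\bog$ by $\ttw$ wherever convenient.

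For $\bog\pretp\rtw$, we use minor-monotonicity. Suppose $\bog(G,X)=k$, so that $\mathbf{\Gamma}^{\circ}_{k}\leq (G,X)$. Since $\rtw$ is minor-monotone by \cref{obs_closed_nm7u}, we get $\rtw(G,X)\geq \rtw(\mathbf{\Gamma}^{\circ}_{k})$. By \cref{lem_low_bed_y}, $\rtw(\mathbf{\Gamma}^{\circ}_{k})\geq c\cdot k$ for an absolute constant $c>0$. Hence $\bog(G,X)\leq \rtw(G,X)/c$, which gives the required bounding function.

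For $\rtw\pretp\bog$, it suffices to show $\rtw\pretp\ttw$ and then compose with $\ttw\pretp\bog$ from \cref{prop_elm_equ_v}.
\begin{itemize}
\item Let $t=\ttw(G,X)$ and choose $X'\supseteq X$ with $\tw(\torso(G,X'))=t$. This $X'$ is admissible in the minimum defining $\rtw$, so $\rtw(G,X)\leq \max\{\brg(G,X'),\adeg(G,X')\}$.
\item For the attachment degree, for each component $C$ of $G-X'$ the set $N_G(V(C))$ is a clique in the torso. A clique has size at most the treewidth plus one, so $\adeg(G,X')\leq t+1$; this is the content of \cref{obs_adeg_lop}, applied to the optimal superset $X'$.
\item For the bidimensionality, it remains to show $\brg(G,X')$ is bounded by a function of $t$. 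Suppose $\mathbf{\Gamma}^{\bullet}_{m}\leq (G,X')$. Applying the construction in the proof of \cref{obs_monotone_ttw} to this model, with $Y^\star=V(\Gamma_m)$, yields that $\torso(\Gamma_m,V(\Gamma_m))=\Gamma_m$ is a minor of $\torso(G,X')$. Since the torso of the full vertex set is the graph itself, and treewidth is minor-monotone, $m-1=\tw(\Gamma_m)\leq t$. Hence $\brg(G,X')\leq t+1$.
\end{itemize}
Together these give $\rtw(G,X)\leq t+1=\ttw(G,X)+1$, and therefore $\rtw\pretp\ttw\pretp\bog$.

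The main obstacle is the bidimensionality bound in the third step. One must check that the shortcutting argument in \cref{obs_monotone_ttw} really produces a minor of $\torso(G,X')$ whose branch sets are $\eta(v)\cap X'$. These sets are nonempty because every grid vertex is annotated. However, they need not be connected in the torso unless we also use the shortcutting through components of $G-X'$.

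If that step turns out to be delicate, there is a more robust alternative. Observe that $\brg\leq \atw+1$, since an $X'$-minor grid of size $m$ has treewidth $m-1$. Then show directly that $\atw(G,X')\leq\tw(\torso(G,X'))$: every $X'$-minor of $G$ is a minor of $\torso(G,X')$, by contracting each component of $G-X'$ into an adjacent branch set.
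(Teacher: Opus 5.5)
Your proof is correct. The direction $\bog\pretp\rtw$ matches the paper exactly: minor-monotonicity of $\rtw$ combined with \cref{lem_low_bed_y}. For $\rtw\pretp\bog$ you also follow the paper's skeleton: reduce to $\rtw\pretp\ttw$ via \cref{prop_elm_equ_v}, fix an optimal superset $X'$, and bound $\adeg(G,X')$ by the torso-clique argument.

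The genuine difference is how you bound $\brg(G,X')$.
\begin{itemize}
\item The paper observes $\mathbf{\Gamma}^{\circ}_{k}\le\mathbf{\Gamma}^{\bullet}_{k}$, so $\brg\le\bog$. It then invokes the other, deep direction $\bog\pretp\ttw$ of the Hodor--La--Micek--Rambaud equivalence.
\item You instead show directly that every $X'$-minor of $G$ is a minor of $\torso(G,X')$. Each branch set is restricted to $X'$, and every excursion of a path through a component $C$ of $G-X'$ is shortcut by the clique on $N_G(V(C))$. This shortcutting gives both the connectivity of the restricted branch sets (the point you flagged) and the required edges.
\end{itemize}
Your route is elementary and self-contained for this step and yields the explicit bound $\rtw\le\ttw+1$. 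It also uses only the direction $\ttw\pretp\bog$ of \cref{prop_elm_equ_v}, whereas the paper uses both directions and inherits the bounding function of that black box. The paper's route, in exchange, is a two-line reduction.

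You were also right to apply the attachment-degree bound to $\tw(\torso(G,X'))$ for the chosen $X'$. Read literally, \cref{obs_adeg_lop} is false: for $K_{1,k}$ with the leaves annotated, $\adeg=k$ while $\ttw=1$. The paper's argument is only valid because it implicitly uses the bound for the optimal superset, exactly as you do.

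Two harmless slips:
\begin{itemize}
\item $\tw(\Gamma_m)=m$ for $m\ge 2$, not $m-1$. This only improves your bound.
\item In your fallback you cannot literally contract a component of $G-X'$ into a single branch set, since it may meet several branch sets. The correct justification there is the same restrict-and-shortcut argument.
\end{itemize}
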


\begin{proof}
 To prove that $\bog\pretp \rtw$, consider an annotated  graph $(G,X)$, and let $k=\bog(G,X)$. Then, the definition of $\bog$ implies that  $\mathbf{\Gamma}^{\circ}_{k} \leq (G,X)$. By \cref{obs_closed_nm7u}, we get that 
$\rtw(\mathbf{\Gamma}^{\circ}_{k}) \leq \rtw(G,X)$, 
and \cref{lem_low_bed_y} implies  that there exists a positive constant~$c$ such that, assuming that $k$ is large enough,
\[c \cdot \bog(G,X) = c\cdot k \leq \rtw(\mathbf{\Gamma}^{\circ}_{k}) \leq \rtw(G,X),\]
proving that, indeed, $\bog\pretp \rtw$. 

For the other direction, from \cref{prop_elm_equ_v} we have that $\ttw\pretp \bog$, and therefore for proving $\rtw\pretp \bog$, it is enough 
to prove that $\rtw\pretp \ttw$. To this end, observe first that $\Gamma^{\circ}_{k}\leq \Gamma^{\bullet}_{k}$, therefore $\brg\pretp\bog$. Also from \cref{prop_elm_equ_v}, $\bog\pretp \ttw$.
We deduce that $\brg\pretp \ttw$, therefore there is a function 
 $f\colon \Nbbb\to\Nbbb$ such that for every annotated graph $(G,X')$, 
 $\brg(G,X')\leq f(\ttw(G,X'))$. Also, from \cref{obs_adeg_lop}, $\adeg(G,X')\leq \ttw(G,X')+1$.
Using the two last inequalities and the definition of $\rtw$, we obtain that $\rtw\pretp \ttw$, as required.
\end{proof}

\section{Collapse on unbreakable graphs}
\label{sec_collapse}


Let $\mathcal{L}_1$, $\mathcal{L}_2$ be two logics and let $\Cc$ be a class of graphs.
We say that $\mathcal{L}_1$ \emph{collapses to $\mathcal{L}_2$ on $\Cc$} if for every formula of $\mathcal{L}_1$, there is a formula of $\mathcal{L}_2$ such that the two formulas are equivalent on all graphs from $\Cc$.

In this section, we prove the collapse of \CMSOttwDP and \CMSOtw to \FODP and \FO respectively, on unbreakable graphs. The first ingredient is the following lemma, which states that on unbreakable graphs, only small sets can have bounded torso treewidth.

\begin{lemma}\label{lem_small_unbreakable}
    Let~\mbox{$k,q\in\mathbb{N}$} and let $G$ be a $(q,k+1)$-unbreakable graph with at least $3q$ vertices. Also, let $X$ be a vertex subset of $V(G)$ with $\ttw(G,X)\le k$. 
    Then, for every $X'\supseteq X$ with $\ttw(G,X)=\mathsf{tw}(\mathsf{torso}(G,X'))$, it holds that $|X'|\le q$.
\end{lemma}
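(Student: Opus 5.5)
We argue by contradiction: fix $X'\supseteq X$ with $\tw(\torso(G,X'))=\ttw(G,X)\le k$ and suppose $|X'|>q$. The tool is the standard balanced-separator property of graphs of bounded treewidth. Since $\torso(G,X')$ has treewidth at most $k$, for any weight function on $X'$ there is a bag $B$ of an optimal tree decomposition, with $|B|\le k+1$, such that every connected component of $\torso(G,X')-B$ carries at most half of the total weight. We will lift this separation of the torso to a separation of $G$ of order at most $k+1$ and contradict $(q,k+1)$-unbreakability.

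The lifting step runs as follows. Let $D_1,\dots,D_m$ be the components of $\torso(G,X')-B$. Each component $C$ of $G-X'$ has $N_G(V(C))\cap X'$ forming a clique in the torso. Hence $N_G(V(C))\setminus B$ lies inside a single $D_i$. We attach each such $C$ to that $D_i$; if $N_G(V(C))\subseteq B$, we attach $C$ arbitrarily. Adjacencies between $X'$-vertices in $G$ are edges of the torso, so the resulting vertex classes $\widehat D_i\supseteq D_i$ have no edges between different classes in $G$. Any union of them, together with $B$, therefore gives a separation of $G$ with separator $B$ of order at most $k+1$.

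The main obstacle is getting both sides to contain more than $q$ vertices of $V(G)$, since unbreakability is measured on all of $V(G)$, not on $X'$. The plan is to weight by $G$-vertices rather than torso vertices. Give each $x\in X'$ weight $1$, and distribute the vertices of each component $C$ of $G-X'$ onto one vertex of $N_G(V(C))\cap X'$. The clique property ensures that any component of the torso minus $B$ receives exactly the $G$-vertices assigned to it. For a component $C$ with no neighbour in $X'$, $G$ is disconnected; this case is handled directly by unbreakability, using that $G$ has at least $3q$ vertices and that the components are grouped greedily.

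The separator lemma then gives classes each of weight at most $|V(G)|/2$. We group them greedily into two sides $A\setminus B$ and $B'\setminus B$, each of weight at least roughly $|V(G)|/3-(k+1)$. With $|V(G)|\ge 3q$, each side contains at least $q$ vertices, up to bookkeeping of the additive $k+1$ and strictness. This contradicts $(q,k+1)$-unbreakability, provided the grouping is done carefully, for instance by refining to the largest components first.

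If the weighted bound fails to be sharp enough, we fall back on the following observation. Since all of $X'$ lies in the torso, if $|X'|>q$ it suffices to split $X'$ itself. We therefore weight only $X'$. The side that receives most of $V(G)\setminus X'$ is made large, while the other side contains more than $q$ elements of $X'$ once $|X'|$ is large. The threshold $3q$ is what makes these two requirements compatible.
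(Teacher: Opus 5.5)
\textbf{Gap in the balancing step.} Your lifting step is correct. Because $N_G(V(C))\cap X'$ is a torso clique, the components of $\mathsf{torso}(G,X')-B$ lift to classes $\widehat D_i$ with no $G$-edges between them. This gives separations of order $|B|\le k+1$.

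The problem is your weighting. You park all of a component $C$ of $G-X'$ on one vertex of $N_G(V(C))$, and this does not match the lifting:
\begin{itemize}
\item If that vertex lies in $B$, the weight of $C$ is invisible to the separator lemma. A single vertex of $B$ can carry unboundedly many components, so the weight outside $B$ need not be close to $|V(G)|-(k+1)$.
\item Meanwhile $C$ may still be glued to some $\widehat D_i$, which can push $|\widehat D_i|$ above $|V(G)|/2$.
\end{itemize}
Concretely, let $k=1$ and $q\ge 2$. Take $X'=\{x,y_1,\dots,y_q\}$ with edges $xy_i$, plus $N\ge 2q$ further vertices, each adjacent to exactly $x$ and $y_1$. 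The torso is a star. If you park all extra vertices on $x$, the bag $\{x,y_2\}$ is balanced. But every extra vertex joins the class of $y_1$, so any grouping leaves a side with at most $q$ vertices. The separator $\{x,y_1\}$ would have broken $G$; your procedure simply does not find it.

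Your fallback of weighting only $X'$ does not help either. You only know $|X'|\ge q+1$, so a side with more than $q$ vertices of $X'$ must contain essentially all of $X'$. Nothing forces most of $V(G)\setminus X'$ to be cut off from $X'$ by $k+1$ vertices.

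\textbf{The missing idea.} The core of the paper's proof is to apply unbreakability first to each component $C$ of $G-X'$. The set $S_C=N_G(V(C))$ is a torso clique, so $|S_C|\le k+1$. Then $(V(C)\cup S_C,\ V(G)\setminus V(C))$ is a separation of order at most $k+1$ whose second side contains $X'$. Hence $|V(C)\cup S_C|\le q\le |V(G)|/3$; this also covers your disconnected case.

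Next, assign each $C$ to a \emph{node} of the torso's tree decomposition whose bag contains $S_C$, not to a vertex. (The paper hangs a leaf bag $V(C)\cup S_C$ there.) Choose a node $t$ that is balanced with respect to these node weights. Each component of $G-\mathsf{bag}(t)$ then either lies within the part associated with one subtree of $T-t$, so has at most $|V(G)|/2$ vertices, or is a single $C$ with $S_C\subseteq\mathsf{bag}(t)$, so has at most $q$ vertices. Greedy grouping then yields two sides of size at least $|V(G)|/3\ge q$ each. Strictness is routine: the bag is nonempty, or you can move one more vertex into the separator while its order is below $k+1$.
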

\begin{proof}
    Let $q,k$ be integers, and $G$ a $(q,k+1)$-unbreakable graph with at least $3q$ vertices.
    Let $X\subseteq V(G)$ be a subset of vertices of $G$ with $\ttw(G,X)\leq k$, and $X'\supseteq X$ with $\tw(\torso(G,X'))\leq k$.
    Then, there is a tree decomposition $\mathcal{T}$ of $\torso(G,X')$ of width $k$ (i.e.,~with at most $k+1$ vertices per bag).
    
    Assume towards a contradiction that $|X'|>q$.
    Then, fix a connected component $C$ of $G-X'$. Since $N(C)\cap X'$ forms a clique in $\torso(G,X')$, we have that $S_C:= N(C)\cap X'$ contains at most $k+1$ many vertices. As the removal of $S_C$ separates $X'$ from $C\cup S_C$, and since $|X'|>q$, the unbreakability of $G$ implies that~$|C\cup S_C| \le q$.

    We can now construct a tree decomposition $\mathcal{T}'$ of $G$ of width at most $q-1$ whose underlying tree $T'$ is a copy of $T$ where, for every connected component $C$ of $G-X'$, we create a bag $C\cup S_C$, attached (as a leaf) to the bag of $\mathcal{T}$ containing the vertices of $S_C$.
    (There must be a bag containing the vertices $N(C)\cap X'$ because they form a clique in $\torso(G,X')$.)
    
    Then, there is an inner-bag of at most $k+1$ vertices whose removal splits the vertices of~$G$ in a balanced way such that both parts contain at least $|G|/3\ge q$ vertices of $G$.
    (Remark that this is a property of tree decompositions.)
    Hence, there is a separation of $G$ of order~$k+1$ with at least $q$ vertices on both parts.
    This is a contradiction to the assumption that~$G$ is $(q,k+1)$-unbreakable.
\end{proof}

As proved in \cref{lemm_anotopomin_formula}, the fact that an annotated graph has small torso treewidth can be expressed in \FODP. Instead of the annotated graph $(G,X)$, we talk here about the graph~$G$ together with free variables $x_1,\ldots,x_\ell$, which is only as powerful when the set $X$ is known to have small size, see \cref{lem_small_unbreakable}.
\begin{lemma}\label{lem_ttw_FODP}
    For all integers $k,\ell$, there is an \textup{\FODP} formula $\sigma_{k}^{\textup{\FODP}}(x_1,\ldots,x_\ell)$ such that for every graph~$G$ and set $A = \{a_1,\ldots,a_\ell\}\subseteq V(G)$, we have that $(G,\bar a) \models \psi_k$ if and only if $\ttw(G,A)\le k$. 
\end{lemma}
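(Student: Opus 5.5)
We derive the formula from \cref{lemm_anotopomin_formula} (itself resting on \cref{prop_minor_obs}), replacing the unary annotation predicate by membership in the finite tuple $\bar x=(x_1,\ldots,x_\ell)$. Since $\ttw$ is minor-monotone (\cref{obs_monotone_ttw}), \cref{prop_minor_obs} yields a finite set $\Ocal_k^{\ttw}$ of annotated graphs such that, for every $(G,A)$, we have $\ttw(G,A)\le k$ if and only if no $(H,Y)\in\Ocal_k^{\ttw}$ satisfies $(H,Y)\pretp (G,A)$. Thus it suffices to write, for each fixed $(H,Y)\in\Ocal_k^{\ttw}$, an $\FODP$ formula $\phi_{H,Y}(x_1,\ldots,x_\ell)$ expressing $(H,Y)\pretp(G,\{a_1,\ldots,a_\ell\})$. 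We then set
\[\sigma_{k}^{\FODP}(x_1,\ldots,x_\ell)\;=\;\bigwedge_{(H,Y)\in\Ocal_k^{\ttw}}\neg\phi_{H,Y}(x_1,\ldots,x_\ell).\]

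To build $\phi_{H,Y}$ we copy the formula from the proof of \cref{lemm_anotopomin_formula}. We existentially quantify pairwise distinct vertices $z_v$ for $v\in V(H)$, the images of the principal vertices. Each atom ``$z_v\in X$'' for $v\in Y$ is replaced by the disjunction $\bigvee_{i\in[\ell]} z_v=x_i$. We then take a disjunction over all subsets $E'\subseteq E(H)$ of edges realized by paths of length at least two. For each edge $\{u,v\}\in E(H)\setminus E'$ we require $E(z_u,z_v)$. For each edge in $E'$ we quantify two neighbours $s,t$ of its endpoints, and we use a single $\DP$ atom demanding disjoint $s$--$t$ paths for the edges of $E'$ together with trivial paths $z_v$--$z_v$. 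These trivial paths force the subdivision paths to avoid the principal vertices. One should take care of the degenerate case of subdivision paths with exactly one internal vertex ($s=t$); this is handled by also disjuncting over which edges have such a path and quantifying that middle vertex directly. None of this involves $X$ except through the principal-vertex membership test, so the formula only uses the free variables $\bar x$. Note that if some $a_i$ coincide, the membership test still works, since $\{a_1,\ldots,a_\ell\}$ is just the set $A$.

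Correctness follows directly. Because $\Ocal_k^{\ttw}$ is finite, and its members have bounded size by \cref{obs_topominormany}, the formula is a finite $\FODP$ formula whose length depends only on $k$ and $\ell$. The resulting statement is $(G,\bar a)\models\sigma_k^{\FODP}$ if and only if $\ttw(G,A)\le k$. The lemma's ``$\psi_k$'' is this $\sigma_k^{\FODP}$.

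The only delicate point is the non-constructiveness of $\Ocal_k^{\ttw}$. The lemma only claims existence, so this is harmless. If a constructive version is wanted, I would instead use \cref{prop_elm_equ_v} and the single obstruction $\mathbf{\Gamma}^{\circ}_{k'}$ for $\bog$. But then the formula only defines $\bog\le k'$, which is functionally rather than exactly equivalent to $\ttw\le k$. Hence I would rely on the obstruction-set argument for exactness.
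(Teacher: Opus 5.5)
Your proposal is correct and follows the paper's intended argument, which the paper leaves implicit in the remark preceding the lemma. That argument takes the formula of \cref{lemm_anotopomin_formula} for $\p=\ttw$ (minor-monotone by \cref{obs_monotone_ttw}) and replaces the annotation predicate by $\bigvee_{i\in[\ell]} y=x_i$; this substitution is sound because the annotation occurs only in membership atoms and never inside the $\DP$ atoms, which is exactly the point you make.
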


Using the previous two lemmas, we can now show that on unbreakable graphs (with well-chosen parameters), $\CMSOtwDP$ collapses to $\FODP$.

\begin{lemma}\label{lem_collapse_to_FODP}
    There is a function $f_{\mathsf{rank}}\colon \mathbb{N}^3\to\mathbb{N}$ such that the following holds.
    For all~\mbox{$d,k,q\in\mathbb{N}$} and $\varphi$ a \textup{\CMSOtwDP}-sentence in prenex normal form that has quantifier rank~$d$ and $\ttw$-rank~$k$, there is an $\textup{\FODP}$-sentence $\psi$ of quantifier rank at most $f_{\mathsf{rank}}(d,k,q)$, such that for every $(q,k+1)$-unbreakable graph $G$, it holds that $G\models\varphi$ if and only if $G\models \psi$.
\end{lemma}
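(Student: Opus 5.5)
The plan is to translate $\varphi$ into $\psi$ by replacing every set quantifier with a block of $q$ first-order quantifiers. Two earlier results make this possible. By \cref{lem_small_unbreakable}, on a $(q,k+1)$-unbreakable graph every set $X$ with $\ttw(G,X)\le k$ has at most $q$ vertices. By \cref{lem_ttw_FODP}, the condition $\ttw(G,A)\le k$ for a set $A$ given by a tuple of vertices is \FODP-expressible.

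The small-graph case is handled separately. \cref{lem_small_unbreakable} requires $|V(G)|\ge 3q$. Graphs with fewer than $3q$ vertices form finitely many isomorphism types, each describable by an \FO-sentence with at most $3q$ quantifiers. So we take $\psi$ to be a disjunction of two parts. The first part says "$G$ has fewer than $3q$ vertices and $G$ is isomorphic to one of the listed graphs $H$ with $H\models\varphi$". The second part is "$G$ has at least $3q$ vertices" conjoined with the translation described next. On small graphs set quantifiers need not be bounded in size, but the finite case split makes this harmless.

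The translation for large graphs works as follows. We replace each quantifier $\exists_{\ttw\le w}X$ (with $w\le k$) by $\exists x^X_1\ldots\exists x^X_q$. The idea is that $X=\{x^X_1,\ldots,x^X_q\}$, where repetitions allow $|X|<q$. The empty set needs care: we add one more variable together with a flag, or alternatively use a constant guard (since $\ttw(G,\emptyset)=0$ and $\emptyset$ must be representable). We conjoin the body with $\sigma^{\FODP}_w(x^X_1,\ldots,x^X_q)$ from \cref{lem_ttw_FODP}. Universal quantifiers are handled dually, with an implication. The set quantifier is sound here only because $w\le k$ and $G$ is $(q,k+1)$-unbreakable, hence $(q,w+1)$-unbreakable, so \cref{lem_small_unbreakable} applies. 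Inside the quantifier-free matrix, atoms are rewritten as follows:
\begin{itemize}
\item $y\in X$ becomes $\bigvee_i y=x^X_i$;
\item $\card(X)\equiv i \pmod m$ becomes a finite disjunction over all equality patterns among $x^X_1,\ldots,x^X_q$ (and the emptiness flag) whose number of distinct values is congruent to $i$ modulo $m$;
\item $\DP$ atoms and first-order atoms stay unchanged.
\end{itemize}

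Correctness follows by a straightforward induction over the prenex prefix. Every assignment to the new variables that satisfies $\sigma_w$ yields a legal set, and every legal set $X$ arises from some assignment, because $|X|\le q$ by \cref{lem_small_unbreakable}. The resulting quantifier rank is at most $d\cdot(q+1)$ plus the rank of the $\sigma_w$ formulas and the $3q$ used by the size test. All of these depend only on $d$, $k$ and $q$, which gives $f_{\mathsf{rank}}$.

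I expect the main obstacle to be justifying \cref{lem_ttw_FODP} with bounds depending only on $(k,\ell)$; here $\ell=q$. One could derive it from \cref{lemm_anotopomin_formula} via $\bog$-obstructions expressed with the annotation $X$ replaced by "is one of $x_1,\ldots,x_\ell$", since the topological minor formula only uses $X$ in membership tests $\eta(y)\in X$. The remaining issues are bookkeeping: the empty set and the small-graph case.
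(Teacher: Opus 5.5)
Your proof is correct and follows the paper's argument. Each $\ttw$-bounded set quantifier is replaced by $q$ first-order quantifiers guarded by the formula of \cref{lem_ttw_FODP}, membership atoms become disjunctions of equalities, and soundness comes from \cref{lem_small_unbreakable}; your treatment of the empty set, of the modular counting atoms, and of graphs with fewer than $3q$ vertices fills in details the paper skips.

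One small correction to your closing remark: obtain \cref{lem_ttw_FODP} by applying \cref{lemm_anotopomin_formula} to $\ttw$ itself (minor-monotone by \cref{obs_monotone_ttw}) and then substituting for the annotation predicate. Using $\bog$ would not work, because its obstructions pin down $\ttw$ only up to the functional equivalence of \cref{prop_elm_equ_v}, so the quantifiers would range over the wrong sets.
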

\begin{proof}
    Let $d, k, q$ be integers, and $G$ be a $(q,k+1)$-unbreakable graph. We can assume w.l.o.g.~that $G$ has at least $3q$ vertices, as the \FO formula that we now build can check that $|G|\ge 3q$ and on graphs of small size, one can hardcode in \FO that $(G,X)$ has small torso treewidth.

    For every given $\CMSOtwDP$-sentence $\varphi$ with quantifier rank~$d$ and $\ttw$-rank~$k$, there is an equivalent $\FODP$-sentence $\psi$ that is obtained from $\varphi$ by replacing each subformula of~$\varphi$ of the form $\exists_{\ttw\le k}X~\xi(X)$
    with a formula $\exists x_1,\ldots,x_q ~~\sigma_{k}^{\FODP}(x_1\ldots,x_q) \wedge \xi'(x_1,\ldots,x_q)$, where $\sigma_{k}^{\FODP}(x_1\ldots,x_q)$ is given by \cref{lem_ttw_FODP} and $\xi'(x_1,\ldots,x_q)$ is obtained from $\xi(X)$ by replacing each atomic formula $y\in X$ with the formula $\lor_{i\in[q]}y=x_i$.
    
    Thanks to \cref{lem_small_unbreakable}, for every unbreakable graph $G$ that satisfies the sentence\linebreak 
    $\exists_{\ttw\le k}X~\xi(X)$, the witness set $A$ must be of size at most $q$, and therefore, we have that\linebreak $G\models \exists x_1,\ldots,x_q ~~\sigma_{k}^{\FODP}(x_1\ldots,x_q) \wedge \xi'(x_1,\ldots,x_q)$ with assignments $a_i$ to $x_i$ where\linebreak $A= \{a_1,\ldots, a_q \}$.

    Conversely, if a graph $G$ (not necessarily unbreakable) satisfies\linebreak $\exists x_1,\ldots,x_q ~~\sigma_{k}^{\FODP}(x_1\ldots,x_q) \wedge \xi'(x_1,\ldots,x_q)$, then it also clearly satisfies $\exists_{\ttw\le k}X~\xi(X)$. Hence, the formulas are indeed equivalent.
\end{proof}

It is more involved, but we can actually prove a similar collapse from $\CMSOtw$ to $\FO$ on unbreakable graphs. The proof of \cref{lem_collapse_to_FODP} is the exact same, with the only exception that $\sigma^{\FODP}_k$ now needs to be written in \FO. Note that although \FODP does not collapse to \FO on unbreakable graphs, this can be achieved for these specific formulas.

\begin{lemma}\label{lem_ttw_FO}
    For every integers $q,k,\ell$, there is a \FO formula $\sigma_{k}^{\FO}(x_1,\ldots,x_\ell)$ such that for every $(q,k)$-unbreakable graph $G$, and set $A = \{a_1,\ldots,a_\ell\}\subseteq V(G)$, we have that $(G,\bar a) \models\sigma_{k}^{\FO}$ iff $\ttw(G,A)\leq k$.
\end{lemma}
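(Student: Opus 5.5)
The plan is to replace the disjoint-paths queries used in \cref{lem_ttw_FODP} by first-order statements about bounded-size objects. In a $(q,k)$-unbreakable graph, every component of $G-X'$ that sits behind a small separator is small, so $\FO$ can handle it explicitly.

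\textbf{Step 1: normal form for the witness.} By \cref{lem_small_unbreakable}, applied with the shifted parameter so the unbreakability matches, we may assume $|G|\ge 3q$; smaller graphs are handled by hardcoding them in $\FO$, as in the proof of \cref{lem_collapse_to_FODP}. In that case any optimal superset $X'\supseteq A$ with $\tw(\torso(G,X'))\le k$ has $|X'|\le q$. The formula $\sigma_k^{\FO}(x_1,\ldots,x_\ell)$ will therefore existentially quantify $q$ further vertices $y_1,\ldots,y_q$, which together with the $x_i$ form $X'$ (repetitions allowed). It then states that the torso on these at most $\ell+q$ vertices has treewidth at most $k$. This is a finite disjunction over all graphs $T$ on at most $\ell+q$ labelled vertices with $\tw(T)\le k$, each disjunct asserting that $T$ is exactly the torso. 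So the whole problem reduces to expressing in $\FO$, for a given pair $u,v\in X'$, whether $\{u,v\}$ is a torso edge. That holds iff $u,v$ are adjacent, or there is a path from $u$ to $v$ whose internal vertices avoid $X'$.

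\textbf{Step 2: expressing torso adjacency in $\FO$.} Suppose $u,v\in X'$ are nonadjacent. Since $\tw(\torso)\le k$ is what we are testing, it suffices to express it correctly under the assumption that the torso has clique number at most $k+1$. Any component $C$ of $G-X'$ has $|N(C)|\le k+1$ (otherwise the torso contains a large clique, and that case we reject). By unbreakability, because $|X'\setminus N(C)|$ is large, or more carefully because $|G|\ge 3q$ while $X'$ is small, the component satisfies $|C|\le q$. Hence $u,v$ are torso-adjacent iff there exist at most $q$ vertices $z_1,\ldots,z_q\notin X'$ forming, together with $u,v$, a path in $G$, which is $\FO$-expressible with quantifier rank $O(q)$. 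The rejection case, a component with at least $k+2$ attachments, is also $\FO$-detectable: if such a component is small it can be guessed explicitly. If it is large, the separation $(N[C],V\setminus C)$ cannot be used directly. Instead I would argue via \cref{obs_adeg_lop} that a witnessing $X'$ never has such components, so the correct formula only needs to check the guessed $X'$ against short paths. To be safe, I would additionally require in the formula that every vertex outside $X'$ lies in a component of size at most $q$. This is expressible as "no path of length $q+1$ avoiding $X'$", since a component of size more than $q$ need not contain a long path. I would replace this with "there is no connected set of $q+1$ vertices avoiding $X'$", which is existential $\FO$ via spanning trees on $q+1$ vertices.

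\textbf{Step 3: correctness and main obstacle.} Soundness holds because, whenever the formula is true, the guessed set $X'\supseteq A$ has all components of $G-X'$ of size at most $q$. Torso adjacency is then exactly captured by short paths, so $\tw(\torso(G,X'))\le k$ and $\ttw(G,A)\le k$. Completeness requires showing that an optimal $X'$ from \cref{lem_small_unbreakable} has only small components outside it. The component $C$ has $|N(C)|\le k+1$ by the clique bound, so $(C\cup N(C), V\setminus C)$ is a separation of order at most $k+1$; the $(q,k)$ versus $(q,k+1)$ mismatch is absorbed by adjusting the parameter. Its other side contains $X'$, but $X'$ may have fewer than $q$ vertices. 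That is the main obstacle. I would resolve it by noting that if $|G-C|$ is also small, then $|G|<3q$, a contradiction once the thresholds are chosen as multiples of $q$. Otherwise both sides exceed $q$, contradicting unbreakability. Hence $|C|\le q$, and the formula is correct.
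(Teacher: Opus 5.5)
There is a genuine gap: the claim in Step~2 that every component $C$ of $G-X'$ has $|C|\le q$ is false, and the case analysis in Step~3 meant to justify it is wrong. Unbreakability applied to the separation $(C\cup N(C),\,V\setminus C)$ only says that \emph{one} side has at most $q$ vertices. You handle the case where $C\cup N(C)$ is small, and the case where both sides are small (which contradicts $|G|\ge 3q$). You never handle the case where only $V\setminus C$ is small. That case is consistent with $|G|\ge 3q$, and it is the typical situation once $X'$ is small: $C$ is then a giant component with $|C|\ge |G|-q$.

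Concretely, take $G=K_n$ and $A=\{a_1\}$. This $G$ is $(q,k+1)$-unbreakable for $q\ge k+1$, since every separation of a clique has one side equal to the separator. Here $\ttw(K_n,A)=0$, because the torso of $\{a_1\}$ is a single vertex. But every $X'$ with $\tw(\torso(K_n,X'))\le k$ has $|X'|\le k+1$, and it leaves the connected set $V(K_n)\setminus X'$ of size $n-k-1>q$. So your safety clause (``no connected set of $q+1$ vertices avoids $X'$'') makes $\sigma_k^{\FO}$ false on a true instance.

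Dropping the clause does not rescue the argument. The short-path test then computes only a subgraph of the real torso, because two vertices of $X'$ attached to the giant component need not be joined by a short path outside $X'$. Unbreakable graphs can have unbounded diameter, e.g.\ $(k+2)\times N$ grids. The computed graph can therefore underestimate the torso's treewidth, and soundness is lost.

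The missing idea is to treat the unique giant component explicitly rather than exclude it. Consider a valid $X'$ and any component $C$ of $G-X'$, so $|N(C)|\le k+1$. By $(q,k+1)$-unbreakability, as in \cref{lem_small_unbreakable}, either $|C|\le q$ or $|C|\ge |G|-q\ge 2q$. For $|G|\ge 3q$, at most one component can be of the second kind. Whether a vertex $z\notin X'$ lies in a component of $G-X'$ with more than $q$ vertices is existential $\FO$: assert a connected $(q+1)$-set that contains $z$ and avoids $X'$.

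Now declare $\{u,v\}$ a torso edge iff one of the following holds:
\begin{itemize}
\item $u$ and $v$ are adjacent;
\item $u$ and $v$ are joined by a path with at most $q$ internal vertices outside $X'$;
\item both $u$ and $v$ have a neighbour in such a large component.
\end{itemize}
For every guess this gives a supergraph of the true torso, which gives soundness. For an optimal small $X'$ it gives exactly the torso, which gives completeness.

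The paper avoids this issue by a different route. It expresses torso adjacency and the attachment-degree bound with connectivity predicates that remove at most $k$ (really $k+1$) vertices of $X'$. It then invokes the collapse of $\FOconn$ to $\FO$ on unbreakable graphs from~\cite{PilipczukSSTV22}. With the repair above, your approach would give a more elementary proof that avoids that black box.
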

\begin{proof}    
    Let $q,k,\ell$ be integers. 
    First note that there if an \FO sentence that checks whether a graph $G$ of size at most $\ell$ has treewidth at most $k$. Similarly, we can express in \FO that the graph induced by the assignments of free variables $x_1,\ldots,x_\ell$ has treewidth at most $k$. In what remains of the proof, we show that the edge relation of a torso is definable in the logic \FOconn (assuming attachment degree at most $k$, which we can also check in \FOconn). Since this logic collapses to \FO on unbreakable graphs (for well-chosen parameters), this will conclude the proof.

    In order to apply this collapse result of $\FOconn$, we need to give a more detailed definition of $\FOconn$. The logic $\FOconn$ extends $\FO$ by allowing the use of the \emph{connectivity predicate} $\mathsf{conn}_k(x,y,z_1,\ldots,z_k)$ which evaluates to true in a graph $G$ with some distinguished vertices $s,t$ and $w_1,\ldots,w_k$ (interpreting $x,y,z_1,\ldots,z_k$, respectively) if and only if there is a path from $s$ to $t$ in $G\setminus\{w_1,\ldots,w_k\}$.
    The logic $\FOconn_k$ allows the use of connectivity predicates $\mathsf{conn}_{i}(x,y,z_1,\ldots,z_i)$ for every $i\le k$. It is proven in \cite{PilipczukSSTV22} that for every integers $q,k\in\mathbb{N}$, every $\FOconn_k$ formula is equivalent to an \FO formula on every $(q,k)$-unbreakable graphs. 

    Note that there is an $\FOconn_k$ formula with free variables $x_1\ldots,x_\ell$ such that
    $(G,\bar{a}) \models \psi(\bar x)$ if and only if the attachment degree of $(G,\bar{a})$ is at most $k$. This is achieved by saying that for every $y$ other than $\bar x$ there is a subset $\bar z$ of $\bar x$ of size at most $k$, that separate $y$ from~$\bar x$.

    Also, there is $\FOconn_k$ formula with free variables $y_1,y_2,x_1\ldots,x_\ell$, such that $(G,b_1,b_2,\bar{a}) \models \psi(y_1,y_2,\bar x)$ if and only if $b_1$ and $b_2$ are among $\bar a$, the attachment degree of $(G,\bar{a})$ is at most~$k$, and $b_1,b_2$ are adjacent in $\torso(G,\bar a)$. This can be achieved by saying that $y_1$ and $y_2$ each have a neighbor $z_1,z_2$ not in $\bar x$ that are connected in $G-\{w_1,\ldots,w_k\}$ for every possible choice of $(w_i)_{i\le k}$ in $\bar x$.

    Thanks to the collapse of $\FOconn_k$ to \FO on $(q,k)$-unbreakable graphs~\cite{PilipczukSSTV22}, this last formula can be rewritten in \FO for $(q,k)$-unbreakable graphs.

    Finally combining everything, there is an \FO formula $\sigma_{k}^{\FO}(x_1,\ldots,x_\ell)$ that guesses the set~$X'$ of size at most $q$ extending $\bar x$, checks that it has small attachment degree, computes the edges of the torso, and finally, checks that the result has treewidth at most $k$.   
\end{proof}

Thanks to \cref{lem_ttw_FO}, we can now adapt the proof of \cref{lem_collapse_to_FODP} to yield the following result.
\begin{lemma}\label{lem_collapse_to_FO}
    There is a function $f_{\mathsf{rank}}\colon \mathbb{N}^3\to\mathbb{N}$ such that the following holds.
    For all~\mbox{$d,k,q\in\mathbb{N}$} and $\varphi$ a \textup{\CMSOtw}-sentence in prenex normal form that has quantifier rank~$d$ and $\ttw$-rank~$k$, 
    there is an $\textup{\FO}$-sentence~$\psi$ of quantifier rank at most $f_{\mathsf{rank}}(d,k,q)$, such that for every $(q,k+1)$-unbreakable graph $G$, it holds that $G\models\varphi$ if and only if $G\models \psi$.
\end{lemma}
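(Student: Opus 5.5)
The plan is to mimic the proof of \cref{lem_collapse_to_FODP}, replacing the use of \cref{lem_ttw_FODP} by \cref{lem_ttw_FO}. Fix $d,k,q$ and a \CMSOtw-sentence $\varphi$ in prenex normal form with quantifier rank $d$ and $\ttw$-rank $k$. We build $\psi$ by induction on the quantifier prefix.

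\textbf{Small graphs.} We first deal with graphs having fewer than $3q$ vertices. There are only finitely many such graphs up to isomorphism. So $\psi$ will be a disjunction of two parts. The first part is the conjunction of the \FO statement ``$|G|<3q$'' with a finite case distinction over the isomorphism types of graphs on fewer than $3q$ vertices that satisfy $\varphi$; each type is described by a first-order sentence of quantifier rank at most $3q$. The second part is ``$|G|\ge 3q$'' conjoined with the translation described next. Restricting to $|G|\geq 3q$ is exactly what lets us invoke \cref{lem_small_unbreakable}.

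\textbf{Translating set quantifiers.} Each subformula $\exists_{\ttw\le w}X\,\xi(X)$ with $w\le k$ is replaced by
\[\exists x_1\ldots x_q\ \bigl(\sigma_{w}^{\FO}(x_1,\ldots,x_q)\wedge \xi'(x_1,\ldots,x_q)\bigr).\]
Here $\xi'$ is obtained by replacing each atom $y\in X$ with $\bigvee_{i\in[q]} y=x_i$. Sets with fewer than $q$ elements are represented by repeating variables. Each atom $\card(X)\equiv i \pmod m$ is replaced by a finite \FO disjunction over which of the $x_j$'s are equal, since that determines $|X|\le q$. Universal set quantifiers are handled dually through negation.

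The correctness argument runs in both directions.
\begin{enumerate}
\item By \cref{lem_small_unbreakable}, since $G$ is $(q,k+1)$-unbreakable with $|G|\ge 3q$, every witness $S$ with $\ttw(G,S)\le w\le k$ lies in some $X'$ with $|X'|\le q$. In particular $|S|\le q$, so $S$ is enumerated by some tuple.
\item Conversely, any tuple satisfying $\sigma_w^{\FO}$ gives a set of torso treewidth at most $w$. So the quantifier ranges coincide.
\end{enumerate}
The translation applies uniformly to subformulas with free first-order or set variables, because $\xi'$ only refers to the tuple. The resulting quantifier rank is bounded by $d\cdot q$ plus the quantifier ranks of the formulas $\sigma_{w}^{\FO}$, which depend only on $w,q$. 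This gives $f_{\mathsf{rank}}(d,k,q)$.

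\textbf{Main obstacle.} The delicate step is applying \cref{lem_ttw_FO}. That lemma is stated for $(q,k)$-unbreakable graphs and evaluates $\ttw(G,A)\le k$, while we need the bound $w$ for every $w\le k$. We also need the internal guess of $X'\supseteq A$ of size at most $q$ to be justified. The justification is again \cref{lem_small_unbreakable}, which requires $(q,w+1)$-unbreakability. This follows from $(q,k+1)$-unbreakability, since unbreakability is monotone in the order bound: $(q,k')$-unbreakability implies $(q,k'')$-unbreakability for all $k''\le k'$. Hence for each $w\le k$ we may use $\sigma_w^{\FO}$, instantiated with $\ell=q$ and unbreakability parameter $(q,w+1)$, which is implied by $(q,k+1)$.

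What remains is to check that the $\FOconn$-to-\FO collapse of \cite{PilipczukSSTV22} is used with matching parameters: connectivity predicates of arity at most $k+1$ on $(q,k+1)$-unbreakable graphs. If the torso-edge definition needs separators of size exactly $k+1$ rather than $k$, I would simply call that collapse with $k+1$ in place of $k$. This only changes the function $f_{\mathsf{rank}}$.
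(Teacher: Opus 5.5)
Your proposal is correct and follows essentially the same route as the paper. The paper likewise reruns the proof of \cref{lem_collapse_to_FODP} with $\sigma^{\FO}_w$ from \cref{lem_ttw_FO} in place of $\sigma^{\FODP}_k$ and hardcodes graphs with fewer than $3q$ vertices; your explicit handling of counting atoms and of the unbreakability parameters needed by \cref{lem_small_unbreakable} and \cref{lem_ttw_FO} is more careful than the paper's one-line justification. One small slip, which the paper shares, is that a $q$-tuple cannot encode $X=\emptyset$, and $\emptyset$ always satisfies $\ttw(G,\emptyset)\le w$. So each translated set quantifier needs an extra disjunct (a conjunct for $\forall$) in which every atom $y\in X$ is replaced by false and $\card(X)\equiv i \pmod m$ by the truth value of $0\equiv i \pmod m$.
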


\section{Preliminaries on folios}
\label{sec_prelims_folio}

In this section, we define  boundaried (annotated) graphs, which are (annotated) graphs with a distinguished set of vertices and a linear order among these vertices.
We also define when two annotated boundaried graphs are \textsl{compatible} and the notion of \textsl{folio} of an annotated boundaried graph. Finally, we prove \cref{lem_repl_compos}, which is the combinatorial 
engine of the proof of the compositionality of $\CMSOpDP$
in \cref{sec_compositionality}.

\subsection{Boundaried annotated graphs}
\label{sec_bounded_annotated_graphs}



An important combinatorial ingredient of our proofs is boundaried graphs. 
This subsection is dedicated to their introduction and to the extension of the containment relations that we gave in previous sections to the boundaried setting.

\subparagraph{Boundaried graphs.}
Let $t\in\mathbb{N}.$
An  \emph{annotated $t$-boundaried graph} is a quadruple $\bound{G} = (G,X,B,\rho)$ where~$(G,X$) is an annotated  graph, $B \subseteq V(G)$, $|B| = t,$ and
$\rho\colon B \to [t]$ is a bijective function (see~\cite[Definition 12.9.1]{DowneyF13fund}).
We call $B$ the \emph{boundary} of $\bound{G}$ and  the vertices of $B$ \emph{the boundary vertices} 
of~$\bound{G}$. 
 We also
call $(G,X)$ \emph{the underlying annotated graph} of~$\bound{G}$
and we call $G$ \emph{the underlying  graph} of $\bound{G}$.
The quadruple~$(G,X,B,\rho)$ is an  \emph{annotated boundaried graph} if it is an annotated  $t$-boundaried graph, for some~$t\in\mathbb{N}.$

We say that an annotated $t$-boundaried graph $\bound{G}'=(G',X',B',\rho')$ is an (annotated boundaried) \emph{subgraph}  of $\bound{G}$ if
$G'$ is a subgraph of $G$, $B'=B$, $B\cap X=B'\cap X'$, $X'\subseteq X\cap V(G')$,  and $\rho'=\rho$. 
Two $t$-boundaried graphs $\bound{G}_{1}=(G_{1},X_{1},B_{1},\rho_{1})$ and
$\bound{G}_{2}=(G_{2},X_2,B_{2},\rho_{2})$ are \emph{isomorphic}  if  $G_{1}$ is isomorphic to $G_{2}$ 
via a bijection $\phi\colon  V(G_{1})\to V(G_{2})$ such that $\rho_{1}=\rho_{2}\circ\phi|_{B_{1}} ,$ i.e.,~the vertices of $B_{1}$ are mapped via $\phi$ to equally indexed vertices  of $B_{2},$
and, moreover,  $\rho_{2}|_{X_{2}}=\rho_{2}\circ\phi|_{X_{1}}$, i.e.,~boundary vertices in $X$ are equally indexed in both~$\bound{G}_{1}$ and~$\bound{G}_{2}$.
As in~\cite{RobertsonS95XIII} (see also~\cite{BasteST20acom}), 
we define the \emph{detail} of an annotated  boundaried graph
$\bound{G} = (G,X,B,\rho)$ as  $\mathsf{detail}(\bound{G}):=\max\{|E(G)|,|V(G)\setminus B|\}.$
We denote by ${\mathcal{A}}^{(t)}$ the set of all (pairwise non-isomorphic)  annotated $t$-boundaried graphs.
We also set ${\mathcal{A}}=\bigcup_{t\in\mathbb{N}}{\mathcal{A}}^{(t)}.$

\subparagraph{Compatible annotated boundaried graphs.}
Let $\bound{G}_{1}=(G_1,X_{1},B_1,\rho_1)$ and $\bound{G}_{2}=(G_2,X_{2},$ $B_2,\rho_2)$ be two annotated boundaried graphs.
We say that $\bound{G}_{1}$ and $\bound{G}_2$ are \emph{compatible} if $\rho_{2}^{-1}\circ \rho_{1}$  is an isomorphism from $G_{1}[B_{1}]$ to $G_{2}[B_{2}]$
such that $\rho_{1}(X_{1})=\rho_{2}(X_{2})$.
Given two compatible annotated boundaried graphs $\bound{G}_{1}=(G_1,X_{1},B_1,\rho_1)$
and  $\bound{G}_{2}=(G_2,X_{2},B_2,\rho_2),$  we
define $\bound{G}_{1}\oplus\bound{G}_{2}$ as the annotated graph obtained
if we take the disjoint union of $G_{1}$ and $G_{2}$
and, for every $i\in[|B_{1}|],$ we identify the vertices $\rho_{1}^{-1}(i)$ and $\rho_{2}^{-1}(i)$, agreeing that 
after each such identification the vertices of the boundary of $\bound{G}_{1}$ prevail.

\medskip

In case $X=\emptyset$, instead of $\bound{G} = (G,X,B,\rho)$, for simplicity, we write  $\bound{G} = (G,B,\rho)$, and we call $\bound{G}$ a \emph{boundaried graph},
and all the aforementioned definitions naturally apply to boundaried graphs as well. We also use  
${\Bcal}^{(t)}$ for the set of all  (pairwise non-isomorphic) $t$-boundaried graphs and  we set $\Bcal=\bigcup_{t\in\mathbb{N}}{\mathcal{B}}^{(t)}$. 

\subparagraph{Topological minors of annotated boundaried graphs.}
If $\textbf{M}=(M,X,B,\rho)\in{\mathcal{A}}$ and   $T\subseteq V(M)$ with $B\subseteq T,$ and such that 
every vertex in $V(M)\setminus T$ has degree 2,
we  call  $(\textbf{M},T)$ an \emph{\abtm-pair}
and we  define  $\mathsf{diss}(\textbf{M},T)=(\mathsf{diss}(M, T),X\cap T,B,\rho),$ where $\mathsf{diss}(M, T)$ is the result of the dissolution in $M$ of all vertices in~$V(M)\setminus T$.
We refer to the set $T$ as the set of \emph{branch vertices} of $(\mathbf{M},T)$.
Note that we do not permit dissolution of boundary vertices, as we demand all of them to be branch vertices. 
If $\bound{G}=(G,X,B,\rho)$ is an annotated  boundaried graph, we say that $({\bf M},T)$ is an \emph{\abtmpair of $\bound{G}$} if it is an \abtm-pair  where $\bound{M}$ is a subgraph of $\bound{G}$ and $B\subseteq T$. 
Let  $\bound{G}_{1}=(G_1,X_{1},B_1,\rho_1)$ and $\bound{G}_{2}=(G_2,X_{2},B_2,\rho_2)$  be two annotated  boundaried graphs.
We say that $\textbf{G}_{1}$ is an \emph{annotated topological minor}
of~$\bound{G}_{2},$ denoted by $\bound{G}_{1}\pretp\bound{G}_{2},$ if $\bound{G}_{1}$ and~$\bound{G}_{2}$ are compatible and~$\bound{G}_{2}$ has an \abtm-pair $(\textbf{M},T)$
such that  $\mathsf{diss}(\textbf{M},T)$ is isomorphic to $\textbf{G}_{1}.$

\subsection{Folios}

We next present the definition of folios and extended folios of (annotated) boundaried graphs,
which encode the set of (annotated) topological minors they contain. These notions will be used
several times in the remainder of the paper. At the end of this subsection, we prove ~\cref{lem_repl_compos},
which (informally) states that the extended folio captures the partial behavior of any fixed
minor-monotone annotated graph parameter on the set $X$ of an annotated boundaried graph
$\bound{G}=(G,X,B,\rho)$.

\subparagraph{Folios and extended folios.}
Given $\bound{G}\in\mathcal{A}$ and $\ell\in\mathbb{N}$, we define the \emph{$\ell$-folio} of $\bound{G}$ as
\[
\ell\folio(\bound{G}) := \bigl\{\, \bound{G}'\in\mathcal{A}\ \bigm|\ \bound{G}'\pretp \bound{G}
\text{ and }\detail(\bound{G}')\le \ell \,\bigr\}.
\]

The number of distinct $\ell$-folios of annotated $t$-boundaried graphs is bounded in the following result, proved first in~\cite{BasteST20hittI} and used also in~\cite{BasteST20acom}.
\begin{proposition}[\!\!\cite{BasteST20hittI}]\label{prop_number_folios}
	There exists a function $f_{\ref{prop_number_folios}}\colon  \mathbb{N}^{2} \to \mathbb{N}$ such that for every $t,\ell\in \mathbb{N},$ it holds that, for every $\bound{G}\in\Acal^{(t)}$, 
    \[|\{\ell\folio(\bound{G}) \mid \bound{G}\in\Acal^{(t)}\}|\leq f_{\ref{prop_number_folios}}(t,\ell).\]
\end{proposition}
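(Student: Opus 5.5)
The plan is to bound the number of possible $\ell$-folios by bounding the number of candidate elements, up to isomorphism, that such a folio can contain. Fix $t,\ell$ and let $\bound{G}=(G,X,B,\rho)\in\Acal^{(t)}$. Every $\bound{G}'=(G',X',B',\rho')\in\ell\folio(\bound{G})$ is, by the definition of $\pretp$, compatible with $\bound{G}$, so it is itself $t$-boundaried. By the definition of detail, it has at most $\ell$ edges and at most $\ell$ non-boundary vertices. Hence $|V(G')|\le t+\ell$ and $|E(G')|\le \ell$.

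The first step is to count such objects up to isomorphism of annotated boundaried graphs. We may take the vertex set to be $[t+\ell']$ for some $\ell'\le \ell$, with the boundary being the first $t$ vertices and $\rho$ the identity on them. The graph is then determined by a subset of at most $\ell$ pairs among $\binom{t+\ell}{2}$, and the annotation by a subset $X'\subseteq V(G')$. This gives at most
\[
N(t,\ell):=(\ell+1)\cdot 2^{\binom{t+\ell}{2}}\cdot 2^{t+\ell}
\]
representatives. Since isomorphism classes are what we count and every class has a representative of this form, the set
\[
\Acal^{(t)}_{\le \ell}:=\{\bound{G}'\in\Acal^{(t)}\mid \detail(\bound{G}')\le \ell\}
\]
has at most $N(t,\ell)$ elements.

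The second step observes that for every $\bound{G}\in\Acal^{(t)}$ we have $\ell\folio(\bound{G})\subseteq \Acal^{(t)}_{\le\ell}$. The map $\bound{G}\mapsto \ell\folio(\bound{G})$ therefore takes values in the power set of $\Acal^{(t)}_{\le\ell}$. Consequently
\[
|\{\ell\folio(\bound{G})\mid \bound{G}\in\Acal^{(t)}\}|\le 2^{N(t,\ell)}=:f_{\ref{prop_number_folios}}(t,\ell),
\]
which is computable.

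The only point needing care, and the one I expect to be the main (mild) obstacle, is checking that nothing escapes this finite universe. Boundary vertices are never dissolved, since $B\subseteq T$ in an \abtm-pair. So the boundary size $t$ is preserved exactly, and the annotation of $\bound{G}'$ is a subset of its own vertex set. The detail bound then controls all remaining vertices and edges, so every folio element lies in $\Acal^{(t)}_{\le\ell}$. No structural information about $\bound{G}$ is needed.
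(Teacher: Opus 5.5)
Your proof is correct and is essentially the standard counting argument behind the cited result of Baste, Sau, and Thilikos; the paper itself gives no proof and only cites it. As in that argument, you observe that every folio element is a compatible, hence $t$-boundaried, annotated graph of detail at most $\ell$, that there are finitely many such objects up to isomorphism, and that the number of distinct folios is therefore at most the size of the power set of this finite universe.
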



\subparagraph{Extended folios.}
Let $\bound{G} = (G,X,B,\rho)$ be an  annotated boundaried graph.
Given $I\subseteq \binom{[|B|]}{2}$, we write~$\bound{G}^I$ 
for the (annotated) boundaried graph obtained from $\bound{G}$ by adding 
in its underlying graph the edges in 
\[
\bigl\{\, \{\rho^{-1}(i),\rho^{-1}(j)\} \ \bigm|\ \{i,j\}\in I \,\bigr\}.
\]

We define the \emph{extended  $\ell$-folio of $\bound{G} = (G,X,B,\rho)$}
to be 
\begin{eqnarray}
\ell\extfolio(\mathbf{G})& = & \{(I,\ell\folio(\textbf{G}^I))\mid 
I\in \binom{[|B|]}{2}\}. 
\label{eq_ext_f}
\end{eqnarray}

From \cref{prop_number_folios}, and the definition in \eqref{eq_ext_f}, we easily obtain the following.

\begin{observation}
\label{obs_number_anns_folios}
	There exists a function $f_{\ref{obs_number_anns_folios}}\colon  \mathbb{N}^{2} \to \mathbb{N}$ such that for every $t,\ell\in \mathbb{N},$ and every $\bound{G}\in\Bcal^{(t)}$, it holds that
    \[|\ell\extfolio(\mathbf{G})|\leq f_{\ref{obs_number_anns_folios}}(t,\ell).\]
\end{observation}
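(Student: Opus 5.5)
The count is immediate from the definition in \eqref{eq_ext_f}. By definition, $\ell\extfolio(\bound{G})$ contains exactly one pair $(I,\ell\folio(\bound{G}^I))$ for each admissible index set $I$. Here $I$ ranges over sets of unordered pairs from $[|B|]=[t]$. So the number of pairs equals the number of possible $I$, which is at most $2^{\binom{t}{2}}$. This holds whether one reads the definition as $I\in\binom{[t]}{2}$ or as $I\subseteq\binom{[t]}{2}$.

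Under this reading of $|\cdot|$ as the number of elements, one can already take $f_{\ref{obs_number_anns_folios}}(t,\ell)=2^{\binom{t}{2}}$, and the parameter $\ell$ plays no role. The reference to \cref{prop_number_folios} is needed for the stronger and presumably intended reading: the number of \emph{distinct} extended $\ell$-folios, as $\bound{G}$ ranges over $\Bcal^{(t)}$ (or $\Acal^{(t)}$), is bounded by a function of $t$ and $\ell$. This is the version used in dynamic programming, so I will prove both.

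For the stronger claim, observe that for each $I$ the graph $\bound{G}^I$ is again a $t$-boundaried (annotated) graph. Adding edges among boundary vertices preserves both the boundary and $\rho$. Hence, by \cref{prop_number_folios}, $\ell\folio(\bound{G}^I)$ takes one of at most $f_{\ref{prop_number_folios}}(t,\ell)$ values. An extended folio is a function from the at most $2^{\binom{t}{2}}$ index sets $I$ to this set of folio values. Therefore the number of distinct extended folios is at most
\[
f_{\ref{prop_number_folios}}(t,\ell)^{2^{\binom{t}{2}}}.
\]
Define $f_{\ref{obs_number_anns_folios}}(t,\ell)$ as this quantity. Since it exceeds $2^{\binom{t}{2}}$, it bounds both readings simultaneously.

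The only point needing care is that the $I$-augmentation stays within the class to which \cref{prop_number_folios} applies, namely $t$-boundaried annotated graphs with the same boundary. Apart from that, the argument is pure counting, and I expect no real obstacle.
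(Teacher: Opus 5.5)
Your proof is correct and takes the same route as the paper. The paper gives no separate proof; it derives the observation directly from \cref{prop_number_folios} and definition~\eqref{eq_ext_f}, which is exactly your count of at most $2^{\binom{t}{2}}$ index sets $I$, each assigned one of at most $f_{\ref{prop_number_folios}}(t,\ell)$ folio values. Your side remark is also accurate: under the literal reading the count is trivially bounded independently of $\ell$, and \cref{prop_number_folios} is only needed for the stronger reading, the number of distinct extended folios.
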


%
We say that two (annotated) boundaried graphs $\bound{G}_{1},\bound{G}_{2}$  are \emph{$\ell$-equivalent} if they are compatible and 
\[\ell\extfolio(\mathbf{G}_1)=\ell\extfolio(\mathbf{G}_2).\]

After presenting the above definitions, we are now ready to state and prove the following result.

\begin{lemma}
\label{lem_repl_compos}
For every minor-monotone annotated graph parameter $\p\colon \aall\to\Nbbb$, there exists a function $f_{\ref{lem_repl_compos}}^{\p}\colon \Nbbb\to\Nbbb$ such that the following holds:
Let $w\in\Nbbb$.
If two annotated boundaried graphs $\mathbf{G}_{1}$, $\mathbf{G}_{2}$ are  $f_{\ref{lem_repl_compos}}^{\p}(w)$-equivalent, then for  every   compatible  annotated boundaried graph~$\bound{C}$, it holds that 
$$\p(\mathbf{C}\oplus\mathbf{G}_{1})\leq w\iff 
\p(\mathbf{C}\oplus\mathbf{G}_{2})\leq w.$$
\end{lemma}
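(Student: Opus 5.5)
By \cref{prop_minor_obs}, $\p(\bound{C}\oplus\bound{G}_i)\le w$ holds if and only if $\bound{C}\oplus\bound{G}_i$ contains no annotated graph of $\Ocal_w^{\p}$ as an annotated topological minor. The plan is therefore to show that $\ell$-equivalence, for a suitable $\ell$, preserves the existence of a topological minor model of every fixed $(H,Y)\in\Ocal_w^{\p}$ in the glued graph. Let $n_w$ be the maximum number of vertices of a member of $\Ocal_w^{\p}$. By \cref{prop_minor_obs} it is bounded by a function of $w$, so the maximum number of edges is at most $\binom{n_w}{2}$. We set $f^{\p}_{\ref{lem_repl_compos}}(w)$ to be a suitable function of $n_w$, roughly $\ell:=n_w^2+n_w$; the exact value is whatever the splitting step below needs.

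By symmetry it suffices to show the following. If $(H,Y)\pretp \bound{C}\oplus\bound{G}_1$, witnessed by a model $\eta$, then $(H,Y)\pretp\bound{C}\oplus\bound{G}_2$.

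We split $\eta$ along the boundary $B$ of size $t$. Each path $\eta(e)$ decomposes into maximal subpaths lying in $G_1$ and in $C$, meeting only at boundary vertices. Let $M$ be the union of the images inside $G_1$, and let $T$ consist of the principal vertices in $G_1$, the boundary $B$, and the vertices of degree at least three. Now consider every subpath of the model that leaves $G_1$ from a boundary vertex $\rho^{-1}(i)$ and returns at $\rho^{-1}(j)$ while traveling through $C$. For each such subpath we record the pair $\{i,j\}$ in a set $I$ and view that part of the path as a virtual edge. The pieces inside $\bound{G}_1$, which form an \abtm-pair of $\bound{G}_1$, together with the virtual edges, which land in $\bound{G}_1^I$, dissolve to an annotated boundaried graph $\bound{G}'\pretp \bound{G}_1^I$. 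Annotations are kept for principal vertices that map into $X_1$, and boundary annotation is consistent by compatibility.

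The real purpose of $I$ is the reverse direction: the part of the model that lives in $C$ must be reconnected. The cleaner choice is to let $I$ record the detours through $G_1$, so that the $C$-side is preserved and the $G_1$-side is packaged. Concretely, we take the $G_1$-side \abtm-pair, but for paths that exit $G_1$ through $C$ and re-enter, we add the boundary edge $\{i,j\}$ to $I$. The dissolved graph $\bound{G}'$ is then a topological minor of $\bound{G}_1^I$. After dissolving, its vertices other than the boundary are the principal vertices of $G_1$ and the branching points; since degree-two non-branch vertices are dissolved, its number of edges is bounded by $t+|V(H)|+|E(H)|$ times a constant. One should check that the detail bound depends only on $w$, which requires bounding the number of branch vertices. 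The $t$ term is the concern, because $t$ is not bounded by $w$. A model of $H$ touches at most $O(|E(H)|\cdot t)$ boundary crossings, but only vertices actually used by the model need to be retained in the dissolution. Hence I expect to restrict the boundary of $\bound{G}'$ to the used boundary vertices, or else to make $\ell$ depend on $t$ as well, which the statement's quantifier order ($\ell$-equivalence for all compatible graphs of a given $t$) may implicitly permit. This detail bound is the main obstacle, and I would resolve it by letting the function absorb the boundary size if necessary.

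Since $(I,\ell\folio(\bound{G}_1^I))\in\ell\extfolio(\bound{G}_1)=\ell\extfolio(\bound{G}_2)$, we get $\bound{G}'\pretp\bound{G}_2^I$ via some \abtm-pair in $\bound{G}_2^I$. Gluing this model with the unchanged $C$-part of $\eta$ gives a model in $\bound{C}\oplus\bound{G}_2$. It remains to check three things: internal disjointness, which holds because the two sides meet only in $B$ and the boundary is fixed by $\rho$ in topological minor containment; the annotation condition $\eta(Y)\subseteq X$, which holds because the boundary annotations agree and the folio records annotations; and that every virtual edge of $I$ used in $\bound{G}_2^I$ corresponds to an actual connection through $C$ that is vertex-disjoint from everything else, which is exactly how $I$ was defined. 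This yields $(H,Y)\pretp\bound{C}\oplus\bound{G}_2$, completing the argument.
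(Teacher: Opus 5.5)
Your route is the paper's route. You take obstructions from \cref{prop_minor_obs}, cut a topological-minor model of an obstruction along the boundary, transfer the $\bound{G}_1$-side via equality of extended folios, and glue it back to the untouched $\bound{C}$-side.

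What you add is the observation that the detail of the transferred piece depends on $t=|B|$. This obstacle is real, and the paper's proof does not overcome it. The paper sets $\ell:=\max\{\detail(\mathbf{O})\mid \mathbf{O}\in\Ocal_w^{\p}\}$ and simply asserts $\detail(\mathbf{M}_1)\le\detail(\mathbf{O})$. But one model path can re-enter $G_1$ once per boundary vertex. Moreover, $\pretp$ on boundaried graphs demands compatibility, so every member of $\ell\folio(\bound{G}_1^I)$ contains all of $G_1[B]$ and the edges of $I$.

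With a threshold depending on $w$ alone, the statement actually fails under the paper's definitions. Take $\p(G,X):=\tw(G)$, $w=1$, and $\ell:=f^{\p}_{\ref{lem_repl_compos}}(1)$. Let $\bound{G}_1=\bound{C}$ be the star $K_{1,t-1}$ with $t-1>\ell$, with all vertices in the boundary and none annotated, and let $\bound{G}_2$ be $\bound{G}_1$ plus a disjoint triangle. Every member of $\ell\folio(\bound{G}_i^I)$ is compatible with $\bound{G}_i^I$, so it has more than $\ell$ edges. Hence all these folios are empty and $\bound{G}_1,\bound{G}_2$ are $\ell$-equivalent, while $\p(\bound{C}\oplus\bound{G}_1)=1$ and $\p(\bound{C}\oplus\bound{G}_2)=2$. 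So your fallback of letting the threshold depend on both $w$ and $t$ is necessary, not optional. It is also all the paper needs later: \cref{rem_rank_sufficiency} already lets $d,r$ depend on $|B|$, and every boundary in the algorithm has bounded size.

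To turn your sketch into a proof of this corrected statement, fix three points.

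First, drop the alternative of restricting the boundary of $\bound{G}'$ to the used boundary vertices. Folio members always carry the full boundary $B$, and by compatibility the full graph $G_1^I[B]$.

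Second, redo the count.
\begin{itemize}
\item Non-boundary vertices of $\bound{G}'$ are principal vertices in $V(G_1)\setminus B$, so there are at most $|V(H)|$ of them.
\item Model paths are internally disjoint and a boundary vertex inside a path is not principal, so there are at most $|E(H)|+t$ segments in $G_1$. Your $O(|E(H)|\cdot t)$ is a harmless overcount.
\item You must also add the at most $\binom{t}{2}$ edges of $G_1[B]\cup I$, which your bound $O(t+|V(H)|+|E(H)|)$ omits.
\end{itemize}
Thus $\ell(w,t):=\binom{t}{2}+t+\max\{\detail(\mathbf{O})\mid \mathbf{O}\in\Ocal_w^{\p}\}$ works.

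Third, your description of $I$ oscillates between detours through $G_1$ and detours through $C$. Only the latter is consistent with $\bound{G}'\pretp\bound{G}_1^I$. Since boundary vertices are never dissolved, $I$ is in fact dispensable: the restriction of the model to $G_1$ already lies in the folio for $I=\emptyset$. If you keep the virtual edges, you must replace each $I$-edge used by the realization in $\bound{G}_2^I$ with the corresponding $C$-detour, which is the check you list at the end.
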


\begin{proof}
Fix $w\in\N$.
By \cref{prop_minor_obs}, there exists a finite set $\Ocal_{w}^{\p}$ of annotated graphs such that, for every
annotated graph $(H,Y)$,
\[
\p(H,Y)\le w \quad\Longleftrightarrow\quad (H,Y)\ \text{excludes every } \mathbf{O}\in \Ocal_{w}^{\p}\ \text{as an annotated topological minor.}
\]

Let
\[
\ell \ :=\ \max\{\detail(\mathbf{O}) \mid \mathbf{O}\in \Ocal_{w}^{\p}\},
\qquad\text{and set}\qquad
f_{\ref{lem_repl_compos}}^{\p}(w):=\ell .
\]

We claim that for every compatible annotated boundaried graph $\bound{C}$ and every annotated graph
$\mathbf{O}$ with $\detail(\mathbf{O})\le \ell$,
\begin{equation}\label{eq_tm_preserved_by_extfolio}
\mathbf{O}\pretp (\bound{C}\oplus \mathbf{G}_1)
\quad\Longleftrightarrow\quad
\mathbf{O}\pretp (\bound{C}\oplus \mathbf{G}_2).
\end{equation}

To see this, assume $\mathbf{O}\pretp (\bound{C}\oplus \mathbf{G}_1)$ and fix an \abtm-pair
$(\mathbf{M},T)$ witnessing this, i.e.,~$\mathbf{M}$ is an annotated boundaried subgraph of
$\bound{C}\oplus \mathbf{G}_1$ (with the boundary contained in $T$) and
$\diss(\mathbf{M},T)\cong \mathbf{O}$.
Let $B:=B(\mathbf{G}_1)=B(\mathbf{G}_2)$ and let $I\subseteq \binom{[|B|]}{2}$ be the set of all pairs
$\{i,j\}$ such that $\mathbf{M}$ contains an $i$--$j$ path whose internal vertices lie in
$V(\bound{C})\setminus B$ (equivalently: $\mathbf{M}$ uses~$\bound{C}$ to connect the $i$th and $j$th boundary
vertices). Consider the induced annotated boundaried subgraph $\mathbf{M}_1$ of $\mathbf{G}_1^{I}$ obtained by
taking all vertices and edges of $\mathbf{M}$ that lie in $\mathbf{G}_1$, together with the boundary $B$
(and keeping the inherited annotation).
Then $\detail(\mathbf{M}_1)\le \detail(\mathbf{O})\le \ell$, and by construction,
\[
\mathbf{M}_1 \pretp \mathbf{G}_1^{I},
\qquad\text{hence}\qquad
\mathbf{M}_1 \in \ell\folio(\mathbf{G}_1^{I}).
\]

Since $\mathbf{G}_1$ and $\mathbf{G}_2$ are $\ell$-equivalent, we have
$\ell\extfolio(\mathbf{G}_1)=\ell\extfolio(\mathbf{G}_2)$, and therefore
$\ell\folio(\mathbf{G}_1^{I})=\ell\folio(\mathbf{G}_2^{I})$.
So $\mathbf{M}_1 \in \ell\folio(\mathbf{G}_2^{I})$, i.e.,~$\mathbf{M}_1 \pretp \mathbf{G}_2^{I}$.
Gluing the same part of~$\mathbf{M}$ that lies in $\bound{C}$ to a witnessing model of
$\mathbf{M}_1 \pretp \mathbf{G}_2^{I}$ (and using the edges encoded by $I$ exactly where $\bound{C}$ provides the
corresponding boundary-to-boundary connections) yields an \abtm-pair in $\bound{C}\oplus \mathbf{G}_2$ whose
dissolution is isomorphic to $\mathbf{O}$.
Hence, $\mathbf{O}\pretp (\bound{C}\oplus \mathbf{G}_2)$.
The reverse implication in \eqref{eq_tm_preserved_by_extfolio} is symmetric, proving the claim.

\smallskip
Now we conclude the lemma.
By \cref{prop_minor_obs}, for $i\in\{1,2\}$ we have
\[
\p(\bound{C}\oplus \mathbf{G}_i)\le w
\quad\Longleftrightarrow\quad
(\bound{C}\oplus \mathbf{G}_i)\ \text{excludes every }\mathbf{O}\in\Ocal_{w}^{\p}\ \text{as a topological minor.}
\]

But every $\mathbf{O}\in\Ocal_{w}^{\p}$ satisfies $\detail(\mathbf{O})\le \ell$, so by
\eqref{eq_tm_preserved_by_extfolio}, $\bound{C}\oplus \mathbf{G}_1$ excludes $\mathbf{O}$ iff
$\bound{C}\oplus \mathbf{G}_2$ excludes $\mathbf{O}$, for all $\mathbf{O}\in\Ocal_{w}^{\p}$.
Therefore $\p(\bound{C}\oplus \mathbf{G}_1)\le w$ iff $\p(\bound{C}\oplus \mathbf{G}_2)\le w$, as required.
\end{proof}


\section{Compositionality}
\label{sec_compositionality}

In this section, we start by defining the annotated types of our logic. The notion of annotated types is central in our work, as our model checking algorithm computes type-preserving representatives.
The main technical contribution of this section is the proof of a ``Feferman-Vaught style'' composition lemma (\cref{lem_compositionality}), which is the main ingredient of the proof of correctness of our algorithm.

As reflected in the statement of this composition lemma, the notion of (plain) annotated types is not enough. What we show is that, in order to preserve the annotated type of a graph when replacing some part (of small interface) with a type-preserving representative, one needs to consider a stronger notion of representative, namely a representative of the same \emph{extended type}. Extended types generalize types the same way that extended folios generalize folios and are also defined in this section.

Let us note that the employment of the extended type (instead of just the type) is required in order to retain control on the way that paths may cross the interface between the two glued graphs and avoid blow-ups in the involved constants. This phenomenon already appears in the compositionality of \FODP~\cite{SchirrmacherSSTV24mode} and is inspired by the corresponding statement for folios~\cite[Lemma 2.4]{grohe2011finding}. 

Our proof follows the standard approach for \MSO (see~\cite[Lemma 2.3]{Grohe2007LogicGA}). However, one needs to show that when gluing along a small interface both the disjoint-paths predicate and the set quantification of our logic composes nicely (i.e.,~without blowups). For the disjoint-paths predicate, this was proven in~\cite{SchirrmacherSSTV24mode}, while for the set quantification, we show it here using~\cref{lem_repl_compos}.

\subparagraph{Annotated types of our logic.}
Let $d,r,w\in\mathbb{N}$.
For a given colored graph $G$ and a sequence $\bar{R}=(R_1,\ldots,R_{d})$ of subsets of $V(G)$,
we define the \emph{annotated $(d,r,w)$-type} of $(G,\bar{R})$ with respect to $\CMSOttwDP$, denoted by $\type_{d,r,w}(G,\bar{R})$, to be the set of all (up to logical equivalence) sentences $\varphi$ of $\CMSOttwDP$ (over the appropriate colored graph signature) in prenex normal form that have quantifier rank at most $d$, $\DP$-rank at most $r$, and $\ttw$-rank at most $w$ and are satisfied in $G$ when the $i$th variable of $\varphi$ is interpreted in~$R_i$, i.e.,~if the $i$th variable is a set variable, then it should be interpreted as a subset of~$R_i$, while if it is a first-order variable it should be interpreted as an element of~$R_i$.
We remark that when we refer to the annotated type of an $\ell$-boundaried graph, we assume the existence of $\ell$ colors in our signature and a bijection between the boundary vertices and the color each belongs to.

\subparagraph{Extended annotated types.}
Recall that, given a boundaried graph $\bound{G}=(G,B,\rho)$ and $I\subseteq \binom{[|B|]}{2}$, we write $\bound{G}^I$ 
for the boundaried graph obtained from $\bound{G}$ by adding 
in its underlying graph the edges in 
$\{\{\rho^{-1}(i),\rho^{-1}(j)\}\mid \{i,j\}\in I\}.$

Let $d,r,w\in\mathbb{N}$. Given a boundaried graph $\bound{G}=(G,B,\rho)$ and a $d$-tuple $\bar{R}$ of subsets of $V(G)$, we define the \emph{extended annotated $(d,r,w)$-type} of $(\bound{G},\bar{R})$
to be
\begin{eqnarray}
\exttype_{d,r,w}(\mathbf{G},\bar{R})& = & \{(I,\type_{d,r,w}(\textbf{G}^I,\bar{R}))\mid 
I\in \binom{[|B|]}{2}\}. 
\label{eq_ext_type}
\end{eqnarray}

\begin{remark}[Sufficiency of rank bounds]\label{rem_rank_sufficiency}
In the following, whenever we speak of annotated $(d,r,w)$-types, we implicitly assume that the parameters $d$ and $r$ are chosen large enough as a function of $w$ and the boundary size $|B|$.
In particular, $d$ and $r$ are sufficient to encode all finite obstruction information required to test the constraint $\p(\cdot,\cdot)\le k$ for every $k\le w$ via extended folios, as guaranteed by \cref{lem_repl_compos}.
In other words, we assume that $d$ and $r$ are large enough, so that if two annotated boundaried graphs have the same extended annotated $(d,r,w)$-type, then they are $f_{\ref{lem_repl_compos}}^{\ttw}(w)$-equivalent, where $f_{\ref{lem_repl_compos}}^{\ttw}$ is the function of~\cref{lem_repl_compos} for $\p=\ttw$.
\end{remark}

Before proceeding to the proof of our ``Feferman-Vaught style'' composition lemma, we show how our definitions of compatibility and gluing generalize to boundaried graphs of the form $(\bound{G},\bar{R})$.
Let $\bound{G}_{1}=(G_1,B_1,\rho_1)$ and $\bound{G}_{2}=(G_2,B_2,\rho_2)$ be two boundaried graphs.
Also, for given $d\in\mathbb{N}$, let $\bar{R}_1:=(R_{1}^1,\ldots,R_{d}^1)$ and $\bar{R}_2:=(R_{1}^2,\ldots,R_d^2)$ be two $d$-tuples of subsets of~$V(G_1)$ and~$V(G_2)$, respectively. 
We say that $(\bound{G}_{1},\bar{R}_1)$ and $(\bound{G}_2,\bar{R}_2)$ are \emph{compatible} if $\rho_{2}^{-1}\circ \rho_{1}$ is an isomorphism from $G_{1}[B_{1}]$ to $G_{2}[B_{2}]$
such that for every $i\in [d]$, $\rho_{1}(R_{i}^1\cap B_1)=\rho_{2}(R_{i}^2\cap B_2)$.
Given that $(\bound{G}_{1},\bar{R}_1)$ and $(\bound{G}_2,\bar{R}_2)$ are compatible, we
define $(\bound{G}_{1},\bar{R}_1)\oplus(\bound{G}_{2},\bar{R}_2)$ as the pair $(G,\bar{R})$, where
\begin{itemize}
    \item $G$ is the graph obtained if we take the disjoint union of $G_{1}$ and $G_{2}$ and, for every $i\in[|B_{1}|],$ we identify the vertices $\rho_{1}^{-1}(i)$ and $\rho_{2}^{-1}(i)$, agreeing that after each such identification the vertices of the boundary of $\bound{G}_{1}$ prevail, and
    \item $\bar{R}$ is the $d$-tuple $(R_1,\ldots,R_d)$, where for each $i\in[d]$, $R_i$ is obtained by the disjoint union of $R_i^1$ and $R_i^2$ and, for every $i\in[|B_{1}|],$ if $\rho_{1}^{-1}(i)\in R_i^1$, we identify the vertices $\rho_{1}^{-1}(i)$ and $\rho_{2}^{-1}(i)$, agreeing that after each such identification the vertices of $R_i^1\cap B_1$ prevail. 
\end{itemize}
We are now ready to state and prove the following lemma.

\begin{lemma}\label{lem_compositionality}
    Let $d,r,w\in\mathbb{N}$, let $\mathbf{G}_{1}$, $\mathbf{G}_{2}$ be two boundaried graphs and let $\bar{R}_1$,$\bar{R}_2$ be two $d$-tuples of subsets of $V(G_1)$ and $V(G_2)$, respectively.
    If 
    \begin{itemize}
        \item $(\mathbf{G}_1,\bar{R}_1)$ and $(\mathbf{G}_2,\bar{R}_2)$ are compatible, and
        \item the extended annotated $(d,r,w)$-type of $(\bound{G}_1,\bar{R}_1)$ and $(\bound{G}_2,\bar{R}_2)$ is the same,
    \end{itemize}
    then for every boundaried graph $\bound{C}$ and every $d$-tuple $\bar{Q}$ of subsets of $V(C)$
    such that $(\bound{C},\bar{Q})$ is compatible with $(\bound{G}_1,\bar{R}_1)$ (and $(\bound{G}_2,\bar{R}_2)$),
    we have that
    \begin{quote}
        $(\bound{C},\bar{Q})\oplus(\bound{G}_1,\bar{R}_1)$  and $(\bound{C},\bar{Q})\oplus(\bound{G}_2,\bar{R}_2)$ have  same annotated $(d,r,w)$-type.
    \end{quote}
\end{lemma}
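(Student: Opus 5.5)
We prove the statement by an Ehrenfeucht--Fra\"iss\'e-style induction on the quantifier prefix, following the standard proof of the Feferman--Vaught composition lemma for \MSO. The induction statement is slightly stronger: for every $d'\le d$, every prefix of interpretations of the first $d-d'$ variables is handled, and equality of \emph{extended} annotated types (rather than plain types) is preserved after gluing. Concretely, we show the following. If $(\bound{G}_1,\bar R_1)$ and $(\bound{G}_2,\bar R_2)$ have the same extended annotated $(d,r,w)$-type, then for every compatible $(\bound{C},\bar Q)$ and every $I$, the graphs $(\bound{C},\bar Q)\oplus(\bound{G}_1,\bar R_1)$ and $(\bound{C},\bar Q)\oplus(\bound{G}_2,\bar R_2)$ satisfy the same sentences. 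Adding edges $I$ on the boundary commutes with gluing, so it suffices to treat the glued graph itself. Since everything is in prenex normal form, we peel off one quantifier at a time and reduce to the quantifier-free case.

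\textbf{Base case} ($d=0$, i.e.\ quantifier-free statements about already chosen elements). Atomic formulas $x=y$, $E(x,y)$, colors, $x\in X$ and boundary colors are determined locally. Here we use that an edge between a vertex of $C\setminus B$ and one of $G_i\setminus B$ does not exist. The counting atoms $\card(X)\equiv i \pmod m$ decompose additively: $|X|=|X\cap C|+|X\cap G_i|-|X\cap B|$, and the residue of $|X\cap G_i|$ is part of the type. For $\DP_{r'}$ with $r'\le r$, we invoke the composition result for the disjoint-paths predicate from \cite{SchirrmacherSSTV24mode}. A linkage in the glued graph splits into subpaths inside $C$ and inside $G_i$ that meet at boundary vertices. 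The portion inside $C$ is recorded by the set $I$ of boundary pairs it connects, and the requirement on the $G_i$ side becomes a $\DP$ statement in $\bound{G}_i^{I}$ (with boundary vertices as extra endpoints) of $\DP$-rank bounded in $r$ and $|B|$. This statement is captured by the extended type; this is exactly why extended types are needed.

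\textbf{Inductive step.} A first-order quantifier $\exists x$ is handled classically. If the witness lies in $C$, the remaining structures are compatible with the same extended types one rank lower. If the witness lies in $G_1$, equality of extended $(d,r,w)$-types yields a vertex of $G_2$ such that the two extended $(d-1,r,w)$-types coincide, by choosing the matching sentence $\exists x\,\bigwedge\tau$ for each $I$. The set quantifier $\exists_{\ttw\le k}X$ with $k\le w$ is the new step. Given a witness $S$ in the glued graph with $\ttw\le k$, we split it as $S=S_C\cup S_1$ with $S_1=S\cap V(G_1)$. We then want $S_2\subseteq V(G_2)$ such that $(\bound{G}_2,\bar R_2,S_2)$ and $(\bound{G}_1,\bar R_1,S_1)$ have the same extended $(d-1,r,w)$-type and also the constraint $\ttw\le k$ is preserved after gluing with $(\bound{C},S_C)$. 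For the constraint, \cref{rem_rank_sufficiency} ensures that equal extended types of the annotated pieces imply $f^{\ttw}_{\ref{lem_repl_compos}}(w)$-equivalence of the annotated boundaried graphs $(G_1,S_1,B,\rho)$ and $(G_2,S_2,B,\rho)$. Then \cref{lem_repl_compos} gives $\ttw(\bound{C}\oplus\ldots S_2)\le k$ if and only if $\ttw(\bound{C}\oplus\ldots S_1)\le k$.

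To obtain $S_2$, note that the type of $(\bound{G}_1^{I},\bar R_1,S_1)$ is a finite object. The existence of such a set is expressible in $\bound{G}_1^{I}$ by an unrestricted-looking set quantifier. To make it a legal sentence of our logic, we quantify $\exists_{\ttw\le k'}X$ with $k'$ large enough, and we use that $\ttw(G_1,S_1)$ is bounded by a function of $k$ and $|B|$ (a subset of a set of torso treewidth $\le k$, plus boundary vertices, changes the torso by at most a clique on $B$). Alternatively, we bound the $\ttw$ of $S_1$ via minor-monotonicity (\cref{obs_mosubm} applied after restricting to $G_1$). This is the main obstacle. The $\ttw$-rank must not blow up, and the restricted quantifier in $\bound{G}_1$ must range over a family containing all relevant $S_1$. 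I would try first to absorb the $|B|$-dependent increase into the implicit rank assumptions of \cref{rem_rank_sufficiency}, adjusting $w$ accordingly, and to encode the conjunction over all $I$ of the required types, together with the obstruction-freeness conditions for each possible $I$, inside one sentence of the type of $(\bound{G}_1,\bar R_1)$. Universal quantifiers follow by duality, completing the induction.
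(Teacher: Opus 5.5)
Your outline follows the paper's proof step for step. The atomic, Boolean and first-order cases are treated as standard, the $\DP$ case is delegated to the composition argument of Schirrmacher et al., and counting atoms are handled through residues. For $\exists_{\ttw\le k}X$, the witness is split as $S=S_C\cup S_1$, $S_1$ is transported to some $S_2\subseteq V(G_2)$ by type equality, and $\ttw\le k$ for $S_C\cup S_2$ comes from \cref{rem_rank_sufficiency} and \cref{lem_repl_compos}. The paper gets $S_2$ directly from the equality of the types of $\bound{G}_1^I$ and $\bound{G}_2^I$ for a single $I$. It does not explain why $S_1$ lies in the range of a quantifier of $\ttw$-rank at most $w$ there, or why one $S_2$ serves all $I$. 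So the obstacle you isolate is real, and the paper does not address it either.

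There is, however, a gap in your proposal at exactly this step, because the resolution you lean towards fails. \cref{rem_rank_sufficiency} only lets $d$ and $r$ grow, not $w$. Replacing $w$ by $w+|B|$ in the hypothesis proves a weaker lemma, and that version cannot be used in \cref{sec_mc}. There, \cref{lem_representatives} would have to preserve types of $\ttw$-rank $w+|B|$, which via \cref{lem_small_unbreakable} requires unbreakability for separations of order $w+|B|+1$. But $|B|$ ranges over adhesions of a strongly $(q(k),k)$-unbreakable decomposition, and such decompositions cannot in general avoid adhesions of size $k$ (take two large cliques sharing $k$ vertices). You would then need $k\ge w+k+1$.

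The missing observation is that no increase of $w$ is needed. Write $\mathcal{A}_1=(\bound{C},\bar Q)\oplus(\bound{G}_1,\bar R_1)$. Since $G_1$ is a subgraph of $\mathcal{A}_1$ and $S_1\subseteq S$, we have $(G_1,S_1)\le(\mathcal{A}_1,S)$, and \cref{obs_monotone_ttw} gives $\ttw(G_1,S_1)\le k$. (\cref{obs_mosubm} alone, which only shrinks the set inside a fixed graph, does not suffice.) The loss of $|B|$ arises only because you quantify inside $\bound{G}_1^I$ with $I\neq\emptyset$, where $S_1$ may indeed leave the range.

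So write the transfer sentence over $\bound{G}_1=\bound{G}_1^{\emptyset}$, with the same quantifier $\exists_{\ttw\le k}X$. Inside it, put, for every $I$, the required rank-$(d-1)$ information on $\bound{G}_1^I$, together with the extended-folio information used by \cref{lem_repl_compos}. This is possible because boundary vertices are named by colours:
\begin{itemize}
\item adjacency in $G_1^I$ is definable;
\item a $\DP$-atom of $G_1^I$ is a finite disjunction of $\DP$-atoms of $G_1$ in which boundary vertices serve as extra terminals;
\item a quantifier $\exists_{\ttw\le k'}Y$ of $\bound{G}_1^I$ becomes $\exists_{\ttw\le k'}Y\,(\sigma(Y)\wedge\cdots)$ over $G_1$, with $\sigma$ the $\FODP$-formula of \cref{lemm_anotopomin_formula} for $\ttw(G_1^I,Y)\le k'$. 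This is sound since $\ttw(G_1,Y)\le\ttw(G_1^I,Y)$.
\end{itemize}
This costs only in $d$ and $r$, never in $w$. It is also exactly what yields a single $S_2$ working for all $I$ simultaneously, which your induction needs.

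Use the same device at every level of the induction, since later witnesses, too, are only guaranteed small $\ttw$ in $G_1$ itself. Your ``alternatively'' sentence was the right route; with it, the rest of your argument goes through.
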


\begin{proof}
Let $\bound{G}_1=(G_1,B,\rho)$ and $\bound{G}_2=(G_2,B,\rho)$ be compatible boundaried graphs
(with the same boundary~$B$ and the same boundary labeling $\rho$), and let
$\bar{R}_1,\bar{R}_2$ be the associated $d$-tuples.
Assume that
\[
\exttype_{d,r,w}(\bound{G}_1,\bar{R}_1)=\exttype_{d,r,w}(\bound{G}_2,\bar{R}_2).
\]
By definition of extended annotated type and our assumption,
for every $I\subseteq\binom{[|B|]}{2}$ we have
\begin{equation}\label{eq_Gparts_same_type}
\type_{d,r,w}\big(\bound{G}_1^I,\bar{R}_1\big)
=
\type_{d,r,w}\big(\bound{G}_2^I,\bar{R}_2\big).
\end{equation}

Let $\bound{C}=(C,B,\rho)$ be a boundaried graph compatible with both, and let $\bar{Q}$ be a $d$-tuple as in the statement. Write
\[
\mathcal{A}_i := (\bound{C},\bar{Q})\oplus(\bound{G}_i,\bar{R}_i)\qquad\text{for } i\in\{1,2\}.
\]

We show that for every $\CMSOttwDP$-sentence $\varphi$ of quantifier rank at most $d$, $\DP$-rank at most $r$,
and $\ttw$-rank at most $w$, 
\[
\mathcal{A}_1\models\varphi\quad\Longleftrightarrow\quad \mathcal{A}_2\models\varphi.
\]
Note that this is exactly the claim that they have the same annotated $(d,r,w)$-type.

We show this by induction on the structure of $\varphi$.

\medskip\noindent
\emph{Atomic predicates.}
This case is standard; see, e.g.,~\cite[Lemma~2.3]{Grohe2007LogicGA}.

\medskip\noindent
\emph{Boolean connectives.}
This is immediate from the induction hypothesis.

\smallskip\noindent
\emph{First-order quantifiers.}
Also this case is standard; see e.g.,~\cite[Lemma~2.3]{Grohe2007LogicGA}.

\smallskip\noindent
\emph{Disjoint-paths predicate $\DP_k$.}
This case is handled exactly as in~\cite{SchirrmacherSSTV23} (which is the full version of~\cite{SchirrmacherSSTV24mode}).

\smallskip\noindent
\emph{Restricted set quantifiers with counting.}
This is the only really new and technically relevant case.
Let
\[
\varphi = \exists_{\ttw\le k} X\ \psi(X),
\]
where $k\le w$ and where $\psi$ may contain counting predicates referring to $X$.
Assume that
\[
\mathcal{A}_1 \models \exists_{\ttw\le k} X\,\psi(X),
\]
and fix a witnessing set $S\subseteq V(\mathcal{A}_1)$ such that
\[
\ttw(\mathcal{A}_1,S)\le k
\qquad\text{and}\qquad
\mathcal{A}_1\models \psi(S).
\]

Let 
$S_C := S\cap V(C)$ and $S_1 := S\cap V(G_1)$.
We keep $S_C$ fixed and argue that there exists a set
$S_2\subseteq V(G_2)$ such that for
$S' := S_C \cup S_2$, we have 
\[
\ttw(\mathcal{A}_2,S')\le k
\quad\text{and}\quad
\mathcal{A}_2\models \psi(S').
\]

By the induction hypothesis applied to $\psi$, there are formulas $\theta(X)$ and $\xi(X)$ such that for each $i\in\{1,2\}$,
\[
\mathcal{A}_i \models \psi(S_C\cup S_i)
\quad\Longleftrightarrow\quad
(\bound{C},\bar{Q}) \models \theta(S_C)\ \text{ and }\ \bound{G}_i^{I}\models \xi(S_i)
\ \text{ for some } I\subseteq \binom{[|B|]}{2}.
\]

Because of~\eqref{eq_Gparts_same_type}, there is also a set $S_2\subseteq V(G_2)$ such that $(\bound{G}_2^I,\bar{R})\models\xi(S_2)$.

We set $S':=S_C\cup S_2$ and we argue that \[
\ttw(\mathcal{A}_2,S')\le k
\quad\text{and}\quad
\mathcal{A}_2\models \psi(S').
\]

To see that $\ttw(\mathcal{A}_2,S')\le k$, consider the annotated boundaried graphs $(G_1,S_1,B,\rho)$ and $(G_2,S_2,B,\rho)$ and note that (under the assumption of~\cref{rem_rank_sufficiency}) the fact that they have the same extended annotated type implies that they are $f_{\ref{lem_repl_compos}}^{\ttw}(w)$-equivalent. Therefore, from~\cref{lem_repl_compos} we get that $\ttw(\mathcal{A}_1,S)\le k$ if and only if $\ttw(\mathcal{A}_2,S')\le k$.

\smallskip
It remains to argue that $\psi(S')$ holds in $\mathcal{A}_2$.
Since $\psi$ has strictly smaller quantifier rank than $\varphi$, the induction hypothesis applies,
provided that all atomic predicates occurring in~$\psi$ have the same truth value under $S$ and $S'$.

This is immediate for first-order atoms and $\DP$-predicates by the previous cases.
For counting predicates referring to $X$, note that such predicates are always evaluated
on the witnessing set itself.
Their truth value depends only on:
(i) the intersection $S\cap B$ (a subset of the fixed finite boundary), and
(ii) the cardinalities of~$S_C$ and $S_i$ modulo the relevant moduli.
The boundary contribution is identical for $S$ and $S'$ by construction,
and the modulo contributions of $S_1$ and $S_2$ are part of the finite interface information
encoded in the extended annotated types.
Hence all counting predicates occurring in $\psi$ have the same truth value under $S$ and $S'$.
By the induction hypothesis, we conclude
\[
\mathcal{A}_1\models \psi(S)
\quad\Longleftrightarrow\quad
\mathcal{A}_2\models \psi(S'),
\]
and therefore
\[
\mathcal{A}_2\models \exists_{\ttw\le k} X\,\psi(X).
\]

The converse implication is symmetric.

\medskip
This completes the induction and shows $\mathcal{A}_1\models\varphi$ iff $\mathcal{A}_2\models\varphi$
for every sentence $\varphi$ within the rank bounds. Therefore, $\mathcal{A}_1$ and $\mathcal{A}_2$ have the same
annotated $(d,r,w)$-type, as required.
\end{proof}

\section{Computing representatives}
\label{sec_represent}

Our dynamic programming algorithm works bottom-up over an unbreakable decomposition and computes a representative of each bag. For a node $t$ of a tree decomposition,
we use~$\bound{G}_t$ to denote the boundaried graph $(G[\cone(t)],\adh(t),\rho)$, for some $\rho\colon \adh(t)\to|\adh(t)|$.



\subparagraph{$(d,r,w)$-representatives.}
Let $d,r,w\in\mathbb{N}$. Let $\mathbf{G},\mathbf{G}'$ be two boundaried graphs and let $\bar{R}$ and $\bar{R}'$ be two $d$-tuples of subsets of $V(G)$ and $V(G')$, respectively.
We say that $(\mathbf{G}',\bar{R}')$ is a \emph{$(d,r,w)$-representative} of $(\mathbf{G},\bar{R})$ if $\exttype_{d,r,w}(\mathbf{G},\bar{R})=\exttype_{d,r,w}(\mathbf{G}',\bar{R}').$


The main subroutine of our algorithm is given in the following lemma, whose proof is deferred to \cref{subsec_proof_first_routine}.

\begin{lemma}\label{lem_representatives}
    There is a function $f_{\size}\colon \mathbb{N}^4\to\mathbb{N}$ and an algorithm that, given:
    \begin{itemize}
        \item integers $\ell,q,d,r,w\in\mathbb{N}$;
        \item a graph $G$ that excludes a graph $H$ as a topological minor;
        \item a $d$-tuple $\bar{R}$ of subsets of $V(G)$;
        \item a tree decomposition $(T,\mathsf{bag})$ of $G$ of adhesion $\ell$; and
        \item a node $t$ of $T$ such that $G[\mathsf{cone}(t)]$ is $(q,w+1)$-unbreakable;
    \end{itemize}
    outputs, in time $\mathcal{O}_{|H|,r,\ell,q,d,w}(|\mathsf{cone}(t)|^3)$, a $(d,r,w)$-representative of $(\bound{G}_t,\bar{R}\cap \mathsf{cone}(t))$ of size at most $f_{\mathsf{size}}(d,r,w,\ell)$.
\end{lemma}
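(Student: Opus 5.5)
The plan is to reduce the computation of a representative to finitely many $\FODP$ model-checking calls on $H$-topological-minor-free graphs, using the collapse on unbreakable graphs (\cref{lem_collapse_to_FODP}) and \cref{prop_mntghrro} as a black box.

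\textbf{Step 1: candidate representatives.} Fix the boundary $B=\adh(t)$, of size at most $\ell$. By \cref{prop_number_folios} and the finiteness of $\CMSOttwDP$-types of bounded ranks, the number of distinct extended annotated $(d,r,w)$-types of $\ell$-boundaried graphs with $d$-tuples of subsets is bounded by a function of $(d,r,w,\ell)$. For every realizable extended type, fix a smallest realizing pair $(\bound{G}',\bar R')$. Define $f_{\size}(d,r,w,\ell)$ as the maximum size of these pairs. This is a computable bound, because satisfiability of types over graphs of a given size can be checked by enumeration. So it suffices to determine $\exttype_{d,r,w}(\bound{G}_t,\bar R\cap\cone(t))$ and then enumerate graphs of size at most $f_{\size}$ until one of the same extended type is found. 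Computing extended types of such small graphs is brute force, with cost bounded in terms of $d,r,w,\ell$.

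\textbf{Step 2: computing the type of the cone.} The extended type is the collection, over all $I\subseteq\binom{[|B|]}{2}$, of the sets of sentences of bounded ranks satisfied by $(\bound{G}_t^I,\bar R)$. There are finitely many sentences up to equivalence. The boundary is encoded by $\ell$ colors, and the relations $\bar R$ by $d$ further colors, with the $i$th variable relativized to $R_i$.

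For each sentence $\varphi$ and each $I$, we want to decide $\bound{G}_t^I\models\varphi$. The graph $G[\cone(t)]$ is $(q,w+1)$-unbreakable. Adding at most $\binom{\ell}{2}$ edges inside $B$ preserves this, after adjusting $q$ by an additive $\ell$ to absorb edge effects. Adding colors does not affect unbreakability.

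Hence \cref{lem_collapse_to_FODP} gives an $\FODP$ sentence $\psi$ equivalent to $\varphi$ on this graph. The translation is uniform in the colors: the relativization $X\subseteq R_i$ is translated as $x_j\in R_i$ for the guessed elements, since the witness set has size at most $q$ by \cref{lem_small_unbreakable}. The sentence $\psi$ is computable from $\varphi,q,w$. Here \cref{lem_ttw_FODP} must be made effective; $\sigma_k^{\FODP}$ can be written explicitly via the obstruction $\mathbf{\Gamma}^{\circ}_k$ and $\bog\sim\ttw$ (\cref{prop_elm_equ_v}), with the bound $q$ on the superset $X'$.

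There is one complication: $\bound{G}_t^I$ may not be $H$-topological-minor-free because of the added edges. I would fix this by replacing each added edge with a path through a fresh subdivision vertex marked by a new color, and rewriting $\psi$ accordingly. Then each $\DP$ atom must also avoid the fresh vertices used as interiors. Alternatively, note that $G[\cone(t)]+$ (a clique on $B$) excludes $K_{|H|+\ell}$-type topological minors, so it lies in a class excluding some fixed $H'$ depending on $H$ and $\ell$. I would take this second route: if $G^I$ contains $H'$ as a topological minor with $|H'|$ large, then deleting $B$ leaves $H$ as a topological minor, after choosing $H'$ as a suitable large clique. Then \cref{prop_mntghrro} decides $\psi$ in time $\mathcal O_{|H|,\ell,|\psi|}(|\cone(t)|^3)$.

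\textbf{Main obstacle.} I expect the main difficulty to be the interaction between the collapse and the colors and relativizations in the types. \cref{lem_collapse_to_FODP} is stated for plain sentences, and one must check that guessed vertices inside each $R_i$ and the boundary colors carry over. One must also check that the unbreakability parameters survive adding $I$. A secondary subtlety is ensuring that $f_{\size}$ depends only on $(d,r,w,\ell)$ and not on $q$: the representative is chosen among all graphs realizing the type, not necessarily unbreakable ones, so this is fine.
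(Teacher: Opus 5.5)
Your proof is correct, but it takes a genuinely different and leaner route than the paper. The paper never computes the extended type of the cone. Instead it shrinks each $R_i$ to a bounded-size $R_i'$: for each formula of $\Psi_i$, each $I$ and each choice of earlier sets $\bar V$, it collects a witness set produced by the witness-finding $\FODP$ algorithm (\cref{lem_mc_FODP}) on the collapsed formula. It then puts $\bigcup_i R_i'$ into the boundary and compresses with the folio algorithm (\cref{lem_computing_folio}, based on \cref{prop_bound_rep_folio}). This gives a representative whose size still depends on $q$, and only then does \cref{lem_rep_tw} remove the $q$-dependence.

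You instead compute $\exttype_{d,r,w}(\bound{G}_t,\bar R\cap\cone(t))$ outright, by finitely many calls to the decision version (\cref{prop_mntghrro}) on the collapsed sentences, and then search for a smallest realizer. This avoids witness extraction, the shrinking argument over equivalence classes, and the folio compression. It also makes explicit that the final $q$-free size bound is just the largest minimum realizer of a realizable extended type, which is what the paper's last step via \cref{lem_rep_tw} implicitly relies on as well. What the paper's intermediate steps buy is a representative whose size is controlled through \cref{prop_bound_rep_folio} in terms of $(\ell,d,r,w,q)$ before that last step.

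Your observation that $G_t^I$ excludes $K_{|V(H)|+\ell}$ as a topological minor is correct. It holds because $G_t^I-B=G_t-B$, and deleting $\ell$ vertices from a subdivided $K_m$ leaves a subdivided $K_{m-\ell}$. This covers a point the paper passes over when it runs the $\FODP$ algorithm on $G_t^I$ as if it were $H$-topological-minor-free.

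Three corrections, none of which breaks the argument.

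\textbf{Computability of $f_{\size}$.} Your claim that $f_{\size}$ is computable is wrong, because enumeration only semi-decides which extended types are realizable. In fact no function satisfying the lemma can be computable. Every graph $G_0$ meets the hypotheses: take $H=K_{|V(G_0)|+1}$, a one-node decomposition, $\ell=0$, $q=|V(G_0)|$ and $\bar R=V(G_0)^d$. A computable $f_{\size}$ would then give every finitely satisfiable first-order sentence of quantifier rank at most $d$ on graphs a model with at most $f_{\size}(d,0,0,0)$ vertices, contradicting Trakhtenbrot's theorem. The lemma only asks for some function, and your algorithm never needs to know $f_{\size}$ (it enumerates by increasing size until the computed type is matched), so this is harmless here. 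The same limitation applies to the function $p$ in \cref{lem_rep_tw}, on which the paper's proof relies.

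\textbf{Expressing $\ttw\le k$ exactly.} The relation $\bog\sim\ttw$ is only a functional equivalence, so the obstruction $\mathbf{\Gamma}^{\circ}_k$ cannot express $\ttw(G,A)\le k$ exactly, and the type semantics requires exactness. Use the direct construction you hint at instead:
\begin{itemize}
\item guess $X'\supseteq A$ with $|X'|\le q$, justified by \cref{lem_small_unbreakable} (graphs with fewer than $3q$ vertices are handled by brute force);
\item express torso adjacency with $\DP$, adding trivial paths $\DP(x_j,x_j)$ to avoid $X'$ as in \cref{lemm_anotopomin_formula};
\item test the treewidth of the torso, which has at most $q$ vertices, by a finite disjunction.
\end{itemize}

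\textbf{Unbreakability under $I$.} No adjustment of $q$ is needed for the edges of $I$. Every separation of $G_t^I$ is also a separation of $G_t$, so $(q,w+1)$-unbreakability is inherited unchanged.
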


\subsection{Subroutines}
This subsection contains a series of definitions, observations, and lemmas that will be used in the proof of~\cref{lem_representatives} and in the proof of~\cref{th_our_g_res_tp_bog}.



\subparagraph{Reducing the instance away from the quantification.}
We need the following result about the efficient computation of representatives.

\begin{lemma}\label{lem_computing_folio}
    There is a computable function $f'\colon \mathbb{N}^3\to\mathbb{N}$ such that the following holds. Let $d,r,w,z\in\mathbb{N}$, let~$\bound{G}$ be an $n$-vertex boundaried graph, and let $\bar{R}=(R_1,\ldots,R_d)$ be a $d$-tuple of subsets of $V(G)$, such that for every $i\in[d]$, $R_i$ has size at most $z$ and contains $B(\bound{G})$ as a subset. Then, there is a $(d,r,w)$-representative of $(\bound{G},\bar{R})$ of size at most $f'(d,r,w,z)$. Moreover, such representative can be computed in time $\mathcal{O}_{d,r,w,z}(n^3)$.
\end{lemma}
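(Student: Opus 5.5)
Since every $R_i$ is finite of size at most $z$ and all quantified variables are confined to the sets $R_i$, every set variable ranges over subsets of a set of at most $z$ vertices, so the $\ttw$-restricted set quantification is really bounded-size quantification. I would therefore rewrite each sentence in the extended type as an $\FODP$-sentence over an expanded signature. Fix an enumeration: mark the at most $dz$ vertices of $\bigcup_i R_i$ by fresh constants (or colors) $c_1,\dots,c_{dz}$, with $c$'s for boundary vertices first. Then a sentence of quantifier rank $d$, $\DP$-rank $r$, $\ttw$-rank $w$ interpreted relative to $\bar R$ becomes a Boolean combination, over all choices of subsets and elements of the marked set, of quantifier-free statements about the marked vertices. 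These statements are built from atoms $E(c_a,c_b)$, equalities, membership (fixed by the choice), counting atoms (fixed by the choice since sizes are known), $\DP_{r'}$ on marked vertices with $r'\le r$, and the predicates $\ttw(G,\{c_a: a\in A\})\le k$ for $k\le w$, which are $\FODP$-expressible by \cref{lem_ttw_FODP} (or \cref{lemm_anotopomin_formula}). Hence $\type_{d,r,w}(\bound{G}^I,\bar R)$ is determined by the \emph{$\FODP$-type} of the tuple of marked vertices in $\bound{G}^I$, of some quantifier rank and $\DP$-rank bounded by a function $g(d,r,w,z)$. Equivalently, the extended type is determined by the $\FODP$ extended type (in the sense of \cite{SchirrmacherSSTV24mode}) of the boundaried graph whose boundary is the set of marked vertices $B'=\bigcup_i R_i\supseteq B$.

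Next I would invoke the representative computation for $\FODP$ from \cite{SchirrmacherSSTV24mode}: given a $t$-boundaried graph with $t\le dz$, one can compute in cubic time a boundaried graph of size bounded by a function of $(t,g)$ with the same $\FODP$ extended type of that rank. The extension must be over $I\subseteq\binom{B'}{2}$ to cover all $I\subseteq\binom{B}{2}$; this holds because $B\subseteq B'$. Alternatively, since the number of types is bounded, one can enumerate candidate small graphs and test type equality by the model checking of \cref{prop_mntghrro} applied to each of the finitely many sentences. That test is cubic as long as the graphs stay $H$-topological-minor free; since the statement does not assume this, I would rely on the blackbox representative-computation from \cite{SchirrmacherSSTV24mode}, which works by its $\DP$/folio-based reduction in general. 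I would then set $\bar R'$ to be the images of the marked vertices and check that the equality of $\FODP$-types of the marked tuple transfers back to equality of annotated $(d,r,w)$-types via the translation above. For every $I$, the same translated formula is evaluated in $\bound{G}^I$ versus $\bound{G}'^I$.

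The main obstacle is the first step's correctness with respect to the $\ttw$ predicate under the added edges $I$. The predicate must refer to $\ttw$ in the whole graph $\bound{G}^I$, not merely on marked vertices, so it needs genuine $\FODP$ power. Its formula from \cref{lemm_anotopomin_formula} quantifies over $\Ocal(|H'|)$ extra vertices for obstructions $H'\in\Ocal^{\ttw}_k$. This is why the rank bound $g$ must be chosen depending on $w$ via the obstruction sizes, which is computable here because $\ttw\sim\bog$ gives explicit obstructions (outer-annotated grids and their $\ext$). The resulting size bound $f'(d,r,w,z)$ is then computable.
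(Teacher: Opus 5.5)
Your first step is correct and matches the paper's starting point: every quantified variable lives inside $B^\star:=B(\bound{G})\cup\bigcup_i R_i$, a set of at most $dz$ vertices. The gap is in the second step, because the blackbox you invoke does not exist. No algorithm can compute, in time $f\cdot n^3$ on \emph{arbitrary} graphs, a bounded-size boundaried graph with the same $\FODP$-type of a given rank. Already with empty boundary this would make $\FO$ model checking fixed-parameter tractable on all graphs (compute the representative, then brute-force it), contradicting AW$[*]$-hardness unless FPT $=$ AW$[*]$. The machinery of \cite{SchirrmacherSSTV24mode} only works on $H$-topological-minor-free inputs, with running time depending on $|H|$. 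The lemma, by contrast, assumes nothing about $G$, and its bounds depend only on $d,r,w,z$. Preserving full $\FODP$-types would also break computability of $f'$: a computable bound on the smallest realization of every realizable type of a given rank would decide finite satisfiability of $\FO$ on graphs, contradicting Trakhtenbrot's theorem. Separately, $\ttw\sim\bog$ is only a functional equivalence, so it does not give you the exact obstruction set for $\ttw\le k$.

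The missing observation is that your translation only ever produces very special statements about $\bound{G}^I$:
\begin{itemize}
\item adjacencies inside $B^\star$;
\item $\DP_{r'}$-atoms whose terminals all lie in $B^\star$;
\item for $A\subseteq B^\star$ and $k\le w$, whether some member of $\Ocal^{\ttw}_k$ is an annotated topological minor of $(G^I,A)$, with its annotated vertices landing in $B^\star$.
\end{itemize}
Each of these asks for a topological minor of bounded detail in which the vertices of $B^\star$ may be taken as branch vertices. Hence all of them are determined by the extended $\delta$-folio of the re-boundaried graph $\bound{G}^\star=(G,B^\star,\rho^\star)$, for a suitable $\delta$ depending only on $r$, $w$ and $dz$. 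The sets $I\subseteq\binom{[|B|]}{2}$ are covered because $B\subseteq B^\star$. The full $\FODP$-type is far finer than needed.

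This is exactly the paper's proof. It runs the Grohe--Kawarabayashi--Marx--Wollan algorithm (\cref{prop_bound_rep_folio}) on $\bound{G}^\star$; that algorithm works on all graphs in cubic time and has a computable size bound $f_{\mathsf{folio}}(\delta,|B^\star|)$. It then transports each $R_i$ along the boundary labelling and restricts the boundary back to $B$. Replacing ``same $\FODP$-type'' by ``same extended folio with boundary $B^\star$'' repairs your argument.
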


Its proof is based on the following result from~\cite{grohe2011finding}.

\begin{proposition}[Lemma 2.2 and Theorem 3.1 of~\cite{grohe2011finding}]
\label{prop_bound_rep_folio}
There is a computable function $f_{\mathsf{folio}}\colon \mathbb{N}^2\to\mathbb{N}$ such that
for every $\delta,b\in\mathbb{N}$ and every extended $\delta$-folio $\mathcal{F}$, the following holds. Every minimum-size (in terms of vertices) $b$-boundaried graph whose extended $\delta$-folio is~$\mathcal{F}$ has size at most $f_{\mathsf{folio}}(\delta,b)$.
Moreover, there is an algorithm that given $\delta,b,$ and an $n$-vertex $b$-boundaried graph $\bound{G}$, computes, in time $\mathcal{O}_{\delta,b}(n^3)$, a $b$-boundaried graph~$\bound{H}$
of size at most $f_{\mathsf{folio}}(\delta,b)$ such that $\bound{G}$ and $\bound{H}$ are $\delta$-equivalent.
\end{proposition}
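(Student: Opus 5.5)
This statement is imported from~\cite{grohe2011finding}, so we only sketch how we would reprove it. We would treat its two parts separately: first a purely combinatorial bound on the size of minimum realizers of an extended folio, then an algorithm combining brute-force enumeration with a fixed-parameter topological-minor test.

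\emph{Size bound.} Fix a $b$-boundaried graph $\bound{G}$ with extended $\delta$-folio $\mathcal{F}$. The extended folio is determined by the $\delta$-folios of the graphs $\bound{G}^I$, one for each $I\subseteq\binom{[b]}{2}$. By \cref{prop_number_folios}, each such folio is a set of at most $f_{\ref{prop_number_folios}}(b,\delta)$ boundaried graphs of detail at most $\delta$.
\begin{enumerate}
\item For every $I$ and every member $\bound{G}'$ of $\delta\folio(\bound{G}^I)$, fix an \abtm-pair of $\bound{G}^I$ witnessing $\bound{G}'\pretp\bound{G}^I$. Delete from it the edges added by $I$, keeping only its part inside $G$.
\item Let $\bound{U}$ be the union of these partial witnesses together with the boundary. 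It is a boundaried subgraph of $\bound{G}$, so for every $I$ we get $\delta\folio(\bound{U}^I)\subseteq\delta\folio(\bound{G}^I)$. Conversely, $\bound{U}^I$ contains every chosen witness, so the reverse inclusion also holds. Hence $\bound{U}$ and $\bound{G}$ are $\delta$-equivalent.
\item In $\bound{U}$, the vertices that are boundary vertices, branch vertices of some witness, or of degree at least $3$ number at most $g(\delta,b)$. This is because there are at most $2^{b^2}\cdot f_{\ref{prop_number_folios}}(b,\delta)$ witnesses, each with $\Ocal(\delta)$ branch vertices and paths. The only way the overlap of the paths can create degree-$\ge 3$ vertices is at points where two witness paths meet or diverge, and the number of such points is bounded polynomially in the number of paths.
\item Every other vertex of $\bound{U}$ lies on a long induced path of degree-$2$ vertices. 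We shorten each such path to length at most $2$. We keep one internal vertex whenever contracting fully would create a parallel edge, so the graph stays simple. This shortening preserves the set of boundaried topological minors, including those of all the graphs $\bound{U}^I$.
\end{enumerate}
This yields a $\delta$-equivalent graph of size bounded by a computable $f_{\mathsf{folio}}(\delta,b)$. In particular, every minimum-size realizer of $\mathcal{F}$ is at least this small.

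\emph{Algorithm.} For each of the $2^{\Ocal(b^2)}$ sets $I$ and each candidate boundaried graph of detail at most $\delta$, we decide whether it is a topological minor of $\bound{G}^I$. Since the boundary is fixed, this is a rooted topological-minor test, and it can be solved in time $\Ocal_{\delta,b}(n^3)$ by the topological-minor containment algorithm of~\cite{grohe2011finding}. This computes $\delta\extfolio(\bound{G})$. We then enumerate all $b$-boundaried graphs on at most $f_{\mathsf{folio}}(\delta,b)$ vertices, which is a constant number, compute their extended folios by brute force, and output one whose extended folio coincides with that of $\bound{G}$. Such a graph exists by the size bound, and compatibility on the boundary is part of what the extended folio records.

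The main obstacle is the cubic-time rooted topological-minor test. It is the deep ingredient, relying on the irrelevant-vertex and unbreakable-part machinery of~\cite{grohe2011finding}, and we would invoke it as a black box. A secondary delicate point is the counting in step~3: one must argue carefully that overlapping witness paths in the union create only boundedly many degree-$\ge 3$ vertices, so that the resulting bound depends only on $\delta$ and $b$.
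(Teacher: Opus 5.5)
\textbf{The gap is in step~3.} Step~3 is the only place where a bound depending on $\delta$ and $b$ alone is supposed to come from, and its justification is false. Two paths can meet and diverge arbitrarily often, so a union of boundedly many witness paths can have unboundedly many vertices of degree at least $3$.

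Here is a counterexample. Take $B=\{s,t,u,v\}$ and let $G$ consist of the $4$-cycles $x_ia_ix_{i+1}c_i$ for $1\le i<n$, chained at the $x_i$, together with the edges $sx_1,ux_1,x_nt,x_nv$.
\begin{itemize}
\item For every $\delta\ge 2$, the $\delta$-folio of $\bound{G}$ contains an $s$--$t$ path and a $u$--$v$ path, each with one subdivision vertex so as to stay compatible.
\item Legitimate witnesses for these two members, each even a shortest path, are $P=s\,x_1a_1x_2a_2\cdots x_n\,t$ and $Q=u\,x_1c_1x_2c_2\cdots x_n\,v$.
\item Their union is all of $G$, in which every $x_i$ has degree $4$: two paths meet and diverge about $n$ times.
\item Step~4 removes nothing, because no two degree-$2$ vertices are adjacent.
\end{itemize}
So with witnesses fixed arbitrarily, as in your step~1, $\bound{U}$ has $\Theta(n)$ vertices while $\delta$ and $b$ are fixed. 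No computable $f_{\mathsf{folio}}$ comes out of steps~1--4.

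\textbf{What is missing.} You need a joint choice of the witnesses together with a genuine structural argument. For instance, take a topological-minor-minimal $\bound{G}'\pretp\bound{G}$ with the same extended $\delta$-folio, or choose all witnesses simultaneously to minimize the union. Then prove that minimality bounds the number of branching points.
\begin{itemize}
\item In the example above this works: route both paths through the $a_i$'s and keep only boundedly many $c_i$'s.
\item In general it is a rerouting argument of linkage type, not a counting one. Diverting one witness path along a parallel stretch of another can collide with the remaining, pairwise internally disjoint, paths of the same witness.
\end{itemize}

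\textbf{Where the real content lies.} For fixed $(\delta,b)$ the mere existence of some bound is trivial, since there are finitely many extended $\delta$-folios. Your algorithm could also be run by searching compatible candidates in order of increasing size; it stops at the latest at $\bound{G}$ itself. What your argument fails to deliver is exactly the computability of $f_{\mathsf{folio}}$, which is the nontrivial part the paper imports from \cite{grohe2011finding}. So the main obstacle is this size bound, not the cubic-time test you singled out. Your algorithmic half is fine once such a bound is available: compute the folios of all $\bound{G}^I$ with the cubic-time algorithm of \cite{grohe2011finding}, then brute-force over bounded-size compatible candidates.

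A small additional point: $\bound{U}$ must contain all of $G[B]$, not just $B$, to remain compatible with $\bound{G}$.
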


Using~\cref{prop_bound_rep_folio}, we can show~\cref{lem_computing_folio}.

\begin{proof}[Proof of~\cref{lem_computing_folio}]
    Let $B^\star:=\bigcup_{i\in[d]}  R_i\cup B(\bound{G})$ and $\bound{G}^\star$ be the boundaried graph $(G,B^\star,\rho^\star)$, where $\rho^\star$ is an arbitrary bijection from $B^\star$ to $|B^\star|$.
    We set $b:=|B^\star| \le z\cdot d$, $\delta:=\max\{r,f_{\ttw}(w)\}$, where $f_{\ttw}$ is the function of~\cref{prop_minor_obs} for $\p=\ttw$ (which applies because of~\cref{obs_monotone_ttw}).
    Then, we set $f'(d,r,w,z):=f_{\mathsf{folio}}(\delta,b)$, where $f_{\mathsf{folio}}$ is the function of~\cref{prop_bound_rep_folio}.
    
    We apply the algorithm of~\cref{prop_bound_rep_folio} for $\delta,b,$ and~$\mathcal{F}$. This way, we compute a boundaried graph $\widehat{\bound{G}}:=(\widehat{G},\widehat{B},\widehat{\rho})$ of size at most $f_{\mathsf{folio}}(\delta,b)$ such that $\bound{G}^\star$ and $\widehat{\bound{G}}$ are $\delta$-equivalent.
    For each $i\in[d]$, let $\widehat{R}_i$ be the vertex subset $\{\widehat{\rho}^{-1}\circ\rho^\star(v)\mid v\in R_i\}$ and $\widehat{B}^-:=\{\widehat{\rho}^{-1}\circ\rho^\star(v)\mid v\in B(\bound{G})\}$. We also let $\widehat{\bound{G}}^-:=(\widehat{G},\widehat{B}^-,\widehat{\rho}|_{\widehat{B}^-})$.
    It is easy to observe that $(\widehat{\bound{G}}^-,\widehat{R}_1,\ldots,\widehat{R}_d)$ and $(\bound{G},R_1,\ldots,R_d)$ have the same extended annotated $(d,r,w)$-type.
    Thus, $(\widehat{\bound{G}}^-,\widehat{R}_1,\ldots,\widehat{R}_d)$ is a $(d,r,w)$-representative of $(\bound{G},R_1,\ldots,R_d)$ of size at most $f'(d,r,w,z)$. The claimed running time follows directly from the one given by~\cref{prop_bound_rep_folio}.
\end{proof}

\subparagraph{Finding representatives in bounded treewidth graphs.}
It is easy to see that the extended annotated type of a graph is $\CMSO$-expressible and therefore, using Courcelle's theorem~\cite{Courcelle90} we can efficiently compute representatives of bounded treewidth graphs.
Note that the size of the representative is independent of the treewidth of the input graph.

\begin{observation}\label{lem_rep_tw}
    There is a function $p\colon \mathbb{N}^2\to\mathbb{N}$ and an algorithm that given integers $\ell,d,r,w,z\in\mathbb{N}$, an $n$-vertex $\ell$-boundaried graph $\bound{G}$, whose underlying graph $G$ has treewidth at most $z$, and a $d$-tuple $\bar{R}$ of subsets of $V(G)$, outputs, in time $\mathcal{O}_{\ell,d,r,w,z}(n)$, a $(d,r,w)$-representative of $(\bound{G},\bar{R})$ of size at most $p(\ell,d,r,w)$.
\end{observation}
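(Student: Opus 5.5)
We prove \cref{lem_rep_tw} with a finite-index argument: a Courcelle-style dynamic programming computes the extended annotated type, and a precomputed lookup table turns that type into a small graph realizing it.

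First, the set of possible extended annotated $(d,r,w)$-types of pairs $(\bound{G},\bar{R})$ with $\bound{G}\in\Bcal^{(\ell)}$ is finite. Up to logical equivalence there are only finitely many prenex $\CMSOttwDP$-sentences of quantifier rank at most $d$, $\DP$-rank at most $r$ and $\ttw$-rank at most $w$ over the fixed signature, which has $\ell$ boundary colors and the $d$ relativization predicates. Hence each $\type_{d,r,w}$ is drawn from a finite set. The extended type is a function from the $2^{\binom{\ell}{2}}$ sets $I$ to such types, so it is drawn from a finite set as well.

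For every realizable extended type $\tau$, fix one pair $(\bound{G}_\tau,\bar{R}_\tau)$ of minimum size realizing it. Define $p(\ell,d,r,w)$ as the maximum of these sizes. This maximum is finite, but the definition is non-constructive. We could make it computable by enumerating graphs and deciding their types, since each type is decidable on a fixed finite graph by brute force. The statement, however, only asks for existence of the bound together with running time $\mathcal{O}_{\ell,d,r,w,z}(n)$, so a hardwired table of representatives indexed by $\tau$ suffices.

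Second, we compute $\tau=\exttype_{d,r,w}(\bound{G},\bar{R})$ in linear time. Every sentence in the finite list is equivalent to a $\CMSO$ sentence. The restricted quantifier $\exists_{\ttw\le k}X$ becomes $\exists X\,(\sigma^{\ttw}_k(X)\wedge\cdots)$ via \cref{lemm_anotopomin_formula}, the atoms $\DP_k$ are $\CMSO$-expressible, and the relativization to $R_i$ is the guard $X\subseteq R_i$ or $x\in R_i$. We encode each $\bound{G}^I$ as $G$ with $\ell$ boundary colors, the colors $R_1,\dots,R_d$, and the $I$-edges added. Adding edges among the $\ell$ boundary vertices raises the treewidth by at most $\ell$, so every $\bound{G}^I$ has treewidth at most $z+\ell$. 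Given the finitely many resulting $\CMSO$ sentences, we apply Courcelle's theorem~\cite{Courcelle90}, together with Bodlaender's linear-time computation of a tree decomposition, to each pair of sentence and $I$. This decides every sentence in time $\mathcal{O}_{\ell,d,r,w,z}(n)$. The number of calls depends only on $\ell,d,r,w$, so the total time stays linear.

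Finally, we output $(\bound{G}_\tau,\bar{R}_\tau)$ from the table. It has the same extended type by construction, so it is a $(d,r,w)$-representative, and its size is at most $p(\ell,d,r,w)$, which does not depend on $z$.

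The main obstacle is the translation step. We must check that the translation of $\ttw$-bounded quantification into plain $\CMSO$ is effective, so that the sentences handed to Courcelle's theorem are computable. \cref{lemm_anotopomin_formula} relies on the obstruction set $\Ocal^{\ttw}_k$, which is a priori non-constructive. I would resolve this through \cref{prop_elm_equ_v}: it lets $\ttw\le k$ be expressed via the explicit obstruction $\mathbf{\Gamma}^\circ$ together with the equivalence to $\bog$. Alternatively, we can write directly in $\CMSO$ the definition ``there exists $X'\supseteq X$ whose torso has treewidth at most $k$''. Torso adjacency is $\CMSO$-definable through connectivity in $G-X'$, and bounded treewidth of a $\CMSO$-definable graph is $\CMSO$-definable by excluding its finitely many, computable minor obstructions. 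I would use this direct formulation.
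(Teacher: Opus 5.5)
Your route is the paper's. The paper only remarks that the extended annotated type is $\CMSO$-expressible, so Courcelle's theorem computes it, and that the representative's size does not depend on the treewidth. Your finiteness count, the bound $z+\ell$ on the treewidth of the graphs $\bound{G}^I$, and your direct $\CMSO$ definition of $\ttw(G,X)\le k$ spell out what the paper calls easy. That definition (guess $X'\supseteq X$, define torso adjacency through connected sets of $G-X'$, exclude the computable minor obstructions for treewidth at most $k$) also makes the Courcelle step effective. Your other suggestion via \cref{prop_elm_equ_v} would not work, since functional equivalence with $\bog$ does not define $\ttw\le k$ exactly, so you are right to discard it.

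The gap is in your last step, the hardwired table.
\begin{itemize}
\item \emph{It is not a single algorithm.} The algorithm receives $\ell,d,r,w$ as input, so a table of minimum realizers indexed by $\tau$, over all values of these parameters, is an infinite object.
\item \emph{It cannot be made computable by enumeration.} Enumerating graphs can confirm that a type is realizable, but never that the remaining types are not. So you never know when the table, or the maximum $p$, is complete.
\item \emph{In fact no computable function bounds $p$.} Take $\ell=0$ and $\bar{R}=V(G)^d$. Then the type contains every prenex $\FO$ sentence with at most $d$ quantifiers that holds in $G$. A computable bound on minimum realizers would therefore decide finite satisfiability of $\FO$ on graphs, contradicting Trakhtenbrot's theorem.
\end{itemize}

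The repair is to search at run time, only for the type you actually need.
\begin{enumerate}
\item Compute $\tau$ with Courcelle's theorem, as you do.
\item Enumerate $\ell$-boundaried graphs with $d$-tuples of vertex subsets in order of increasing size.
\item Compute the extended type of each candidate by brute force; torso treewidth and $\DP$ are decidable on a fixed finite graph.
\item Output the first candidate whose extended type is $\tau$.
\end{enumerate}
The search terminates because the input itself realizes $\tau$. The output is a minimum-size realizer, so its size is at most $p(\ell,d,r,w)$. The search time depends only on $\tau$, so the total time is $\mathcal{O}_{\ell,d,r,w,z}(n)$. The search time is even bounded by a computable function of $\ell,d,r,w,z$: some realizer of treewidth at most $z$ exists, and the tree automaton from Courcelle's theorem bounds the size of the smallest such realizer.
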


\subsection{The proof of~\cref{lem_representatives}}
\label{subsec_proof_first_routine}

In order to show~\cref{lem_representatives}, we use~\cref{lem_collapse_to_FODP} to show that we can rewrite every given sentence of $\CMSOtwDP$ to a sentence of $\FODP$, assuming that we work with unbreakable graphs.  After rewriting everything to $\FODP$ (with some controlled blow-up in the quantifier rank, but not in the $\DP$-rank), we can use the model checking algorithm for $\FODP$ by Schirrmacher,  Siebertz,  Stamoulis, Thilikos, and  Vigny~\cite{SchirrmacherSSTV24mode}.

\begin{proposition}[\!\!\cite{SchirrmacherSSTV24mode}]\label{lem_mc_FODP}
  Let $\mathcal{C}_{H}$ be the class of graphs excluding a ﬁxed graph $H$ as a topological minor. Then, there is an algorithm that, given $G\in \mathcal{C}_{H}$ and an \textup{\FODP} formula~$\phi(\bar x)$, finds a $\bar v\in V(G)^{|\bar x|}$ such $G\models\phi(\bar v)$, if such exists, in time $f(\phi)\cdot |V(G)|^3$, where $f$ is a computable function depending on $H$.
\end{proposition}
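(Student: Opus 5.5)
This is the main result of~\cite{SchirrmacherSSTV24mode}. Within this paper we would use it as a black box; if we had to reprove it, we would follow the same template as the present algorithm, with \FODP types in place of \CMSOttwDP types. We first reduce \emph{witness finding} to \emph{deciding}. Run the decision procedure on $\exists \bar x\,\phi(\bar x)$ so that it also records, for each stored type, one concrete witness tuple. Then backtrack through the dynamic programming table to recover $\bar v$. This costs only a factor $\mathcal{O}_{|\phi|}(1)$, because every representative has bounded size and a witness in a representative can be pulled back along the (bounded) replacement maps. So the core task is deciding $G\models\phi$ in time $f(\phi,H)\cdot|G|^3$.

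For deciding, we would compute a strongly $(q(k),k)$-unbreakable regular tree decomposition via \cref{th_strong_unbreakability}, with $k$ chosen as a function of $|\phi|$, the $\DP$-rank, and $|H|$. We then process the decomposition bottom-up, as in \cref{lem_representatives}. At a node $t$, we replace every child cone by a bounded-size representative with the same extended \FODP-type. Compositionality for \FODP (the $\DP$ case of \cref{lem_compositionality}, which uses extended types to control how paths cross adhesions) guarantees that the replaced graph $G'_t$ has the same extended type as $\bound{G}_t$. Moreover, $G'_t$ is still $(q',k)$-unbreakable for a worse $q'$, and it still excludes a topological minor whose size depends only on $|H|$ and the representative sizes. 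Given a procedure that evaluates extended \FODP-types on such unbreakable graphs in cubic time, we enumerate the boundedly many candidate types. Then, exactly as in \cref{lem_computing_folio}, we use \cref{prop_bound_rep_folio} together with bounded-size enumeration to output a small representative. Summing over nodes gives the $|G|^3$ bound.

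The remaining, and main, step is evaluating \FODP on an unbreakable graph $G'$ that excludes $H$ as a topological minor. We split into two cases according to whether $G'$ contains $K_m$ as a minor, with $m$ depending on $|H|$ and $\phi$.

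\emph{Case 1: $G'$ is $K_m$-minor-free.} We use the minor-free machinery of~\cite{GolovachST22model}. By unbreakability, each $\DP_r$ atom can be evaluated by guessing bounded-size separators. Either the linkage can be routed through the large well-connected part, or it is determined inside a small side. This lets us rewrite $\DP$ atoms into $\FOconn$, which collapses to \FO on unbreakable graphs~\cite{PilipczukSSTV22}. We then run the standard \FO model checking on this nowhere dense class~\cite{GroheKS17dec}.

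\emph{Case 2: $G'$ has a large clique minor.} This is where we expect the main difficulty. Here we would use the Grohe--Marx structure of topological-minor-free graphs: after deleting a bounded set $A$ of high-degree vertices (whose size is bounded via unbreakability), the rest has a large clique minor but bounded degree relative to it. A large clique minor that is well linked to the unbreakable part makes $G'-A$ ``highly linked''. Then every $\DP$ atom whose terminals are far from small separators is simply true, and the remaining atoms reduce to bounded-size local linkage questions relative to $A$. We would therefore rewrite $\phi$ into an \FO formula over $G'$ expanded with colours marking $A$ and neighbourhoods of $A$, and evaluate it with the nowhere dense \FO algorithm. Making this rewriting uniform in the boundary and keeping all constants computable is the delicate part. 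For this step we would follow~\cite{SchirrmacherSSTV24mode} closely.
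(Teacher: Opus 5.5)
Citing \cite{SchirrmacherSSTV24mode} is exactly what the paper does: the proposition is imported as a black box and no proof is given. Your optional reproof, however, is not sound, and the failure is in Case~1.

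\textbf{Case~1 claims a collapse that does not hold.} You assert that on unbreakable $K_m$-minor-free graphs each $\DP$ atom can be rewritten into \FOconn, and hence into \FO. This is false; the paper itself remarks that \FODP does not collapse to \FO on unbreakable graphs.

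For a concrete counterexample, take a large $n\times n$ grid. Colour four vertices $a,b,c,d$ at the midpoints of its four sides, once in cyclic order $a,b,c,d$ and once in cyclic order $a,c,b,d$.
\begin{itemize}
\item Both graphs are planar, have maximum degree~$4$, and are $(q',k)$-unbreakable for every $q'\ge ck^2$. So they fall under your Case~1.
\item The sentence asserting $\DP(x_a,x_c,x_b,x_d)$ on the coloured vertices fails in the first graph, because the terminals interleave on the outer face of a planar graph. It holds in the second graph.
\item For large $n$ the two graphs have the same multiset of $r$-neighbourhood types. Hence no \FO sentence of bounded rank separates them, and by the collapse of \cite{PilipczukSSTV22} neither does any bounded \FOconn sentence.
\end{itemize}
Your dichotomy ``route through the well-connected part, or solve inside a small side'' is only valid when there is a large, well-linked clique minor to route through. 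That is the large-clique-minor situation. In the clique-minor-free case, high connectivity does not make linkages routable, because of planarity-type obstructions. That case must be handed to a minor-free \FODP model checker (such as the flat-wall/irrelevant-vertex algorithm of \cite{GolovachST22model}) as a genuine black box, not reduced to \FO.

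\textbf{The witness-finding reduction is too quick.} A vertex of a bounded-size representative has no preimage in the cone it replaced. Equality of extended types only guarantees that \emph{some} vertex of the original cone realises the same lower-rank type, and locating it is again a search problem. This is precisely why \cref{lem_representatives} calls the witness-producing form of \cref{lem_mc_FODP} on unbreakable cones when it builds the sets $R_i'$. A reproof along the template of the present algorithm therefore needs a procedure that \emph{finds} witnesses on the unbreakable parts, not merely one that evaluates sentences. It also needs an accounting showing the total time stays cubic. Your sketch supplies neither.
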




We are now ready to show~\cref{lem_representatives}.
    

\begin{proof}[Proof of~\cref{lem_representatives}]
    We use $(R_1,\ldots,R_d)$ to denote the $d$-tuple $\bar{R}\cap\mathsf{cone}(t)$.
     Our first goal is to compute sets $R_1',\ldots,R_d'\subseteq V(G_t)$ such that \begin{itemize}
        \item for each $i\in[d]$, $R_i'$ is a subset of $R_i$ whose size depends only on $\ell,d,r,$ and $w$, and
        \item $\exttype_{d,r,w}(\bound{G}_t,R_1,\ldots,R_d)= \exttype_{d,r,w}(\bound{G}_t,R_1',\ldots,R_d').$
    \end{itemize}
    The rough idea is to use the fact that, under the assumptions of the lemma and because of~\cref{lem_collapse_to_FODP}, annotated types
    are $\FODP$-definable and therefore can be computed using the $\FODP$ model checking algorithm of~\cite{SchirrmacherSSTV24mode} for topological-minor-free graphs. 

We set $B:=\mathsf{adh}(t)$. We use $G_t$ to denote $G[\mathsf{cone}(t)]$ (which is the underlying graph of $\bound{G}_t$) and for every $I\in \binom{[|B|]}{2}$, we denote by $G_t^I$ the underlying graph of $\bound{G}_t^I$. By the hypothesis of the lemma, $G_t$ is $(q,w+1)$-unbreakable. Note that for every $I\in \binom{[|B|]}{2}$, $G_t^I$ is also $(q,w+1)$-unbreakable, since unbreakability is preserved when adding edges.

   We now show how to compute the sets $R_1',\ldots,R_d'$ inductively. Let $i\in[d]$ and suppose that the $i$th variable is a set variable.
   Assume that we have already computed vertex subsets $R_1',\ldots,R_{i-1}'\subseteq V(G_t)$ such that
   \begin{itemize}
        \item for each $j\in[i-1]$, $R_j'$ is a subset of $R_j$ whose size depends only on $\ell,d,r,$ and $w$, and
        \item $\exttype_{d,r,w}(\bound{G}_t,R_1,\ldots,R_d) = \exttype_{d,r,w}(\bound{G}_t,R_1',\ldots,R_{i-1}',R_i,\ldots,R_d).$
    \end{itemize}
    We fix an $I\in\binom{[|B|]}{2}$. Also, let $\bar{V}:=(V_1,\ldots,V_{i-1})$ be a tuple of $i-1$ subsets of $V(G_t)$ such that for each $j\in[i-1]$, $V_j\subseteq R_j'$ and $\ttw(G_t^I,V_j)\le w$.
    Note that since the size of each~$R_j'$ depends only on $\ell,d,r,$ and $w$, the number of different such tuples depends only on $\ell,d,r,$ and $w$. 

    Let $\Sigma$ be the (colored-graph) vocabulary of $(\bound{G}_t,\bar{R}\cap\mathsf{cone}(t))$.
    We set $\Sigma_i$ to be the colored-graph vocabulary extending $\Sigma$ by extra $i-1$ unary relation symbols $\mathsf{X}_1,\ldots,\mathsf{X}_{i-1}$, which will be interpreted as $V_1,\ldots,V_{i-1}$. 
    Let~$\Psi_i$ be the set of all (up to logical equivalence) formulas $\phi(X_{i})$ of $\CMSOtwDP[\Sigma_i]$, where $X_{i}$ is a free set variable and $\phi(X_{i})$ is of the form
    \begin{align*}
        Q_{i+1} X_{i+1}\ \ldots Q_\alpha X_\alpha\ Q_{\alpha+1} x_{\alpha+1}\ \ldots Q_{\alpha+\beta}x_{\alpha+\beta}\ & \psi(X_{i+1},\ldots,X_\alpha,x_{\alpha+1},\ldots,x_{\alpha+\beta}) \\
        & \land\bigwedge_{j=i+1}^{\alpha}X_j\subseteq R_j \land\bigwedge_{j=\alpha+1}^{\alpha+\beta}x_j\in R_j\\
        & \land \ttw(G_t^I,X_i)\le w,
    \end{align*}
where $\alpha,\beta\in\mathbb{N}$ with $\alpha+\beta=d$ and
\begin{itemize}
    \item $X_{i+1},\ldots,X_\alpha$ are vertex set variables and $x_{\alpha+1},\ldots,x_{\alpha+\beta}$ are vertex variables;
    \item for each $i\in[i+1,\alpha]$, $Q_i\in\{\exists_{\ttw\le w},\forall_{\ttw\le w}\}$, for some $w\in\mathbb{N}$;
    \item for each $i\in[\alpha+1,\alpha+\beta]$, $Q_i\in\{\exists,\forall\}$; and
    \item  $\psi(X_{i},\ldots,X_\alpha,x_{\alpha+1},\ldots,x_{\alpha+\beta})$ is a quantifier-free formula of 
    $\CMSOtwDP[\Sigma\cup\{\mathsf{X}_1,\ldots,\mathsf{X}_{i-1}\}]$.
\end{itemize}
Recall that we have fixed a set $I\in\binom{[|B|]}{2}$ and a tuple $\bar{V}$ of $i-1$ subsets of $V(G_t)$.
We say that two sets $U,U'\subseteq R_i$ are \emph{$(I,\bar{V})$-equivalent}, which we denote by $U\equiv^I_{\bar{V}}U'$ if for every formula $\phi(X_1,\ldots,X_{i})\in \Psi_i$, we have that
\[(G_t^I,\bar{R})\models \phi(V_1,\ldots,V_{i-1},U)\qquad \text{if and only if}\qquad (G_t^I,\bar{R})\models \phi(V_1,\ldots,V_{i-1},U').\]
It is easy to see that $\equiv^I_{\bar{V}}$ defines an equivalence relation and the number of equivalence classes of this relation depends only on $\ell,d,r,$ and $w$.

We next compute a ``representative'' of each equivalence class, i.e.,~a set $Q^I_{\bar{V}}\subseteq R_i$ such that for every $U\subseteq R_i$ with $\ttw(G_t^I,U)\le w$, there is a $U'\subseteq Q^I_{\bar{V}}$ with $\ttw(G_t^I,U')\le w$ such that $U\equiv^I_{\bar{V}}U'$.
First note that, because of~\cref{lem_small_unbreakable}, each set $X\subseteq V(G_t)$ with $\ttw(G_t^I,X)\le w$ has size at most\footnote{\cref{lem_small_unbreakable} assumes that the given graph contains at least $3q$ vertices. This is an assumption that we can have also here, since otherwise $(\bound{G}_t,\bar{R}\cap\mathsf{cone}(t))$ already has bounded size.} $q$. Also, because of~\cref{lem_collapse_to_FODP}, every formula $\phi(X_i)$ in~$\Psi_i$ can be rewritten in $\FODP[\Sigma^+]$ to a formula $\psi(x_1,\ldots,x_q)$ such that the following hold:
\begin{itemize}
    \item if there is a set $X\subseteq R_i$ with $\ttw(G_t^I,X)\le w$ and $(G_t^I,R_{i+1},\ldots,R_d)\models\phi(X)$, then there are vertices $x_1,\ldots,x_q\in R_i$ such that $(G_t^I,R_{i+1},\ldots,R_d)\models\psi(x_1,\ldots,x_q)$ and $X=\{x_1,\ldots,x_q\}$.
    
    \item if there are vertices $x_1,\ldots,x_q\in R_i$ such that $(G_t^I,R_{i+1},\ldots,R_d)\models\psi(x_1,\ldots,x_q)$, then $\ttw(G_t^I,X)\le w$ and $(G_t^I,R_{i+1},\ldots,R_d)\models\phi(X)$, where $X=\{x_1,\ldots,x_q\}$.
\end{itemize}
Therefore, for each such formula $\psi(x_1,\ldots,x_q)$, we use the algorithm of~\cref{lem_mc_FODP} to compute vertices $x_1,\ldots,x_q\in R_i$ such that $(G_t^I,R_{i+1},\ldots,R_d)\models\psi(x_1,\ldots,x_q)$. This can be done in time $\mathcal{O}_{\psi,|H|}(n^3)$ (recall that~$|\psi|$ depends on $\ell,d,r,w,$ and $q$).
We use~$S^{\phi_i}$ to denote the set $\{x_1,\ldots,x_q\}$ computed by this algorithm.
We set~$Q^I_{\bar{V}}$ to be the set $\bigcup\{S^{\phi_i} \mid \phi_i\in \Psi_i\}$. Note that the size of $Q^I_{\bar{V}}$ depends only on $\ell,d,r,w,$ and $q$ and $Q^I_{\bar{V}}$ contains a representative for each equivalence class of $\equiv^I_{\bar{V}}$. More formally, for every $U\subseteq R_i$ with \mbox{$\ttw(G_t^I,U)\le w$}, there is a $U'\subseteq Q^I_{\bar{V}}$ with $\ttw(G_t^I,U')\le w$ such that $U\equiv^I_{\bar{V}}U'$.
We set $Q_i^I:=\bigcup\{Q^I_{\bar{V}} \mid \bar{V}\in 2^{R_1}\times\cdots\times2^{R_{i-1}}\}$.
By construction, the set $Q_i^H$ has size depending only on $\ell,d,r,w,$ and $q$ and satisfies that
\[\type_{d,r,w}(G_t^I,R_1,\ldots,R_d) = \type_{d,r,w}(G_t^I,R_1',\ldots,R_{i-1}',Q_i^I,R_{i+1},\ldots,R_d).\]
We finally set $R_i':=\bigcup\{Q_i^I: I\in\binom{[|B|]}{2}\}$. Observe that $R_i'$
has size depending only on $\ell,d,r,w,$ and $q$ and that
\[\exttype_{d,r,w}(G_t,R_1,\ldots,R_d) = \exttype_{d,r,w}(G_t,R_1',\ldots,R_{i-1}',R_i',R_{i+1},\ldots,R_d).\]

This concludes the inductive step in case the $i$th variable is a set variable. In case it is a vertex variable, a direct application of~\cref{lem_mc_FODP} allows to compute representatives for the equivalence classes, defined as above. In the end of the induction, i.e.,~by applying the above until $i=d$, we get the claimed sets $R_1',\ldots,R_d'$ in time $\mathcal{O}_{|H|,\ell,d,r,w,q}(|\bound{G}_t|^3).$

We finally compute a $(d,r,w)$-representative of $(\bound{G}_t,R_1',\ldots,R_d')$. This is done using the algorithm of~\cref{lem_computing_folio}. This way, in time $\mathcal{O}_{\ell,d,r,w,q}(|\bound{G}_t|^3)$, we compute a $(d,r,w)$-representative $(\widehat{\bound{G}}_t,\widehat{R}_1,\ldots,\widehat{R}_d)$ of $(\bound{G}_t,R_1,\ldots,R_d)$ whose size depends only on $\ell,d,r,w,$ and~$q$.

In order to remove the dependency on $q$, we observe that $\widehat{G}$ has treewidth depending only on  $\ell,d,r,w,$ and $q$ and therefore, using the algorithm of~\cref{lem_rep_tw}, we can find a $(d,r,w)$-representative of $(\widehat{\bound{G}}_t,\widehat{R}_1,\ldots,\widehat{R}_d)$ (and also of $(\bound{G}_t,R_1,\ldots,R_d)$) whose size depends only on $\ell,d,r,$ and $w$.
\end{proof}   
    
\subsection{Unbreakability of the cone}


 \cref{lem_representatives} requires the whole cone under consideration to be unbreakable. We show that unbreakability (with worse bounds) is preserved when the subgraphs corresponding to subtrees of the current node have bounded size.

\begin{lemma}\label{lem_cone_unbreakability}
There is a  function $f_{\mathsf{unbr}}\colon\mathbb{N}^4\to\mathbb{N}$ such that the following holds.
Let $\ell,q,k,c\in\mathbb{N}$, $G$ be a graph, $\mathcal{T}=(T,\mathsf{bag})$ be a regular tree decomposition of $G$ of adhesion at most $\ell$, and $t\in V(T)$ such that:
\begin{itemize}
    \item $t$ has the strong $(q,k)$-unbreakability property;
    \item for every two distinct children $t_1,t_2$ of $t$, we have $\mathsf{adh}(t_1)\neq \mathsf{adh}(t_2)$; and
    \item  for every child $t'$ of $t$, it holds that $|\mathsf{cone}(t')|\le c$.
\end{itemize}
Then, $G[\mathsf{cone}(t)]$ is $(f_{\mathsf{unbr}}(\ell,c,q,k),k)$-unbreakable.
\end{lemma}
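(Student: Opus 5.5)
Fix a separation $(A,B)$ of $G_t:=G[\cone(t)]$ of order at most $k$. The plan is to project it onto the bag, use the unbreakability of $\bag(t)$ to put almost all of the bag on one side, and then bound how much of the cone can still lie on the other side. First observe that $(A,B)$ is also a separation of $G[\cone(t)]$ in the sense needed for the strong unbreakability of $t$: $\bag(t)$ is $(q,k)$-unbreakable in $G[\cone(t)]$. Hence $|A\cap\bag(t)|\le q$ or $|B\cap\bag(t)|\le q$. By symmetry assume $|A\cap\bag(t)|\le q$. It then suffices to show that $|A|$ is bounded by a function of $\ell,c,q,k$, which gives the unbreakability with $f_{\mathsf{unbr}}$ equal to that bound.

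To bound $|A|$, I would write $\cone(t)=\bag(t)\cup\bigcup_{t'\in\children(t)}\cmp(t')$, where the union of components is disjoint. The part $A\cap\bag(t)$ has size at most $q$. For a child $t'$, call it \emph{touched} if $\cmp(t')\cap A\setminus B\neq\emptyset$. I split the touched children into two kinds:
\begin{itemize}
\item \emph{cut} children, with $\cmp(t')\cap A\cap B\neq\emptyset$;
\item \emph{clean} children, with $\cmp(t')\cap B=\emptyset$, so $\cmp(t')\subseteq A\setminus B$.
\end{itemize}
This is a complete case split, since by regularity $G[\cmp(t')]$ is connected, so a touched component either meets the separator or lies entirely in $A\setminus B$. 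The components are pairwise disjoint and $|A\cap B|\le k$, so there are at most $k$ cut children, contributing at most $kc$ vertices because $|\cone(t')|\le c$.

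For a clean child $t'$, regularity says every vertex of $\adh(t')$ has a neighbor in $\cmp(t')\subseteq A\setminus B$. Since there are no edges from $A\setminus B$ to $B\setminus A$, we get $\adh(t')\subseteq A\cap\bag(t)$. Now use the hypothesis that distinct children have distinct adhesions. The adhesion of a clean child is a subset of size at most $\ell$ of the set $A\cap\bag(t)$, which has at most $q$ elements. Therefore the number of clean children is at most $\sum_{i\le \ell}\binom{q}{i}\le (q+1)^\ell$. Each contributes at most $c$ vertices.

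Altogether $|A|\le q+kc+c(q+1)^\ell$, so we can set $f_{\mathsf{unbr}}(\ell,c,q,k):=q+kc+c(q+1)^\ell$. The main subtlety I expect is the clean-child step. There one must check that the adhesion really lands inside $A\cap\bag(t)$ and not merely inside $A$; this holds because $\adh(t')\subseteq\bag(t)$ by definition. One must also handle possibly empty adhesions: the empty adhesion occurs for at most one child, which is covered by the count. The remaining cases are routine counting.
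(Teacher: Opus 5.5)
Your argument is the paper's proof: your clean and cut children are its $A$-rich and $A$-poor children, bounded the same way (regularity plus distinct adhesions gives at most $\sum_{i\le\ell}\binom{q}{i}$ clean children, and connectivity of $G[\cmp(t')]$ gives at most $k$ cut children), yielding the same bound $q+c\cdot\bigl(\sum_{i\le\ell}\binom{q}{i}+k\bigr)$. One small bookkeeping slip: children whose component meets $A$ only inside $A\cap B$ are not ``touched'', so their vertices of $A$ are never counted; define the cut children simply as those with $\cmp(t')\cap A\cap B\neq\emptyset$ (the paper's condition that $\cmp(t')$ meets both $A$ and $B$ has this effect), and the count stays at most $k$ because the components are pairwise disjoint.
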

\begin{proof}
    We set $f_{\mathsf{unbr}}(\ell,c,q,k):= c\cdot\left(\binom{q}{\le \ell}+k\right) +q$.
    We use $G_t$ to denote $G[\mathsf{cone}(t)]$.
    Consider a separation $(A,B)$
    of $G_t$ of order at most $k$ and assume, without loss of generality, that $|A\cap \mathsf{bag}(t)|\le q$.
    We will show that $|A|\le f_{\mathsf{unbr}}(\ell,c,q,k)$.
    
    Given a child $t'$ of $t$, we say that it is \emph{$A$-rich} if $\mathsf{comp}(t')\subseteq A\setminus B$, and that it is \emph{$A$-poor} if $\mathsf{comp}(t')$ intersects both $A$ and $B$. Notice that if $A$ intersects the component of some child $t'$ of $t$, then $t'$ is either $A$-rich or $A$-poor.
    Our aim is to upper-bound the number of children of~$t$ that are $A$-rich or $A$-poor.
    
    We first bound the number of $A$-rich children. We claim that if a child $t'$ of $t$ is $A$-rich, then $\mathsf{adh}(t')\subseteq A$. To see this, note that if there is a vertex $v$ in $\mathsf{adh}(t')$ that is not in $A$, then it should belong to $B$. Also, because of regularity of $\mathcal{T}$, there should be a vertex $u\in\mathsf{comp}(t')$ that is adjacent to $v$. Since $t'$ is $A$-rich, $u$ belongs to $A\setminus B$ and therefore the existence of the edge $\{u,v\}$ contradicts the fact that $(A,B)$ is a separation of $G_t$. Therefore, we get that if $t'$ is $A$-rich, then $\mathsf{adh}(t')\subseteq A$. Also, because of the second property of $t$ in the statement, the  number of different children of $t$ whose adhesion is a subset of $A$ is upper-bounded by the number of different subsets of $A\cap\mathsf{bag}(t)$ of size at most $\ell$. Since $|A\cap \mathsf{bag}(t)|\le q$, we get that the number of $A$-rich children of $t$ is at most~$\binom{q}{\le \ell}$.

    For $A$-poor children of $t$, we argue as follows. We claim that if $t'$ is an $A$-poor child of $t$, then $\mathsf{comp}(t')$ contains a vertex of $A\cap B$. To see this, first note that because $\mathcal{T}$ is regular, the graph $G[\mathsf{comp}(t')]$ is connected. Therefore, if there is no vertex in $\mathsf{comp}(t')$ that is in both~$A$ and~$B$, there is an edge $\{u,v\}$ in $G[\mathsf{comp}(t')]$ such that $u\in A\setminus B$ and $v\in B\setminus A$ (these two vertices exist since $t'$ is $A$-poor), contradicting the fact that $(A,B)$ is a separation of $G_t$.
    Thus, we have that the number of $A$-poor children is at most $|A\cap B|$, which is upper-bounded by $k$.

    To obtain the upper bound on the size of $A$, we observe that the number of children of~$t$ whose components intersect $A$ is at most $\binom{q}{\le \ell}+k$. Since the component of every child of $t$ has at most $c$ vertices, we get that \[|A\setminus \mathsf{bag}(t)|\le c\cdot\left(\binom{q}{\le \ell}+k\right).\]
    Hence, $|A|\le c\cdot\left(\binom{q}{\le \ell}+k\right) +q  = f_{\mathsf{unbr}}(\ell,c,q,k)$.
\end{proof}

\section{The model checking algorithm}
\label{sec_mc}

In this section, we present our model checking algorithm, i.e.,~the proof of~\cref{th_our_g_res_tp_bog}.
Before presenting the algorithm, we show that when replacing a part of the graph by a representative, the combinatorial assumptions on the graph and the tree decomposition are preserved. This is used to show the correctness of our dynamic-programming procedure. 

\subsection{Preserving properties after gadget replacement}

In order to show that the combinatorial assumptions on the graph and the tree decomposition are preserved, we need to derive certain properties from the extended annotated type of a given boundaried graph. In particular, it is easy to observe that for sufficiently large values of $d$ and $r$ (depending on $\delta$ and $\ell$), the extended annotated $(d,r,w)$-type of an $\ell$-boundaried graph $\bound{G}$ can encode the graph induced by $B(\bound{G})$, the extended $\delta$-folio of $\bound{G}$, and whether $G\setminus B(\bound{G})$ is connected. This follows from the fact that all the above can be expressed in \FODP and is materialized in the following statement.

\begin{observation}\label{obs_representative_properties}
    There are two functions $g,g'\colon \mathbb{N}^2\to\mathbb{N}$ such that the following holds.
    Let $\ell,\delta,w\in\mathbb{N}$ and set $d:=g(\ell,\delta)$ and $r:=g'(\ell,\delta)$.
    Let $\mathbf{G},\mathbf{G}'$ be two $\ell$-boundaried graphs and let $\bar{R}$ and $\bar{R}'$ be two $d$-tuples of subsets of $V(G)$ and $V(G')$, respectively. If $\exttype_{d,r,w}(\mathbf{G},\bar{R}) = \exttype_{d,r,w}(\mathbf{G}',\bar{R}')$, then
    \begin{itemize}
    \item $(\mathbf{G}',\bar{R}')$ is compatible with $(\mathbf{G},\bar{R})$;
    \item the extended $\delta$-folios of $\mathbf{G}$ and $\mathbf{G}'$ are the same; and
    \item $G'\setminus B(\bound{G}')$ is connected if and only if $G\setminus B(\bound{G})$ is connected.
\end{itemize}
\end{observation}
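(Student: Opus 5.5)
The plan is to show that each of the three properties can be stated by a fixed family of sentences lying inside the extended annotated $(d,r,w)$-type. Equality of the extended types then transfers them directly. The key observation is that every such property is expressible in \FODP, over the signature with one color per boundary index, by sentences whose quantifier rank and $\DP$-rank are bounded in terms of $\ell$ and $\delta$ only. Since \FODP sentences are in particular $\CMSOttwDP$ sentences with $\ttw$-rank $0$, they occur in the type for every $w$. We therefore choose $g(\ell,\delta)$ and $g'(\ell,\delta)$ to be the maxima of these ranks.

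First, compatibility. For each pair $i,j\in[\ell]$, the type of $\mathbf{G}^{\emptyset}$ contains or excludes the sentence $\exists x\exists y\,(C_i(x)\wedge C_j(y)\wedge E(x,y))$, where $C_i$ is the color of the $i$th boundary vertex. This fixes $G[B]$ up to the labelling $\rho$.

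For membership of boundary vertices in the sets $R_m$, we use sentences whose $m$th variable is a first-order variable and which assert that this variable has color $C_i$. Such a sentence is satisfied exactly when $\rho^{-1}(i)\in R_m$. If the $m$th position is meant to be a set variable, we instead use a set quantifier with $\ttw$-rank $0$ applied to the singleton. Some care with the convention for interpreting the $m$th variable is needed here, but it is routine.

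Second, the extended folios. We fix $I$ and a candidate $\mathbf{G}''$ with $\detail(\mathbf{G}'')\le\delta$. The statement $\mathbf{G}''\pretp \mathbf{G}^I$ can be written as an \FODP sentence in the same way as in the proof of \cref{lemm_anotopomin_formula}.

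1. The boundary branch vertices are identified through their colors $C_i$.
2. The at most $\delta$ non-boundary branch vertices are quantified existentially.
3. Each subdivided edge is witnessed by a $\DP$ atom. For this, we guess neighbours of the endpoints and add trivial paths from every branch vertex to itself, so that the paths avoid all branch vertices. This requires $\DP$-rank $O(\delta+\ell)$.
4. Annotation is not an issue, because $X=\emptyset$ for plain boundaried graphs.

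By \cref{prop_number_folios}, only finitely many candidates $\mathbf{G}''$ exist for fixed $\ell$ and $\delta$. Hence the type of each $\mathbf{G}^I$ determines $\delta\folio(\mathbf{G}^I)$, and the extended type determines the extended $\delta$-folio.

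Third, connectivity. We take $I=\emptyset$. The graph $G\setminus B$ is connected iff for all $x,y$ not carrying any boundary color, $\DP_{\ell+1}(x,y,b_1,b_1,\dots,b_\ell,b_\ell)$ holds. Here $b_i$ is quantified as the vertex of color $C_i$; the trivial paths $b_i$–$b_i$ force the $x$–$y$ path to avoid $B$. This sentence has quantifier rank $\ell+2$ and $\DP$-rank $\ell+1$.

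The only delicate point is the folio encoding: we must check that the required rank bounds are independent of $w$, and that the way the type interprets variables inside the $R_m$ does not restrict these sentences. The latter is handled by also quantifying these variables through fresh positions, or by taking $R_m=V(G)$ in the relevant positions. The remaining steps are straightforward.
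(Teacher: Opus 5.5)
Your strategy is the paper's. The paper justifies this observation with a single remark: that $G[B]$, the extended $\delta$-folio and the connectivity of $G\setminus B$ are \FODP-expressible with ranks depending only on $\ell$ and $\delta$. Your three encodings (boundary-colour sentences, topological-minor sentences modelled on \cref{lemm_anotopomin_formula}, and the $\DP_{\ell+1}$ sentence with trivial paths $b_i$--$b_i$) make that remark precise.

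The genuine gap is the step you call routine, namely the role of $\bar R$. In $\type_{d,r,w}(\bound{G}^I,\bar R)$ the $i$th variable of \emph{every} sentence is confined to $R_i$, for every $i\le d$. So there are no unrestricted ``fresh positions''. And since $\bar R,\bar R'$ are arbitrary in the statement, you are not free to take $R_m=V(G)$.

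In fact the observation as stated fails for arbitrary tuples, whatever $g,g'$ are. Take $\bar R=\bar R'=(\emptyset,\ldots,\emptyset)$. Then every first-order quantifier ranges over $\emptyset$ and every set quantifier over $\{\emptyset\}$, so the extended type carries no information about the graph at all. Two $2$-boundaried graphs that differ only in whether the two boundary vertices are adjacent then have equal extended types but are not compatible. Likewise, a boundary vertex together with one versus two disjoint copies of $K_4$ violates the second bullet (for $\delta\ge 12$) and the third.

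Only your test for $R_m\cap B$ works for arbitrary tuples. The sentence $\exists_{\ttw\le 0}X_1\cdots\exists_{\ttw\le 0}X_{m-1}\,\exists x_m\,C_i(x_m)$ holds iff $\rho^{-1}(i)\in R_m$, because $\emptyset$ is always an admissible value of the dummy set variables.

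Everything else in your argument needs an explicit additional hypothesis. For instance, assume $\bar R=V(G)^d$ and $\bar R'=V(G')^d$, or more generally that both tuples are full on a common block of positions long enough for your sentences. Padding with dummy $\exists_{\ttw\le 0}$ set quantifiers moves the first-order variables into that block, and then your encodings prove all three bullets.

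The paper's one-line justification silently makes the same assumption. You should state it as a hypothesis rather than claim the restriction to $\bar R$ is harmless. Note also that it must then be checked wherever the observation is applied: in the paper's dynamic programming, the tuples are replaced by small sets inside the proof of \cref{lem_representatives}.
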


\subparagraph{Preserving regularity and unbreakability.}
Recall that, given a graph $G$, a tree decomposition $\mathcal{T}=(T,\mathsf{bag})$, and a node $t\in V(T)$,
$\bound{G}_t$ denotes the boundaried graph $(G[\cone(t)],\adh(t),\rho)$, for some $\rho\colon \adh(t)\to|\adh(t)|$. We also use $G_t$ to denote the graph $G[\mathsf{cone}(t)]$.
We next show that in a tree decomposition of adhesion~$\ell$, if we replace $G_t$ with a graph with the same $\ell$-folio that also preserves the connectivity of the component, we have that regularity and strong unbreakability are preserved in the obtained tree decomposition.

\begin{lemma}\label{obs_maintain_regularity}
    Let $\ell\in\mathbb{N}$.
    Let $G$ be a graph and let $\mathcal{T}=(T,\mathsf{bag})$ be a tree decomposition of $G$ of adhesion~$\ell$.
    Let~$t$ be a node in $V(T)$.
    Let~$\bound{G}'$ be a boundaried graph such that~$\bound{G}_t$ and~$\bound{G}'$ have the same $\ell$-folio and  $G'\setminus B(\bound{G}')$ is connected if and only if $G_t\setminus B(\bound{G}_t)$ is connected.
    Also, let $\hat{G}$ be the graph obtained from $G$ by replacing $G[\mathsf{cone}(t)]$ with $G'$.
    Let $\hat{\mathcal{T}}$ be the tree decomposition of $\hat{G}$ obtained by $\mathcal{T}$ after removing all (strict) descendants of $t$ and setting $\mathsf{bag}(t)=V(G')$.
    Then, the following properties hold:
    \begin{itemize}
        \item $\hat{\mathcal{T}}$ has the same adhesion as $\mathcal{T}$;
        \item if $\mathcal{T}$ is regular, then $\hat{\mathcal{T}}$ is also regular; and
        \item for every node $x\neq t$ of $\hat{\mathcal{T}}$, it holds that if $x$ has the strong $(q,k)$-unbreakability property in $\mathcal{T}$, then it also has it in $\hat{\mathcal{T}}$.
    \end{itemize} 
\end{lemma}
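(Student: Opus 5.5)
The plan is to check the three properties one at a time, after making the structure of $\hat{\mathcal{T}}$ explicit. Let $B=\adh(t)$. Since $\bound{G}_t$ and $\bound{G}'$ have the same $\ell$-folio, they are compatible; in particular $B(\bound{G}')$ is identified with $B$ and $G'[B]=G_t[B]$. Hence $\hat G=(G-\cmp(t))\cup G'$, glued along $B$. Every edge of $\hat G$ either lies in $G'$, and so is covered by the new bag $\bag(t)=V(G')$, or lies in $G-\cmp(t)$, and so is covered by a bag of $\mathcal{T}$ that is not a strict descendant of $t$. Vertex connectivity is preserved as well. A vertex of $V(G')\setminus B$ occurs only in the bag of $t$. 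A vertex of $B$ occurs in $\bag(t)$ and in $\bag(\parent(t))$, as before. Finally, every vertex of $G$ outside $\cone(t)$ occurred only in bags outside the subtree of $t$. So $\hat{\mathcal{T}}$ is a tree decomposition.

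For the adhesion, note that the adhesion of $t$ is still $V(G')\cap\bag(\parent(t))=B$. The adhesions of all other surviving nodes are unchanged, since their bags and parents are unchanged. The set of adhesions of $\hat{\mathcal{T}}$ is therefore a subset of that of $\mathcal{T}$, and contains $\adh(t)$. If one insists on equality of the maximum, the removed descendants can only decrease it. I would phrase the claim as ``adhesion at most $\ell$'', which is what is needed later.

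For regularity, the only node whose margin, component or adhesion changes is $t$. The cones of ancestors of $t$ also change, so I would also check that $\cmp(x)$ stays connected for each ancestor $x$. For $t$ itself, there are three conditions.
\begin{itemize}
\item Nonempty margin. Since $\mrg(t)\neq\emptyset$ in $G$, the folio of $\bound{G}_t$ contains a boundaried graph with a non-boundary vertex (detail $\ge 1$, which is at most $\ell$). The same element lies in the folio of $\bound{G}'$, so $V(G')\setminus B\neq\emptyset$.
\item Connectivity of $\cmp(t)=V(G')\setminus B$. This is exactly the connectivity hypothesis.
\item Every $b\in B$ has a neighbour in $V(G')\setminus B$. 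In $G_t$ such a neighbour exists, so the folio contains the boundaried graph consisting of $B$, one extra vertex, and the edge to $b$. Its detail is at most $2\le\ell$, assuming $\ell\ge 2$; the small cases are handled trivially. This element transfers to $\bound{G}'$.
\end{itemize}
For an ancestor $x$, $\cmp(x)$ in $\hat G$ is obtained from the connected set $\cmp(x)$ in $G$ by replacing $\cmp(t)$ with $V(G')\setminus B$. The latter is connected and attached to every vertex of $B$, so the result stays connected. Neither $\mrg(x)$ nor $\adh(x)$ changes.

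Unbreakability is the main obstacle. Fix $x\neq t$ that is strongly $(q,k)$-unbreakable in $\mathcal{T}$. There are two cases.
\begin{itemize}
\item If $x$ is not an ancestor of $t$, then $\cone(x)$ and $\bag(x)$ are untouched and nothing changes.
\item If $x$ is an ancestor of $t$, then $\hat G[\cone(x)]$ is obtained from $G[\cone(x)]$ by replacing $G_t$ with $G'$ along the boundary $B$, and $\bag(x)\cap\cone(t)\subseteq B$.
\end{itemize}
In the second case, I would take a separation $(\hat A,\hat C)$ of $\hat G[\cone(x)]$ of order at most $k$ with $|\hat A\cap\bag(x)|>q$ and $|\hat C\cap\bag(x)|>q$, and transfer it back to a separation of $G[\cone(x)]$ of no larger order that splits $\bag(x)$ in the same way, giving a contradiction. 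Write $(\hat A,\hat C)$ restricted to $B$ as a partition $B_A$, $B_C$, $B_S$ according to sides and separator. The part of the separator lying in $V(G')\setminus B$ separates, inside $G'$, the vertices of $B_A$ from those of $B_C$ (after deleting $B_S$). By Menger's theorem, this amounts to the nonexistence of $s+1$ disjoint paths, where $s$ is the number of separator vertices in $V(G')\setminus B$. That is a topological-minor property of $\bound{G}'$ of detail bounded by a function of $\ell$ and $k$. It transfers to $\bound{G}_t$ via the folio and yields an equally small separator inside $G_t$. Gluing this with $(\hat A,\hat C)$ outside $\cone(t)$ gives the required separation of $G[\cone(x)]$.

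The delicate point is that the paper only assumes the same $\ell$-folio, while this Menger argument needs detail roughly $\ell+k$. I would handle it by noting that $k\le\ell$ holds in the intended application (adhesion $q(k)\ge k$), or by strengthening the hypothesis to an $(\ell+k)$-folio via \cref{obs_representative_properties}. I would try the first route first.
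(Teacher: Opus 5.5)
Your handling of the decomposition, the adhesion bound, and regularity (including connectivity of $\cmp(x)$ for ancestors $x$ of $t$) is sound, and more explicit than the paper, which just says these follow from the construction. Your unbreakability argument takes the same route as the paper: pull a violating separation of $\hat G[\cone(x)]$ back to $G[\cone(x)]$ through the folio. The gap is the quantitative point you flag yourself. Your Menger transfer must certify that $\bound{G}_t$ has no $s+1$ internally disjoint $B_A$--$B_C$ paths through its non-boundary part. The corresponding folio element has $s+1$ non-boundary vertices and $2(s+1)$ edges, with $s$ as large as $k$; boundary vertices do not count toward detail. So you need detail about $2k+2$, which is unrelated to $\ell$. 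Your first fix does not work: $k\le\ell$ does not bring $2k+2$, or your own estimate $\ell+k$, below $\ell$. Your second fix proves a different lemma with a stronger hypothesis.

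The missing idea is the one behind the paper's remark that the separator may be assumed to meet $\cone(t)$ in few vertices, because otherwise that part can be replaced by $\adh(t)$. Use the boundary to cap $s$. Put $D:=V(G')\setminus B$, and suppose $s\ge|B_C|$. Take $A^\ast:=(\hat A\setminus D)\cup\cmp(t)\cup B_C$ and $C^\ast:=\hat C\setminus D$.
\begin{itemize}
\item This is a separation of $G[\cone(x)]$. Edges avoiding $\cmp(t)$ are edges of $\hat G$. Every neighbour of $\cmp(t)$ outside it lies in $B$, and $B\cap(C^\ast\setminus A^\ast)=\emptyset$.
\item Its order is $|\hat A\cap\hat C|-s+|B_C|\le k$.
\item Since $\bag(x)\cap\cmp(t)=\emptyset$, neither $|A^\ast\cap\bag(x)|$ nor $|C^\ast\cap\bag(x)|$ decreases.
\end{itemize}
The case $s\ge|B_A|$ is symmetric, and neither case needs a folio. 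In the remaining case, $s+1\le\min(|B_A|,|B_C|)\le\ell/2$, so your Menger witness has at most $\ell/2$ connector vertices and at most $\ell$ connecting edges. It therefore lies in the $\ell$-folio assumed in the statement. With this case split your argument proves the lemma as stated.
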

\begin{proof}
    The adhesion and regularity properties follow directly from the construction of $\hat{\mathcal{T}}$. For the unbreakability, we argue as follows.

    Let $x$ be a node of $\hat{\mathcal{T}}$ that is not $t$ and let $G_x$ (resp.~$\hat{G}_x$) be the graph induced by the cone of $x$ in $\mathcal{T}$ (resp.~$\hat{\mathcal{T}}$).
    Note that, by definition of $\hat{G}$ and $\hat{\mathcal{T}}$, $\mathsf{bag}(x)$ is the same in both $\mathcal{T}$ and $\hat{\mathcal{T}}$.
    Our goal is to show that if $\mathsf{bag}(x)$ is $(q,k)$-unbreakable in $G_x$, then it is also $(q,k)$-unbreakable in $\hat{G}_x$.

    Assume towards a contradiction that there is a separation $(A,B)$ of $\hat{G}_x$ of order at most $k$ such that $|A\cap \mathsf{bag}(x)|>q$ and $|B\cap \mathsf{bag}(x)|>q$.
    Then, consider a minimum-order separation $(A^\star,B^\star)$ of $G_x$ with $A\cap \mathsf{bag}(x)\subseteq A^\star$ and $B\cap\mathsf{bag}(x)\subseteq B^\star$.
    
    We claim that $|A^\star\cap B^\star|\le k$. To see this, observe that $A^\star\cap B^\star$ intersects at most $\ell$ vertices of $G[\mathsf{cone}(t)]$, since otherwise we could replace the part of $A^\star\cap B^\star$ in $\mathsf{cone}(t)$ with $\mathsf{adh}(t)$ and still get a separation with the claimed properties and of smaller order. Also, recall that in $\hat{G}_x$, the graph $G[\mathsf{cone}(t)]$ is replaced with a graph of the same $\ell$-folio. Therefore, the size of the part of $A^\star\cap B^\star$ that is disjoint from $\mathsf{bag}(x)$ should be at most the size of $(A\cap B)\setminus \mathsf{bag}(x)$. Therefore, $|A^\star\cap B^\star|\le k$.
     
    Hence, for the separation $(A^\star,B^\star)$ of $G_x$, we have that $|A^\star\cap B^\star|\le k$ and $|A^\star\cap \mathsf{bag}(x)|\ge |A\cap\mathsf{bag}(x)|>q$ and $|B^\star\cap \mathsf{bag}(x)|\ge |B\cap\mathsf{bag}(x)|>q$, which contradicts the fact that $\mathsf{bag}(x)$ is $(q,k)$-unbreakable in $G_x$.
\end{proof}
\subparagraph{Preserving topological-minor-freeness.}
We also need to show that when replacing with a representative of the same extended folio, topological-minor-freeness is preserved. A similar version of this result was shown in~\cite{SchirrmacherSSTV24mode} but we include the proof here for completeness.

\begin{lemma}[\!\!\cite{SchirrmacherSSTV24mode}]\label{lem_excluded_top_minor}
    Let $c,h\in\mathbb{N}$.
    Let $G$ be a graph and let $\mathcal{T}$ be a tree decomposition of~$G$.
    Let $t$ be a non-root node in $V(T)$.
    Let~$\bound{G}'$ be a boundaried graph of order at most $c$ such that~$\bound{G}_t$ and~$\bound{G}'$ are $h$-equivalent and let~$\hat{G}$ be the graph obtained from $G$ by replacing $G[\mathsf{cone}(t)]$ with $G'$.
    Then, if $G$ excludes a graph $H$ of size at most $h$ as a topological minor, then the same holds for $\hat{G}$.
\end{lemma}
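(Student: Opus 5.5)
We argue by contraposition: assuming $\hat{G}$ contains $H$ as a topological minor, we show that $G$ does too. Write $\hat{G}=\bound{C}\oplus\bound{G}'$ and $G=\bound{C}\oplus\bound{G}_t$. Here $\bound{C}$ is the boundaried graph obtained from $G$ by deleting $\cmp(t)$, with boundary $\adh(t)$, labelled as in $\bound{G}_t$. The $h$-equivalence hypothesis includes compatibility, so this gluing is well defined and both sides share the boundary $B:=\adh(t)$.

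Fix a topological minor model $\eta$ of $H$ in $\hat{G}$ and view it as an \abtm-pair $(\mathbf{M},T)$ in $\hat{G}$, where $T$ consists of the principal vertices together with $B$. We split $\mathbf{M}$ along $B$. Let $\mathbf{M}_C$ be the part lying in $\bound{C}$ and $\mathbf{M}'$ the part lying in $\bound{G}'$. Let $I\subseteq\binom{[|B|]}{2}$ record the pairs of boundary vertices joined by a subpath of some $\eta(e)$ whose internal vertices lie in $V(C)\setminus B$; this is the same encoding used in the proof of \cref{lem_repl_compos}. Dissolving in $\mathbf{M}'$ all non-principal, non-boundary vertices gives a boundaried graph $\mathbf{D}$ with $\mathbf{D}\pretp\bound{G}'^{I}$, or even $\mathbf{D}\pretp \bound{G}'$ if we keep the $C$-side routing separate.

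The key step, and the only delicate one, is bounding $\detail(\mathbf{D})$ by $h$. The non-boundary vertices of $\mathbf{D}$ are principal vertices of $\eta$, so there are at most $|V(H)|$ of them. Its edges come from edges of $H$ or from segments of paths $\eta(e)$ between consecutive boundary or principal vertices. Since the paths are internally disjoint, each boundary vertex lies on at most one path unless it is principal, so the number of segments is at most $|E(H)|+O(|B|)$. This naive count exceeds $h$ once $|B|$ is included. To avoid this, I would pass to a minimal model: dissolve every non-principal boundary vertex of degree $2$ in $\mathbf{D}$ whenever that is permitted. That is not allowed, however, because boundary vertices must be branch vertices. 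So I would instead define $\mathbf{D}$ as the boundaried graph whose boundary is only $B\cap V(\mathbf{M})$, and use the fact that $h$-equivalence (via the extended folio) is stated for the full boundary. Alternatively, I would note that in the intended application $h$ is chosen at least $|E(H)|+|V(H)|+\binom{\ell}{2}$, and take $h$-equivalence in that sense; the detail then counts only the edges and non-boundary vertices. Each edge of $\mathbf{D}$ corresponds to a distinct segment of $\eta$, and the segments number at most $|E(H)|$ plus the number of boundary vertices used as interior path vertices. The latter are matched with edges of $H$ that cross $B$, so the total stays $O(h)$ after adjusting $h$. I expect this counting to be the main obstacle, and I would resolve it by choosing the detail bound generously, as the paper's later use allows.

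Given $\detail(\mathbf{D})\le h$, we have $\mathbf{D}\in h\folio(\bound{G}'^{I})=h\folio(\bound{G}_t^{I})$ by $h$-equivalence. So there is an \abtm-pair in $\bound{G}_t^{I}$ dissolving to $\mathbf{D}$. Any edge of $I$ that this pair uses is realised in $G$ by the corresponding boundary-to-boundary path of $\mathbf{M}_C$ through $C$. Those paths are internally disjoint from each other and from everything inside $\cone(t)$, since $\mathbf{M}_C$ uses only $V(C)\setminus B$ internally. Gluing $\mathbf{M}_C$ with the pair in $\bound{G}_t$ then yields internally disjoint paths between the images of the principal vertices, that is, a topological minor model of $H$ in $G$, which is a contradiction. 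The bound $c$ on the order of $\bound{G}'$ plays no role beyond ensuring that the replacement is well defined.
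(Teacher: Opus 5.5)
Your argument is the paper's argument. Your $I$ is the paper's set $F$ of adhesion pairs linked outside $G'$ (encoded as the graph $Q$), and your $\mathbf{D}$ plays the role of the graph $J$ modelled by $K'=\hat M[V(G')]+Q$. Transferring through the equal $h$-folios of $\mathbf{G}'^{I}$ and $\mathbf{G}_t^{I}$ and then gluing with the rest of the model is exactly the paper's last step.

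Two set-up details need fixing. First, $\preceq$ on boundaried graphs includes compatibility, so $\mathbf{D}$ must contain all edges of $G'^{I}[B]$, in particular the $I$-edges (as $K'$ does). Your variant with $\mathbf{G}'$ in place of $\mathbf{G}'^{I}$ is therefore not available. Second, when gluing, the transferred model may realise an $I$-pair inside $\mathsf{cone}(t)$ while $\mathbf{M}_C$ still holds the route through $C$. So either keep from $\mathbf{M}_C$ only what is not encoded by $I$, or claim only that the union \emph{contains} a subdivision of $H$.

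The step you single out is a real obstruction, and the paper's proof has it too: it simply asserts that $J$ has size at most $h$. Every adhesion vertex must be a branch vertex, so one model path can be cut into many segments. By compatibility, $\mathbf{D}$ also carries all of $G'^{I}[B]$. Hence $\mathsf{detail}(\mathbf{D})$ is bounded only in terms of both $|H|$ and $|B|$.

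Under the paper's definitions the lemma is in fact false when $h$ depends on $|H|$ alone. If $B$ induces more than $h$ edges, every boundaried graph compatible with $\mathbf{G}_t^{I}$ has detail exceeding $h$, so all $h$-folios are empty. For example, take a root and a child $t$ both with bag $V(K_7)$, $G=K_7$, and $G'$ equal to $K_7$ plus a pendant vertex, with boundary $V(K_7)$. These are $h$-equivalent for $h=15\ge |V(K_{1,7})|+|E(K_{1,7})|$. Yet $G$ excludes $K_{1,7}$ while $\hat G=G'$ contains it.

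So your main repair is the right one: require equivalence for a detail bound depending on $|H|$ and $|\mathsf{adh}(t)|$. This is harmless in the application, where $\delta$ is chosen after $\ell$. Your other alternative, shrinking the boundary to $B\cap V(\mathbf{M})$, does not work, because folios are taken with respect to the full boundary.

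Fix the count as well. Boundary vertices inside model paths are not matched with edges of $H$ crossing $B$, since one path may cross $B$ many times. Simply bound their number by $|B|$, which gives at most $|E(H)|+|B|$ segments. Then a value such as $|V(H)|+2|E(H)|+2|B|+\binom{|B|}{2}$ is safe. Your $|V(H)|+|E(H)|+\binom{\ell}{2}$ can fall short, because up to $2|E(H)|$ segments may end at principal vertices in $V(G')\setminus B$.
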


\begin{proof}
    Suppose, towards a contradiction, that there is a topological minor model $\hat{M}$ of $H$ in~$\hat{G}$. We say that a pair $\{v,u\}$ of vertices from $\mathsf{adh}(t)$ is \emph{connected outside $G'$ in $\hat{M}$} if there is a $u$-$v$-path in $\hat{M}$ whose internal vertices are neither in $V(G')$ nor principal vertices of $\hat{M}$.

    Let $F$ be the set of all pairs $\{v,u\}$ of vertices from $\mathsf{adh}(t)$ that are connected outside~$G'$ in~$\hat{M}$. 
    We consider the graph $Q=(R,F)$ and note that since $\bound{G}_t$ and $\bound{G}'$ have the same extended $h$-folio, $G[\mathsf{cone}(t)]+Q$ and $G'+Q$ have the same $h$-folio. Now set $K':=\hat{M}[V(G')]+Q$ and observe that $K'$ is a topological minor model of a graph $J$ of size at most $h$. Since $G[\mathsf{cone}(t)]+Q$ and $G'+Q$ have the same $h$-folio, there is also a topological minor model $K$ of the same graph $J$ in $G$. Therefore, the subgraph of $G$ induced by the vertices of $K$ and $\hat{M}\setminus V(G')$ contains $H$ as a topological minor, a contradiction.
\end{proof}

\subsection{Dynamic programming over unbreakable decompositions -- Proof of~\cref{th_our_g_res_tp_bog}}

Assume that we are given a formula $\varphi$ of our logic in prenex normal form with quantifier rank $d_0$, $\DP$-rank $r_0$, and $\ttw$-rank $w$ and a graph $G$ that excludes a graph $H$ as a topological minor.

We set $k:=w+1$, $\ell:=q(k)$, $q:=q(k)$, where $q(\cdot)$ is the function of~\cref{th_strong_unbreakability}, and $\delta=\max\{\ell,|H|\}$.
We also set $d=\max\{d_0,g(\ell,\delta)\}$ and $r=\max\{r_0,g'(\ell,\delta)\}$, where~$g$ and~$g'$ are the functions of~\cref{obs_representative_properties}.
Also, we set $c_1:=f_{\size}(d,r,w,\ell)$, where $f_{\size}$ is the function of~\cref{lem_representatives}, $c_2:=f_{\mathsf{rep}}(\ell,d,r,w)$, where $f_{\mathsf{rep}}$ is the function of~\cref{lem_rep_tw}, and $c:=\max\{c_1,c_2\}$.

\subsubsection{The algorithm}
The algorithm of~\cref{th_our_g_res_tp_bog} works as follows. 
We first compute a regular strongly $(q,k)$-unbreakable decomposition $\mathcal{T}=(T,\mathsf{bag})$ of adhesion at most $\ell$ using the algorithm of~\cref{th_strong_unbreakability}.
Then, we consider a post-order traversal 
%
%
of $T$, i.e.,~an ordering of $V(T)$ such that each node is visited after all its (strict) descendants are visited.
We process the nodes by increasing order.
Before any iteration, we set $G^\bullet:=G$, $\bar{R}=V(G)^d$, and $\mathcal{T}^\bullet:=\mathcal{T}$.

\subparagraph{Iteration step.}
We work with the graph $G^\bullet$ and its tree decomposition $\mathcal{T}^\bullet$.
Let $t$ be the node of $\mathcal{T}^\bullet$ that is currently processed.

Let $\mathsf{MaxAdh}(t)$ be the set of all children $z$ of $t$
such that there is no child $z'\neq z$ of $t$ such that $\adh(z)\subseteq \adh(z')$.
Note that for every child $y$ of $t$ there is a $z\in\mathsf{MaxAdh}(t)$ such that $\adh(y)\subseteq \adh(z)$.
For every $z\in\mathsf{MaxAdh}(t)$, we set \[G_z^{\mathsf{Max}}:=\bigcup\{G^\bullet[\cone(y)]\mid \text{$y$ is a child of $t$ with $\adh(y)\subseteq \adh(z)$}\}\]
and $\bound{G}_z^\mathsf{Max}:=(G_z^{\mathsf{Max}},\adh(z),\rho)$, where $\rho$ is an arbitrary bijection from $\adh(z)$ to $|\adh(z)|$.

Using the algorithm of~\cref{lem_rep_tw}, we compute a collection of pairs \[\mathcal{H}:=\{(z,(\bound{H}_z,\bar{R}_z))\mid z\in\mathsf{MaxAdh}(x)\},\] where for each $z\in\mathsf{MaxAdh}(x)$, $(\bound{H}_z,\bar{R}_z)$ is a $(d,r,w)$-representative of $(\bound{G}_z^{\mathsf{Max}},\bar{R}^\bullet\cap V(G_z^\mathsf{Max}))$ of size at most~$c_1$.

Let $(G',\bar{R}')$ be the (annotated) graph obtained from $(G^\bullet,\bar{R}^\bullet)$ by replacing, for each $z\in \mathsf{MaxAdh}(t)$,  $(\bound{G}_z^{\mathsf{Max}},\bar{R}^\bullet\cap V(G_z^{\mathsf{Max}}))$ with $(\bound{H}_z,\bar{R}_z)$.
Also, let $\mathcal{T}'$ be the tree decomposition obtained from $\mathcal{T}^\bullet$ after removing every node in the subtree rooted at $t$, except of the nodes in $\mathsf{MaxAdh}(t)$ and by setting, for each $z\in\mathsf{MaxAdh}(t)$, $\mathsf{bag}(z)=V(H_z)$.

We set
\begin{itemize}
    \item $\bound{C}:= (G'\setminus \mathsf{comp}(t),\mathsf{adh}(t),\rho)$, where $\rho$ is an arbitrary bijection from $\mathsf{adh}(t)$ to $|\mathsf{adh}(t)|$; 
    \item $\bar{Q}:=\bar{R}'\setminus \mathsf{comp}(t)$;
    \item $\bound{G}_t':=(G_t',\mathsf{adh}(t),\rho)$, where $G_t'$ is the subgraph of $G'$ induced by the cone of $t$ in $\mathcal{T}'$; and
    \item $\bar{R}_t':=\bar{R}'\cap V(G_t')$.
\end{itemize}
Then, we apply the algorithm of~\cref{lem_representatives} for the graph $G'$, the tree decomposition $\mathcal{T}'$, and the node $t$.
If it outputs a $(d,r,w)$-representative $(\bound{G}^\star,\bar{R}^\star)$ of $(\bound{G}_t',\bar{R}')$, then we set $(G^\bullet,\bar{R}^\bullet):=(\bound{C},\bar{Q})\oplus(\bound{G}^\star,\bar{R}^\star)$.
Also, set~$\mathcal{T}^\bullet$ to be the tree decomposition obtained from~$\mathcal{T}'$ after removing all children of $t$. This finishes the iterative step.

\subparagraph{Final step.}
When all nodes have been processed, apply Courcelle's theorem~\cite{Courcelle90} to decide if~$G^\bullet$ satisfies~$\varphi$ when its quantified variables are interpreted in the corresponding sets~$R_i^\bullet$, and report the corresponding answer.

\subsubsection{Proof of correctness}
We will show that the following invariants are maintained after each iteration. Assume that we are after the $i$th iteration. Then,
    \begin{enumerate}
        \item $G^\bullet$ excludes $H$ as a topological minor.

        \item $\mathcal{T}^\bullet$ is a tree decomposition with the following properties:
        \begin{enumerate}
            \item $\mathcal{T}^\bullet$ is regular;
            \item every node of $\mathcal{T}^\bullet$ that is not yet processed has the strong $(q,k)$-unbreakability property;
            \item for every node $t$ of $\mathcal{T}^\bullet$ that is already processed, we have $|\mathsf{cone}(t)|\le c$; and
        \end{enumerate}
        \item $(G,V(G)^d)$ and $(G^\bullet,\bar{R}^\bullet)$ have the same annotated $(d,r,w)$-type.
    \end{enumerate}
    We show the invariants by induction.
    We start by observing that in the beginning of each iteration, the underlying graph $G_z^\mathsf{Max}$ of $\bound{G}_z^\mathsf{Max}$ has treewidth at most $c+\ell$. This follows directly from the fact that every connected component of $G_z^\mathsf{Max}\setminus\mathsf{adh}(z)$ is a subgraph of $G[\mathsf{comp}(y)]$, for some child $y$ of $t$ with $\mathsf{adh}(y)\subseteq \mathsf{adh}(z)$ and that $|\mathsf{comp}(y)|\le c$; recall that because of the considered ordering, all children of $t$ have already been processed and therefore, by induction hypothesis, their cones have size at most $c$.
    Also, because of~\cref{lem_compositionality} and the induction hypothesis,  $(G,V(G)^d)$ and $(G',\bar{R}')$  have the same annotated $(d,r,w)$-type.
    Also, because of~\cref{obs_representative_properties} and~\cref{lem_excluded_top_minor}, $G'$ excludes $H$ as a topological minor.
    Because of~\cref{obs_representative_properties} and~\cref{obs_maintain_regularity}, $\mathcal{T}'$ is regular and has adhesion at most $\ell$, and every node of $\mathcal{T}'$ that is not a child of $t$ has the strong $(q,k)$-unbreakability property. 
    Since $t$ has the strong $(q,k)$-unbreakability property in~$\mathcal{T}'$, by~\cref{lem_cone_unbreakability}, the subgraph of $G^\bullet$ induced by the cone of $t$ in $\mathcal{T}'$ is $(q',k)$-unbreakable, where 
    $q'=f_{\mathsf{unbr}}(\ell,c,q,k)$. 
    Therefore, the requirements of~\cref{lem_representatives} are satisfied and its application correctly gives a $(d,r,w)$-representative $(\bound{G}^\star,\bar{R}^\star)$ of $(\bound{G}_t',\bar{R}')$. Because of~\cref{lem_compositionality}, $(\bound{C},\bar{Q})\oplus(\bound{G}^\star,\bar{R}^\star)$ and $(G',\bar{R}')$ have the same annotated $(d,r,w)$-type, which shows invariant (3). 
    Invariant (1) follows from~\cref{lem_excluded_top_minor}. 
    Finally, invariant (2) follows from the fact that the corresponding properties are holding already for~$\mathcal{T}'$ and therefore, because of~\cref{obs_maintain_regularity}, they also hold for $\mathcal{T}'$.

\subsubsection{Running time}
The algorithm of~\cref{lem_rep_tw} is applied in total a linear (in $n$) number of times. This algorithm is always invoked on graphs of treewidth at most $c+\ell$, and therefore the running time of each application is $\mathcal{O}_{\ell,d,r,w,c}(n) = \mathcal{O}_{d,r,w}(n)$. As for the algorithm of~\cref{lem_representatives}, it runs in time $\mathcal{O}_{|H|,r,\ell,q,d,w}(|\mathsf{cone}(t)|^3)$. Since for every child $z$ of $t$, we have that $|\mathsf{cone}(z)|\le c$, we have that \[|\mathsf{cone}(t)|\le |\mathsf{mrg}(t)|+ c\cdot |\{z\in V(T)|\mid z\text{ is a child of $t$ in $T$}\}|.\]
Since the margins of the nodes in the initial tree decomposition $\mathcal{T}$ partition the vertex set of the input graph~$G$, we have that \[\sum_{t\in V(T)}|\mathsf{cone}(t)|\le n+c\cdot 2n = \mathcal{O}_{d,r,w}(n).\]
Therefore, the total running time of all calls to the algorithm of~\cref{lem_representatives} is $\mathcal{O}_{h,d,r,w}(n^3)$.
This implies that the total running time of the iteration step is $\mathcal{O}_{|H|,d,r,w}(n^3)$. In the final step, the running time is $\mathcal{O}_{d,r,w}(n)$.

\section{Expressive power}
\label{sec_power}

This section discusses the expressive power of \CMSOtw and \CMSOtwDP compared to other logics mentioned in this paper.


First, it can be easily observed that $\CMSO$ collapses to $\CMSOtw$ on graphs of bounded treewidth.

\begin{lemma}\label{lem_MSOcollapse_bdtw}
    On classes of bounded treewidth, $\CMSO$ collapses to \textup{\CMSOtw}.
\end{lemma}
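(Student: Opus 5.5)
Fix a class $\Cc$ with $\tw(G)\le t$ for all $G\in\Cc$. The plan is a syntactic translation of a \CMSO formula $\varphi$ into a \CMSOtw formula, obtained by bounding every set quantifier with the same constant, namely $t$: replace each $\exists X$ by $\exists_{\ttw\le t}X$ and each $\forall X$ by $\forall_{\ttw\le t}X$. Atomic formulas, Boolean connectives, first-order quantifiers and counting atoms stay as they are. No disjoint-paths predicates are introduced, so the resulting formula $\varphi'$ lies in $\CMSOtw$ (it is plain \CMSO with bounded quantifiers), as the statement requires.

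The key step is that on $\Cc$ the bounded quantifiers range over all vertex subsets, so the translation does not change semantics. Take $G\in\Cc$ and any $S\subseteq V(G)$. In the definition of torso treewidth choose the superset $X'=V(G)\supseteq S$. Then $G-X'$ has no components, so $\torso(G,V(G))=G[V(G)]=G$. Hence
\[
\ttw(G,S)\le \tw(\torso(G,V(G)))=\tw(G)\le t .
\]
So on every $G\in\Cc$ the quantifier $\exists_{\ttw\le t}X$ ranges over exactly the same sets as $\exists X$, and likewise for the universal quantifiers.

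A routine induction on the structure of $\varphi$, carried out with arbitrary assignments to the free variables, then shows $G\models\varphi$ if and only if $G\models\varphi'$ for every $G\in\Cc$. This is the definition of collapse.

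There is no real obstacle. The one point to check is that the minimisation over supersets in the definition of $\ttw$ allows $X'=V(G)$, which it does since the only constraint is $S\subseteq X'$. This is exactly why bounded treewidth of the host graph bounds $\ttw$ uniformly over all $S$. The same argument works verbatim for $\CMSOtwDP$.
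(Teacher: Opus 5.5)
Your proposal is correct and matches the paper's argument: taking $X'=V(G)$ gives $\torso(G,V(G))=G$, so every vertex set has torso treewidth at most $\tw(G)\le t$ and the bounded quantifiers range over all subsets. The paper states exactly this in one line; you additionally spell out the explicit bound $t$ used in the translation and the routine induction.
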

\begin{proof}
    If a graph $G$ has bounded treewidth, $\tw(\torso(G,V(G)))$ is bounded.
    Therefore, \CMSOtw allows the quantification over all subsets of vertices.
\end{proof}


\subparagraph{$\CMSOtw$ and $\FODP$.}
We next show that the logics $\FODP$ and $\CMSOtw$ are incomparable, and that furthermore, the addition of the disjoint-paths predicate to the logic $\CMSOtw$ strictly increases its expressive power.

We start by showing that not all formulas of $\CMSOtw$ can be expressed in $\FODP$.

\begin{lemma}\label{lem_separating_MSOtw_FODP}
    There is a sentence of \textup{\CMSOtw} that verifies whether a cycle has even length. However, there is no such sentence in \textup{\FODP}.
\end{lemma}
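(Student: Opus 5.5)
The statement has two halves, and we treat them separately: a positive part, giving an explicit \CMSOtw sentence, and a negative part, ruling out any \FODP sentence.

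\emph{Positive part.} A cycle $C_n$ has treewidth $2$. For every $X\subseteq V(C_n)$ we therefore have $\ttw(C_n,X)\le \tw(\torso(C_n,V(C_n)))=\tw(C_n)\le 2$. So the quantifier $\exists_{\ttw\le 2}X$ ranges over \emph{all} vertex subsets of a cycle, exactly as in the proof of \cref{lem_MSOcollapse_bdtw}. We can then copy the usual \MSO proper 2-colouring sentence:
\[
\exists_{\ttw\le 2}X\ \forall x\forall y\,\bigl(E(x,y)\rightarrow (x\in X\leftrightarrow \neg\, y\in X)\bigr).
\]
On a cycle this sentence holds iff the cycle is bipartite, i.e.\ iff its length is even. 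Alternatively, we could use the counting atom $\card(X)\equiv 0 \pmod 2$ with $X$ interpreted as the whole vertex set; this is also permitted since $\ttw\le 2$. Either formula works, and no input is assumed other than a cycle.

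\emph{Negative part.} The plan is to reduce \FODP to plain \FO on cycles and then apply a standard locality argument. On a cycle, $\DP_k(x_1,y_1,\dots,x_k,y_k)$ depends only on the cyclic order of the $2k$ endpoints:
\begin{itemize}
\item for $k=1$ it is always true, since a cycle is connected;
\item for general $k$ it holds iff the pairs are realised by pairwise disjoint arcs of the cycle, which is a condition on how the named vertices are interleaved around the cycle.
\end{itemize}
The cyclic order of finitely many vertices is not directly \FO-definable, but it is determined by the following data: for each consecutive pair, whether the named vertices lie within distance $m$ along some arc, together with the arrangement of the remaining far-apart ones. I expect this to be the main obstacle, since it requires showing that the truth value of $\DP$ atoms is invariant under the duplicator strategy.

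The cleanest route is to avoid translating $\DP$ atoms and instead play an Ehrenfeucht--Fraïssé game directly on $C_{2N}$ versus $C_{2N+1}$, with $N\gg 2^{d}$. We take the standard strategy for cycles:
\begin{itemize}
\item duplicator answers so that, after each round $i$, the cyclic distances between chosen vertices agree up to threshold $2^{d-i}$;
\item the cyclic \emph{order} of the chosen vertices is also preserved. This is possible because every gap that is large in one cycle corresponds to a gap that is large in the other.
\end{itemize}
Since the chosen tuples then have the same cyclic order, equality and adjacency agree, and every $\DP$ atom has the same truth value on both sides by the characterisation above. Hence $C_{2N}$ and $C_{2N+1}$ agree on all \FODP sentences of quantifier rank $d$. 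Consequently no \FODP sentence defines even length among cycles, and we conclude the statement.
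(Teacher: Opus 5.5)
Your proposal is correct. The positive half matches the paper: the paper uses exactly this 2-colouring sentence with the bound $\ttw\le 2$. The paper justifies the bound by noting that the alternating witness set has a cycle as its torso. Your justification through the superset $X'=V(C_n)$ is slightly cleaner. It shows that on a cycle the restricted quantifier ranges over all vertex subsets, so the sentence is literally equivalent there to the \MSO 2-colouring sentence, and the odd case is immediate.

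The negative half is where you take a different route. The paper simply cites Theorem~4.8 of \cite{SchirrmacherSV23} for the fact that \FODP does not distinguish even cycles from odd ones. You instead prove this directly, in three steps:
\begin{itemize}
\item You treat \FODP as plain \FO over the signature expanded by the relations $\DP_k$.
\item You observe that on a cycle each $\DP_k$ atom is determined by the equality type and cyclic order of its arguments. Every path is one of the two arcs (or is trivial), and two closed arcs meet iff one contains an endpoint of the other.
\item You run the standard linear-order Ehrenfeucht--Fra\"{\i}ss\'e strategy on $C_{2N}$ versus $C_{2N+1}$, with the first pebble cutting the cycle open.
\end{itemize}
The citation buys brevity. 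Your argument is self-contained and makes the mechanism visible: on cycles, $\DP$ contributes nothing beyond a cyclic order, and \FO with successor and cyclic order cannot detect parity.

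One small detail to fix when you write it out concerns the invariant. State it on the arcs between cyclically consecutive pebbles: corresponding arcs have equal length, or both are at least the current threshold. Pairwise distances alone are not the right quantity. Also take the threshold after round $i$ to be $2^{d-i+1}$ rather than $2^{d-i}$. With $2^{d-i}$ the final threshold is $1$, which no longer separates gaps of length $1$ from longer ones, so adjacency would not be preserved. With $2^{d-i+1}$ the usual splitting argument still goes through, provided $2N\ge 2^{d}$.
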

\begin{proof}
    Let $\varphi\in\CMSOtw$ be the following sentence
    \[\varphi=\exists_2 X~\forall x~\forall y~(E(x,y) \rightarrow ((x\in X \wedge y\not\in X) \vee (x\not\in X \wedge y\in X))).\]
    On a cycle of length $2n$, there is such a set $X$ such that $\torso(G,X)$ is a cycle of length $n$, and hence of treewidth $2$. On the other hand, on a cycle of odd length, it is easy to verify that such a set $X$ does not exist.
    However, it is proved in \cite[Theorem 4.8]{SchirrmacherSV23} that \FODP does not distinguish even and odd length cycles.
\end{proof}

Note that in the above proof, we do not use the modulo counting of \CMSOtw.
Also, since the example separating $\CMSOtw$ from $\FODP$ is a property of cycles, which are of bounded treewidth, it is natural to ask whether there is a \CMSOtw property (even without using modulo counting) of graphs of \textit{unbounded treewidth} that is not expressible in $\FODP$.
Because of~\cref{lem_collapse_to_FO}, this should be a property of graphs that are not $(q,k)$-unbreakable, for appropriate values of $q$ and $k$.
This is in fact true. Let $\mathcal{C}_1$ (resp. $\mathcal{C}_2$) be the class of graphs obtained from a clique by subdividing each edge an odd (resp. even) number of times. It is then not hard to see (following the same approach as in~\cref{lem_separating_MSOtw_FODP}) that $\CMSOtw$ can distinguish between~$\mathcal{C}_1$ and~$\mathcal{C}_2$, while $\FODP$ cannot.

As a byproduct of the separation of $\CMSOtw$ from \FODP~ (\cref{lem_separating_MSOtw_FODP}), combined with~\cref{lem_MSOcollapse_bdtw} and some results of~\cite{SchirrmacherSV23}, we obtain the following expressiveness hierarchy on classes of bounded treewidth.

\begin{lemma}
    On classes of bounded treewidth, it holds that
    \[\FO\subsetneq\textup{\FODP}\subsetneq\textup{\CMSOtw}=\textup{\CMSOtwDP}=\CMSO.\]
\end{lemma}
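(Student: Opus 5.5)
The chain is $\FO\subsetneq\FODP\subsetneq\CMSOtw=\CMSOtwDP=\CMSO$ on a fixed class $\Cc$ of treewidth at most $z$. I would prove it link by link, reading each containment as ``every sentence of the left logic is equivalent on $\Cc$ to a sentence of the right logic.''

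\emph{The two equalities.} The inclusions $\CMSOtw\subseteq\CMSOtwDP$ are syntactic. So are $\CMSOtw\subseteq\CMSO$ and $\CMSOtwDP\subseteq\CMSO$: as observed after the definition of $\CMSOpDP$, for minor-monotone $\p$ (here $\p=\ttw$, minor-monotone by \cref{obs_monotone_ttw}) both the bound $\ttw(G,X)\le k$ (by \cref{lemm_anotopomin_formula}) and every $\DP_r$ atom are $\CMSO$-expressible. For $\CMSO\subseteq\CMSOtw$ on $\Cc$, I use \cref{lem_MSOcollapse_bdtw}. Every $X\subseteq V(G)$ satisfies $\ttw(G,X)\le\tw(\torso(G,V(G)))=\tw(G)\le z$, by taking the superset $X'=V(G)$. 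Hence replacing each $\exists X$ by $\exists_{\ttw\le z}X$ (and each $\forall X$ by $\forall_{\ttw\le z}X$) gives an equivalent sentence on $\Cc$. This yields $\CMSOtw=\CMSOtwDP=\CMSO$ on $\Cc$.

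\emph{The middle containment.} $\FODP\subseteq\CMSOtw$ follows from $\FODP\subseteq\CMSO$, since each $\DP_r$ is $\CMSO$-definable, combined with the collapse just proved. Strictness comes from \cref{lem_separating_MSOtw_FODP}: cycles have treewidth $2$, so I take $\Cc$ to contain all cycles, say $\Cc$ is the class of treewidth at most $2$. Then even length of cycles is expressible in $\CMSOtw$ on $\Cc$ but not in $\FODP$, by \cite[Theorem 4.8]{SchirrmacherSV23}.

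\emph{The first containment.} $\FO\subseteq\FODP$ is syntactic. For strictness I need a property of bounded-treewidth graphs that $\FODP$ defines but $\FO$ does not. I would use connectivity, which is expressible by $\forall x\forall y\,\DP_1(x,y)$. Connectivity is not $\FO$-definable even on graphs of maximum degree $2$: a Hanf/Gaifman locality argument shows that one long cycle and the disjoint union of two long cycles agree on all $\FO$ sentences of bounded quantifier rank. These graphs have treewidth $2$, consistent with the separation recorded in \cite{SchirrmacherSV23}.

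\emph{Main obstacle.} The only real point of care is interpretation. The statement ``on classes of bounded treewidth'' must be read as equivalence on a fixed class. The strict separations need that class to be rich enough to contain cycles, since on a trivial class everything collapses. I would state this explicitly, taking any class containing all graphs of treewidth at most $2$ as the witness for both strict inclusions. All remaining steps are direct citations of the earlier results listed above.
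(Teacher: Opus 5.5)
Your proposal is correct and follows the paper's route: the collapse $\CMSOtw=\CMSOtwDP=\CMSO$ via \cref{lem_MSOcollapse_bdtw} together with the $\CMSO$-definability of the $\DP$ predicate, and the strict inclusion $\FODP\subsetneq\CMSOtw$ via the even-cycle separation of \cref{lem_separating_MSOtw_FODP}. You add two things the paper does not spell out: an explicit connectivity/locality witness for $\FO\subsetneq\FODP$ (the paper only cites \cite{SchirrmacherSV23}), and the observation that the strict inclusions require the bounded-treewidth class to contain all cycles.
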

\begin{proof}
    \FODP has a higher expressive power than~\FO by~\cite{SchirrmacherSV23}, and \FODP and \CMSOtw are separated by~\cref{lem_separating_MSOtw_FODP}.
    The collapse of \CMSOtw, \CMSOtwDP, and \CMSO follows by~\cref{lem_MSOcollapse_bdtw} and the observation that disjoint paths can be expressed in $\CMSO$.
\end{proof}

We also show the following inexpressibility result.
\begin{lemma}\label{lem_separating_FODP_MSOtw}
    There is a sentence of \textup{\FODP} that is not expressible in \textup{\CMSOtw}.
\end{lemma}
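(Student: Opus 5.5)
We need a sentence of \FODP that no \CMSOtw sentence expresses. The natural candidate is connectivity, or more robustly a global linkage property. We take the sentence $\forall x\forall y\,\DP_1(x,y)$, which expresses that $G$ is connected. The goal is to exhibit two families of graphs that agree on every \CMSOtw sentence of bounded quantifier rank and $\ttw$-rank, yet one family is connected and the other is not.

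The plan is to exploit the collapse on unbreakable graphs, \cref{lem_collapse_to_FO}. On $(q,k+1)$-unbreakable graphs every \CMSOtw sentence of $\ttw$-rank $k$ is equivalent to an \FO sentence whose quantifier rank depends only on $d,k,q$. So it suffices to find, for each quantifier rank $m$, two graphs that satisfy three conditions: both are $(q,k+1)$-unbreakable for suitable $q$ (allowed to depend on $k$ and $m$), they are $\equiv_m$-equivalent in \FO, and exactly one of them is connected.

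For the construction we take, for large $n$, the graph $G_1$ consisting of two disjoint copies of a large highly connected graph with large girth, or simply two disjoint large cliques $K_n \uplus K_n$. We take $G_2$ to be a single clique $K_{2n}$, or a suitably connected graph locally indistinguishable from $G_1$. Cliques are unbreakable. Two disjoint cliques are $(q,k)$-unbreakable as soon as $q\ge n$, but then $q$ depends on $n$, which is not allowed.

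This is the main obstacle: disconnected graphs with two large parts are never unbreakable with $q$ independent of $n$. To get around it, we avoid unbreakability of the whole graph and instead choose graphs in which the only sets of bounded torso treewidth are tiny, so that \CMSOtw quantification collapses to bounded-size quantification, i.e.\ to \FO with bounded quantifier rank. The candidates are $G_1 = K_n\uplus K_n$ and $G_2=K_{2n}$ minus a perfect matching, or some pair of dense graphs.

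In a clique-like graph, any $X$ with $\ttw\le k$ has bounded size. The reason is that every component of $G-X'$ attaches to many vertices of $X'$, while $G[X']$ itself contains a large clique once $|X'|$ is large. This gives $|X|\le$ roughly $k+1$ per clique component. Hence every \CMSOtw sentence is equivalent on these graphs to an \FO sentence of bounded rank, obtained by replacing each set quantifier with quantification over $k+1$ vertices per component. Disjoint unions then force the bound $2(k+1)$, still bounded.

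Finally, we use standard Ehrenfeucht–Fraïssé games: for large $n$, $K_n\uplus K_n$ and a connected graph such as $K_n\uplus K_n$ plus one edge must be $m$-equivalent. Since adding a bridge creates a breakable structure, we must check that the small-$\ttw$ bound survives there. The bridge endpoints can be added to $X'$, keeping the torso a union of two cliques of size at most $k+1$ and an edge. With that check, \FO with rank $m$ cannot distinguish the two graphs by a Hanf/Gaifman locality argument, while connectivity does distinguish them. Verifying the bounded-size lemma for $X$ carefully, including modular counting atoms (which become trivial on bounded sets), is the step I expect to need the most care.
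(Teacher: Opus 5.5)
There is a genuine gap at your last step. The graphs you propose are already distinguishable in \FO, so no locality argument can apply to them. The rank-$3$ sentence $\exists x\exists y\exists z\,(E(x,y)\wedge E(x,z)\wedge y\neq z\wedge\neg E(y,z))$ fails in $K_n\uplus K_n$, but it holds in $K_n\uplus K_n$ plus a bridge $uv$ (take $x=u$, $y=v$, and $z$ another vertex of $u$'s clique). More generally, consider the rank-$3$ sentences ``adjacency is transitive on distinct vertices'' and ``some two non-adjacent vertices dominate all other vertices''. Both hold in $K_n\uplus K_n$, and together they force a disjoint union of exactly two cliques. So \emph{no} connected graph is $\equiv_3$-equivalent to $K_n\uplus K_n$; this rules out $K_{2n}$ minus a perfect matching and every other dense partner. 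Hanf/Gaifman locality gives nothing here. Components have diameter $1$, so every $r$-ball is a whole component, and the bridge endpoints have a $1$-ball type that does not occur in $K_n\uplus K_n$. Locality arguments need sparse graphs whose diameter is large compared to $3^m$.

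Switching to sparse graphs while keeping connectivity as the separating property exposes a second gap, in your reduction to \FO. Replacing $\exists_{\ttw\le k}X$ by a bounded tuple requires the guard $\ttw(G,\{x_1,\dots,x_q\})\le k$ to be \FO-definable. The paper obtains this only on unbreakable graphs (\cref{lem_ttw_FO}), and, as you noted yourself, a graph with two large components is never unbreakable. On such graphs the guard is genuinely non-local. If $T$ is $4$-connected, every $3$-element set $X$ has $\ttw(T,X)\ge 2$. Indeed, for any $X'\supseteq X$, one of three things happens: $X'=V(T)$ and the torso is $T$ itself; or some component $C$ of $T-X'$ has $|N(C)|\ge 4$ by $4$-connectivity; or $X'=N(C)$ becomes a clique of size at least $3$ in the torso. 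By contrast, in $T\uplus T$ a set with two vertices in one copy and one in the other has torso $K_2\uplus K_1$. Hence the sentence $\exists_{\ttw\le 1}X\,\exists x\exists y\exists z\,(x\in X\wedge y\in X\wedge z\in X\wedge x\neq y\wedge x\neq z\wedge y\neq z)$ separates the torus $C_n\square C_{2n}$ from two disjoint copies of $C_n\square C_n$. Yet these two graphs are \FO-indistinguishable at any fixed rank for large $n$, and sets of bounded torso treewidth have bounded size in both. So ``only small sets have bounded $\ttw$'' does not yield a collapse to \FO.

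The paper avoids both problems by separating with a different \FODP property: planarity, i.e.\ excluding $K_5$ and $K_{3,3}$ as topological minors. It compares two \emph{connected, unbreakable}, locally isomorphic graphs: the $n\times n$ grid with first and last columns joined straight (a planar cylinder) and joined with a twist (a non-planar M\"obius band). For these, \cref{lem_collapse_to_FO} applies verbatim, and Hanf locality finishes the argument once $n$ is large compared to the Hanf radius.
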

\begin{proof}
    For all $n\in\Nbbb$, let $G_n$ and $H_n$ be two $n\times n$-grids with the following additional edges.
    The vertices of the first (leftmost) column $g^{(1)}_1,\ldots,g^{(1)}_n$ of the grid in $G_n$ are connected to the vertices of the last (rightmost) column $g^{(n)}_1,\ldots,g^{(n)}_n$ by edges $(g^{(1)}_i,g^{(n)}_i)$ for every $i\leq n$.
    The vertices of the first (leftmost) column $h^{(1)}_1,\ldots,h^{(1)}_n$ of the grid in $H_n$ are connected to the vertices of the last (rightmost) column $h^{(n)}_1,\ldots,h^{(n)}_n$ by edges $(h^{(1)}_i,h^{(n)}_{n+1-i})$ for every $i\leq n$.
    Note that this construction of $G_n$ and $H_n$ ensures that $G_n$ is planar while~$H_n$ is not.
    Therefore, there exists a fix \FODP formula that distinguishes~$G_n$ from $H_n$ for any $n$ by~\cite{SchirrmacherSV23}.

    The rest of this proof is now devoted to prove that there is no \CMSOtw formula that distinguishes $G_n$ from $H_n$.
    Assume towards a contradiction the existence of a number $k\in\Nbbb$ and a formula $\phi\in\CMSOtw_k$ such that for every $n\in\Nbbb$, we have that $G_n\models \phi$ and $H_n \not\models \phi$. Thanks to \cref{lem_collapse_to_FO}, fix $\psi$ to be the \FO formula that is equivalent to $\phi$ over all $((k+1)^2,k+1)$-unbreakable graphs. We now define $\ell$ as the quantifier rank of $\psi$ and finally $r:=(3^\ell-1)/2$. 
    
    Thanks to the locality of \FO, we soon conclude that $\psi$ (and hence $\phi$) do not distinguish~$G_{2r+1}$ and~$H_{2r+1}$. 
    By construction and according to \cite[Theorem~4.12]{Libkin04}, $r$ is the Hanf locality radius of $\psi$. Now note that for every~$i,j\leq n$, the $r$-neighborhood of $g_j^{(i)}$ in $G_{2r+1}$ is isomorphic to the one of $h_j^{(i)}$ in $H_{2r+1}$.
    This implies that~$G_{2r+1}$ and $H_{2r+1}$ are $r$-equivalent (as defined in \cite[Definition~4.3]{Libkin04}), which in turn implies that $G_{2r+1}\models \psi$ if and only if $H_{2r+1}\models \psi$. Since the same holds for $\phi$, we reach a contradiction.
\end{proof}

It is also easy to show that the addition of the disjoint-paths predicate in our logic gives extra expressive power.
\begin{lemma}\label{lem_CMSOtwDP_CMSOtw}
    There is a sentence of \textup{\CMSOtwDP} that cannot be expressed in \textup{\CMSOtw}.
\end{lemma}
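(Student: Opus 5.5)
The plan is to reuse the pair $G_n,H_n$ from the proof of \cref{lem_separating_FODP_MSOtw}: the $(n\times n)$-grid with the first and last columns joined straight ($G_n$, planar) or with a twist ($H_n$, non-planar). That proof already shows that no \CMSOtw sentence distinguishes $G_n$ from $H_n$ for all $n$. So it suffices to exhibit a single \CMSOtwDP sentence, in fact one without any set quantifier, that holds in every $G_n$ and fails in every $H_n$ (for $n$ large enough).

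The separating sentence will express non-planarity, i.e.\ containment of $K_5$ or $K_{3,3}$ as a topological minor. By Kuratowski's theorem, $H_n$ satisfies it and $G_n$ does not. Topological-minor containment of a fixed graph $F$ is \FODP-expressible: one existentially quantifies the $|V(F)|$ principal vertices and uses $\DP$ for the internally disjoint subdivision paths, exactly as in the proof of \cref{lemm_anotopomin_formula} with empty annotation. Every \FODP sentence is a \CMSOtwDP sentence, since no set quantification is needed, so this gives the required $\phi$. If one wants to avoid relying on \cite{SchirrmacherSV23} for the claim that $H_n$ is non-planar, one can check it directly: the twisted edges together with the grid yield a $K_{3,3}$-subdivision (a Möbius-ladder-type structure) once $n\ge 3$.

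The main obstacle is already handled, namely the \CMSOtw-inexpressibility. It relies on \cref{lem_collapse_to_FO}, so one must verify that $G_n$ and $H_n$ are $((k+1)^2,k+1)$-unbreakable. Grids are sufficiently unbreakable for this purpose, since a separation of order $k+1$ cuts off at most $O(k^2)$ vertices, and the extra wrap-around edges only improve this. For that step I would simply cite \cref{lem_separating_FODP_MSOtw}. The only remaining point is the Hanf-locality argument: the $r$-neighbourhoods coincide in $G_{2r+1}$ and $H_{2r+1}$. This is because the twist is only visible across the wrap-around, which a radius-$r$ ball cannot detect except near the rows where the twist's local structure looks identical. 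This argument is likewise inherited from that lemma.
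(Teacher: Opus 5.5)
Your proposal is correct and is essentially the paper's proof: the paper also uses (non-)planarity, which is \FODP-definable via $K_5$/$K_{3,3}$ topological-minor containment and hence is a \CMSOtwDP sentence, and it invokes \cref{lem_separating_FODP_MSOtw} for the fact that no \CMSOtw sentence separates the planar $G_n$ from the non-planar $H_n$. One caveat concerns your aside on the cited lemma: for $n=2r+1$ the $r$-neighbourhoods do not coincide, because a radius-$r$ ball around a first-row vertex of $G_{2r+1}$ contains the whole first row, a cycle of length $2r+1$, while the ball around the same vertex of $H_{2r+1}$ does not wrap around; so that Hanf step needs $n\ge 2r+2$, though this does not affect your argument here since you only cite the lemma.
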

\begin{proof}
    $\CMSOtw$ cannot express planarity as shown in~\cref{lem_separating_FODP_MSOtw}, and therefore \CMSOtw cannot express disjoint paths which is possible in~$\CMSOtwDP$.
\end{proof}

Finally, we strengthen the result of~\cref{lem_separating_FODP_MSOtw} by showing that there is even a sentence in $\FOconn$ that is not definable in $\CMSOtw$. 

\begin{lemma}\label{lem_FOconn_CMSOttw}
    The logics \textup{\FOconn} and \textup{\CMSOtw} are incomparable.
\end{lemma}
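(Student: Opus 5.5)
Incomparability has two directions. One direction is that there is an \FOconn sentence not expressible in \CMSOtw. The other is that there is a \CMSOtw sentence not expressible in \FOconn.

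For the second direction I reuse \cref{lem_separating_MSOtw_FODP}. Even-length cycles are definable in \CMSOtw by the sentence given there. \FOconn is a fragment of \FODP: the predicate $\mathsf{conn}_k(x,y,z_1,\ldots,z_k)$ holds iff there are disjoint paths from $x$ to $y$ and from each $z_i$ to itself, which is $\DP_{k+1}(x,y,z_1,z_1,\ldots,z_k,z_k)$ together with the guard that $x,y\notin\{z_1,\ldots,z_k\}$. Hence the inexpressibility of cycle parity in \FODP (\cite[Theorem 4.8]{SchirrmacherSV23}) transfers to \FOconn.

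For the first direction I adapt the grid construction from \cref{lem_separating_FODP_MSOtw}, since planarity itself is not \FOconn-definable. I would look for two families $G_n$ and $H_n$ with three properties: they differ on a simple separator property expressible in \FOconn, they are $((k+1)^2,k+1)$-unbreakable for every fixed $k$ once $n$ is large, and they have isomorphic $r$-neighbourhoods for the Hanf radius $r$.

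Concretely, let $G_n$ be the disjoint union of two $n\times n$ cylinders, i.e.\ two copies of the grid with leftmost and rightmost columns identified in order. Let $H_n$ be a single $n\times 2n$ cylinder. Both are $4$-regular in the interior and locally look identical away from the boundary rows, which are also matched up in both graphs. The sentence ``$\forall x\forall y\,\mathsf{conn}_0(x,y)$'' (connectivity) separates them.

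Unbreakability is the delicate point, because a disjoint union is not unbreakable. I would fix this by adding a universal apex vertex $a$ to both graphs. Connectivity is then tested in $G-a$ by the \FOconn sentence $\exists a\,\forall v\,E(a,v)\wedge\forall x\forall y\,(x,y\neq a\rightarrow \mathsf{conn}_1(x,y,a))$, which holds in $H_n$ but not in $G_n$. With the apex present, a separation of order at most $k+1$ must contain $a$. What remains is a separation of order at most $k$ in a cylinder grid or a union of two, and it cuts off at most $O(k^2)$ vertices from the large side. So the graphs are $(O(k^2),k+1)$-unbreakable.

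The main obstacle is matching the unbreakability parameters required by \cref{lem_collapse_to_FO} (I may need $(c k^2,k+1)$ instead of $((k+1)^2,k+1)$, which is harmless since the lemma is stated for all $q$), and handling locality in the presence of the apex. The apex destroys Hanf locality, since every vertex is within distance $2$ of every other. To handle this I would treat the apex via a standard reduction. Any \FO sentence on $G+a$, with $a$ definable as the unique vertex of maximum degree, translates into an \FO sentence on $G$ in which $E(a,\cdot)$ is replaced by true. Hanf's theorem then applies to $G_n$ versus $H_n$ with $n\ge 2r+1$, since their $r$-balls coincide. Hence the \FO formula $\psi$ from \cref{lem_collapse_to_FO}, and therefore the assumed \CMSOtw formula, cannot distinguish them, which is a contradiction.
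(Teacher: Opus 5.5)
\paragraph{First half.} Your argument for a \CMSOtw property not expressible in \FOconn is correct and is the paper's argument: cycle parity from \cref{lem_separating_MSOtw_FODP}, together with $\FOconn\subseteq\FODP$.

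\paragraph{Second half: the unbreakability claim fails.} The claim that $G_n+a$ is $(O(k^2),k+1)$-unbreakable is false. Let $Z_1,Z_2$ be the two cylinders. Then $(V(Z_1)\cup\{a\},\,V(Z_2)\cup\{a\})$ is a separation of order $1$ with $n^2+1$ vertices on each side. Once $a$ is in the separator, what remains is a separation of a disjoint union, which costs nothing. So the apex does not help, and \cref{lem_collapse_to_FO} cannot be invoked. Your Hanf computation is fine, but \CMSOtw does not collapse to \FO on this family.

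\paragraph{The construction cannot be rescued.} The property you separate with is itself \CMSOtw-definable.
\begin{itemize}
\item Take $u\neq v$ in the same component of $G$ and any $X'\supseteq\{u,v\}$. Along a $u$--$v$ path, consecutive $X'$-vertices are adjacent in $\torso(G,X')$, either directly or via a component of $G-X'$. So the torso has an edge.
\item If $u,v$ lie in different components, then $\torso(G,\{u,v\})$ is edgeless.
\item Hence $\ttw(G,\{u,v\})=0$ if and only if $u$ and $v$ are disconnected.
\end{itemize}
With a universal vertex $a$, every $X'\ni a$ has torso equal to $\torso(G-a,X'\setminus\{a\})$ with $a$ joined to all other vertices. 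A $u$--$v$ connection in $G-a$ therefore closes a cycle through $a$. Otherwise $\torso(G,\{a,u,v\})$ is the path $u\,a\,v$. So $\ttw(G,\{a,u,v\})\le 1$ if and only if $u,v$ lie in different components of $G-a$. Your \FOconn sentence is therefore equivalent to the \CMSOtw sentence
\[
\exists a\,\Bigl(\forall v\,(v\neq a\rightarrow E(a,v))\wedge\forall x\forall y\,\bigl(x\neq y\wedge x\neq a\wedge y\neq a\rightarrow\neg\exists_{\ttw\le 1}X\,(a\in X\wedge x\in X\wedge y\in X)\bigr)\Bigr),
\]
and this sentence separates $G_n+a$ from $H_n+a$.

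\paragraph{Comparison with the paper.} The paper treats this half with the $n\times n$ grid versus two disjoint grids. It asserts a collapse of \CMSOtw to \FO because bounded-$\ttw$ sets are small in each copy. That argument does not give you a fix.
\begin{itemize}
\item The computation above shows connectivity is \CMSOtw-definable there too.
\item Even plain \FO separates those two graphs, by saying there are more than four vertices of degree $2$.
\end{itemize}
Your switch to cylinders repairs only the second problem, not the first. The same torso argument also shows that $k$-connectivity is \CMSOtw-definable: every $(k+1)$-set $S$ satisfies $\ttw(G,S)\ge k$ if and only if $G$ is $k$-connected. So a correct proof of this half would need an \FOconn property of a genuinely subtler kind than the global connectivity statements used in either argument.
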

\begin{proof}
    By \cref{lem_separating_MSOtw_FODP}, there is a property expressible in \textup{\CMSOtw} that is not expressible in \textup{\FODP}, hence in \textup{\FOconn} either.       

    For the other direction, fix for every $n$ two graphs: $G_n$ as the $n\times n$-grid, and $H_n$ as two disconnected copies of $G_n$. On one hand, there is a fix $\FOconn$ formula that distinguishes any $G_n$ from $H_n$, simply checking whether the graph is connected.
    On the other hand, note that, following the proof of \cref{lem_collapse_to_FO}, no $\CMSOtw_{k-1}$ formula can quantify over a set~$X$ of size exceeding $2k^2$ (in neither $G_n$ or $H_n$). 
    As such a set $X$ (over, say, $H_n$) would contain more than $k^2$ vertices of a copy of $G_n$, a $(k^2,k)$-unbreakable graph, contradicting that $\ttw(H_n,X)\le k-1$. Hence, \CMSOtw collapses to \FO on $\mathcal{C}:= (G_n)_{n\in\N} \cup (H_n)_{n\in\N}$ and for every \FO formula, there is an $n\in\N$ such that the formula does not distinguish~$G_n$ from $H_n$.
\end{proof}

Let us finally point out that full $\CMSO$ is strictly more
expressive than $\CMSOttwDP$, already on the class of complete
graphs. Indeed, full $\CMSO$ can express the parity of the number of
vertices by the counting atom
$\operatorname{card}(V)\equiv 0 \pmod 2$.
On the other hand, on a complete graph $K_n$, for every
$X\subseteq V(K_n)$, we have
$\ttw(K_n,X)=|X|-1$
for $X\neq\emptyset$. 
Hence every $\ttw$-bounded set quantifier ranges
only over sets of bounded size and can be simulated by a bounded tuple of
first-order variables. Moreover, the disjoint-paths predicate is
first-order definable on complete graphs, since its truth depends only on
the equality pattern of its arguments. Consequently, every
$\CMSOttwDP$-sentence is equivalent, on complete graphs, to a
first-order sentence. But first-order logic cannot define the parity of
$|V(K_n)|$ on the class of complete graphs: all sufficiently large
complete graphs satisfy the same first-order sentences of any fixed
quantifier rank, while parity alternates with $n$. Therefore parity is
not expressible in $\CMSOttwDP$ on complete graphs.

Combining all the previous results yields the following theorem; see also~\cref{fig_fodp_msotw}. 

\begin{theorem}
    In general, it holds that
    \[\FO\subsetneq\textup{\FOconn}\subsetneq\textup{\FODP}\subsetneq\textup{\CMSOtwDP}\subsetneq\CMSO,\]
    and
    \[\FO\subsetneq\textup{\CMSOtw}\subsetneq\textup{\CMSOtwDP}\subsetneq\CMSO.\]
    However, the expressive power of \textup{\FOconn} and \textup{\FODP} with \textup{\CMSOtw} is not comparable.
\end{theorem}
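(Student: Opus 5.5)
The theorem consists of two chains of strict inclusions and one incomparability claim. I will handle each inclusion separately, and then combine the separation lemmas of this section. Throughout I use \CMSOtw as shorthand for \CMSOttw, as the section does.

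\textbf{First chain.}
\begin{itemize}
\item $\FO\subsetneq\FOconn\subsetneq\FODP$: this is known from \cite{SchirrmacherSV23}. Connectivity is a special case of the disjoint-paths predicate, and the strict separations are proved there, for instance via planarity or disjoint paths versus separator logic.
\item $\FODP\subseteq\CMSOtwDP$: this holds because first-order quantification is already available in $\CMSOtwDP$. Alternatively, note that $\CMSOsizeDP\subseteq\CMSOtwDP$, since $\ttw\le\size$.
\item $\FODP\neq\CMSOtwDP$: the even-cycle sentence of \cref{lem_separating_MSOtw_FODP} is a witness.
\item $\CMSOtwDP\subseteq\CMSO$: this was observed after \cref{lemm_anotopomin_formula}. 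Both $\ttw\le k$ and $\DP$ are \CMSO-definable.
\item $\CMSOtwDP\neq\CMSO$: the parity argument on complete graphs given at the end of the section separates them.
\end{itemize}

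\textbf{Second chain.}
\begin{itemize}
\item $\FO\subseteq\CMSOtw$ is trivial, and the inclusion is strict by \cref{lem_separating_MSOtw_FODP}, since the even-cycle property is not even in \FODP.
\item $\CMSOtw\subsetneq\CMSOtwDP$ is \cref{lem_CMSOtwDP_CMSOtw}.
\item $\CMSOtwDP\subsetneq\CMSO$ is the same as in the first chain.
\end{itemize}

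\textbf{Incomparability.} For \FOconn versus \CMSOtw, this is \cref{lem_FOconn_CMSOttw}. For \FODP versus \CMSOtw, one direction is \cref{lem_separating_FODP_MSOtw}, which gives a property in \FODP (and even in \FOconn, by \cref{lem_FOconn_CMSOttw}) that \CMSOtw cannot express. The other direction is \cref{lem_separating_MSOtw_FODP}.

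The only step needing care is the strictness $\CMSOtwDP\subsetneq\CMSO$. It relies on two facts: on cliques, torso-treewidth-bounded sets have bounded size, and $\DP$ is first-order definable there. Together these reduce every sentence to \FO, and \FO cannot express parity on cliques. This was already argued in the text, so the theorem follows by assembling these pieces.
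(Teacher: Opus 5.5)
Your proposal is correct and follows essentially the paper's route: the paper also proves the theorem only by assembling the prior results of \cite{SchirrmacherSV23} for $\FO\subsetneq\FOconn\subsetneq\FODP$ with the separations of this section, namely the even-cycle lemma \cref{lem_separating_MSOtw_FODP}, \cref{lem_CMSOtwDP_CMSOtw}, \cref{lem_separating_FODP_MSOtw}, \cref{lem_FOconn_CMSOttw}, and the parity argument on cliques. One small wording point: the \FOconn-definable witness against \CMSOtw is connectivity, from \cref{lem_FOconn_CMSOttw}, whereas the planarity property of \cref{lem_separating_FODP_MSOtw} is only shown to be \FODP-definable.
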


\cref{fig_fodp_msotw} depicts all known relations between the logics mentioned in this paper. In the figure, for two logics $\mathcal{L},\mathcal{L}'$, we denote by $\mathcal{L}\not\sim\mathcal{L}'$ the fact that $\mathcal{L}$ and $\mathcal{L}'$ are incomparable and by $\mathcal{L}\subseteq \mathcal{L}'$ (resp. $\mathcal{L}\subsetneq \mathcal{L}'$) the fact that $\mathcal{L}'$ is more expressive (resp. strictly more expressive) than $\mathcal{L}$.

\begin{figure}[ht]
  \centering
  \begin{tikzpicture}[scale=.7]
    \node (fo) at (-1,0) {$\FO$};
    \node (foconn) at (1.5,-1) {$\FOconn$};
    \node (fodp) at (4.5,-1) {$\FODP$};
    \node (msotw) at (4.5,1) {$\CMSOtw$};
    \node (msotwdp) at (8.25,-1) {$\CMSOtwDP$};
    \node (msobrg) at (11,1) {$\CMSObrg$};
    \node (msoatwdp) at (12.75,-1) {$\CMSObrgDP$};
    \node (mso) at (16,0) {$\CMSO$};

    \node[rotate=-20] (u0) at (-.2,-.6) {$\subsetneq$};
    \node (u1) at (3,-1) {$\subsetneq$};
    \node[rotate=25] (u2) at (15,-.6) {$\subsetneq$};
    \node (u3) at (6,-1) {$\subsetneq$};
    \node (u4) at (10.5,-1) {$\subseteq$};
    \node (u5) at (8,1) {$\subseteq$};
    
    \node[rotate=10] (v00) at (1.25,.6) {$\subsetneq$};
    \node[rotate=40] (v0) at (2.5,0) {$\not\sim$};
    \node[rotate=40] (v1) at (4.5,0) {$\not\sim$};
    \node[rotate=-40] (v2) at (6.5,0) {$\subsetneq$};
    \node[rotate=-40] (v4) at (12,0) {$\subseteq$};
    \node[rotate=-20] (v5) at (14,.6) {$\subsetneq$};
  \end{tikzpicture}
  \caption{Comparison of the logics.}
  \label{fig_fodp_msotw}
\end{figure}

As one may observe, certain comparisons involving \CMSObrg and \CMSObrgDP are left open in the figure. We expect that \CMSObrgDP can be shown to be strictly more expressive than \CMSObrg using a proof similar to the one of~\cref{lem_CMSOtwDP_CMSOtw}. Since low bidimensionality are, in a sense, `local'' inside a grid~\cite[Lemma 17]{SauST25Parameterizing}, we also expect that~\cref{lem_FOconn_CMSOttw} can be strengthened to show that there are $\FOconn$ definable properties that cannot be expressed in \CMSObrg.
We further conjecture that $\CMSOtw \subsetneq \CMSObrg\subsetneq \CMSObrgDP$ and that $\CMSObrg\not\sim\CMSOttwDP$ but we opt for leaving these questions open for future research.

\section{Hardness on monotone classes that do not exclude a topological minor}\label{sec_hardness}

As mentioned in the introduction, on monotone classes of graphs not excluding any graph as a topological minor, hardness of model checking for $\CMSOttwDP$ follows directly from hardness of $\FOconn$~\cite{PilipczukSSTV22}, since $\FOconn$ is contained in $\CMSOttwDP$.
It is therefore natural to ask whether, in the absence of the disjoint-paths predicate, i.e.,~in the fragment $\CMSOttw$, the hardness still holds or it is possible to have efficient model checking for this fragment beyond topological-minor-free classes. 
In this section, we prove that it is not the case, namely by showing that $\CMSOttw$ can encode all graphs in any monotone class of graphs that does not exclude a fixed graph as a topological minor.
The key observation is that such a class contains an induced subdivision of every graph, and that these subdivisions can be undone with an $\CMSOttw$-formula. 
If the class additionally admits efficient encoding, then we can derive hardness of model checking for $\CMSOttw$.

\begin{definition}
A graph class $\Cc$ \emph{admits efficient encoding of topological minors} if for every graph $H$,
there exists a graph $G\in\Cc$ such that $H$ is a topological minor of~$G$, and moreover, given~$H$,
one can compute such a graph~$G$ together with an explicit topological-minor model of~$H$ in~$G$
in time polynomial in~$|H|$.
\end{definition}


\begin{theorem}
\label{th_aw_hard_ms_tw}
Let $\Cc$ be a monotone graph class that admits efficient encoding of topological minors.
Then \textup{\CMSOtw} model checking on $\Cc$, parameterized by the formula size, is AW$[*]$-hard.
\end{theorem}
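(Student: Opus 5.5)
We reduce from \FO model checking on the class of all graphs, which is AW$[*]$-hard parameterized by formula size. Given a graph $H$ and an \FO sentence $\varphi$, we use the efficient encoding to compute in polynomial time a graph $G\in\Cc$ with a topological-minor model $\eta$ of $H$ in $G$. By monotonicity of $\Cc$, we may replace $G$ by the subgraph $G'$ formed by the image of $\eta$, that is, the principal vertices together with the subdivision paths. Then $G'$ is a subdivision of $H$ and lies in $\Cc$. After subdividing further we may assume that every edge of $H$ is subdivided at least twice. Since $\Cc$ need not be closed under subdivision, we would rather achieve this by encoding a slightly modified graph $H'$, for instance $H$ with every edge subdivided twice. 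The principal vertices of the original $H$ are then recognizable in $H'$, since they are exactly the vertices all of whose incident paths lead to other branch vertices. It is cleaner, though, to mark the branch vertices logically, as follows.

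The main step is a \CMSOtw formula $\widehat\varphi$ such that $G'\models\widehat\varphi$ iff $H\models\varphi$. The difficulty is to identify, inside the subdivision $G'$, the set of principal vertices of degree at most two. Vertices of degree at least three are automatically principal. To handle the rest, we encode from $H''$, which is $H$ with a pendant gadget attached to each vertex, say three pendant leaves. Every original vertex then has degree at least three in any subdivision, while subdivision vertices have degree two and the added leaves have degree one. The pendant gadget does not affect the \FO properties of $H$ once we relativize quantifiers to vertices of degree at least three. We let $\mathsf{br}(x)$ be the first-order formula ``$x$ has at least three neighbours''.

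Adjacency of branch vertices $u,v$ in $H$ then means that there is a path from $u$ to $v$ whose internal vertices all have degree two. To express this, we quantify $\exists_{\ttw\le 2} P$ with $u,v\in P$, the vertices of $P$ other than $u,v$ all having degree exactly two in $G'$ and exactly two neighbours in $P$, $u$ and $v$ each having exactly one neighbour in $P$, and $P$ containing no other branch vertex. A set satisfying the degree conditions consists of a $u$--$v$ path plus possibly disjoint cycles of degree-two vertices. Such cycles cannot occur in a subdivision of $H''$, since every cycle there meets a branch vertex. Hence $P$ is exactly the subdivided edge. Its torso treewidth is at most $2$: $\torso(G',P)$ is the path together with possibly the edge $uv$, so it is a cycle or a path. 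This bound is the crucial point and is why the quantification is legal.

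The formula $\widehat\varphi$ is obtained from $\varphi$ by relativizing every first-order quantifier to $\mathsf{br}$ and replacing $E(x,y)$ by the formula above. Its size is $O(|\varphi|)$ and its \ttw-rank is $2$, so the reduction is a parameterized reduction and AW$[*]$-hardness transfers. The main obstacle I expect is making sure that the set quantifier, restricted to torso treewidth at most $2$, captures exactly the subdivided edges and no spurious structures. The degree-based local conditions settle this, provided branch vertices are forced to have degree at least three via the gadget, while the class $\Cc$ is still reached through the encoding of $H''$ together with monotonicity.
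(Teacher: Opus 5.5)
Your proposal is correct and follows essentially the same route as the paper. Both give an fpt-reduction from \FO model checking: pendant leaves make the principal vertices of the encoded subdivision exactly the vertices of degree at least three, first-order quantifiers are relativized to them, and each edge atom is replaced by quantifying a set of torso treewidth at most $2$ that traces the corresponding subdivided edge. The differences are harmless or slight improvements: your set $P$ includes the endpoints $u,v$, and your three pendant leaves also cover isolated vertices of $H$, which the paper's two-leaf gadget would leave with degree two.
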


\begin{proof}
We give an fpt-reduction from $\FO$ model checking on general graphs to $\CMSOtw$ model checking on~$\Cc$.

Let $H$ be an arbitrary input graph and let $\varphi$ be an $\FO$ formula. 
We can assume w.l.o.g.~that~$H$ does not contain vertices of degree $2$. If that is not the case, we define~$H_0$ as the graph obtained from $H$ by attaching two private leaves to every vertex of $H$.
In this way, we can distinguish the principal vertices (which do not have degree~$2$) of a minor model from the subdivision vertices (which all have degree~$2$).
This construction was already used in the hardness construction for separator logic (see~\cite[Lemma~9.1]{PilipczukSSTV22}). 

As $\Cc$ is monotone and admits efficient encoding of topological minors, we can efficiently compute a graph $G\in \Cc$ which is isomorphic to a subdivision of $H_0$. 
We show how to construct a formula $\psi\in \CMSOtw$ such that 
\[H\models \varphi \iff G\models \psi,\]
and thereby finishing the reduction. 
We define $\psi$ from $\varphi$ by the following syntactic rewriting.

\begin{enumerate}[label=(\arabic*)]
    \item \emph{Relativize all quantifiers to branch vertices.}
Let $\deg_{\ge 3}(x)$ be the FO-formula expressing that the vertex $x$ has degree at least~$3$.
Replace every first-order quantifier $\exists x\,\chi$ by $\exists x\,(\deg_{\ge 3}(x)\wedge \chi)$,
and every $\forall x\,\chi$ by $\forall x\,(\deg_{\ge 3}(x)\rightarrow \chi)$.
    \item \emph{Replace the edge predicate by “degree-2 path adjacency”.}
Replace each atomic predicate $E(x,y)$ by a formula~$E^\star(x,y)$ that holds iff either $x$ is adjacent
to $y$, or $x$ and $y$ are connected by a path whose internal vertices all have degree~$2$.

This predicate is definable in~$\CMSOtw$ by quantifying a set $P$ of vertices that is connected, only contains vertices of degree~$2$, and contains a neighbor of both~$x$ and~$y$.
As this set $P$ is an induced path in the graph~$G$, its torso is a cycle (connecting the two endpoints), which has treewidth~$2$, as required for the set quantification of~\CMSOtw. 
\end{enumerate}

Let $\psi$ be the result of applying (1) and (2) exhaustively to $\varphi$.
By definition, $G$ is isomorphic to a subdivision of~$H_0$, so there is a bijection $f$ from the vertices of degree~$3$ in~$G$ and the vertices of~$H$, such that $G\models E^\star(x,y)$ if and only if $\{f(x),f(y)\}$
is an edge in~$H$.
Therefore, $H\models \varphi \iff G\models \psi$.
\end{proof}

Note that the proof of \cref{th_aw_hard_ms_tw} holds not only for \textup{\CMSOtw}, but for $\CMSOp$ for any parameter $\p$ that is the \emph{torso extension} (defined in \cref{sec_conclusions}) of a parameter that is bounded on the class of cycles, such as treewidth or pathwidth (but not treedepth).

\section{Research directions and open questions}
\label{sec_conclusions}

In this paper, we studied two fragments of \CMSO, namely \CMSOttwDP\ and \CMSOatwDP,
generated by the annotated graph parameters $\ttw$ and $\atw$
(equivalently, $\bog$ and $\brg$), which represent two distinct parametric
extensions of treewidth.
On $H$-topological-minor-free graph classes, we establish an algorithmic
meta-theorem for \CMSOttwDP, complementing the known results of~\cite{SauST25Parameterizing} for
\CMSOatwDP\ on $H$-minor-free classes.
Together, these results support our view that \emph{annotated parameters}
provide a robust interface between logic and structure, as they nicely describe how complicated the quantified sets may be, while still being rich enough to cover many
algorithmic applications.

A natural next step is to broaden the framework beyond $\ttw$ and $\atw$.
One systematic way to do this starts from an arbitrary minor-monotone graph
parameter $\p$ and derives two canonical annotated variants.
We define the \emph{annotated extension} of $\p$, denoted by~$\ap$, by letting
$\ap(G,X)$ be the maximum value of $\p$ over all $X$-minors of $G$.
We also define the \emph{torso extension} of $\p$, denoted by $\tp$, by letting
$\tp(G,X)$ be the minimum value of $\p(\torso(G,X'))$ over all $X'\supseteq X$.
These two constructions are closely related but, in general, need not coincide, and 
understanding when they do coincide is a central structural question (see \cite{ProtopapasTW2025colorfulminors}).

\medskip
We conclude with a small list of concrete questions that we find particularly
compelling at the moment.

\begin{enumerate}
    \item \textbf{Model checking $\CMSOatwDP$ on topological-minor-free classes.}
    Can we extend the tractability of $\CMSOatwDP$ from minor-free classes~\cite{SauST2024parame} to topological-minor-free classes? We are not in position to conjecture whether our approach for model checking in the present paper can work also for $\CMSOatwDP$ or that  minor-freeness is the tractability boundary for this logic.
\medskip

    \item \textbf{Equivalence under functional equivalence of parameters.}
    Suppose $\p \sim \p'$ (in the sense used in \cref{lem_fun_min_same_logic}). Do the fragments $\CMSOp$ and $\CMSOp'$ already coincide in expressive power \emph{without} the $\DP$-predicate? Establishing this would clarify how much of the robustness of \cref{lem_fun_min_same_logic} relies on the presence of $\DP$.\medskip

    \item \textbf{Model checking beyond topological-minor-free classes for $\CMSO/\td$.}
    What is the precise tractability boundary for model checking $\CMSO/\td$?
    In particular, does fixed-parameter tractability extend meaningfully beyond topological-minor-free classes,
    or is there a natural hardness threshold?\medskip

    \item \textbf{Improving the running time of the meta-theorem.}
    Our current running time is dominated by the computation of an unbreakable decomposition, which is the main obstacle to getting below cubic dependence in the input size. Can this bottleneck be removed or replaced by a faster decomposition scheme?
    Similar considerations apply to the best known implementations of \FODP.\medskip

    \item \textbf{A dense analogue for hereditary classes.}
    Can one develop an analogue of the present framework for hereditary (possibly dense) graph classes?
    This likely requires annotated parameters that are \emph{not} minor-monotone, but still support a replacement/compositionality theory strong enough to yield algorithmic meta-theorems.
\end{enumerate}

%

%

\end{document}